%% file: main.tex
\documentclass[11pt]{article}
\usepackage{graphicx} 
\usepackage{verbatim}
\usepackage[a4paper, left=1in, right=1in, top=1in, bottom=1in]{geometry}
\usepackage{appendix}
\usepackage{xcolor}
\usepackage{braket}

\definecolor{daxcolor}{rgb}{0.7, 0.1, 0.5}

\usepackage{amsmath, amssymb, amsthm}
\usepackage[
  colorlinks=true,
  breaklinks=true,
  pdfusetitle=true,
  hypertexnames=false,
  allcolors=blue]{hyperref}
\usepackage{comment}
\usepackage{mathtools}
\usepackage[giveninits=true,maxbibnames=99,style=alphabetic,maxalphanames=4,minalphanames=3,backend=biber]{biblatex}
\usepackage{algorithm}
\usepackage[noend]{algpseudocode}
\usepackage{float}
\usepackage[section]{placeins}
\usepackage{thm-restate}

\usepackage{accents}

\AtEveryBibitem{\clearlist{language}}
\renewbibmacro*{doi+eprint+url}{%
  \iftoggle{bbx:doi}
    {\printfield{doi}}
    {}%
  \newunit\newblock
  \ifboolexpr{togl {bbx:eprint} and test {\iffieldundef{doi}}}
    {\usebibmacro{eprint}}
    {}%
  \newunit\newblock
  \ifboolexpr{togl {bbx:url} and test {\iffieldundef{doi}}  and test {\iffieldundef{eprint}}}
    {\usebibmacro{url+urldate}}
    {}}

\usepackage[T1]{fontenc}
\usepackage[smallerops,vvarbb]{newtx}
\usepackage[narrow,varqu,varl,scaled=0.95]{zi4}
\usepackage{microtype}

\input{commands}

\usepackage[capitalise]{cleveref}
\newtheorem{theorem}{Theorem}[section]
\newtheorem{definition}[theorem]{Definition}
\newtheorem{proposition}[theorem]{Proposition}

\newtheorem{corollary}[theorem]{Corollary}

\newtheorem{lemma}[theorem]{Lemma}
\newtheorem{remark}[theorem]{Remark}
\newtheorem{setup}{Setup}
\usepackage{enumitem}
\setlist[itemize]{itemsep=1pt, topsep=2pt}
\setlist[enumerate]{itemsep=1pt, topsep=2pt}

\newcommand{\FermObsCon}{\cfont{FermObsCon}}
\newcommand{\cfont}{\mathsf}

\newcommand{\CSV}{\cfont{CSV}}

\newcommand{\p}{\cfont{P}}
\newcommand{\PP}{\cfont{PP}}
\newcommand{\coNP}{\cfont{coNP}}
\newcommand{\NEXP}{\cfont{NEXP}}
\newcommand{\BQP}{\cfont{BQP}}
\newcommand{\QMA}{\cfont{QMA}}
\newcommand{\QCPH}{\cfont{QCPH}}
\newcommand{\PH}{\cfont{PH}}
\newcommand{\QPH}{\cfont{QPH}}
\newcommand{\pureQPH}{\cfont{PureQPH}}
\newcommand{\QCMA}{\cfont{QCMA}}

\newcommand{\BPP}{\cfont{BPP}}
\newcommand{\LH}{\cfont{LH}}
\newcommand{\XZ}{\cfont{XZ}}
\newcommand{\CLDM}{\cfont{CLDM}}

\newcommand{\iDXZLH}{1D\mhyphen\XZ\mhyphen\LH}

\newcommand{\qc}{\cfont{qc\text{-}\Sigma_2}}
\newcommand{\lmin}{\lambda_{\textup{min}}}
\newcommand{\ObsCon}{\cfont{ObsCon}}

\newcommand{\pauliObsCon}{\cfont{AllPauli}\ObsCon_{\exp}}

\newcommand{\E}{\mathbb{E}}

\newcommand{\C}{\mathbb{C}}

\DeclarePairedDelimiter\abs{\lvert}{\rvert}%

\newcommand{\blnk}{ % an empty spot on the right side of the chain
\kern-1bp
\setlength{\unitlength}{10bp}
\begin{picture}(1,1)
\put(0.15,0.05){$\bigcirc$}
\end{picture}
\kern+3bp
}
\usepackage{tikz}

\newcommand{\turn}{ % a turn symbol, used in AGIK'09
\setlength{\unitlength}{10bp}
\kern-1bp
\begin{picture}(1,1)
\put(0.27,0.05){$\circlearrowleft$}
\put(0.15,0.05){$\bigcirc$}
\end{picture}
\kern+3bp
}
\newcommand{\insi}{ % a spot inside the qubit sequence, used in d=8
\setlength{\unitlength}{10bp}
\kern-1bp
\begin{picture}(1,1)
\put(0.4,0.05){$\circ$}
\put(0.15,0.05){$\bigcirc$}
\end{picture}
\kern+3bp
}
\newcommand{\dead}{ % a site that is done, left end of the chain
\kern-1bp
\setlength{\unitlength}{10bp}
\begin{picture}(1,1)
\put(0.27,0.05){$\times$}
\put(0.15,0.05){$\bigcirc$}
\end{picture}
\kern+3bp
}
\newcommand{\qubit}{ % a qubit holding particle, used in d=8
\setlength{\unitlength}{10bp}
\kern+1bp
\begin{picture}(1,1)
\linethickness{1bp}
\put(0,-0.2){\line(0,1){1}}
\put(0,-0.2){\line(1,0){1}}
\put(1,0.8){\line(0,-1){1}}
\put(1,0.8){\line(-1,0){1}}
\end{picture}
\kern+1bp
}
\newcommand{\gate}{ % a qubit holding particle, ACTIVE (do gate)
\setlength{\unitlength}{10bp}
\kern+1bp
\begin{picture}(1,1)
\put(0.1,0){$\blacktriangleright$}
\linethickness{1bp}
\put(0,-0.2){\line(0,1){1}}
\put(0,-0.2){\line(1,0){1}}
\put(1,0.8){\line(0,-1){1}}
\put(1,0.8){\line(-1,0){1}}
\end{picture}
\kern+1bp
}
\newcommand{\rmove}{ % a qubit holding particle, ACTIVE (move right)
\setlength{\unitlength}{10bp}
\kern+1bp
\begin{picture}(1,1)
\put(0.1,0){$\vartriangleright$}
\linethickness{1bp}
\put(0,-0.2){\line(0,1){1}}
\put(0,-0.2){\line(1,0){1}}
\put(1,0.8){\line(0,-1){1}}
\put(1,0.8){\line(-1,0){1}}
\end{picture}
\kern+1bp
}
\newcommand{\abet}{ % a beta qubit
\kern+1bp
\setlength{\unitlength}{10bp}
\begin{picture}(1,1)
\put(0.15,-0.05){$\beta$}
\linethickness{1bp}
\put(0,-0.2){\line(0,1){1}}
\put(0,-0.2){\line(1,0){1}}
\put(1,0.8){\line(0,-1){1}}
\put(1,0.8){\line(-1,0){1}}
\end{picture}
\kern+1bp
}
\newcommand{\alpa}{ % an alpha qubit
\kern+1bp
\setlength{\unitlength}{10bp}
\begin{picture}(1,1)
\put(0.15,0.05){$\alpha$}
\linethickness{1bp}
\put(0,-0.2){\line(0,1){1}}
\put(0,-0.2){\line(1,0){1}}
\put(1,0.8){\line(0,-1){1}}
\put(1,0.8){\line(-1,0){1}}
\end{picture}
\kern+1bp
}
\newcommand{\lqubit}{ % an L qubit
\kern+1bp
\setlength{\unitlength}{10bp}
\begin{picture}(1,1)
\put(0.15,0.05){$\scriptstyle{L}$}
\linethickness{1bp}
\put(0,-0.2){\line(0,1){1}}
\put(0,-0.2){\line(1,0){1}}
\put(1,0.8){\line(0,-1){1}}
\put(1,0.8){\line(-1,0){1}}
\end{picture}
\kern+1bp
}
\newcommand{\rqubit}{ % an R qubit
\kern+1bp
\setlength{\unitlength}{10bp}
\begin{picture}(1,1)
\put(0.15,0.05){$\scriptstyle{R}$}
\linethickness{1bp}
\put(0,-0.2){\line(0,1){1}}
\put(0,-0.2){\line(1,0){1}}
\put(1,0.8){\line(0,-1){1}}
\put(1,0.8){\line(-1,0){1}}
\end{picture}
\kern+1bp
}

\title{Verification Complexity and Extension of Classical Shadows}

\author{Georgios Karaiskos\footnotemark[1]\and Asad Raza\footnotemark[2]\and Dorian Rudolph\footnotemark[3]\and Dax Enshan Koh\footnotemark[4]\and Sevag Gharibian\footnotemark[3]}

\date{}

\begin{document}
\begingroup
\renewcommand\thefootnote{\fnsymbol{footnote}}
\maketitle
\footnotetext[1]{Department of Computer Science, Paderborn University, Germany. Email: georgios.karaiskos@upb.de.}
\footnotetext[2]{Dahlem Center for Complex Quantum Systems,
Freie Universit\"at Berlin, 14195 Berlin, Germany. Email: asad.raza@fu-berlin.de.}
\footnotetext[3]{Department of Computer Science and Institute for Photonic Quantum Systems (PhoQS), Paderborn University, Germany. Email: \{dorian.rudolph, sevag.gharibian\}@upb.de.}
\footnotetext[4]{Engineering Cluster, Singapore Institute of Technology, 1 Punggol Coast Road, Singapore 828608, Republic of Singapore; Quantum Innovation Centre (Q.InC), Agency for Science, Technology and Research (A*STAR), 4 Fusionopolis Way, Kinesis \#05-01, Singapore 138635, Republic of Singapore; and Science, Mathematics and Technology Cluster, Singapore University of Technology and Design, 8 Somapah Road, Singapore 487372, Republic of Singapore. Email: dax.koh@singaporetech.edu.sg.}
\endgroup

\begin{abstract}

Classical shadows are an influential framework for compressing copies of a given quantum state $\rho$ into classical data $S$, enabling many properties of $\rho$ to be predicted from relatively few copies. In this work, we study two natural questions involving shadows: (1) Given $S$, when can one efficiently verify that $S$ came from a genuine $n$-qubit state? This is called the Classical Shadow Validity ($\CSV$) problem, introduced by Karaiskos, Rudolph, Meyer, Eisert, and Gharibian [ICALP 2026]. (2) Given $S$ that allows one to capture $2$-local properties of $\rho$, can one fake or spoof a shadow $S'$ which predicts $3$-local properties of some state? For (1), we show $\CSV$ is efficiently solvable for permutation-invariant shadows, $\QMA$-hard for real, fermionic, and bosonic shadows, and both $\coNP$-hard and $\QMA$-hard when the observable family consists of all $n$-qubit Pauli strings. A result of independent interest along the way is a new upper bound $\qc\subseteq\p^{\PP}$, where $\qc$ is a quantum analogue of the second level of the polynomial hierarchy in which the first proof is quantum. For (2), we show intractability: Given the $2$-local marginals $S$ of a quantum state $\rho$, estimating the $3$-local marginals of $\rho$ is intractable unless $\QCMA\subseteq\BPP$, even if the state $\rho$ is the unique state consistent with $S$.
\end{abstract}

\tableofcontents

\section{Introduction}\label{scn:introduction}
Let $O=\set{O_i}_{i=1}^m$ denote a set of $n$-qubit observables, i.e., operators on a Hilbert space of dimension $D:=2^n$. The influential framework of classical shadows~\cite{aaronsonShadowTomographyQuantum2018,brandaoQuantumSDPSolvers2019,huangPredictingManyProperties2020} shows that, remarkably, for some sets $O$, for any $n$-qubit state $\rho$, only $\polylog(m, D)$ physical copies of $\rho$ are needed to classically estimate $\trace(\rho O_i)$ for all $i\in[m]$ with high probability. At a high level, the main steps of this framework can be formalized as (\Cref{def:classical-shadow}): (1) Measurements ``compress'' physical copies of $\rho$ into a multiset $S=\set{s_i}_{i=1}^N$ of $\poly(n)$-bit strings, which we call the \emph{shadow}, and (2) a classical \emph{recovery} algorithm $A$ takes as input the shadow $S$ and an index $i\in[m]$, and outputs an estimate of $\trace(\rho O_i)$. A canonical set $O$ for which this is possible~\cite{huangPredictingManyProperties2020} is the set of all $k$-local observables, for which the naive measurement protocol (i.e., measure a copy of $\rho$ for each $O_i$) would require $O(n^k)$ copies, exponentially more than what classical shadows achieve.

Of course, once we have $S$ in hand, a natural question arises: \emph{Can we extract other information about $\rho$ from $S$, aside from expectations $\trace(\rho O_i)$?} For example, can we verify whether $S$ is actually consistent with some quantum state? This is relevant, e.g., in the settings of noisy measurements and quantum cloud computing. In the former, if measurement noise is not properly accounted for by the recovery algorithm $A$, the resulting estimates can be biased~\cite{chen20221robust,koh2022classical} and, if sufficiently corrupted, may fail to be jointly consistent with \emph{any} state, let alone with $\rho$. For the latter, an untrusted quantum cloud may claim to have measured $\rho$ to produce $S$ for a client, but how can the client be sure\footnote{One can instead use, e.g., the $\BQP$ verification framework of Mahadev~\cite{mahadevClassicalVerificationQuantum2018}. This, however, requires interaction with the cloud, whereas here we are interested in the natural setting where the only input for the client is the shadow $S$ itself.} given just $S$? Another natural question is whether more mileage can be gained by classically postprocessing $S$; for example, if $S$ allows us to predict all $2$-body marginals of $\rho$, can we then extend this to allow predicting  $3$-body marginals of some $\rho'$ (where we require the $2$-body marginals of $\rho$ and $\rho'$ to coincide)? This is relevant from the cryptographic perspective, where an adversary is attempting to build on a simpler shadow to ``fake'' or ``spoof'' a more complicated one (since $\rho'$ need not equal $\rho$).

\vspace{-2mm}
\paragraph{Task 1: Checking shadow validity.} The first task is formalized by \emph{classical shadow validity}
($\CSV$, \Cref{def:CSV})~\cite{KRMEG25}. Given a shadow $S$, observables
$O=\{O_i\}$, and an efficient recovery algorithm $A$, $\CSV$ asks whether there exists a state
$\rho$ satisfying
\begin{equation}
    \bigl|\Tr(O_i\rho)-A(S,i)\bigr|\le\alpha
    \qquad\text{for all }i\in[m],
\end{equation}
or whether every $\rho$ violates some prediction $i$ by at least $\beta$, for $\abs{\beta-\alpha}\geq 1/\poly(n)$. That this problem can be $\QMA$-hard is not at all surprising --- this is because it includes as a special case the well-known $\QMA$-complete Consistency of Local Density Matrices problem (CONSISTENCY) of Liu~\cite{liuConsistencyLocalDensity2006}. (In CONSISTENCY, the input is a set of classical descriptions of $k$-local marginals $S:=\set{\rho_i}$, and the output is whether $S$ is consistent with some $n$-qubit global state $\rho$. Such a set $S$ can also be viewed as an example of a shadow.) The \emph{challenge} is to understand how this complexity varies with the specific structure of a given shadow protocol, since the set $S$ may look nothing like a set of local marginals. 

Indeed, the complexity of this problem can be subtle, ranging from classically solvable to harder than $\QMA$! Specifically, \cite{KRMEG25} showed: 
\begin{enumerate} 
    \item The global Clifford shadow protocol of \cite{huangPredictingManyProperties2020}, which allows one to efficiently predict expectations against observables $O$ of polynomial Frobenius norm, is unlikely to even be $\BQP$-hard, in that it is in $\BPP$ under now-standard sampling assumptions (as in dequantization works~\cite{tangQuantuminspiredClassicalAlgorithm2019}).
    \item The local Clifford shadow protocol of \cite{huangPredictingManyProperties2020} (as well as its high-dimension generalization~\cite{maoQuditShadowEstimation2025}), which can efficiently predict $k$-local observables, is $\QMA$-complete, even for $6$-local observables on a spatially sparse hypergraph. Note that these shadows are highly non-local in structure, making their $\QMA$-hardness proof challenging.

   \item The ability to predict expectation values for \emph{exponentially} many $O_i$ is likely harder than $\QMA$, being $\qc$-complete. Here, $\qc$ (\Cref{def:qc-cs}) is a quantum generalization of the second level of the Polynomial-Time Hierarchy, where the first proof is quantum and the second classical. This result was inspired by the ``triply efficient'' shadow protocol of King, Gosset, Kothari, Babbush~\cite{kingTriplyEfficientShadow2025}, which predicts expectations for the observable set $O$ of all $n$-qubit Pauli strings\footnote{A Pauli string is a tensor product of $n$ single-qubit Pauli operators, e.g. $X\otimes Y\otimes X\otimes \cdots\otimes Z$.} with polynomial sample complexity. An important shortcoming of~\cite{KRMEG25} is that its $\qc$-hardness proof is not for $O$ the set of $n$-qubit Pauli strings, but for an \emph{arbitrary} exponential-size set of observables $O$. 
\end{enumerate}
Given the rich breadth of complexities obtainable~\cite{KRMEG25} already with a small set of shadow protocols, the natural question arises: \emph{What about $\CSV$ for other shadow protocols?} In this work, we will consider a broad array of such protocols, including for permutation-invariant observables~\cite{SL24}, real classical shadows~\cite{WMLC25}, matchgate fermionic shadows~\cite{WHLB23}, and continuous variable/bosonic shadows ~\cite{BDLR24}. We also will make progress on the complexity of the King-Gosset-Kothari-Babbush scheme \cite{kingTriplyEfficientShadow2025}, which as a bonus will include showing a new upper bound $\qc\subseteq\p^{\PP}$.

\vspace{-2mm}

\paragraph{Task 2: ``Faking'' or ``spoofing'' shadows.} Finally, we turn to the question of extracting more mileage from a shadow $S$ via classical postprocessing. Recall the arguably simplest version of this problem sketched earlier --- as input, one is given shadow $S$ which is just the set of $2$-local marginals of some state $\rho$ (this need not be verified, it is promised to be a valid shadow). The question is whether there exists an efficient classical algorithm $C$ taking $S$ as input and extending it by \emph{just a single qubit}, i.e., meaning $C$ can predict $3$-body marginals within constant trace distance. Let us make the problem yet easier --- we are additionally promised there is a \emph{unique} extension of $S$ to $3$-local marginals. Can this task now be efficiently solved?

\subsection{Our results}\label{sscn:our-results}
We begin with our $\CSV$ results, and then discuss our ``faking shadows'' result. For clarity, our definition of classical shadow (\Cref{def:classical-shadow}, as in \cite{KRMEG25}) makes standard assumptions from the shadows literature: succinct classical access to each observable $O_i$, as well as the ability to efficiently quantumly implement a measurement corresponding to $O_i$ for any input quantum state $\rho$. Background on each of the classical shadow protocols we consider is given in \Cref{sscn:protocols}. We remark that for some of our $\CSV$ proofs, it will be simpler to work with a reformulation of $\CSV$, denoted Observable Consistency ($\ObsCon$, \Cref{def:obscon})~\cite{KRMEG25}. In $\ObsCon$, the input is a set of observables $\set{O_i}$ and target expectations $y_i$, and the question is if there exists a global state $\rho$ such that for all $i$, $\trace(\rho O_i)\approx y_i$. 

\vspace{-2mm}
\paragraph{1. Efficiently decidable: Permutation-invariant $\CSV$.} We begin with the permutation-invariant classical shadow scheme of Sauvage and Larocca~\cite{SL24}, in which each observable $O_i$ satisfies $U_\pi O_i U_\pi^\dagger = O_i$ for all permutations $\pi\in S_n$. 

\begin{theorem}[$\CSV$ for permutation-invariant shadows (informal; see \Cref{thm:ObsConPIinP} and \Cref{rem:pi-general-representation})]\label{thm:PIinformal}
    $\CSV$ for permutation-invariant shadows is in $\p$ if the observables are specified in terms of their Schur-Weyl blocks, and otherwise is in $\BQP$.
\end{theorem}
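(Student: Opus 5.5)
Via Schur--Weyl duality, every permutation-invariant observable $O_i$ on $(\C^2)^{\otimes n}$ is block diagonal:
\begin{equation}
O_i=\bigoplus_{j} O_i^{(j)}\otimes I_{m_j},
\end{equation}
where $j$ runs over the $O(n)$ total-spin labels. Each $O_i^{(j)}$ acts on the $S_n$-irrep space $\mathcal{P}_j$ of dimension $d_j=O(n)$, and $I_{m_j}$ acts on a multiplicity space of dimension $m_j$. (The roles of the two tensor factors depend on which side is the $SU(2)$ irrep; only the fact that the observables act nontrivially on a polynomial-dimensional factor matters.)

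The first step is to reduce the state. For any $\rho$, twirling over $S_n$ leaves every $\Tr(O_i\rho)$ unchanged, and so does pinching onto the blocks and tracing out the multiplicity spaces. Hence $\rho$ can be replaced by
\begin{equation}
\bigoplus_j p_j\,\sigma_j\otimes I_{m_j}/m_j,
\end{equation}
where $(p_j)$ is a probability vector and each $\sigma_j$ is a $d_j\times d_j$ density matrix. Then
\begin{equation}
\Tr(O_i\rho)=\sum_j \Tr\bigl(O_i^{(j)}\tau_j\bigr),\qquad \tau_j:=p_j\sigma_j\succeq 0,\qquad \sum_j\Tr\tau_j=1.
\end{equation}
This gives a polynomial-size SDP feasibility problem in block-diagonal variables of total dimension $\sum_j d_j=O(n^2)$, with constraints $|\sum_j\Tr(O_i^{(j)}\tau_j)-y_i|\le\alpha$. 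The same reduction applies to $\CSV$ through $\ObsCon$ with $y_i=A(S,i)$, since $A$ is efficient.

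The second step is to decide the promise problem in $\p$. Using the gap $\beta-\alpha\ge 1/\poly(n)$, we solve the SDP
\begin{equation}
\min_{\tau}\ \max_i\ \Bigl|\textstyle\sum_j\Tr\bigl(O_i^{(j)}\tau_j\bigr)-y_i\Bigr|
\end{equation}
with the ellipsoid method, or any SDP solver with polynomial-time additive error. We accept if the optimum is at most $(\alpha+\beta)/2$. The feasible region is compact (trace-bounded PSD blocks) and contains a ball after slight relaxation, and the entries are polynomially bounded provided the blocks are given with $\poly(n)$ bits, so standard guarantees apply. A YES instance has value at most $\alpha$. In a NO instance, the reduction in the first step shows that every valid $\tau$ corresponds to a genuine state, so the value is at least $\beta$. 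If $m$ is exponential, then under the paper's assumptions it is presumably polynomial here; otherwise a separation oracle would be needed.

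The main obstacle is the ``otherwise $\BQP$'' case, where the observables are not given in Schur--Weyl form (e.g.\ as sums of permutation-symmetrized Paulis). Here the blocks $O_i^{(j)}$ must be obtained. The plan is to compute them in $\BQP$. Apply the quantum Schur transform, which is efficient, to the observable's implementation, or estimate the matrix entries
\begin{equation}
\bra{j,a,k}O_i\ket{j,b,k}
\end{equation}
via Hadamard tests on Schur basis states prepared by the inverse Schur transform, to $1/\poly$ precision. Absorbing this estimation error into the gap, the classical SDP of the second step then decides the instance. Alternatively, when the observables are symmetrized Pauli sums, the blocks could be computed classically via Clebsch--Gordan/angular-momentum formulas, which would put that case in $\p$ as well. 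The delicate points are error propagation from the estimated blocks to the SDP value, and correctness of the Hermiticity and normalization of the estimated blocks.
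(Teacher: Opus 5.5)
Your proposal is correct and follows the paper's route. You twirl to a permutation-invariant state and use Schur--Weyl duality to reduce to a polynomial-size block SDP decided by the ellipsoid method; your rescaling $\tau_\lambda=d_\lambda\rho_\lambda$ is equivalent to the paper's normalization $\sum_\lambda d_\lambda\Tr\rho_\lambda=1$. In the $\BQP$ case you learn the blocks from probe states prepared by the inverse Schur transform, which is also what the paper does.

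Two small corrections:
\begin{itemize}
\item A permutation-invariant operator acts nontrivially on the $\mathrm{SU}(2)$ irrep of dimension $m_\lambda=n-2k+1$. It acts as the identity on the $S_n$ irrep, whose dimension $d_\lambda=\binom{n}{k}-\binom{n}{k-1}$ is exponential. You had these two factors swapped.
\item Only measurement access to $O_i$ is assumed, not a unitary implementation. The off-diagonal block entries are therefore obtained, as in the paper, by polarization from expectations on the probes $(\ket{a}+\ket{b})/\sqrt2$ and $(\ket{a}-i\ket{b})/\sqrt2$, rather than by Hadamard tests.
\end{itemize}
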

\noindent Thus, we find a second shadow protocol for which validity can be efficiently checked, with the global Clifford protocol of~\cite{huangPredictingManyProperties2020} being the first such known protocol~\cite{KRMEG25}. We remark, however, that the latter required sampling access to the observables $O_i$, which is not required for \Cref{thm:PIinformal}. Of course, obtaining the Schur-Weyl representation of the $O_i$ is itself also a non-trivial assumption; in any case, efficient \emph{quantum} validity checking is always possible, even without the Schur-Weyl representation in hand.

\vspace{-2mm}
\paragraph{2. $\QMA$-hard: $\CSV$ for real, fermionic, and bosonic shadows.}

We next show that $\CSV$ for the following specific shadow protocols is $\QMA$-hard. 
\begin{itemize}
    \item \emph{Real shadows.} We first consider the real shadow protocol of West, Mele, Larocca, and Cerezo~\cite{WMLC25} (\Cref{par:real}). We show that $\CSV_{\mathrm{real}}$ (\Cref{def:csv-real}) is $\QMA$-hard
    (\Cref{lem:csv-real-hard}), even though each qubit is measured only
    in the $X$ or $Z$ basis. The observables $O$ have constant
    locality, norm at most one, and Pauli expansions containing only
    $I$, $X$, and $Z$. As an intermediate result, we prove
    $\QMA$-hardness of a one-dimensional $XZ$-only Local Hamiltonian
    problem on constant-dimensional sites (\Cref{lem:1d-xz-lh}).

    \item \emph{Fermionic shadows.} We next consider the fermionic shadow protocol of Wan, Huggins, Lee, and Babbush~\cite{WHLB23} (\Cref{par:fermionic-conventions}). We show that Matchgate fermionic shadow validity (\Cref{def:CSV_MFS})
    in a specified particle-number sector,
    $\CSV_{\mathsf{MFS}}^{(N)}$, is $\QMA$-hard
    (\Cref{lem:fermobscon-hard,thm:fermobscon4-to-csvmfs}). The particle
    number is part of the input, and the reduction uses observables
    supported on at most $5$ modes. Although the measurement
    transformations are fermionic Gaussian, the states whose
    consistency is being tested are arbitrary states in the specified
    sector.

    \item \emph{Bosonic shadows.} Next, we study the bosonic shadow protocols (both heterodyne and homodyne) of Becker, Datta, Lami, and Rouze~\cite{BDLR24} (\Cref{par:bosonic-conventions}). We first show that regularized heterodyne shadow
    validity $\CSV_{\mathrm{HetCV}}^{\ell,B,M}$ is $\QMA$-hard already
    for $M=1$, $\ell=O(\log n)$, and $B=2^\ell$
    (\Cref{thm:Bosonic_shadows_hard}). Here, $M$ is the Fock space cutoff used by
    the recovery algorithm (\Cref{par:bosonic-conventions}), while candidate states $\rho$ range over the
    full bosonic Hilbert space subject to moment bounds (\Cref{def:CSV_het})
    \[
        \Tr\bigl((I+\widehat N_j)^\ell\rho\bigr)\le B
        \qquad\text{for every mode }j,
    \]
    i.e., no Fock cutoff is assumed for $\rho$. The observables are constant-local, constant-degree polynomials in position and momentum operators, which additionally
    satisfy the weighted norm bounds of ~\Cref{eqn:norm}. We then show an analogous $\QMA$-hardness result for $\CSV^{\ell,B,M}_{\mathrm{HomCV}}$, i.e., for homodyne CV classical shadows.
\end{itemize}

\vspace{-2mm}

\paragraph{3. Towards $\qc$-completeness: $n$-qubit Pauli shadows.} Next, we consider the setting of King, Gosset, Kothari, and Babbush~\cite{kingTriplyEfficientShadow2025} when the observable set $O$ is all $n$-qubit Pauli strings. Let $\pauliObsCon$ (\Cref{def:allpauliobscon}) denote the corresponding $\ObsCon$ problem (recall $\ObsCon$ is an equivalent formulation of $\CSV$). By definition, $\pauliObsCon\in \qc$, since the $\forall$-quantified second/classical proof of $\qc$ can check all \emph{exponentially} many $n$-qubit Pauli observables. This exponentiality also makes containment in $\QMA$ unlikely; indeed, one expects $\qc$-completeness of $\pauliObsCon$, a question left open by~\cite{KRMEG25}. We show:

\begin{theorem}[Complexity of $\pauliObsCon$ (informal)]\label{thm:pauliObsConinformal}
    $\pauliObsCon$ is $\QMA$-hard (\Cref{thm:QMAhard}), $\coNP$-hard (\Cref{thm:coNPhard}), and contained in $\p^{\PP}$ (\Cref{thm:qcinP^PP}).
\end{theorem}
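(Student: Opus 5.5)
The theorem has three parts, which I prove separately: $\QMA$-hardness, $\coNP$-hardness, and containment in $\p^{\PP}$.

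\emph{$\QMA$-hardness.} I reduce from the $k$-local Hamiltonian problem, or equivalently from CONSISTENCY / the Pauli form of local-Clifford $\CSV$ shown $\QMA$-complete in~\cite{KRMEG25}. The instance $\pauliObsCon$ must fix a target value $y_P$ for every Pauli string $P$, including the exponentially many nonlocal ones, and it has to do so succinctly. A natural choice is $y_P=\Tr(P\rho_0)$ for a fixed product state or the maximally mixed state on the nonlocal strings. The danger is that demanding this exactly over-constrains the global state. So I will embed instead: the QMA witness lives on the first $n'$ qubits, and a padding register supplies slack. The targets are $y_P=0$ whenever $P$ is non-identity on the padding register, and $y_P$ is the local marginal data for local $P$ on the witness register. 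For strings that are nonlocal on the witness register, I use the tolerance $\alpha$: choose $\alpha$ large enough, say constant, that any witness state passes, while the local constraints remain strict. Since a single $\alpha$ applies to all strings, I will rescale the local targets by a $1/\poly$ gap. The check that no-instances still fail works through the Hamiltonian energy, because a violated local-Hamiltonian constraint propagates to an $\Omega(1/\poly)$ deviation in some local Pauli expectation.

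\emph{$\coNP$-hardness.} Here I encode an unsatisfiable-CNF question. Take a $3$-SAT formula $\phi$ and define targets on diagonal ($Z$-type) Pauli strings equal to the Fourier coefficients of the uniform distribution over satisfying assignments. Being consistent means these are the Fourier coefficients of some probability distribution, which holds iff $\phi$ is satisfiable. Those coefficients are not efficiently computable, though, so I will instead set targets so that consistency holds iff $\phi$ is \emph{unsatisfiable}. The idea is that a classical $\forall$ witness (an assignment) certifies a violated constraint on some exponentially indexed Pauli string $P_x$, whose target depends efficiently on whether $x$ satisfies $\phi$. Concretely, I take the targets to be those of the maximally mixed state, perturbed by $\pm\epsilon$ on strings $P_x$ indexed by satisfying $x$ in a mutually incompatible way. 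Pairwise anticommuting families, such as Clifford-algebra generators, obey $\sum_j\langle P_j\rangle^2\le1$, so a single satisfying $x$ that forces $\langle P_x\rangle$ large creates an inconsistency via an uncertainty-type bound.

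\emph{$\p^{\PP}$ containment.} This follows from the general result $\qc\subseteq\p^{\PP}$. I write the $\qc$ acceptance as $\max_\rho\min_y\Tr(\rho M_y)$. This is a min--max over convex sets, so minimax lets me convexify the classical side and treat it as an SDP/LP duality. I then solve it with a multiplicative-weights or matrix-exponential iteration, computing the exponentially large sums and traces with $\PP$ oracle queries (GapP style) to poly precision. I expect the main obstacle to be this step: a polynomial number of rounds must suffice even though the classical player has exponentially many strategies. I would first try matrix multiplicative weights with $O(\log D/\epsilon^2)$ rounds, where each round's best response and expectation are estimated by a $\PP$ oracle through binary search on counting queries.
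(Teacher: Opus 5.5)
\emph{$\QMA$-hardness.} Your tolerance step fails. In $\pauliObsCon$ a single pair $(\alpha,\beta)$ with $\beta-\alpha\ge1/\poly$ governs every Pauli string. The observables are the strings themselves, so nothing can be rescaled. The honest witness's expectations on high-weight strings are unknown and in general of order one (already $\ket{0^n}$ has $\langle Z^{\otimes n}\rangle=1$). A fixed target such as $0$ there therefore forces $\alpha=\Omega(1)$, and that tolerance swamps the inverse-polynomial violations of the source instance.

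Rescaling only the targets, $y_P\mapsto\epsilon y_P$, is unsound. The operator $\omega=2^{-N}\bigl(I+\epsilon\sum_{P\ \mathrm{local}}y_PP\bigr)$ is a state once $\epsilon$ is at most the inverse number of local strings. It matches every rescaled target exactly, with expectation $0$ on all other strings, on YES and NO inputs alike.

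The missing idea is a rescaling that a physical map can undo. The paper builds it with the Pauli-spreading gadget: a classical $2m$-qubit key register per data qubit and a key-dependent Pauli mask give $\Tr[(Z_K(\mathbf s)\otimes P)\mathcal E(\rho)]=\lambda_P(\mathbf s)\Tr(P\rho)$, with $|\lambda_P(\mathbf s)|=c^{\operatorname{wt}(P)}$, $c=2^{-m}$, and $\sum_{\mathbf s}\lambda_P(\mathbf s)^2=1$. Weight-$\ge3$ strings then take target $0$ at cost at most $c^3=\alpha_{\mathrm{AP}}$. Measuring the key and undoing the mask gives $\Tr(P\tau)=\sum_{\mathbf s}\lambda_P(\mathbf s)\Tr[(Z_K(\mathbf s)\otimes P)\omega]$. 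Hence any cheating $\omega$ decodes to a source state with error below $c^{-2}\beta_{\mathrm{AP}}=\beta_{\mathsf{FP}}/2$ on every weight-$\le2$ string. Since all weight-$\le2$ data strings appear in the output, the source must specify all of them. This is why the paper starts from $\text{FullPauli}\ObsCon_2$ (obtained from exact $2$-CLDM) rather than from a local Hamiltonian.

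\emph{$\qc\subseteq\p^{\PP}$.} You misplace the obstacle. Matrix multiplicative weights needs only $O(q/\epsilon^2)$ rounds for a $q$-qubit proof when the classical player best-responds, however many challenges there are. What is unresolved is the computation inside each round:
\begin{enumerate}
\item A best response $\arg\min_y\Tr(\rho_tM_y)$ is an existential search over $2^m$ challenges with a $\PP$-type predicate, and counting queries return sums over $y$, not minima.
\item Expectations in $\rho_t\propto\exp\bigl(\eta\sum_{s<t}M_{y_s}\bigr)$ are not acceptance probabilities of polynomial-size circuits. So $\mathsf{PQP}=\PP$ does not provide them, and you give no substitute.
\end{enumerate}
The paper sidesteps both. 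It replaces exponential updates by postselection: starting from $I/2^q$, it conditions on acceptance of the challenges found so far. Every needed quantity, such as $p_b$, is then an acceptance probability of an explicit circuit that a $\PP$ oracle can compare.

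It replaces the minimum by amplification and averaging. With $K=32(m+3)$ copies and acceptance threshold $K/4$, every challenge accepted with probability at least $1/2$ has $r_b(y)\le\delta=2^{-(m+3)}$. One strongly rejecting $y^*$, by contrast, pushes the uniform average of $r_b$ to at least $2^{-m}(1-\delta)>\delta$. Greedy bit-by-bit comparison of prefix averages, in which the factor $p_b^K$ cancels, therefore ends at a halving challenge. On YES instances, $I/2^q\succeq\rho^*/2^q$ and Gao's quantum union bound give $p_b\ge2^{-q-1}$ for every challenge sequence, which caps the number of halvings at $O(q)$.

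\emph{$\coNP$-hardness.} As written this fails. If each satisfying $x$ perturbs a single string $P_x$ by $\pm\epsilon$, a formula with a unique satisfying assignment yields a consistent instance, witnessed by $2^{-N}(I\pm\epsilon P_x)$. Incompatibility across different $x$ cannot help, since pairwise anticommuting families have at most $2N+1$ members. Each satisfying $x$ needs its own internally contradictory family with constant-magnitude targets. For example, put target $1$ on $X\otimes Z_x$, $Y\otimes Z_x$, $Z\otimes Z_x$; pairwise anticommutation gives $\sum_j\langle P_j\rangle^2\le1$ and so a violation of at least $1-1/\sqrt3$. With that change your route is a valid alternative to the paper's commuting triple $A$, $B_a$, $AB_a$ with targets $+1,+1,-1$, which is refuted via $AB_a\succeq A+B_a-I$.
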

\noindent Thus, while we do not achieve the desired $\qc$-completeness of $\pauliObsCon$, we obtain strong evidence in this direction --- $\QMA$-hardness suggests an existentially quantified quantum proof is necessary, and $\coNP$-hardness suggests a universally quantified classical proof is necessary. We remark that of these two results, $\QMA$-hardness is challenging to obtain.

Finally, the $\p^{\PP}$ containment of \Cref{thm:pauliObsConinformal} deserves a special mention --- it follows directly from:
\begin{theorem}[Upper bound on $\qc$ (informal; \Cref{thm:qcinP^PP})]
    $\qc\subseteq\p^{\PP}$.
\end{theorem}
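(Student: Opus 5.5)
The plan is to view a $\qc$ verifier as a zero-sum game and to solve that game approximately with a multiplicative-weights scheme whose intermediate quantities a $\PP$ oracle can evaluate. Fix a $\qc$ verifier $V$ with quantum proof $\rho$ on $p(n)$ qubits, classical $\forall$-proof $y\in\{0,1\}^{q(n)}$, and thresholds $a>b$ with $a-b\ge 1/\poly(n)$. For each $y$ let $P_y:=\langle 0|V_y^\dagger\Pi_{\mathrm{acc}}V_y|0\rangle$ be the accepting POVM element, so $\Pr[V(\rho,y)\text{ accepts}]=\Tr(P_y\rho)$.

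\begin{enumerate}
\item \textbf{Game reformulation.} The instance is a YES instance iff $\max_\rho\min_y\Tr(P_y\rho)\ge a$, and a NO instance iff this value is at most $b$. By von Neumann's minimax theorem (the density operators and the distributions $q$ over $y$ are compact convex sets, and the payoff is bilinear),
\begin{equation*}
\max_\rho\min_q \sum_y q(y)\Tr(P_y\rho)=\min_q \lambda_{\max}\Bigl(\sum_y q(y)P_y\Bigr).
\end{equation*}
It therefore suffices to estimate the value $v$ of this game to additive accuracy $(a-b)/3$.

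\item \textbf{Dynamics.} Run $T=\poly(n)$ rounds. The $\rho$-player uses matrix multiplicative weights:
\begin{equation*}
\rho_t\propto \exp\Bigl(\eta\sum_{s<t}M_s\Bigr),\qquad M_s:=\sum_y q_s(y)P_y .
\end{equation*}
The $y$-player uses classical multiplicative weights:
\begin{equation*}
q_t(y)\propto \exp\Bigl(-\eta\sum_{s<t}\Tr(P_y\rho_s)\Bigr).
\end{equation*}
Since all payoffs lie in $[0,1]$, the standard regret bounds give $\bigl|\frac1T\sum_t\Tr(M_t\rho_t)-v\bigr|\le O\bigl(\sqrt{(p+q)/T}\bigr)$. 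Choosing $T$ polynomially large makes this error smaller than $(a-b)/6$. The algorithm then only needs the average payoff to accuracy $(a-b)/6$.

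\item \textbf{Counting-complexity evaluation.} The goal here is to show that every quantity in the dynamics is, up to inverse-exponential error, a ratio of $\mathsf{GapP}$ functions.
\begin{itemize}
\item Matrix entries of $P_y$ are $\mathsf{GapP}$ path sums over the verifier circuit, with gate entries rationalized.
\item $\exp$ is replaced by a Taylor series truncated at polynomial degree. This truncation is valid because $\|\eta\sum_s M_s\|\le \eta T=\poly(n)$.
\item Traces and sums over all $y\in\{0,1\}^{q}$ preserve $\mathsf{GapP}$.
\end{itemize}
The normalizations $\Tr(\cdot)$ and $\sum_y(\cdot)$ also become $\mathsf{GapP}$ quantities. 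A $\p^{\PP}=\p^{\#\p}$ machine can therefore compute each payoff $\Tr(M_t\rho_t)$ to polynomially many bits, by computing numerator and denominator exactly with the $\#\p$ oracle and dividing. It then averages these payoffs and compares the result with $(a+b)/2$.

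\item \textbf{Main obstacle: nesting.} $\rho_t$ depends on $q_1,\dots,q_{t-1}$, which depend on $\rho_1,\dots,\rho_{t-2}$, and so on. Unrolling these normalized exponentials naively gives a tower of divisions of depth $T$, which is not obviously a single $\mathsf{GapP}$ ratio.

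\emph{First fix to try: rounding.} Round each exchanged scalar, namely each $\Tr(P_y\rho_s)$, to $O(\log n)$ bits. The regret analysis tolerates inverse-polynomial perturbations of losses, so this should not damage the guarantee. However, $q_t$ is indexed by exponentially many $y$, so its rounded values cannot be stored, and the problem is pushed one level down.

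\emph{Second fix: discrete stored data.} Make the $y$-player's state polynomial-size by replacing $q_t$ with a best response, or a sampled response, $y_t$ found via the oracle. The $\rho$-player's state is then just the list $y_1,\dots,y_{t-1}$, so $\rho_t\propto\exp\bigl(\eta\sum_{s<t}P_{y_s}\bigr)$ has a single $\mathsf{GapP}$-ratio description.

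\emph{Issue with the second fix.} An exact best response would be an $\mathsf{NP}^{\PP}$-type search. To stay in $\p^{\PP}$, I would instead sample $y_t$ from the MW distribution $q_t$. The unnormalized weights $\exp\bigl(-\eta\sum_{s<t}\Tr(P_y\rho_s)\bigr)$ can be written, after Taylor truncation, as a $\mathsf{GapP}$ function once $\rho_1,\dots,\rho_{t-1}$ are fixed by $y_1,\dots,y_{t-1}$. Sampling $y_t$ bit by bit then reduces to comparisons of marginal sums, i.e.\ $\#\p$ queries, together with the deterministic bit-by-bit self-reduction (or fixing polynomially many random bits in advance and derandomizing with $\p^{\PP}$'s closure properties).

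\emph{Remaining risk.} The step I expect to need the most care is controlling the non-negativity and precision of these truncated-exponential weights along the whole trajectory, so that the sampled play still has a regret guarantee holding with high probability.
\end{enumerate}
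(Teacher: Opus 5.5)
Your game reformulation, the MMW/MW regret analysis, and the observation that $\rho_t\propto\exp(\eta\sum_{s<t}P_{y_s})$ has a single $\mathsf{GapP}$-ratio description once the list $y_1,\dots,y_{t-1}$ is stored are all fine. You also correctly identify the nesting obstacle. The gap is in how $y_t$ is produced.

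Sampling $y_t\sim q_t$ consumes fresh random bits, and every later $\PP$ query depends adaptively on the sampled strings. What you have described is therefore a $\BPP^{\PP}$ algorithm, not a $\p^{\PP}$ one. No closure property of $\p^{\PP}$ is known that absorbs this randomness. This is consistent with the paper's remark that $\QCPH\subseteq\BPP^{\PP}$ is known while improving it to $\p^{\PP}$ is open.

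The deterministic reading of your self-reduction does not help either. Always descending into the heavier half of $q_t$ returns a $y$ that approximately minimizes the past cumulative loss $\sum_{s<t}\ell_s(y)$, where $\ell_s(y):=\Tr(P_y\rho_s)$. That is a follow-the-leader-type rule. The MW regret bound, which is a statement about the mixed play $\sum_t\langle q_t,\ell_t\rangle$, does not transfer to it. So the inequality $\frac1T\sum_t\Tr(P_{y_t}\rho_t)\le v+\epsilon$ is no longer justified.

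The missing idea is that you need neither an exact best response nor a no-regret $y$-player. Against MMW, an $\epsilon$-approximate best response to the current $\rho_t$ suffices, because $\frac1T\sum_t\min_y\Tr(P_y\rho_t)\le\min_y\Tr(P_y\bar\rho)\le v$ with $\bar\rho:=\frac1T\sum_t\rho_t$. Contrary to your remark that this is an $\mathsf{NP}^{\PP}$-type search, such a response can be found deterministically with $\PP$ queries:
\begin{itemize}
\item Fix the bits of $y\in\{0,1\}^q$ one at a time.
\item At each step, keep the prefix whose completions carry the larger total of an exponentially sharpened weight, e.g.\ $e^{-\lambda\Tr(P_y\rho_t)}$ with $\lambda\ge q/\epsilon$.
\item After Taylor truncation to degree well above $\lambda$ and clearing the common denominator $\Tr(\cdot)$, this weight is a $\mathsf{GapP}$ function of $y$ up to exponentially small relative error. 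Each comparison is therefore a single $\PP$ query.
\item The kept half always carries at least half the current mass, so the final $y$ retains a $2^{-q}$ fraction of the total weight. The sharpening turns this into additive loss at most $q\ln 2/\lambda\le\epsilon$.
\end{itemize}

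This greedy halving under exponential amplification is exactly the engine of the paper's argument, although the paper dispenses with minimax and MMW altogether. Following Aaronson's postselection method, it starts from the maximally mixed proof. It then repeatedly finds a challenge accepted with probability below $1/2$ by the current postselected state. It does so by comparing averages of the amplified event ``at most a quarter of $K$ parallel runs accept''. That event's separation $2^{-(m+3)}$ beats the $2^{-m}$ loss from averaging over completions of an $m$-bit challenge, and the common postselection factor cancels, so $\mathsf{PQP}=\PP$ applies. Finally, the paper uses the quantum union bound to show that on YES instances only polynomially many such halvings can occur.
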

\noindent This is the first ``Toda-like'' (i.e., $\p^{\PP}$) bound on any level of a quantum polynomial hierarchy ($\PH$) involving a \emph{quantum} proof. For comparison, quantum PH with all \emph{classical} proofs (denoted $\QCPH$~\cite{gharibianQuantumGeneralizationsPolynomial2018}) is contained in $\BPP^{\PP}$~\cite{miloschewsky2026promises}; improving this to $\p^{\PP}$ remains open. With quantum proofs involved, however, known upper bounds are substantially worse: quantum PH with mixed quantum proofs ($\QPH$~\cite{gharibianQuantumGeneralizationsPolynomial2018}) and pure quantum proofs ($\pureQPH$~\cite{grewalPureQuantumPolynomial2026}) are not even known to be contained in $\NEXP$ (!). In fact, the smallest upper bound known in the setting of quantum proofs is for the second level of $\QPH$, which is in PSPACE~\cite{gharibianQuantumGeneralizationsPolynomial2018,jainParallelApproximationNoninteractive2009}.

\vspace{-2mm}
\paragraph{4. Faking shadows and unique extensions.} Finally, we show that faking or spoofing a $3$-local shadow from a $2$-local one is impossible, assuming $\QCMA\not\subseteq\BPP$. Recall that here we work with a particularly simple type of shadow, namely the set of $2$-local marginals of an $n$-qubit state $\rho$.

\begin{restatable}{theorem}{thmRDM}(No-go for shadow extension (informal; \Cref{thm:recover-verify}))\label{thm:RDM}
    Let $D=(\rho_{\{i,j\}})_{1\le i<j\le n}$ denote a set of two-qubit marginals, with the promise of exactly one consistent $n$-qubit state $\rho$. If there exists a poly-time randomized classical algorithm which, for any such $D$, outputs any requested three-qubit marginal of $\rho$ (within constant trace distance), then $\QCMA \subseteq \BPP$.
\end{restatable}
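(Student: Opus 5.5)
The plan is a reduction from a $\QCMA$-complete problem, with the hypothesised extension algorithm used as a classical simulator for the verifier's computation. The natural starting point is Broadbent--Grilo's result that CONSISTENCY of local density matrices with a unique consistent state, in the style of "$\QCMA$ via uniquely determined history states", is hard for $\QMA$/$\QCMA$. A cleaner route is the known fact that $2$-local marginals of a history state can uniquely determine the whole state. This is the "locally-verifiable/rigid" Feynman--Kitaev construction in which the $2$-RDMs pin down a unique global state, used for zero-knowledge proofs.

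Concretely, take a $\QCMA$ verifier $V$ acting on input $x$ and a classical witness $w$. We would build a circuit $C_w$ that first prepares $\ket{w}$ from $\ket{0}$ using $X$ gates, then runs $V$. We then form the Feynman--Kitaev history state $\ket{\psi_w}$, using a unary clock and gates decomposed into $2$-qubit gates. Its $2$-local marginals can be computed classically in polynomial time from the circuit, given $w$: each marginal involves only constant-size windows of the computation, via a locally simulatable encoding. We would then argue uniqueness of the consistent state. Here we would invoke the rigidity result that the $2$-RDMs of a suitably padded history state of a circuit (e.g.\ one with a padded clock and locally-computable encoding as in the Broadbent--Grilo construction) admit exactly one consistent global state.

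The $\BPP$ algorithm guesses nothing; the witness is the issue, since $\QCMA\subseteq\BPP$ requires deciding existence of $w$. So instead we would run the reduction on the $\mathsf{MA}$-like structure and use search-to-decision. The main step is the following. Given $D$ for a fixed $w$, computing the $3$-qubit marginal on the output qubit together with two clock qubits at the final time reveals the acceptance probability of $V(x,w)$. Hence the extension algorithm lets a $\BPP$ machine estimate $\Pr[V(x,w)\text{ accepts}]$ for any given $w$, which by itself only yields $\QCMA\subseteq\mathsf{MA}$. To reach $\BPP$, we would fold the witness into the state. We take $\ket{\psi}$ to be the history state of a circuit that starts with a uniform superposition over witnesses in a register, then runs $V$ coherently, with the witness register serving as control. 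We would then ensure, via Kitaev-style padding, that the $2$-local marginals are still classically computable and rigid, while some $3$-local marginal encodes $\max_w$-related information. Simplest is to decide bit-by-bit: marginals revealing $\Pr[\text{accept}]$ averaged over witnesses with a fixed prefix, combined with $\mathsf{NP}$-style search through Valiant--Vazirani-style isolation to make the accepting witness unique.

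The main obstacle is reconciling three requirements. The $2$-RDMs must be efficiently classically computable, which rules out computing $V$'s acceptance directly. They must also uniquely determine the state. Finally, a $3$-RDM must encode the acceptance bit. I would first try to handle these together by making every $2$-local window correspond to a constant-size piece of the circuit whose reduced state depends only on locally simulatable data, for instance by using a computation where all intermediate states are tracked via the clock and only one gate acts per window. Acceptance information would then emerge only in genuinely $3$-body correlations between the clock and the data registers.
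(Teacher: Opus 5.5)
Your verification half is close to part (ii) of \Cref{lem:recovery-verification}: prepare $w$ with $X$ gates, run $V_x$ under a simulatable encoding, and use that a fixed-input history state is the unique completion of its pair table. The recovery half, however, has no working mechanism, and it is the only thing separating $\mathsf{MA}$ from $\BPP$ here.

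Feeding in a uniform superposition or mixture over witnesses with a fixed prefix of length $\ell$, and reading off the averaged acceptance probability, fails. Even after isolation, the witnesses other than $w^\star$ that share the prefix have acceptance probabilities unconstrained on YES instances (anything below $1$). So the average need not favor the correct prefix at all. Moreover, $w^\star$ itself carries weight only $2^{-(m-\ell)}$, far below the constant $1/20$ accuracy of the extension algorithm.

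The missing idea is to leave the witness register as a \emph{free} input and let exact pair constraints pin it down:
\begin{itemize}
\item every completion of the pair table is a history state of the same circuit on some input (zero energy for the two-local Hamiltonian of \cite[Lemma~4.4]{KR25}, plus compactness);
\item off-diagonal clock blocks of pairs $\{j,c_t\}$ at inserted identity steps force decoder auxiliaries to $|0\rangle$, certifying correct encoding;
\item setting the $|0\rangle\langle 0|_o\otimes|1\rangle\langle 1|_{c_T}$ weight to $0$ forces acceptance with probability exactly $1$.
\end{itemize}
To single out one state, this needs three further ingredients. First, $\mathsf{QCMA}=\mathsf{QCMA}_1$ \cite{JKNN12}, which you never invoke. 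Second, Valiant--Vazirani isolation of a unique \emph{perfectly} accepting witness. Third, a CNOT copy of the witness register, so that acceptance depends only on the diagonal of the logical input $\eta$. Positivity then forces $\eta=|w^\star\rangle\langle w^\star|$, so the unique completion carries $w^\star$ even though the reduction never computes it.

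The second gap is keeping the pair table classically computable while some three-qubit marginal carries the bit you want. Local simulatability alone does not achieve this. Once a witness bit or the verifier's output sits decoded on a single physical qubit $o$, the pair $\{o,c_T\}$ of a unary clock already exposes its final-time state. This is precisely the entry used above as the acceptance constraint in recovery. Hence the 2-RDMs you propose for verification would depend on $p_x(w)$, and those for recovery on $w^\star$. Pushing the readout into clock--data three-body correlations does not remove this leak.

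The paper encodes each witness bit, and in (ii) the verifier's output, under one level of the Steane code, whose one- and two-qubit reductions are maximally mixed. Each of its seven qubits is further encoded in the concatenated-Steane simulatable code of \cite{BG22}. Decoding only the blocks of coordinates $5,6,7$ gives a triple $B$ on which $Z^{\otimes 3}$ has the logical $Z$ expectation (as $Z_1Z_2Z_3Z_4$ is a stabilizer), while every pair marginal stays input-independent. Identity padding to $T=12T_0$ then keeps the clock-averaged work marginal on $B$ within $1/6$ of its final value, so $B$ needs no clock qubits at all.
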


\subsection{Techniques}\label{sscn:techniques}

We briefly review the techniques behind each of our results. 
\vspace{-2mm}

\paragraph{Permutation-invariant shadows: reduce the search by symmetry.}
A permutation-invariant (PI) observable has the same expectation
before and after the qubits are permuted. Thus, if a state matches
the given targets, averaging it over all permutations gives a PI
state that matches the same targets~\cite{SL24}. We can therefore
restrict our search to PI states. Schur--Weyl duality makes
this search tractable by representing these states through $O(n)$
positive semidefinite blocks, each of dimension at most $n+1$.
Positivity and normalization ensure that the blocks describe a
state, while each observable adds a linear condition specifying its
expectation. For polynomially many observables, these conditions
form a polynomial-size semidefinite program. This yields a classical
polynomial-time algorithm, assuming the observable blocks are supplied
classically explicitly. When the latter assumption does not hold, we 
quantumly estimate the blocks using probe states prepared by the quantum inverse Schur
transform~\cite{BCH05} and then solve the same program, obtaining
the $\BQP$ upper bound.

\paragraph{Real shadows: keep inconsistency visible to $X/Z$ measurements.}
Real shadows capture only the Pauli components built from $I$, $X$, and
$Z$, so simply discarding the remaining components of a $\QMA$-hard
consistency instance could hide its inconsistency. We address this by
constructing an $XZ$-only Hamiltonian whose energy certifies inconsistency. Specifically, we first fix a universal real gate set, and apply~\cite{HNN13} to obtain a one-dimensional, real history state construction. This, however, does not suffice --- the propagation Hamiltonian produced can still contain terms such as $Y\otimes Y$ (since this is a real matrix), whereas real shadows cannot ``see'' observable $Y$. The next step is thus to introduce a nearest-neighbor overlapping label encoding that allows the label changes required for propagation to be implemented using products of Pauli $X$ operators. Intuitively, each neighboring pair of labels is encoded in unary, with a single non-zero qubit identifying the pair. Changing a register's label requires ``switching off'' one qubit via $X$ and ``switching on'' another via $X$. It is precisely here that shared labels prevent partial updates: changing only part of such an
encoded transition would leave adjacent registers disagreeing.
Diagonal penalties using only $Z$ operators enforce single excitations and agreement
between shared labels. Taking these penalties sufficiently large
controls transitions taking us outside the encoding, giving an $XZ$-only
Hamiltonian with constant locality and an inverse-polynomial promise gap.

The local consistency targets are constructed by adapting the approach used to prove $\QMA$-completeness of $\CSV$~\cite{KRMEG25} for the local Clifford shadows scheme of ~\cite{huangPredictingManyProperties2020}. Namely, we use simulatable verification~\cite{BG22,KRMEG25}
to compute local targets for the real history-state construction
without an accepting witness, and transfer these targets through
our overlap encoding. These targets assign the Hamiltonian total energy zero on every input: they are approximately
jointly realizable on YES inputs, but contradict its positive ground
energy on NO inputs. Because the Hamiltonian is $XZ$-only, this
contradiction remains visible to real shadows. To express the
targets as shadow predictions, we then use dynamic programming to solve a 1D integer programming problem, which allows us to choose
local snapshot counts that agree on shared sites~\cite{KRMEG25}. Finally, these local snapshots are carefully stitched together to obtain \emph{global} samples with the
required approximate expectations for the real shadow scheme.

\paragraph{Fermionic shadows: isolate one constraint at a time.}
Our starting point is exact $N$-representability \cite{KR25}, which
asks whether a given two-particle density matrix can arise as the
reduced state of $N$ fermions. To build an instance of $\FermObsCon^{(N)}_{\mathcal O_{2\mathrm{RDM}}}$ (\Cref{def:FermObsCon}), we start with a set of Hermitian observables on at most four modes. To then ``isolate these constraints'', we tensor each observable
with an auxiliary tag observable and construct matchgate samples whose
averaged estimators reproduce the tagged targets. The tags let us
choose a separate sample family for each constraint, with zero
contribution to the other tagged constraints, hence the isolation. Indeed, a Majorana
monomial's estimator vanishes unless its support is a union of
matching edges, so pairing an unwanted tag Majorana outside the
support of its tagged observable forces that observable's estimator
to vanish term by term.

When these families are combined, each contributes only a fraction
of the samples and must therefore realize proportionally larger
targets. Adding further ``padding modes'' makes this possible by enlarging the
ball around the origin contained in the convex hull of local estimator
vectors. Constant-size linear programs then determine suitable
mixtures, whose weights we round to integer sample counts.
This process adds auxiliary modes and observables, so to recover the original $\FermObsCon^{(N)}_{\mathcal O_{2\mathrm{RDM}}}$ instance, we also constrain the
auxiliary modes to be almost fully occupied. Projecting onto their
occupied states changes expectations only slightly and makes every
tag observable equal to one; removing the auxiliaries therefore
recovers the source constraints in the $N$-particle sector, since
total particle number is fixed.

\paragraph{Bosonic shadows: encode qubits and control truncation errors.}
The reduction starts from the $\QMA$-hard restriction of qubit
$\ObsCon$ (\Cref{def:obscon}) to polynomially many Pauli observables,
each acting on a constant number of qubits. For heterodyne and homodyne shadows, we encode each qubit in a
mode's first two Fock levels and restrict the recovery procedures
of~\cite{BDLR24} to those levels. On this subspace, $a+a^\dagger$,
$-i(a-a^\dagger)$, and $I-2\widehat N$ compress to $X$, $Y$, and $Z$.
Shifting the target expectations in our restricted $\ObsCon$ instance to zero now lets us exploit the decay
of the recovery functions (for regularized heterodyne recovery, these functions are the Fourier transforms of smooth, compactly
supported functions, and for homodyne recovery they are pattern
functions). Constructing the measurement record in the reduction by including sufficiently large
outcomes for each mode thus gives recovered values close to zero.

This restriction applies to recovery, while candidate consistent
states may still occupy arbitrary Fock levels. Following \cite{BDLR24}'s use of
photon-number moment bounds, our reduction chooses
$\Tr((I+\widehat N_j)^\ell\rho)\le2^\ell$, with $\ell=O(\log n)$.
Thus the probability of finding two or more photons in a given mode
is at most $(2/3)^\ell$. Since the observables are unbounded,
weighted norm estimates are also needed to control the resulting
expectation errors and cross terms. We can then decode the first
two levels and replace the rest by $\lvert0\rangle$, preserving the
source constraints within the required tolerance.

\paragraph{All-Pauli consistency: suppress unknown correlations.}
For $\coNP$-hardness, we show a reduction from the $\coNP$-complete UNSAT problem. Specifically, we construct a succinct target function in
which each satisfying assignment of a Boolean formula specifies two
commuting Paulis with target $+1$ and their product with target $-1$.
The first two requirements (the $+1$ constraints) force the product's expectation to be
$+1$, giving a contradiction with the third $-1$ constraint; this persists even at constant error.
If the formula is unsatisfiable, no such contradiction is activated
and all targets are jointly realizable. 

The $\QMA$ reduction is more challenging, and faces a different obstacle: The reduction begins from the $\QMA$-complete $\text{FullPauli}\ObsCon_{2}$ (\Cref{def:FullPauliObsCon}) problem in which all 2-local expectations are specified, whereas the output must assign
expectation targets to every $n$-qubit Pauli, including non-local ones. We address this by attaching a classical key
register to each data qubit and applying a key-dependent Pauli mask.
The masks spread each correlation across many encoded expectations,
suppressing each expectation's contribution exponentially in the number of data
qubits involved. We can therefore compute targets for the known
local correlations and approximate contributions from the unknown
correlations (i.e., locality $\geq 3$) by zero. Retaining the keys also makes the source
information recoverable: reading them and undoing the masks recovers
the original constraints from the input $\text{FullPauli}\ObsCon_{2}$ instance with only polynomial error amplification.

\paragraph{The $\mathsf P^{\mathsf{PP}}$ simulation: find rejecting challenges.}
Our upper bound adapts Aaronson's postselected-learning
method~\cite{Aar05}. After amplifying the verifier, we take the
maximally mixed $q$-qubit proof as a candidate accepting existentially quantified proof. 
For the universally quantified classical proof, we repeatedly find a challenge
accepted with probability below $1/2$, then condition on acceptance.
Each update at least halves the probability of passing the entire
sequence; if this probability becomes zero, we reject. On YES
inputs, however, the honest proof's initial weight $2^{-q}$ and the
quantum union bound~\cite{Gao15} limit the number of updates to
$O(q)$, whereas on NO inputs every conditional proof state still
has a strongly rejecting challenge.

To find each challenge, we repeat the verifier on independent copies
of the current conditional state and ask whether at least three
quarters of the runs reject. This is likely for a strongly rejecting challenge, but
exponentially unlikely for any challenge accepted at least half the
time. With enough repetitions, one strongly rejecting challenge raises
the average probability of this event above the level possible when
every challenge is accepted at least half the time. We then choose
the challenge bit by bit, using $\mathsf{PP}$ queries to retain the
group with the larger average probability. The common conditioning
factor cancels in these comparisons, and a final query verifies
acceptance below $1/2$, giving $\qc\subseteq\mathsf P^{\mathsf{PP}}$.

\vspace{-2mm}
\paragraph{Faking shadows: recover a witness and simulate its verification.}
The main idea is to turn a classical algorithm for extending two-qubit
marginals to three-qubit marginals into a $\mathsf{BPP}$ simulation of a $\mathsf{QCMA}$ computation.
We first use Valiant--Vazirani type \cite{VV86} isolation to the set of perfectly accepting $\mathsf{QCMA}$ (since $\mathsf{QCMA}$ = $\mathsf{QCMA}_1$ \cite{JKNN12}) witnesses to obtain a unique and perfectly accepting
classical witness. We then invoke the extension algorithm twice, for two
different purposes. The first invocation recovers the witness itself
from carefully constructed $2$-qubit data, while the second uses the
recovered witness string to classically simulate whether the quantum verifier
would accept it. Thus the extension algorithm substitutes both for
access to the QCMA witness and for the subsequent quantum verification
of that witness.

First we recover the witness from $x$ alone. We construct a family of two-qubit marginals whose
global completion is unique when the verifier has a unique and perfectly accepting witness $w^\star$. Although these marginals can be
computed without knowing $w^\star$, selected three-qubit marginals
encode the individual logical bits of $w^\star$. We do this via `simulatable' Steane codes \cite{BG22, KR25} whose two-qubit reduced density matrices are maximally mixed, i.e., reveal no logical information about the encoded state, but 3 qubit marginals do. We thus use the sign of a $Z^{\otimes3}$
expectation on a 3-qubit marginal produced by the hypothetical extension algorithm to produce each $w_i^\star$. A classical extension
algorithm that reconstructs these three-qubit marginals from the
two-qubit data therefore allows a $\mathsf{BPP}$ machine to recover the
entire witness with high probability.

Recovering a classical string is not enough, since checking it still
requires evaluating the quantum verifier. Given a candidate $w$, we
therefore construct a second family of two-qubit marginals whose global
completion is unique for every $x,w$, whether or not $w$ is accepted.
A designated three-qubit correlation in this completion approximates
$1-2p_x(w)$, where $p_x(w)$ is the verifier's acceptance probability.
The same extension algorithm can therefore distinguish a perfectly
accepting witness from any witness accepted with probability at most
$1/3$. This second stage is also what makes the simulation sound on NO
instances, even if the recovery construction is then outside its
uniqueness promise and outputs an arbitrary candidate, the verification
construction remains uniquely completable and rejects every candidate.
After amplification, the two stages yield a $\mathsf{BPP}$ simulation
of the original $\mathsf{QCMA}$ computation (\Cref{thm:recover-verify}).

\subsection{Open questions}\label{sscn:open-questions}

Our work leaves a number of questions open. First, is $\pauliObsCon$ complete for $\qc$, or does the
fixed Pauli observable family admit a stronger upper bound than general succinct consistency? If $\qc$-completeness holds, one would obtain the first natural $\qc$-complete problem for a natural observable family. Second, we showed $\qc\subseteq\p^{\PP}$. Can containment in $\p^{\PP}$ still be shown if we increase the number of alternating quantum and classical proofs? Third, does heterodyne shadow validity remain $\QMA$-hard with a fixed
sufficiently high moment order $\ell$, a fixed bound $B$, and a fixed
recovery cutoff $M$, all independent of the number of modes? Our reduction uses
$\ell=O(\log n)$ to make leakage outside the two-level encoding
smaller than the inverse-polynomial consistency gap. A fixed-moment
result would require another way to constrain this leakage while
retaining the full Fock space as the witness space and keeping the
polynomial observables well defined. Fourth, are there other classical shadow protocols captured by complexity classes other than those used here ($\BQP$, $\QMA$, and $\qc$)? Fifth, can we come up with application(s) of the hardness of faking shadows problem? For example, can we use the classical hardness of constructing a desired 3-qubit marginal from the set of 2-qubit marginal data in a concrete cryptographic setting? Finally, and more broadly, are there other shadow protocols which admit efficient validity checking, or a reasonable notion of shadow extension?

\subsection*{Acknowledgements}\label{sscn:acknowledgements}
SG and DR acknowledge support from the Deutsche Forschungsgemeinschaft
(DFG), project 563388236 (Bridge-QS, SPP 2514). SG and GK acknowledge
support from DFG project 572703436 (QPUP). SG additionally acknowledges support from EU QuantERA/DFG project 583918116
(SDPCODE), and BMFTR (PhoQuant). DEK is supported by the National Research Foundation, Singapore through the National Quantum Office, hosted in A*STAR, under the Advanced Quantum Algorithms and Solutions (AQAS) Funding Initiative (S25Q9DA001). The Berlin wing acknowledges funding by the BMFTR (Hybrid++,
MuniQC-Atoms), the Munich Quantum Valley, Berlin Quantum, the 
Quantum Flagship (Millenion, PasQuans2), the European Research Council (DebuQC),
the Clusters of Excellence (MATH+, ML4Q), and the 
DFG (CRC 183, SPP 2514, and BoLaCo).

\subsection*{Disclosure of AI assistance}
The authors used generative AI tools to explore proof ideas,
assist in identifying and closing gaps in proofs, and improve
the exposition. The authors take full responsibility for the
correctness of all arguments and results and for the final
content of the manuscript.

\section{Preliminaries}
\paragraph{Notation.}
For a vector $\ket{v}$, we write
$\|v\|_2:=\sqrt{\braket{v|v}}$ for its Euclidean norm.
For operators, we use the trace, Hilbert--Schmidt,
and operator norms, respectively:
\[
\begin{aligned}
\|X\|_1&:=\operatorname{Tr}\sqrt{X^\dagger X},\\
\|X\|_2&:=\sqrt{\operatorname{Tr}(X^\dagger X)},\\
\|X\|_\infty&:=\sup_{\|v\|_2=1}\|X\ket{v}\|_2.
\end{aligned}
\]
The trace-norm distance between states $\rho$ and $\sigma$
is $\|\rho-\sigma\|_1$.
For a state $\rho$ on labeled subsystems, we write
$\rho_\Lambda:=\operatorname{Tr}_{\bar\Lambda}(\rho)$
for its reduced state on $\Lambda$, where $\bar\Lambda$
denotes the complementary set of subsystems.

\subsection{Consistency problems}
\begin{definition}[Classical shadow~\cite{KRMEG25}]\label{def:classical-shadow}
    A shadow on $n$ qubits is a $4$-tuple $(S,O,A, \chi)$, where 
    \begin{itemize}
        \item (Shadow) $S=\set{s_i}_{i=1}^N$ is a multiset of $\poly(n)$-bit strings, with $N=\poly(n)$,
        \item (Observables) $O=\set{O_i}_{i=1}^m$ is a set of $n$-qubit observables satisfying $\|O_i\|_{\infty}\le1$, where $1\leq m\leq 2^{p(n)}$ for polynomial $p$. Given index $i$, a $\poly(n)$-bit description of $O_i$ can be produced in $\poly(n)$-time. Moreover, there exists a $\poly(n)$-time quantum algorithm which, for any $O_i$ and any $n$-qubit state $\rho$, applies\footnote{Formally, we can efficiently measure in the eigenbasis of $O_i$, and return the eigenvalue corresponding to the measurement result.} measurement $O_i$ to $\rho$.
        \item (Recovery algorithm) $A$ is a $\poly(n)$-time classical algorithm which, given $S$ and $i\in[m]$, produces a real number $A(S,i)\in[-1,1]$ within $\chi$ bits of precision.
    \end{itemize}    
\end{definition}
We use the following conventions throughout.
\paragraph{Succinct access assumption~\cite{KRMEG25}.} When we say that we assume succinct access to a set $\{A_i^{(n)}\}_{i=1}^m$, with $n$ the natural size parameter of the instance (e.g., 
number of qubits for observables or precision parameter for a real value), we mean that given an index $i$, a $\poly(n)$-bit description of $A_i$ can be produced in $\poly(n)$-time.
\paragraph{Recovery notation.}
When convenient, we write $A(S,O)$ instead of $A(S,i)$ to make
the observable being estimated explicit, in particular for
rescaled or modified observables. This notation is used whenever
$O$ belongs to the observable class supported by the corresponding recovery procedure.
\paragraph{Range convention for recovery values~\cite{KRMEG25}.}
Although the physical expectation value of any observable $O$ satisfying $\|O\|_\infty\le1$ lies
in $[-1,1]$, the raw estimator produced by a shadow recovery procedure
need not lie in this interval. In our protocol-specific CSV problems, we
therefore impose as a promise that the final reported
recovery values lie in $[-1,1]$; an out-of-range reported value is treated as an invalid prediction.

 \begin{definition}[Classical Shadow Validity ($\CSV$)~\cite{KRMEG25}]\label{def:CSV}
Given classical shadow $(S,O,A,\chi)$, parameters $\alpha$ and $\beta$ satisfying $\beta-\alpha\geq 1/\poly(n)$, decide between the following two cases:
\begin{itemize}
    \item \textbf{Yes}: $\exists$  $n$-qubit state $\rho$ s.t.\ $\forall$ $i\in [m]$, 
    $\left| \Tr\left(O_i\rho\right)-A(S,i)\right|\leq \alpha$.
    \item \textbf{No}: $\forall$ n-qubit states $\rho$ $\exists$ some $i\in [m]$ s.t.\ $\left| \Tr \left(O_i\rho\right)-A(S,i)\right|\geq \beta$.
\end{itemize}
Since $A(S,i)\in[-1,1]$ and $\|O_i\|_{\infty}\le1$, it is natural to assume $0\le\alpha<\beta\le2$.
\end{definition}

\begin{definition}[Observable consistency ($\ObsCon$)~\cite{KRMEG25}]\label{def:obscon}
The input is a set of observables, as in \cref{def:classical-shadow}, along with their target expectation values $(O_i, y_i)_{i=1}^m$, for which we assume succinct access, and parameters $\alpha$ and $\beta$ satisfying $\beta-\alpha\geq 1/\poly(n)$. We further assume w.l.o.g. that $y_i\in [-1,1]$ and $0\le\alpha<\beta\le2$.  The output is to decide between the following cases:
\begin{itemize}
    \item \textbf{Yes}: $\exists$  $n$-qubit state $\rho$ such that $\forall$ $i\in [m]$, 
    $\left| \Tr\left(O_i\rho\right)-y_i\right|\leq \alpha$.
    \item \textbf{No}: $\forall$ $n$-qubit states $\rho$, $\exists$ $i\in [m]$ such that $\left| \Tr \left(O_i\rho\right)-y_i\right|\geq \beta$.
\end{itemize}
\end{definition}
We write $\ObsCon_{\mathrm{poly}}$ for the restriction to
polynomially many observables and $\ObsCon_{\exp}$ for
the version allowing exponentially many observables
under succinct access.
\subsection{Complexity classes}
\begin{definition}[$\QMA$]
A promise problem $A=(A_{\mathrm{yes}},A_{\mathrm{no}})$ is in $\QMA$ if
and only if there exist polynomials $p,q$ and a polynomial-time uniform
family of quantum circuits $\{Q_n\}$, where $Q_n$ takes as input a string
$x\in\Sigma^*$ with $|x|=n$, a quantum proof
$\ket{y}\in(\mathbb{C}^2)^{\otimes p(n)}$, and $q(n)$ ancilla qubits in
state $\ket{0}^{\otimes q(n)}$, such that:
\begin{itemize}
    \item (Completeness) If $x\in A_{\mathrm{yes}}$, then there exists a
    proof $\ket{y}\in(\mathbb{C}^2)^{\otimes p(n)}$ such that $Q_n$
    accepts $(x,\ket{y})$ with probability at least $2/3$.

    \item (Soundness) If $x\in A_{\mathrm{no}}$, then for all proofs
    $\ket{y}\in(\mathbb{C}^2)^{\otimes p(n)}$, $Q_n$ accepts
    $(x,\ket{y})$ with probability at most $1/3$.
\end{itemize}
\end{definition}

\begin{definition}[$\qc$]\label{def:qc-cs}
A promise problem $A=(A_{\mathrm{yes}},A_{\mathrm{no}})$ is in $\qc$ if there is a
polynomial-time generated quantum verifier $V_x$ which, on input $x\in\{0,1\}^n$,
receives a polynomial-size quantum proof $\rho$ and a polynomial-size classical proof
$y$, and satisfies:
\begin{itemize}
    \item \textbf{Completeness}: If $x \in A_{\text{yes}}$, then $\exists\; \rho$ such that $\forall \ket{y}$, 
    $\Pr[V_x(\rho,\ket{y})=1]\geq 2/3$. 
    \item \textbf{Soundness}: If $x \in A_{\text{no}}$, then $\forall \rho$, $\exists \ket{y}$ such that $\Pr[V_x(\rho,\ket{y})=1]\leq 1/3$.   
\end{itemize}  
\end{definition}
\begin{definition}[$\coNP$]
For every $L\subseteq\{0,1\}^{*}$, we say that
$L\in\coNP$ if there exist a polynomial
$p:\mathbb{N}\to\mathbb{N}$ and a polynomial-time Turing machine $M$
such that, for every $x\in\{0,1\}^{*}$,
\[
    x\in L
    \quad\Longleftrightarrow\quad
    \forall u\in\{0,1\}^{p(|x|)},\;
    M(x,u)=1.
\]
\end{definition}

\begin{definition}
    [$\mathsf{PP}$] A promise problem $A=(A_{\mathrm{yes}},A_{\mathrm{no}})$ is in
$\mathsf{PP}$ if and only if there exists a polynomial-time
probabilistic Turing machine $M$ that accepts every string
$x\in A_{\mathrm{yes}}$ with probability strictly greater than $1/2$,
and accepts every string $x\in A_{\mathrm{no}}$ with probability
at most $1/2$.
\end{definition}

\begin{definition}[$\mathsf{P}^{\mathsf{PP}}$]
$\mathsf{P}^{\mathsf{PP}}$ is the class of decision problems which can be decided by a polynomial-time deterministic Turing machine making $\poly(n)$ queries to an oracle for $\mathsf{PP}$. 
\end{definition}

\subsection{Shadow protocols}\label{sscn:protocols}
\subsubsection{Permutation-invariant shadows}\label{par:PI}
\paragraph{Schur--Weyl duality.}\label{par:schur-weyl}
Let $U_\pi$ denote the unitary that permutes the $n$ qubits
according to $\pi\in S_n$. An operator $O$ is permutation
invariant (PI) if $U_\pi O U_\pi^\dagger=O$ for every
$\pi\in S_n$. Schur--Weyl duality gives the orthogonal
decomposition
\[
(\mathbb C^2)^{\otimes n}
\cong
\bigoplus_{\lambda}
\left(\mathbb C^{d_\lambda}\otimes\mathbb C^{m_\lambda}\right),
\]
where $\lambda=(n-k,k)$ ranges over partitions of $n$
with at most two rows, with
$k=0,\ldots,\lfloor n/2\rfloor$, and
\[
d_\lambda=\binom{n}{k}-\binom{n}{k-1},
\qquad
m_\lambda=n-2k+1.
\]
Here $\binom{n}{-1}:=0$, and $d_\lambda$ is the dimension
of the corresponding irreducible representation of $S_n$.
We call each summand a sector; its dimension is
$d_\lambda m_\lambda$.\\
\\
In the corresponding Schur--Weyl basis, every PI
Hermitian operator has the form
\[
O=\bigoplus_\lambda I_{d_\lambda}\otimes O_\lambda,
\quad
O_\lambda\in\operatorname{Herm}(m_\lambda),
\]
where $I_{d_\lambda}\otimes O_\lambda$ consists
of $d_\lambda$ identical copies of the reduced block
$O_\lambda$. Thus, although $d_\lambda$ can be exponentially
large, there are only $\lfloor n/2\rfloor+1$ reduced blocks,
each of dimension $m_\lambda\le n+1$. The sector labels,
block sizes, and repetition counts depend only on $n$;
only the entries of the reduced blocks depend on $O$. Since
\begin{equation*}
\sum_\lambda m_\lambda^2=O(n^3),
\end{equation*}
listing each reduced block once, with $\poly(n)$ bits per entry, gives a polynomial-size
classical description of $O$.
\paragraph{Protocol.}
The shallow permutation-invariant shadow protocol
of \cite{SL24} samples a Haar-random single-qubit
unitary $W_\ell\in\mathrm{SU}(2)$, applies
$W_\ell^{\otimes n}$ to an unknown $n$-qubit state
$\rho$, and measures in the computational basis.
Only the Hamming weight $h_\ell$ of the measurement
outcome is retained, giving the measurement record
$s_\ell=(W_\ell,h_\ell)$. Define
\[
    \Pi_h
    :=
    \sum_{\substack{b\in\{0,1\}^n\\
                    \operatorname{wt}(b)=h}}
    \ket{b}\bra{b},
    \qquad
    R_{W,h}
    :=
    (W^\dagger)^{\otimes n}\Pi_hW^{\otimes n}.
\]
The associated measurement map is
\[
    \mathcal M_{\mathsf{PI}}(X)
    :=
    \E_W\sum_{h=0}^{n}
    \Tr(R_{W,h}X)\,R_{W,h}.
\]
Its image is the space of permutation-invariant
operators. The snapshot obtained from its inverse
satisfies
\[
    \widehat\rho_\ell
    :=
    \mathcal M_{\mathsf{PI}}^{+}
    (R_{W_\ell,h_\ell}),
    \qquad
    \E[\widehat\rho_\ell]
    =
    \frac1{n!}\sum_{\pi\in S_n}
    U_\pi\rho U_\pi^\dagger.
\]
Consequently, for every observable satisfying
$U_\pi O U_\pi^\dagger=O$ for all $\pi\in S_n$,
the estimator $\widehat o_\ell(O):=\Tr(O\widehat\rho_\ell)$
is unbiased for $\Tr(O\rho)$, even when $\rho$ itself
is not permutation invariant.
Given an observable family $\{O_i\}_{i=1}^m$ and
records $S=\{s_\ell\}_{\ell=1}^L$, median of means
applied to $\{\widehat o_\ell(O_i)\}_{\ell=1}^L$
yields the recovery value $A_{\mathsf{PI}}(S,i)$.

\subsubsection{Real shadows}\label{par:real}
\paragraph{Protocol.}
The local real-Clifford instantiation of the orthogonal
shadow protocol of \cite{WMLC25} is equivalent to
independently measuring each qubit of the unknown state in a uniformly
random $X$ or $Z$ basis. For an unknown
$n$-qubit state $\rho$, each round independently samples
$P_{\ell,j}\in\{X,Z\}$ uniformly for every qubit $j$, measures
in the selected bases, and records the outcomes
$b_{\ell,j}\in\{-1,+1\}$. Thus each record is
$s_\ell=((P_{\ell,j},b_{\ell,j}))_{j=1}^n$.
The measurement channel acts on Pauli strings as
\[
    \mathcal M_{\mathrm{real}}(P)
    =
    \begin{cases}
        2^{-\operatorname{wt}(P)}P,
        & P\in\{I,X,Z\}^{\otimes n},\\
        0,
        & P \text{ contains a }Y\text{ factor}.
    \end{cases}
\]
It is therefore invertible on the operator space
$\mathcal V_{XZ}:=
\operatorname{span}_{\mathbb R}\{I,X,Z\}^{\otimes n}$.
Writing $\ket{\psi_{\ell,j}}$ for the eigenstate of
$P_{\ell,j}$ with eigenvalue $b_{\ell,j}$, the corresponding
snapshot is
\[
    \widehat\rho_\ell
    =
    \bigotimes_{j=1}^{n}
    \left(
        2\ket{\psi_{\ell,j}}\bra{\psi_{\ell,j}}
        -\frac{I}{2}
    \right),
    \quad
    \E[\widehat\rho_\ell]=\Pi_{XZ}(\rho),
\]
where $\Pi_{XZ}$ is the Hilbert--Schmidt orthogonal
projection onto $\mathcal V_{XZ}$. Consequently, for
every $O\in\mathcal V_{XZ}$, the single-shot estimator
$\widehat o_\ell(O):=\Tr(O\widehat\rho_\ell)$ is unbiased
for $\Tr(O\rho)$. Given an observable family $\{O_i\}_{i=1}^m$ and $S=\{s_\ell\}_{\ell=1}^{L}$, the recovery value $A_{\mathrm{real}}(S,i)$ is obtained by applying
median of means to $\{\widehat o_\ell(O_i)\}_{\ell=1}^{L}$. For $m$ observables in $\mathcal V_{XZ}$, each supported on at most $k$ qubits and satisfying $\|O_i\|_\infty\le1$,
\[
    L=O\left(
        \frac{3^k\log(m/\delta)}{\epsilon^2}
    \right)
\]
samples suffice to estimate all their expectation values
to additive error $\epsilon$ with probability at least
$1-\delta$. For individual Pauli strings of weight at
most $k$, the factor $3^k$ improves to $2^k$.
\subsubsection{Matchgate fermionic shadows}\label{par:MFS}
\paragraph{Fermionic conventions.}\label{par:fermionic-conventions}
For $n$ fermionic modes, let $a_j$ and $a_j^\dagger$ denote
the annihilation and creation operators, respectively,
satisfying the canonical anticommutation relations.
We define the Majorana operators by
\[
\gamma_{2j-1}:=a_j+a_j^\dagger,
\quad
\gamma_{2j}:=-i(a_j-a_j^\dagger),
\quad j\in[n].
\]
These satisfy $\{\gamma_\mu,\gamma_\nu\}=2\delta_{\mu\nu}I$.
For $F=\{f_1<\cdots<f_{2r}\}\subseteq[2n]$, define
the Hermitian Majorana monomial
\[
G_F:=i^r\gamma_{f_1}\cdots\gamma_{f_{2r}},
\qquad G_\varnothing:=I.
\]
Principal submatrices indexed by $F$ use this increasing order.
In particular,
\[
i\gamma_{2j-1}\gamma_{2j}=2a_j^\dagger a_j-I,
\]
so the occupied state has eigenvalue $+1$.
The same modewise convention applies to all auxiliary modes.

\paragraph{Protocol.}
The matchgate shadow protocol of \cite{WHLB23} applies
random fermionic Gaussian unitaries before
occupation-number measurements. In its discrete version,
each round samples $Q_\ell$ uniformly from $B(2n)$,
the group of $2n\times2n$ signed permutation matrices,
applies the corresponding Clifford-matchgate unitary
$U_{Q_\ell}$ to an unknown $n$-mode state $\rho$,
and measures in the occupation-number basis.
The classical record is $s_\ell=(Q_\ell,b_\ell)$,
where $b_\ell\in\{0,1\}^n$. We fix the convention
\[
U_Q^\dagger\gamma_\mu U_Q
= \sum_{\nu=1}^{2n}Q_{\mu\nu}\gamma_\nu,
\quad \mu\in[2n].
\]
Let $\mathcal P_{2r}$ denote the Hilbert-Schmidt
orthogonal projection onto the span of products of
$2r$ distinct Majorana operators. The measurement
channel is
\[
    \mathcal M_{\mathsf{MFS}}
    =
    \sum_{r=0}^{n}
    \frac{\binom{n}{r}}{\binom{2n}{2r}}\,
    \mathcal P_{2r}.
\]
Its image is the even operator space
$\Gamma_{\mathrm{even}}$, and its pseudoinverse
defines the snapshot
\[
    \widehat\rho_\ell
    =
    \mathcal M_{\mathsf{MFS}}^{+}
    \left(
        U_{Q_\ell}^{\dagger}
        \ket{b_\ell}\bra{b_\ell}
        U_{Q_\ell}
    \right),
    \qquad
    \E[\widehat\rho_\ell]=\Pi_{\mathrm{even}}(\rho),
\]
where $\Pi_{\mathrm{even}}=\sum_{r=0}^{n}\mathcal P_{2r}$.
Thus, for every even observable $O$, the single-shot
estimator $\widehat o_\ell(O):=\Tr(O\widehat\rho_\ell)$
is unbiased for $\Tr(O\rho)$. These estimators admit
efficient classical evaluation for local even fermionic
observables, fermionic Gaussian density operators,
and projectors onto Slater determinants.
Given $S=\{s_\ell\}_{\ell=1}^{L}$, median of means yields
the recovery value $A_{\mathsf{MFS}}(S,O)$.
In particular, for $m$ Hermitian Majorana monomials
of degree $2k$,
\[
    L=O\!\left(
        \frac{\binom{2n}{2k}}{\binom{n}{k}}\,
        \frac{\log(m/\delta)}{\epsilon^2}
    \right)
\]
samples suffice to estimate all their expectation
values to additive error $\epsilon$ with probability
at least $1-\delta$.

\begin{remark}[Absorbing measurement outcomes]
\label{rem:mfs-zero-outcome}
For a state $\rho$, define its Majorana covariance
matrix by
\[
(C_\rho)_{\mu\nu}
:=-\frac{i}{2}\Tr\!\left(\rho[\gamma_\mu,\gamma_\nu]\right),
\]
and write $C_{\ket b}:=C_{\ket b\bra b}$.
For $b\in\{0,1\}^n$, let
\[
D_b:=\bigoplus_{j=1}^n
\begin{pmatrix}
1&0\\
0&(-1)^{b_j}
\end{pmatrix}
\in B(2n).
\]
Then $C_{\ket b}=D_b^{\mathsf T}C_{\ket{0^n}}D_b$,
and hence
\[
Q^{\mathsf T}C_{\ket b}Q
=
(D_bQ)^{\mathsf T}C_{\ket{0^n}}(D_bQ).
\]
Since fermionic Gaussian states are determined by
their covariance matrices, the samples $(Q,b)$ and
$(D_bQ,0^n)$ produce the same snapshot.
Thus, when constructing shadow samples, we may
assume that every outcome is $0^n$, absorbing
the original outcome into the signed permutation.
\end{remark}
\subsubsection{Continuous-variable shadows}\label{par:HetCV}

\paragraph{Bosonic conventions.}\label{par:bosonic-conventions}
For $n$ bosonic modes, let $a_j$ and $a_j^\dagger$ denote
the annihilation and creation operators, respectively, satisfying
\[
[a_j,a_k^\dagger]=\delta_{jk}I,
\qquad [a_j,a_k]=0.
\]
We define the quadratures and number operator by
\[
q_j:=\frac{a_j+a_j^\dagger}{\sqrt2},
\qquad
p_j:=\frac{a_j-a_j^\dagger}{i\sqrt2},
\qquad
\widehat N_j:=a_j^\dagger a_j.
\]
Thus $[q_j,p_k]=i\delta_{jk}I$.
Write
\[
R:=(q_1,p_1,\ldots,q_n,p_n)^{\mathsf T},
\quad
\Omega:=\bigoplus_{j=1}^{n}
\begin{pmatrix}0&1\\-1&0\end{pmatrix}.
\]
For $\xi\in\mathbb R^{2n}$, define the Weyl operator
$D(\xi):=\exp(-i\xi^{\mathsf T}\Omega R)$.
The characteristic function of a trace-class operator $X$ is
\[
\chi_X(\xi):=\operatorname{Tr}\left(XD(\xi)\right).
\]
\paragraph{Protocol.}
We consider the local heterodyne and homodyne shadow protocols
of \cite{BDLR24} for an unknown $n$-mode bosonic state $\rho$.
We denote the chosen protocol by
$\nu\in\{\mathrm{het},\mathrm{hom}\}$.
Each round measures a fresh copy of $\rho$ and produces a
classical record $s_t$, giving $S=\{s_t\}_{t=1}^{L}$.\\
\\
Consider observable families $O=\{O_i\}_{i=1}^m$, with
$m=\poly(n)$, whose members have polynomial-bit descriptions.
There is a fixed $k=O(1)$ such that each $O_i$ is Hermitian,
is supported on $\Lambda_i\subseteq[n]$, with
$|\Lambda_i|\le k$, and is a constant-degree polynomial in the
canonical observables.\\
\\
For $\Lambda\subseteq[n]$, define
\[
    W_\Lambda
    :=
    \bigotimes_{j\in\Lambda}(I+\widehat N_j).
\]
We require
\begin{equation}\label{eqn:norm}
    \left\|
    W_{\Lambda_i}^{-1/2}
    O_i
    W_{\Lambda_i}^{-1/2}
    \right\|_\infty
    \le C_k^{\mathsf{obs}},
\end{equation}
where $C_k^{\mathsf{obs}}<\infty$ depends only on the fixed
locality bound $k$, and not on $n$ or on the instance. Together with suitable photon-number moment bounds on $\rho$, this condition controls the truncation error in the recovery
guarantee.\\
\\
Both protocols use a Fock cutoff $M$.
Let $P_{\Lambda,M}$ project onto the Fock states with at most
$M$ photons in each mode of $\Lambda$.
For each $i\in[m]$, define
\[
    O_i^{(M)}
    :=
    P_{\Lambda_i,M}O_iP_{\Lambda_i,M}.
\]
The corresponding cutoff snapshots are defined separately
for the two protocols.

\subparagraph{Heterodyne detection.}
The local heterodyne protocol performs heterodyne detection
on each mode. Each round produces a phase-space outcome
\[
    z_t
    =
    (q_{t,1},p_{t,1},\ldots,
     q_{t,n},p_{t,n})
    \in\mathbb R^{2n},
\]
and we take $s_t=z_t$.
The formal inverse shadow associated with $z_t$
has improper characteristic function
\[
    \chi_{\widehat\rho_{z_t}^{\mathrm{het}}}(\xi)
    =
    \exp\left(
        \frac{\|\xi\|_2^2}{4}
        -i\xi^{\mathsf T}\Omega z_t
    \right).
\]
Since this formal shadow generally does not define a
trace-class operator, heterodyne recovery uses both the
Fock cutoff $M$ and phase-space regularization.
For a support $\Lambda$ of $r:=|\Lambda|$ modes, let
$e_{ba}^{(r)}$ be Schwartz operators approximating the
Fock matrix units $\ket{b}\bra{a}$, with
$a,b\in\{0,\ldots,M\}^{r}$, chosen using the phase-space
regularization of \cite[Section~IV-A]{BDLR24}, so that
\[
    \chi_{e_{ba}^{(r)}}\in C_c^\infty(\mathbb R^{2r}).
\]
For each fixed locality $r$, cutoff $M$, and choice of
phase-space regularization parameters, these operators,
and hence their characteristic functions, are fixed as
part of the recovery procedure.
The regularization parameters are suppressed from the notation.\\
\\
The regularized finite-cutoff heterodyne snapshot is
\[
    \widehat\rho_{z_\Lambda}^{(M),\mathrm{het}}
    :=
    \sum_{a,b\in\{0,\ldots,M\}^{r}}
    \left[
        \int_{\mathbb R^{2r}}
        \chi_{e_{ba}^{(r)}}(\xi)
        \exp\left(
            \frac{\|\xi\|_2^2}{4}
            -iz_\Lambda^{\mathsf T}\Omega_\Lambda\xi
        \right)
        \frac{d\xi}{(2\pi)^r}
    \right]
    \ket{a}\bra{b},
\]
where $\Omega_\Lambda$ is the symplectic form on the
modes in $\Lambda$.

\subparagraph{Homodyne detection.}
For each round $t$ and mode $j$, we choose
$\theta_{t,j}$ independently and uniformly from $[-\pi,\pi]$
and measure
\[
    q_{\theta_{t,j}}
    :=
    q_j\cos\theta_{t,j}
    +
    p_j\sin\theta_{t,j},
\]
obtaining an outcome $x_{t,j}\in\mathbb R$.
The record includes both the phases and the outcomes:
\[
s_t=\bigl((\theta_{t,j},x_{t,j})\bigr)_{j=1}^{n}.
\]
As in heterodyne detection, the formal inverse homodyne
shadow does not define a trace-class operator
\cite[Section~V-A]{BDLR24}.
We therefore work with a finite Fock cutoff $M$.
The matrix entries of the resulting snapshot are expressed
through homodyne pattern functions, whose integral
representations converge absolutely and require no
additional phase-space regularization.\\
\\
Following \cite[Section~V-C]{BDLR24}, we define these
functions through their Fourier transforms.
In the canonical quadrature convention fixed above,
for Fock indices $0\le a\le b\le M$, set
\[
    \widetilde f_{ab}(u)
    :=
    \pi(-i)^{b-a}
    \sqrt{\frac{2^{a-b}a!}{b!}}\,
    |u|\,u^{b-a}e^{-u^2/4}
    L_a^{(b-a)}\!\left(\frac{u^2}{2}\right),
    \quad u\in\mathbb R,
\]
where $L_a^{(b-a)}$ is the associated Laguerre polynomial
of degree $a$. The pattern functions are symmetric in their Fock indices.
Thus, for $a>b$, we define $\widetilde f_{ab}(u)$ by interchanging the indices in the formula above:
$\widetilde f_{ab}(u):=\widetilde f_{ba}(u)$.
The pattern functions are obtained by inverse Fourier
transformation:
\[
    f_{ab}(x)
    :=
    \frac{1}{2\pi}
    \int_{\mathbb R}
    \widetilde f_{ab}(u)e^{iux}\,du.
\]
Here $u$ is the Fourier variable conjugate to the
quadrature outcome $x$.
\\
\\
For fixed Fock indices, $\widetilde f_{ab}(u)$ is
$|u|$ times a polynomial in $u$ times the Gaussian
$e^{-u^2/4}$. Consequently, the inverse Fourier integral defining
$f_{ab}$ converges absolutely.\\
\\
For a single-mode record $(\theta,x)$, define the
finite-cutoff homodyne snapshot by
\[
    \widehat\rho_{\theta,x}^{(M),\mathrm{hom}}
    :=
    \sum_{a,b=0}^{M}
    e^{i(a-b)\theta}f_{ab}(x)
    \ket{a}\bra{b}.
\]
For a record $s=((\theta_j,x_j))_{j=1}^{n}$ and a
support $\Lambda$, the local cutoff snapshot is
\[
    \widehat\rho_{s_\Lambda}^{(M),\mathrm{hom}}
    :=
    \bigotimes_{j\in\Lambda}
    \widehat\rho_{\theta_j,x_j}^{(M),\mathrm{hom}}.
\]
\smallskip
\noindent
\emph{Recovery.}
For $\nu\in\{\mathrm{het},\mathrm{hom}\}$, define
\[
    \widehat o_{t,\nu}^{(M)}(i)
    :=
    \Tr\left(
        O_i^{(M)}
        \widehat\rho_{s_{t,\Lambda_i}}^{(M),\nu}
    \right),
    \qquad
    A_\nu^{(M)}(S,i)
    :=
    \frac1L\sum_{t=1}^{L}
    \widehat o_{t,\nu}^{(M)}(i).
\]
The defining integrals are evaluated to the prescribed
numerical accuracy.
For heterodyne recovery, we use the discretisation scheme
of \cite[Section~VII]{BDLR24}.
For homodyne recovery, we use the Fourier representation
of the pattern functions discussed in
\cite[Section~V-C]{BDLR24}, evaluating the integrals by
truncating the Gaussian tails and applying deterministic quadrature separately on the positive and negative half-lines.

\section{Efficient Algorithm for Permutation-Invariant Observable Consistency}\label{scn:permutation}

We first show that permutation-invariant observable consistency
is solvable in polynomial time when the observables are supplied
through their Schur--Weyl blocks. We then apply this result to the permutation-invariant classical-shadow protocol of \cite{SL24}.
 \begin{definition}[$\ObsCon_{\mathsf{PI}}$]\label{def:ObsCon_PI}
  The input consists of $n$-qubit observables and target values
$(O_i,y_i)_{i=1}^{m}$, where $m=\poly(n)$ and $y_i\in[-1,1]$,
together with parameters $0\le\alpha<\beta$ satisfying
$\beta-\alpha\ge 1/\poly(n)$. Additionally, each observable $O_i$ is permutation invariant, $U_\pi O_i U_\pi^\dagger=O_i
\quad \forall \pi\in S_n$,
satisfies $\|O_i\|_\infty\le1$, and is supplied explicitly through its reduced
Schur--Weyl blocks, which give a polynomial-size
description (see \cref{par:PI}):
\[
O_i=\bigoplus_{\lambda}I_{d_\lambda}\otimes(O_i)_\lambda,
\quad
(O_i)_\lambda\in\operatorname{Herm}(m_\lambda).
\]
The task is to decide between the following cases:
\begin{itemize}
    \item \textbf{Yes}: $\exists$  $n$-qubit state $\rho$ s.t.\ $\forall$ $i\in [m]$, 
    $\left| \Tr\left(O_i\rho\right)-y_i\right|\leq \alpha$.
    \item \textbf{No}: $\forall$ $n$-qubit states $\rho$ $\exists\; i\in [m]$ s.t.\  $\left| \Tr \left(O_i\rho\right)-y_i\right|\geq \beta$.
\end{itemize} 
 \end{definition}
 \begin{theorem}\label{thm:ObsConPIinP}
     $\ObsCon_{\mathsf{PI}}$ is in $\p$.
 \end{theorem}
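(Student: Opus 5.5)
The plan is to reduce the decision problem to a polynomial-size semidefinite feasibility problem over permutation-invariant states, and then solve it with the ellipsoid method to a precision that resolves the promise gap.

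\textbf{Step 1: symmetrize.} Given any $\rho$, set $\bar\rho:=\frac{1}{n!}\sum_{\pi}U_\pi\rho U_\pi^\dagger$. Since each $O_i$ is PI, $\Tr(O_i\bar\rho)=\Tr(O_i\rho)$. Hence the Yes/No conditions are unchanged if we restrict to PI states.

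\textbf{Step 2: parametrize PI states by blocks.} By Schur--Weyl duality (\cref{par:PI}), every PI state has the form $\bar\rho=\bigoplus_\lambda I_{d_\lambda}\otimes\sigma_\lambda$ with $\sigma_\lambda\succeq0$. Normalization reads $\sum_\lambda d_\lambda\Tr\sigma_\lambda=1$, and
\[
\Tr(O_i\bar\rho)=\sum_\lambda d_\lambda\Tr\bigl((O_i)_\lambda\sigma_\lambda\bigr).
\]
The factors $d_\lambda$ can be exponentially large but have $\poly(n)$ bits. To keep entries bounded, substitute $\tau_\lambda:=d_\lambda\sigma_\lambda$. The constraints then become $\tau_\lambda\succeq0$, $\sum_\lambda\Tr\tau_\lambda=1$, and $\bigl|\sum_\lambda\Tr((O_i)_\lambda\tau_\lambda)-y_i\bigr|\le t$. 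Conversely, any such feasible tuple $(\tau_\lambda)$ defines a genuine PI state. The result is an SDP in $O(n^3)$ real variables: a block-diagonal PSD matrix of total dimension $O(n^2)$, together with $2m+O(1)$ linear constraints whose coefficients come directly from the input blocks.

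\textbf{Step 3: solve an optimization, not a feasibility, problem.} Minimize $t$ subject to the Step 2 constraints, and call the optimum $t^\ast$. By Step 1, a Yes instance has $t^\ast\le\alpha$, while a No instance has $t^\ast\ge\beta$. We compute $t^\ast$ to additive accuracy $(\beta-\alpha)/3$ and accept iff the computed value is below $(\alpha+\beta)/2$.

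The main obstacle is certifying that this approximation is achievable in polynomial time. The feasible region is bounded: $\tau_\lambda\succeq0$ with total trace $1$ gives $\|\tau_\lambda\|\le1$, and $t\in[0,3]$ can be imposed without loss. It is also nonempty, since e.g.\ the maximally mixed state with $t=3$ is feasible. The ellipsoid method, however, needs a ball inside the region, and the trace constraint keeps the region from being full-dimensional. I would handle this in the standard way: eliminate the affine trace constraint by a change of variables, and relax positivity to $\tau_\lambda\succeq-\epsilon I$ for an inverse-polynomial $\epsilon$. A relaxed solution can be repaired by mixing with the maximally mixed PI state, which changes every expectation by at most $O(\epsilon\,\poly(n))$ because $\|O_i\|_\infty\le1$. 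This keeps the error inside the gap $\beta-\alpha\ge1/\poly(n)$. With all input bit lengths polynomial, the ellipsoid method then runs in time polynomial in $n$, $m$, and $\log(1/\epsilon)$, which places $\ObsCon_{\mathsf{PI}}$ in $\p$.
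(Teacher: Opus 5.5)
Your proposal is correct and follows the paper's proof: symmetrize, pass to the $O(n)$ Schur--Weyl blocks, and solve a polynomial-size SDP by the ellipsoid method. Your rescaling $\tau_\lambda=d_\lambda\sigma_\lambda$ and the optimize-$t$ formulation only make explicit details the paper leaves to \cite{GLS88}. One minor fix: if ``maximally mixed PI state'' means $I/2^n$, its symmetric-subspace block in $\tau$-coordinates is $2^{-n}I$, so the repair step should instead mix with $\bigl(\sum_\lambda m_\lambda\bigr)^{-1}\bigoplus_\lambda I_{m_\lambda}$. That point is already interior with inverse-polynomial margin, so the $-\epsilon I$ relaxation is not even needed.
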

 \begin{proof}
Without loss of generality, we can restrict our search to permutation invariant (PI) states:
\[
\rho^{\mathsf{PI}}:=\frac{1}{n!}\sum_{\sigma\in S_n}U_{\sigma}\rho U_{\sigma}^{\dagger}.
\]
This is possible because
\[
\Tr(O_i\rho^{\mathsf{PI}})=\frac{1}{n!}\sum_{\sigma\in S_n}\Tr(O_iU_{\sigma}\rho U_{\sigma}^{\dagger})=\frac{1}{n!}\sum_{\sigma\in S_n}\Tr(U_{\sigma}^{\dagger}O_iU_{\sigma}\rho)=\Tr(O_i\rho),
\]
where the last equality comes from the fact that each $O_i$ is PI.\\
\\
By Schur--Weyl duality, every PI state has the form
\[
\rho=\bigoplus_{\lambda}I_{d_\lambda}\otimes\rho_\lambda,
\]
where $\rho_\lambda\succeq0$ is an
$m_\lambda\times m_\lambda$ matrix with
$m_\lambda:=\lambda_1-\lambda_2+1$.
For $n$ qubits, $\lambda$ ranges over partitions of $n$
with at most two rows, which we can write as
\[
\lambda=(n-k,k),
\quad k=0,\ldots,\lfloor n/2\rfloor.
\]
Thus there are $\lfloor n/2\rfloor+1=O(n)$ blocks,
each of dimension $m_\lambda=n-2k+1\le n+1$. Using the block decompositions of $O_i$ and $\rho$, we obtain
\[
\Tr(O_i\rho)=\Tr\left(\bigoplus_{\lambda}I_{d_{\lambda}}\otimes(O_i)_{\lambda}\rho_{\lambda}\right)=\sum_{\lambda}d_{\lambda}\Tr((O_i)_{\lambda}\rho_{\lambda}).
\]
We therefore obtain the following SDP feasibility problem:
\begin{alignat}{2}
    \left|\sum_\lambda d_\lambda
    \Tr\bigl((O_i)_\lambda\rho_\lambda\bigr)-y_i\right|
    &\le\alpha
    &\quad &\forall i\in[m],\\
    \rho_\lambda
    &\succeq0
    &\quad &\forall\lambda,\\
    \sum_\lambda d_\lambda\Tr(\rho_\lambda)
    &=1.
\end{alignat}
Each Hermitian block $\rho_\lambda$ has $m_\lambda^2$ real
parameters. Hence the total number of real variables is
\[
N_{\mathrm{var}}
=\sum_\lambda m_\lambda^2
=\sum_{k=0}^{\lfloor n/2\rfloor}(n-2k+1)^2
=O(n^3).
\]
Since this SDP has polynomial size and the promise gap $\beta-\alpha$ is inverse polynomial, the promised feasibility problem can be decided in polynomial time using the ellipsoid
method~\cite{GLS88}. Hence $\ObsCon_{\mathsf{PI}}\in\mathsf{P}$.
\end{proof}

\begin{definition}[Permutation-invariant classical shadow]
\label{def:pi-shadow}
We specialize the classical-shadow framework of
\cref{def:classical-shadow} to the permutation-invariant
protocol described in \cref{par:PI}. A permutation-invariant
classical shadow on $n$ qubits is a tuple
$(S,O,A_{\mathsf{PI}},\chi)$, where
\begin{itemize}
    \item The shadow $S=\{s_\ell\}_{\ell=1}^L$ consists of
    $L=\poly(n)$ strings encoding records of the form
    \[
    s_\ell=(W_\ell,h_\ell),
    \qquad
    W_\ell\in\mathrm{SU}(2),
    \quad h_\ell\in\{0,\ldots,n\},
    \]
    where $W_\ell$ specifies the collective rotation
    $W_\ell^{\otimes n}$ and $h_\ell$ is the Hamming-weight
    outcome.

    \item $O=\{O_i\}_{i=1}^m$ is a family of $m=\poly(n)$
    Hermitian observables satisfying
    \[
    U_\pi O_iU_\pi^\dagger=O_i
    \quad\forall\pi\in S_n,
    \qquad
    \|O_i\|_\infty\le1.
    \]
    Each observable is supplied through its Schur--Weyl
    blocks, as in \cref{def:ObsCon_PI}.

    \item The recovery algorithm $A_{\mathsf{PI}}$ computes
    the snapshot operators
    \[
    \widehat\rho_\ell
    =\mathcal M_{\mathsf{PI}}^{+}(R_{W_\ell,h_\ell})
    \]
    in their Schur--Weyl representation, as described in
    \cref{par:PI}. For each observable $O_i$, it computes
    the estimates $\Tr(O_i\widehat\rho_\ell)$ and aggregates
    them via the median-of-means technique. The resulting recovery value is returned to $\chi$ bits of precision in polynomial time.
\end{itemize}
\end{definition}

\begin{definition}[Permutation-invariant classical-shadow validity
($\CSV_{\mathsf{PI}}$)]
\label{def:csv-pi}
Given a permutation-invariant classical shadow
$(S,O,A_{\mathsf{PI}},\chi)$ as in \cref{def:pi-shadow}
and thresholds $0\le\alpha<\beta$ satisfying
$\beta-\alpha\ge1/\poly(n)$, decide between the following cases:
\begin{itemize}
    \item \textbf{Yes}: There exists an $n$-qubit state $\rho$
    such that, for every $i\in[m]$, $\left|\Tr(O_i\rho)-A_{\mathsf{PI}}(S,i)\right|
    \le\alpha$.

    \item \textbf{No}: For every $n$-qubit state $\rho$,
    there exists an $i\in[m]$ such that $\left|\Tr(O_i\rho)-A_{\mathsf{PI}}(S,i)\right|
    \ge\beta$.
\end{itemize}
\end{definition}

\begin{corollary}
$\CSV_{\mathsf{PI}}\in \p$.  
\end{corollary}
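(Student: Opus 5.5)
The plan is a direct polynomial-time reduction from $\CSV_{\mathsf{PI}}$ to $\ObsCon_{\mathsf{PI}}$, followed by an appeal to \Cref{thm:ObsConPIinP}. Given an instance $(S,O,A_{\mathsf{PI}},\chi)$ with thresholds $\alpha,\beta$, we first compute the targets
\[
y_i:=A_{\mathsf{PI}}(S,i),\qquad i\in[m].
\]
By \Cref{def:pi-shadow}, $A_{\mathsf{PI}}$ runs in polynomial time and outputs $\chi$-bit values. Since $m=\poly(n)$, the whole list $(y_i)_{i=1}^m$ is computable in $\poly(n)$ time. By the range convention for recovery values, each $y_i\in[-1,1]$; if some reported value falls outside this range, the instance lies outside the promise and we may output anything, e.g., reject.

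Next, we output the $\ObsCon_{\mathsf{PI}}$ instance $(O_i,y_i)_{i=1}^m$ with the same $\alpha,\beta$. The observables are already supplied through their reduced Schur--Weyl blocks, are permutation invariant, and satisfy $\|O_i\|_\infty\le1$, exactly as \Cref{def:ObsCon_PI} requires. The Yes and No conditions of \Cref{def:csv-pi} coincide verbatim with those of \Cref{def:ObsCon_PI} once we substitute $y_i=A_{\mathsf{PI}}(S,i)$. Hence Yes instances map to Yes instances and No instances to No instances, and the promise gap $\beta-\alpha\ge1/\poly(n)$ is preserved. Running the polynomial-time SDP algorithm of \Cref{thm:ObsConPIinP} on this instance decides $\CSV_{\mathsf{PI}}$.

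The only step needing care is confirming that the recovery values $A_{\mathsf{PI}}(S,i)$ really are classically computable in polynomial time. That requires computing $\Tr(O_i\widehat\rho_\ell)=\sum_\lambda d_\lambda\Tr\bigl((O_i)_\lambda(\widehat\rho_\ell)_\lambda\bigr)$ from the Schur--Weyl blocks of the snapshot $\mathcal M_{\mathsf{PI}}^{+}(R_{W_\ell,h_\ell})$. Here $R_{W,h}$ is itself permutation invariant, and $\mathcal M_{\mathsf{PI}}$ commutes with the $S_n\times\mathrm{SU}(2)$ symmetry, so it acts blockwise on $O(n)$ blocks of dimension at most $n+1$. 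Its pseudoinverse can therefore be applied blockwise, and the numbers $d_\lambda$ are polynomial-bit integers. This is, however, part of the definition of the shadow (\Cref{def:pi-shadow} stipulates that $A_{\mathsf{PI}}$ runs in polynomial time), so we simply invoke it rather than reprove it. The remaining bit-precision issues are absorbed into the inverse-polynomial promise gap already handled in \Cref{thm:ObsConPIinP}.
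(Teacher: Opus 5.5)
Your proposal is correct and is exactly the paper's argument: setting $y_i:=A_{\mathsf{PI}}(S,i)$ gives a polynomial-time, promise-preserving reduction from $\CSV_{\mathsf{PI}}$ to $\ObsCon_{\mathsf{PI}}$, and \Cref{thm:ObsConPIinP} finishes the proof. One minor inaccuracy in your aside on computing $A_{\mathsf{PI}}$, which the argument does not rely on: $\mathcal M_{\mathsf{PI}}$ does not act independently on each $\lambda$-block (for $n=2$ it sends the singlet projector to itself plus one third of the triplet projector), but it does act on the $O(n^3)$-dimensional space of reduced blocks, so its pseudoinverse is still polynomial-time computable and your reduction is unaffected.
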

\begin{proof}
Since $A_{\mathsf{PI}}$ is computable in polynomial time,
setting $y_i:=A_{\mathsf{PI}}(S,i)$ gives a polynomial-time
reduction from $\CSV_{\mathsf{PI}}$ to
$\ObsCon_{\mathsf{PI}}$, preserving the YES and NO promises
(see \cite[Remark~4.1]{KRMEG25}).
The result follows from \cref{thm:ObsConPIinP}.
\end{proof}

\begin{remark}[General observable representations]
\label{rem:pi-general-representation}
The classical algorithm of \cref{thm:ObsConPIinP} uses
explicit access to the reduced Schur--Weyl blocks.
Under the succinct descriptions and uniform efficient quantum
measurement access of \cref{def:classical-shadow}, PI observable
consistency with $m=\poly(n)$ observables belongs to
$\mathsf{BQP}$.
\end{remark}
\begin{proof}[Proof sketch.]
We reconstruct the reduced blocks using efficiently prepared
probe states, then apply the classical block-input algorithm.
Let $U_{\mathrm{SW}}$ denote the Schur--Weyl change of basis,
in which
\[
U_{\mathrm{SW}}O_iU_{\mathrm{SW}}^\dagger
=
\bigoplus_\lambda I_{d_\lambda}\otimes(O_i)_\lambda.
\]
For each sector $\lambda$, select one copy of the reduced
block, identified by a label
$s_\lambda$. Every copy carries the same matrix
$(O_i)_\lambda$, so one copy suffices.
Keeping the sector and copy fixed, choose $\ket v$ to be
a basis state or a superposition of two basis states in
$\mathbb C^{m_\lambda}$.
The efficient inverse Schur transform~\cite{BCH05} prepares
approximations to the corresponding $n$-qubit probe states
\[
\ket{\psi_{\lambda,v}}
=
U_{\mathrm{SW}}^\dagger
\bigl(\ket{\lambda}\otimes\ket{s_\lambda}\otimes\ket v\bigr).
\]
Measuring $O_i$ on these probes gives access to
\[
\bra{\psi_{\lambda,v}}O_i\ket{\psi_{\lambda,v}}
=
\bra v(O_i)_\lambda\ket v.
\]
We reconstruct each block from these expectations, as in
quantum detector tomography~\cite{LFCPSREPW08}.
For a fixed block $B=(O_i)_\lambda$, basis-state probes
give the diagonal entries $B_{aa}$.
For $a<b$, define
\[
\ket{v_{\mathrm R}}=\frac{\ket a+\ket b}{\sqrt2},
\qquad
\ket{v_{\mathrm I}}=\frac{\ket a-\mathrm{i}\ket b}{\sqrt2}.
\]
The off-diagonal entries follow from
\[
\begin{aligned}
\operatorname{Re}B_{ab}
&=\bra{v_{\mathrm R}}B\ket{v_{\mathrm R}}
  -\frac{B_{aa}+B_{bb}}2,\\
\operatorname{Im}B_{ab}
&=\bra{v_{\mathrm I}}B\ket{v_{\mathrm I}}
  -\frac{B_{aa}+B_{bb}}2.
\end{aligned}
\]
Thus, reconstructing all observables requires estimating $m\sum_\lambda m_\lambda^2=O(mn^3)$ expectation values in total. For $m=\poly(n)$, repeated measurements therefore produce explicit classical descriptions of all reduced blocks
to any required inverse-polynomial precision in quantum
polynomial time. The classical block-input algorithm of
\cref{thm:ObsConPIinP} then applies, establishing the $\mathsf{BQP}$ upper bound.
\end{proof}

\section{Complexity of Real Classical Shadow Validity}\label{scn:real}

\begin{definition}[Real classical shadow]
\label{def:real-shadow}
The definition is the same as \cref{def:classical-shadow},
only now our classical shadow has the structure dictated by
the local real-shadow protocol described in \cref{par:real}.
That means the following:
\begin{itemize}
    \item The shadow $S=\{s_t\}_{t=1}^L$ consists of
    $L=\poly(n)$ strings encoding measurement bases and outcomes,
    \[
    s_t=((P_{t,j},b_{t,j}))_{j=1}^n,
    \; \text{with} \;\;
    P_{t,j}\in\{X,Z\},\; b_{t,j}\in\{-1,+1\}.
    \]

    \item $O=\{O_i\}_{i=1}^m$ is a family of $m=\poly(n)$
    Hermitian $k=O(1)$-local observables,
    satisfying
    \[
    O_i\in\operatorname{span}_{\mathbb R}\{I,X,Z\}^{\otimes n},
    \qquad
    \|O_i\|_\infty\le1.
    \]

    \item The recovery algorithm $A_{\mathrm{real}}$ applies the inverse of the measurement channel on $\mathcal V_{XZ}$ to obtain the snapshot operators $\widehat\rho_t$ from $S$, and aggregates the estimates $\Tr(O_i\widehat\rho_t)$ via the median-of-means technique, as described in \cref{par:real}.
\end{itemize}
\end{definition}
\begin{remark}
    Measurement records and their associated snapshots are equivalent descriptions. Thus, we freely identify a record with its associated snapshot operator when no confusion can arise.
\end{remark}
\begin{definition}[Real classical-shadow validity
($\CSV_{\mathrm{real}}$)]
\label{def:csv-real}
Given a real classical shadow
$(S,O,A_{\mathrm{real}},\chi)$ as in
\cref{def:real-shadow} and thresholds $0\le\alpha_{\mathrm{CSV}}<\beta_{\mathrm{CSV}}$
satisfying $\beta_{\mathrm{CSV}}-\alpha_{\mathrm{CSV}}\ge\frac1{\poly(n)}$,
decide between the following cases:

\begin{itemize}
    \item \textbf{YES}: $\exists$ an $n$-qubit state
    $\sigma$ such that $\left|
    \Tr(O_i\sigma)-A_{\mathrm{real}}(S,i)
    \right|
    \le\alpha_{\mathrm{CSV}}
    \;\forall i\in[m]$.

    \item \textbf{NO}: $\forall$ $n$-qubit state
    $\sigma$, there exists $i\in[m]$ such that $\left|
    \Tr(O_i\sigma)-A_{\mathrm{real}}(S,i)
    \right|
    \ge\beta_{\mathrm{CSV}}$.
\end{itemize}
\end{definition}

\begin{definition}[$\iDXZLH$]
Fix a constant $p$. Given a Hamiltonian on a chain of
$n$ sites, each of dimension $2^p$, $H = \sum_{i=1}^{n-1} H_{i,i+1}$, with $H_{i,i+1}\in
\operatorname{span}_{\mathbb R}\{I,X,Z\}^{\otimes 2p},\;\forall\;i\in[n-1]$ and thresholds $\alpha,\beta$ with $\beta-\alpha\ge 1/\poly(n)$, decide
  \begin{itemize}
    \item YES: $\lmin(H) \le \alpha$.
    \item NO: $\lmin(H) \ge \beta$.
  \end{itemize}
\end{definition}

\begin{lemma}[$\QMA$-hardness of $\iDXZLH$]
\label{lem:1d-xz-lh}
There is a constant $p$ for which $\iDXZLH$ on a chain of qudits, with dimension $2^p$, is
$\QMA$-hard. 
\end{lemma}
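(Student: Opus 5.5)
We follow the outline given in the Techniques section: start from a one-dimensional, real-valued history-state construction, and then re-encode the clock so that every term in the Hamiltonian uses only $X$ and $Z$ Pauli factors.

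\emph{Step 1: a real 1D construction.} We fix a universal gate set of real gates, for instance Toffoli and Hadamard. Real gates suffice for $\QMA$, since a complex circuit can be simulated by a real one on a single extra qubit. We then apply the 1D $\QMA$-hardness construction of \cite{HNN13} on constant-dimensional qudits. It yields $H=\sum_i H_{i,i+1}$ with $H_{i,i+1}$ real symmetric, and with the usual YES/NO energy thresholds separated by $1/\poly(n)$. Each qudit carries a finite set of labels (states) $L$, each tensored with some data qubits. The penalty, initialization and output terms are diagonal in the label basis, up to projectors on the data qubits that can be chosen real. The propagation terms have the form
\[
\tfrac12\bigl(\ket{ab}\bra{ab}+\ket{a'b'}\bra{a'b'}\bigr)\otimes I-\tfrac12\bigl(\ket{a'b'}\bra{ab}\otimes U+\mathrm{h.c.}\bigr),
\]
with $U$ a real gate or the identity.

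\emph{Step 2: the $XZ$ encoding.} A real symmetric matrix can still contain $Y\otimes Y$ terms. For example, $\ket{01}\bra{10}+\ket{10}\bra{01}=\tfrac12(XX+YY)$. Likewise, a real gate $U$ can contain $Y\otimes Y$ or $Y\otimes Y$-type components. We remove these in two ways. For the data, we write each real $U$ in the Pauli basis. Since $U$ is real, $U^{\mathsf T}=U^\dagger$, and the Hermitian combination $\ket{a'b'}\bra{ab}\otimes U+\mathrm{h.c.}$ is real symmetric. Its Pauli expansion therefore contains an even number of $Y$ factors. To eliminate even $Y$ counts as well, we choose gates whose Pauli expansions contain no $Y$ at all. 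Hadamard $=(X+Z)/\sqrt2$ and Toffoli, which is a polynomial in $Z$ and $X$, qualify, as does CNOT. This works provided the clock part contributes only $X$'s.

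For the clock, each site's label is replaced by a unary (one-hot) register over pairs of neighbouring labels. The label pair on edge $(i,i+1)$ is stored redundantly, once on site $i$ and once on site $i+1$. A transition $ab\to a'b'$ then becomes "flip off the qubit for the old pair and flip on the qubit for the new pair" on each copy. On the one-hot subspace this is implemented by products of $X$'s, because $\sigma^-_u\sigma^+_v+\mathrm{h.c.}$ restricted to the correct occupation pattern equals $X_uX_v$ multiplied by $Z$-projectors $\frac{I\pm Z}{2}$. Terms of the form $X\cdots X\otimes\Pi_Z\otimes U$ with $U\in\mathrm{span}\{I,X,Z\}$ therefore lie in $\operatorname{span}_{\mathbb R}\{I,X,Z\}$.

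Diagonal $Z$-penalties are added to enforce three things: exactly one excitation per register, agreement between the shared copies of each edge label, and the original illegal-pair penalties. Summed over the chain, the new Hamiltonian is still 2-local on sites of dimension $2^p$ for a constant $p$.

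\emph{Step 3: spectral analysis (main obstacle).} Bare $XX$ terms also act outside the encoded subspace $\mathcal S$, where they flip qubits into configurations with zero or two excitations, or with disagreeing shared labels. We put a penalty $J\,\Pi_{\mathcal S^\perp}$ on these configurations, built from $Z$ terms. By construction of the projectors, $H$ restricted to $\mathcal S$ coincides exactly with the \cite{HNN13} Hamiltonian. The coupling between $\mathcal S$ and $\mathcal S^\perp$ has norm $\poly(n)$. The projection lemma of Kempe--Kitaev--Regev then gives
\[
\lmin(H)\ge \lmin(H|_{\mathcal S})-\frac{\|V\|^2}{J-2\|V\|},
\]
so choosing $J=\poly(n)$ keeps the promise gap at $1/\poly(n)$. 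The YES bound is immediate: the encoded history state lies in $\mathcal S$, so $\lmin(H)\le\alpha$.

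The delicate point is making the partial update harmless. When an $XX$ term changes only one copy of a shared label, the resulting state must be penalized. It must not instead be connected to another legal configuration, because that would create new off-diagonal couplings inside $\mathcal S$. The overlapping pair encoding is designed precisely so that any such partial update violates an agreement constraint. Finally, we rescale by $\poly(n)$ so the thresholds $\alpha,\beta$ satisfy the definition of $\iDXZLH$.
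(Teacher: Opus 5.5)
\textbf{Gap in Step~2: how the qubit data is handled.} Your overall plan (real verifier, HNN, one-hot overlapping pair labels, $Z$-penalties, projection lemma) is the paper's, but Step~2 does not work as written. You split each HNN particle into a clock label tensored with data qubits. You then assume every propagation term carries, as its data part, just a gate $U$ or $I$ acting in place, so that a $Y$-free gate set suffices. That is not the structure of the HNN Hamiltonian. Half of its eight states carry no qubit (the paper's alphabet is $\{M_1,\dots,M_4,Q^{(0)},Q^{(1)},Q'^{(0)},Q'^{(1)}\}$). Its transitions also move qubit data from one particle to a neighbouring one, since this is how the qubits travel down the chain from round to round. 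With per-site data registers, the data operator of such a transition is a SWAP, possibly composed with a gate, and not $U$ alone. SWAP is not $XZ$-only, whatever gate set you choose. For every $W\in\operatorname{span}_{\mathbb R}\{I,X,Z\}^{\otimes 2}$, the entries $\bra{01}W\ket{10}$ and $\bra{00}W\ket{11}$ both equal the $X\otimes X$ coefficient. By contrast, $\mathrm{SWAP}=\tfrac12(II+XX+YY+ZZ)$ has entries $1$ and $0$. So your claim that every term has the form $X\cdots X\otimes\Pi_Z\otimes U$ with $U\in\operatorname{span}\{I,X,Z\}$ fails on the data-moving transitions. Patching your version would need extra machinery, for example penalising dummy data on qubit-free particles and using an $XZ$ operator that agrees with SWAP only on the allowed subspace. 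Your proposal provides none of this.

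\textbf{How the paper avoids it.} The paper does not separate data at all. The qubit value is part of the symbol, and register $E_j$ one-hot encodes the pair $(x_j,x_{j+1})$ of full HNN symbols. For two-site configurations $\alpha=(a,b)$, $\beta=(a',b')$ it sets $D_i^{\alpha,\beta}=\sum_{\ell,r}S_{i-1}((\ell,a),(\ell,a'))\,S_i((a,b),(a',b'))\,S_{i+1}((b,r),(b',r))$, with each $S$ equal to $n_\lambda$ or $X_\lambda X_\mu$. Summands with the wrong context leave the code space, so $V^\dagger D_i^{\alpha,\beta}V=\ketbra{\alpha}{\beta}+\ketbra{\beta}{\alpha}$. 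Expanding each real symmetric $h_i$ entrywise then gives an $XZ$-only $\widehat h_i$ with $V^\dagger\widehat h_iV=h_i$ for every term, covering gate applications and data moves alike. Consequently only a real gate set (via McKague) is needed.

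\textbf{Secondary points.} The paper's route also removes your restriction to $Y$-free gates, which is not automatic. HNN uses two-qubit gates, Toffoli is a three-qubit gate, and the usual real two-qubit decompositions involve $Y$-containing rotations such as $\cos\theta\,I-\mathrm{i}\sin\theta\,Y$. The gate set would therefore have to be re-chosen and its universality justified. Separately, storing each pair label twice means a transition on edge $(i,i+1)$ touches copies at sites $i-1,\dots,i+2$. You therefore need an explicit blocking step to obtain nearest-neighbour terms; the paper blocks consecutive registers in pairs. The penalty and projection-lemma parts of your plan otherwise match the paper.
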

\begin{proof}[Proof sketch.]
    The proof starts from the HNN Hamiltonian for a verifier with real gates. We encode configurations by storing overlapping pairs of neighboring symbols, with each pair label represented by a bit string of Hamming weight one. This representation allows label changes using products of \(X\) operators, while diagonal penalties enforce valid labels and agreement between neighboring registers. Using this encoding, we construct a Hamiltonian containing only \(I\), \(X\), and \(Z\) Pauli factors whose compression to the code space reproduces the original HNN Hamiltonian. A sufficiently large penalty and the Projection Lemma (\Cref{lem:projection}) preserve the inverse-polynomial promise gap. Finally, grouping consecutive registers gives nearest-neighbor interactions on constant-dimensional sites. The full proof is deferred to Appendix~\ref{app:1d-xz-lh}.
\end{proof}

\begin{definition}[$XZ$-certified $1D\text{-}\CLDM$]
Fix a constant $p$.  An $XZ$-certified $1D\text{-}\CLDM$ instance consists of
nearest-neighbor target states $\{\sigma_e\}$ on a line
of $n$ sites, each of dimension $2^p$, thresholds $\alpha_{\mathrm{CLDM}}<\beta_{\mathrm{CLDM}}$ with
$\beta_{\mathrm{CLDM}}-\alpha_{\mathrm{CLDM}}\ge 1/\poly(n)$, and an edge-local Hamiltonian
\[
H^{XZ}=\sum_{\ell=1}^M g_\ell,
\quad
g_\ell\in
\operatorname{span}_{\mathbb R}\{I,X,Z\}^{\otimes 2p},
\]
together with $\Delta\ge 1/\poly(n)$, such that $\sum_{\ell=1}^M
\Tr(g_\ell\sigma_{e_\ell})=0$,
where $e_\ell$ is the edge supporting $g_\ell$. Decide between the following two cases:
\begin{itemize}
    \item \textbf{YES:} $\exists\rho\;\text{such that}\;
\|\Tr_{\bar e}(\rho)-\sigma_e\|_1\le\alpha_{\mathrm{CLDM}}
\quad\forall e$

    \item \textbf{NO:} $\lambda_{\min}(H^{XZ})\ge\Delta$, and $\forall\rho\;\exists e$ with $\|\Tr_{\bar e}(\rho)-\sigma_e\|_1\ge\beta_{\mathrm{CLDM}}$.
\end{itemize}
\end{definition}

\begin{lemma}
\label{lem:xz-certified-cldm}
$XZ$-certified $1$D-CLDM on a line of constant-dimensional sites is
$\QMA$-hard.
\end{lemma}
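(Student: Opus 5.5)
We reduce from an arbitrary $\QMA$ promise problem $L$ with verifier $V_x$, following the simulatable-verification approach used in~\cite{KRMEG25} for local Clifford shadows, but with the real 1D construction of \Cref{lem:1d-xz-lh} in place of the standard circuit-to-Hamiltonian map. Given $x$, we first amplify $V_x$ and compile it into a universal \emph{real} gate set. We then make it simulatable in the sense of~\cite{BG22}: the proof is encoded in a code whose few-qubit reduced states on any history-state snapshot can be computed classically, without knowing an accepting witness. Next we apply the construction behind \Cref{lem:1d-xz-lh} to this verifier. This yields a 1D chain of $2^p$-dimensional sites, an $XZ$-only nearest-neighbour Hamiltonian $H^{XZ}=\sum_\ell g_\ell$ (HNN propagation terms rewritten via the overlapping unary label encoding, plus diagonal $Z$-penalties), and the associated history state $\ket{\eta}$ of the encoded computation.

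The targets are $\sigma_e:=\Tr_{\bar e}(\ket{\eta}\bra{\eta})$ for the \emph{ideal} history state: the one in which the witness register carries a perfectly accepting (simulated) computation. By simulatability, each $\sigma_e$ is a constant-size matrix computable in $\poly(n)$ time from $x$ alone. The overlap encoding is a local isometry acting on pairs of neighbouring symbols, so the $\sigma_e$ transfer site by site through it.

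We then shift energies so that $\sum_\ell\Tr(g_\ell\sigma_{e_\ell})=0$ exactly. Each $g_\ell$ is replaced by $g_\ell-c_\ell I$ with $c_\ell:=\Tr(g_\ell\sigma_{e_\ell})$, and $\alpha,\beta$ are shifted by $\sum_\ell c_\ell$. Since the ideal history state would be a zero-energy ground state of the unshifted propagation, penalty, and clock terms, $\sum_\ell c_\ell$ is computable exactly, and the shifted Hamiltonian remains in $\operatorname{span}_{\mathbb R}\{I,X,Z\}^{\otimes 2p}$.

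Completeness and soundness then go as follows.
\begin{itemize}
\item \textbf{YES case.} There is a real witness accepted with probability $1-2^{-\poly}$. Its history state $\rho$ has marginals within $\alpha_{\mathrm{CLDM}}=2^{-\poly}$, or $1/\poly$, of $\sigma_e$; this is the usual simulatable-verification estimate bounding the difference between the true and ideal history states by the rejection amplitude.
\item \textbf{NO case.} \Cref{lem:1d-xz-lh}, via the Projection Lemma, gives $\lmin(H^{XZ})\ge\Delta=1/\poly(n)$ after the shift. Now suppose some $\rho$ had all marginals within $\beta_{\mathrm{CLDM}}$ of $\sigma_e$. Then
\[
\Tr(H^{XZ}\rho)\le\sum_\ell\Tr(g_\ell\sigma_{e_\ell})+\beta_{\mathrm{CLDM}}\sum_\ell\|g_\ell\|_\infty=\beta_{\mathrm{CLDM}}\sum_\ell\|g_\ell\|_\infty .
\]
Choosing $\beta_{\mathrm{CLDM}}<\Delta/\sum_\ell\|g_\ell\|_\infty$, which is still $1/\poly$, contradicts $\lmin\ge\Delta$. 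Hence every $\rho$ violates some marginal by at least $\beta_{\mathrm{CLDM}}$.
\end{itemize}
We need $\alpha_{\mathrm{CLDM}}$ below this value, which amplification provides. Finally, grouping consecutive registers as in \Cref{lem:1d-xz-lh} makes the sites constant dimensional with nearest-neighbour edges.

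The main obstacle is making simulatability compatible with the 1D HNN construction. The HNN history state has clock and particle registers moving along the chain, and the targets must be classically computable for \emph{every} nearest-neighbour pair of grouped sites, including pairs straddling a gate currently being applied. I would first try to show that each two-site marginal of the HNN history state is a uniform mixture over time steps of local marginals of the encoded computation on $O(1)$ logical locations. Simulatability of~\cite{BG22} then needs to be invoked only for constant-size subsets. The care needed is that this property must hold for the real gate set and for the overlap encoding, which are both local isometries.
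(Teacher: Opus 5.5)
Your architecture matches the paper's: a simulatable verifier, a real HNN chain, the overlap encoding, targets transported from simulated HNN marginals, a zero-energy certificate, and the same energy argument for soundness. The gaps are in how the targets are produced.

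First, compiling to real gates \emph{before} applying \cite{BG22} does not give a real circuit. The BG compilation realizes its universal gate set through Clifford$+T$ gadgets on the concatenated Steane code, and these gadgets include complex phase-type gates. The HNN terms therefore need not be real symmetric, and the encoding of \Cref{lem:1d-xz-lh} cannot express them, since it only produces real combinations of $|\alpha\rangle\langle\beta|+|\beta\rangle\langle\alpha|$. The paper applies BG first and then realifies as in \cite{McKAGUE13}. It checks that simulatability survives: with $\Gamma(\rho)=\frac12\mathcal R(\rho)$ one has $\Tr_{D\setminus Y}\Gamma(\rho)=\Gamma(\rho_Y)$, $\Tr_r\Gamma(\rho_Y)=\operatorname{Re}\rho_Y$, and $\|\Gamma(\rho)-\Gamma(\sigma)\|_1=\|\rho-\sigma\|_1$.

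Second, the overlap encoding is not a product of local isometries. Because $V$ copies each HNN symbol into two adjacent registers, the marginal of $V\omega V^\dagger$ on a blocked edge (four overlap registers) comes from the HNN marginal on a five-site window. That marginal is first dephased in the two boundary symbols whose copies lie outside the edge, and only then are the pair labels written. This is a CPTP map $\mathcal E_e$, not an isometric conjugation. Transporting targets isometrically would keep coherences that the true marginals lack, which breaks completeness. You also need simulated HNN marginals on five-site windows, not on pairs.

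Third, the step you flag as the main obstacle is supplied in the paper by \cite[Appendix~B]{KRMEG25}, extended by inspection to intervals of at most five HNN sites. That result says history-state marginals of a BG-simulatable computation are efficiently computable, and the computed targets satisfy the HNN local-energy identities exactly. This gives $\sum_\ell\Tr(g_\ell\sigma_{e_\ell})=0$ with no shift.

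Your suggested route would not reproduce this as stated. A few-site HNN marginal is not a uniform mixture over time steps: the clock is carried by the particle types. Configurations at consecutive times that differ only inside the window therefore contribute coherence terms involving the gate applied there. These cross terms are exactly what make the propagation energies cancel.

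Finally, your per-term shift is harmless only because $\sum_\ell c_\ell$ is zero or negligibly close to it. The shift lowers the NO-case bound on $\lmin(H^{XZ})$ by exactly $\sum_\ell c_\ell$, so you must bound this quantity, not just note that it is computable.
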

\begin{proof}[Proof sketch.]
Using the simulatable verifier construction of \cite{BG22} and our overlap encoding from \Cref{lem:1d-xz-lh}, we efficiently construct local target states that, on YES instances, approximate the marginals of an encoded history state. On every input, these targets assign total energy zero to the encoded $XZ$-Hamiltonian. On NO instances, however, every global state has energy at least an inverse-polynomial amount. A state sufficiently close to all targets would therefore have energy below this lower bound, giving a contradiction. Polynomial bounds on the interaction norms ensure that the resulting consistency gap remains inverse polynomial. The full proof is deferred to Appendix~\ref{sapp:xz-certified-cldm}.
\end{proof}

\begin{lemma}
\label{lem:csv-real-hard}
$\CSV_{\mathrm{real}}$ is $\QMA$-hard.
\end{lemma}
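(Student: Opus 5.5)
We reduce from $XZ$-certified $1$D-$\CLDM$ (\Cref{lem:xz-certified-cldm}). Given an instance with targets $\{\sigma_e\}$ on a line of $n$ sites of dimension $2^p$, certifying Hamiltonian $H^{XZ}=\sum_\ell g_\ell$ and gap $\Delta$, we view the chain as $np$ qubits. The observable family consists of all Pauli strings in $\{I,X,Z\}^{\otimes 2p}$ supported on a single edge $e$, so each is $2p=O(1)$-local, has norm one, and lies in $\mathcal V_{XZ}$. Their target values are $y_{e,P}:=\Tr(P\sigma_e)$. The remaining tasks are (i) to produce a real shadow $S$ with $A_{\mathrm{real}}(S,(e,P))\approx y_{e,P}$ to within $\epsilon=1/\poly(n)$, and (ii) to choose $\alpha_{\mathrm{CSV}},\beta_{\mathrm{CSV}}$ so that the YES and NO cases transfer.

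\emph{Constructing the shadow.} First replace each $\sigma_e$ by its $XZ$-projection $\Pi_{XZ}(\sigma_e)$. This changes no target, since only $XZ$-observables are queried. Next expand $\Pi_{XZ}(\sigma_e)$ as an affine combination of edge-local snapshot products $\bigotimes_j(2\ket{\psi_j}\bra{\psi_j}-I/2)$ with $\ket{\psi_j}$ in the $X$/$Z$ eigenbases. Such combinations exist because these snapshots span $\mathcal V_{XZ}$ and their mean over uniform bases and Born-rule outcomes on a genuine state is that state's projection. We do not have a genuine global state, so instead we want a multiset of edge-local records for each edge whose empirical average matches $y_{e,P}$ within $\epsilon$. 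Following \cite{KRMEG25}, we write the required record frequencies as a 1D integer program: every edge needs nonnegative integer counts of local records on its $2p$ qubits, and adjacent edges sharing a site must induce the same marginal counts of records on that shared site. A dynamic program along the line solves this program. Given such consistent counts, we stitch the records into global records site by site, gluing each edge's records to the next along their shared-site agreement, as in a Markov chain. This yields $L=\poly(n)$ global strings $s_t$ whose edge restrictions reproduce the prescribed multisets. Since the snapshot is a product over qubits, each observable estimate depends only on the restriction to its edge.

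\emph{Median of means.} Because $A_{\mathrm{real}}$ aggregates by median of means, we make every batch identical, for example by repeating the stitched multiset once per batch. The median of the batch means then equals their common mean, so $A_{\mathrm{real}}(S,(e,P))=y_{e,P}\pm\epsilon$.

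\emph{Transferring the promises.} In the YES case there is a $\rho$ with $\|\rho_e-\sigma_e\|_1\le\alpha_{\mathrm{CLDM}}$, hence $|\Tr(P\rho)-A_{\mathrm{real}}|\le\alpha_{\mathrm{CLDM}}+\epsilon=:\alpha_{\mathrm{CSV}}$.

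In the NO case we use the energy certificate rather than the trace-distance condition, since the latter is not directly visible to $XZ$-Paulis. Suppose some $\sigma$ had all edge-Pauli expectations within $\beta_{\mathrm{CSV}}$ of the recovered values. Expanding each $g_\ell$ in $XZ$-Paulis with coefficient $1$-norm at most $c_\ell$ gives
\[
\Tr(H^{XZ}\sigma)\le\sum_\ell\Tr(g_\ell\sigma_{e_\ell})+(\beta_{\mathrm{CSV}}+\epsilon)\sum_\ell c_\ell=(\beta_{\mathrm{CSV}}+\epsilon)\sum_\ell c_\ell .
\]
Setting $\beta_{\mathrm{CSV}}:=\Delta/(2\sum_\ell c_\ell)$ makes the right-hand side less than $\Delta$, which contradicts $\lmin(H^{XZ})\ge\Delta$.

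The gap requirement is $\beta_{\mathrm{CSV}}-\alpha_{\mathrm{CSV}}\ge1/\poly(n)$. This holds if we first shrink $\alpha_{\mathrm{CLDM}}$, which we expect \Cref{lem:xz-certified-cldm} to permit with YES targets essentially exact up to inverse-polynomial error, and take $\epsilon$ smaller still. We must also check the range promise $A_{\mathrm{real}}\in[-1,1]$, which holds because the values are near genuine expectations.

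\emph{Main obstacle.} The hardest step is the integer-program and stitching construction. We need exact local realizability by legitimate $\pm1$ outcome records with nonnegative rational weights of polynomial denominator. This requires $\Pi_{XZ}(\sigma_e)$ to lie strictly inside the convex hull of the rescaled local snapshots, which may need a small depolarizing mixture with slack absorbed into $\epsilon$. It also requires the shared-site consistency to hold exactly after rounding. We would first try mixing each target with $I/2^{2p}$ by weight $1/\poly(n)$, which guarantees interior points and consistent single-site marginals, and then apply the dynamic-programming rounding of \cite{KRMEG25}.
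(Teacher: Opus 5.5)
\textbf{Where you agree with the paper.} Your overall architecture is the paper's. You reduce from $XZ$-certified 1D-$\CLDM$ and realize the targets by a 1D integer program, solved by dynamic programming and stitched into global records. You put identical copies in every median-of-means bucket, and you get soundness from the zero-energy certificate $\sum_\ell\Tr(g_\ell\sigma_{e_\ell})=0$ rather than from trace distance.

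Taking all edge $XZ$-Paulis as observables, instead of the terms $g_\ell$, is fine for the energy argument. The range promise is not automatic, though. Single-shot estimates of a weight-$w$ Pauli lie in $\{0,\pm2^w\}$, so an $\epsilon$-accurate realization of a target equal to $\pm1$ can fall outside $[-1,1]$. Halve the observables; the paper does the analogous thing by normalizing $\|g_\ell\|_\infty\le1/2$.

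\textbf{The gap.} The problem is the step you flag as the main obstacle, and your proposed fix does not work. Mixing each target with $I/2^{2p}$ scales the discrepancy between adjacent targets' shared-site marginals by $(1-w)$. It cannot create exact agreement that the input lacks, and nothing guarantees such agreement. The targets of \Cref{lem:xz-certified-cldm} are simulator outputs accurate only up to $\eta_x$, and on NO instances they can be far from consistent. The interior-point concern, by contrast, is a red herring: $\Pi_{XZ}(\sigma_e)$ is already the mean snapshot of $\sigma_e$, and the $\epsilon$ slack removes any need for exact realization.

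Two ingredients are missing.
\begin{enumerate}
\item \emph{Existence of feasible counts on YES instances.} Sample $L=\poly$ independent global real-shadow records of the witness $\rho^*$. These records agree on shared sites automatically. Each edge has only $16^p$ record types, so with positive probability every empirical edge average is within $\epsilon/2$ (in trace norm) of $\Pi_{XZ}(\rho^*_e)$. On $2p$ qubits one has $\|\Pi_{XZ}(B)\|_1\le2^p\|B\|_1$, so choosing $\alpha_{\mathrm{CLDM}}\le\epsilon/2^{p+1}$ makes these sampled counts a solution of your integer program. Hence the dynamic program succeeds on every YES instance.
\item \emph{A fallback.} Run the dynamic program on every input, and output a fixed NO instance whenever it is infeasible. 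Soundness then only needs to be argued when counts meeting the $\epsilon$-constraint were found, which is exactly the hypothesis your energy bound uses. Without this, a NO instance may admit no shadow whose recovered values are $\epsilon$-close to $\Tr(P\sigma_e)$. In that case the zero-energy certificate cannot be transferred to the recovered values, and your soundness argument has nothing to act on.
\end{enumerate}
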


\begin{proof}
We use the $XZ$-certified $1D$-CLDM instances constructed
in the proof of \Cref{lem:xz-certified-cldm}. Throughout this proof, $n$ denotes the number of sites in the encoded chain. Each site consists of $p$ qubits,
where $p$ is a fixed constant, so the constructed global shadow acts on $np$ qubits.\\
\\
For an input $x$, that construction produces a Hamiltonian
\[
H_x^{XZ}=\sum_{\ell=1}^M g_\ell
\]
and edge targets $\{\sigma_e\}$.
For each $\ell\in[M]$, let $e_\ell$ be an edge containing
the support of $g_\ell$. Different terms may have the
same supporting edge. The targets satisfy
\[
\sum_{\ell=1}^M\Tr(g_\ell\sigma_{e_\ell})=0.
\]
Without loss of generality, divide $H_x^{XZ}$ and $\Delta_x$ by a common positive, efficiently computable polynomially bounded factor, retaining the same notation, so that
\[
\|g_\ell\|_\infty \le \frac{1}{2}
\quad \forall\,\ell,
\quad\text{and}\quad
0<\Delta_x\le 1.
\]
The NO-case bound $\;\lambda_{\min}\left(H_x^{XZ}\right)\ge \Delta_x$
is preserved.\\
\\
We adapt the shadow construction of \cite{KRMEG25} to
real measurements. Each qubit is measured uniformly at random in either the
$X$ or $Z$ basis, with two possible outcomes, giving four
single-qubit snapshot types
$2\ket{\psi}\bra{\psi}-I/2$, where
$\ket{\psi}\in\{\ket0,\ket1,\ket+,\ket-\}$. Recall that each site of the encoded chain is a block of
$p$ qubits, where $p$ is a fixed constant. Under the local
real-shadow protocol, a block snapshot is the tensor product of its $p$ single-qubit snapshots. Since each qubit has four possible snapshot types, there are $r=4^p$ possible block snapshots, which we enumerate as
$\{\widehat\eta_j\}_{j\in[r]}$. In a YES instance, \Cref{lem:xz-certified-cldm} gives
a state $\rho^*$ such that
\[
\left\|
\Tr_{\overline{\{i,i+1\}}}(\rho^*)-\sigma_{i,i+1}
\right\|_1\le \alpha_{\mathrm{CLDM}}
\quad\forall i.
\] 
Let $\widehat\rho$ be a global real-shadow snapshot of
$\rho^*$, and define
\[
\widehat\rho_{i,i+1}
:=
\Tr_{\overline{\{i,i+1\}}}(\widehat\rho).
\]
Let $q_{i,j,k}$ be the probability that this snapshot equals
$\widehat\eta_j\otimes\widehat\eta_k$. By unbiasedness,
\[
\mathbb E[\widehat\rho_{i,i+1}]
=
\sum_{j,k\in[r]}q_{i,j,k}\,
\widehat\eta_j\otimes\widehat\eta_k
=
\Pi_{XZ}\operatorname{Tr}_{\overline{\{i,i+1\}}}(\rho^*).
\]
Now consider $L=\poly(|x|)$ independent global real-shadow samples of
$\rho^*$. Since $p$ is constant, a sufficiently large
polynomial $L$ ensures that, with positive probability,
all empirical edge averages are simultaneously within
$\epsilon/2$ of their expectations in trace norm.
On an edge of $2p$ qubits, Hilbert--Schmidt orthogonality gives
\[
\|\Pi_{XZ}(B)\|_1
\le 2^p\|\Pi_{XZ}(B)\|_2
\le 2^p\|B\|_2
\le 2^p\|B\|_1
\]
for every Hermitian operator $B$, where $\|\cdot\|_2$ is the Hilbert-Schmidt norm. Choosing
$\alpha_{\mathrm{CLDM}}\le\epsilon/2^{p+1}$ therefore makes these averages $\epsilon$-close to the projected targets, since their trace-norm error is at most
$\epsilon/2+2^p\alpha_{\mathrm{CLDM}}\le\epsilon$.\\
\\
We seek integer counts $n_{i,j,k}$ for the occurrences
of $\widehat\eta_j\otimes\widehat\eta_k$ on edge $(i,i+1)$.
The counts on shared blocks must agree so that the local
samples can be combined into global samples.
The global samples considered above satisfy these matching
conditions automatically, proving that suitable counts
exist on YES instances. We find them by solving the
following integer program:
\begin{align}
\left\|
\Pi_{XZ}\sigma_{i,i+1}
-\frac{1}{L}\sum_{j,k\in[r]}
n_{i,j,k}\widehat\eta_j\otimes\widehat\eta_k
\right\|_1
&\le\epsilon,
&& i\in[n-1],                                      \label{eq:csv-ip-approx}\\
\sum_{j\in[r]}n_{i,j,t}
&=
\sum_{k\in[r]}n_{i+1,t,k},
&& i\in[n-2],\ t\in[r],                            \label{eq:csv-ip-consistency}\\
n_{i,j,k}&\in\mathbb Z_{\ge0},
&& i\in[n-1],\ j,k\in[r],                          \label{eq:csv-ip-integrality}\\
\sum_{j,k\in[r]}n_{i,j,k}&=L,
&& i\in[n-1].                                      \label{eq:csv-ip-size}
\end{align}
Since $r$ is constant and the matching constraints couple
only neighboring edges, we solve this system in polynomial
time using \cite[Algorithm 1]{KRMEG25}.
If it is infeasible, we output a fixed NO instance.
\\
\\
We now define ``local shadows'' $S_{i} = \{s_{i,l}\}_{l\in[L]}$ by taking $n_{i,j,k}$ copies of $\widehat\eta_j\otimes\widehat\eta_k$ and compute permutations $f_i\in \textup S_L$, such that $\Tr_{1}(s_{i,l})=\Tr_2(s_{i+1,f_i(l)})$ via a perfect matching. Finally, we construct our global shadow:
  \begin{equation}
    S_0 = \{s_{l}\}_{l\in [L]}, \quad s_l = s_{1,l}\otimes \Tr_1(s_{2,f_1(l)}) \otimes \Tr_1(s_{3,f_2(f_1(l))})\otimes \dotsm\otimes \Tr_1(s_{n-1,(f_{n-2}\circ\dotsm\circ f_1)(l)}).
  \end{equation}
We form the final shadow $S$ by placing an identical copy
of $S_0$ in each MoM bucket. All bucket means are therefore equal, so the recovery value equals the empirical average over $S_0$. The observables are the terms of our $XZ$-Hamiltonian, $O_\ell:=g_\ell$ with $\ell\in [M]$, and the recovery algorithm is $A_{\mathrm{real}}$. If the recovery algorithm uses $K$ buckets, the final shadow $S$ contains $KL$ records.\\
\\
For each edge $e=\{i,i+1\}$, define
\[
\widetilde{\sigma}_e
:=
\frac{1}{L}
\sum_{j,k\in[r]}
n_{i,j,k}\widehat{\eta}_j\otimes\widehat{\eta}_k.
\]
The stitching construction preserves each local edge multiset, and
therefore 
\[
A_{\mathrm{real}}(S,\ell)
=
\Tr\left(
g_\ell\widetilde{\sigma}_{e_\ell}
\right).
\]
Since $g_\ell$ is $XZ$-only,
\[
\Tr\left(
g_\ell\Pi_{XZ}\sigma_{e_\ell}
\right)
=
\Tr(g_\ell\sigma_{e_\ell}).
\]
Consequently $\forall\ell\in[M]$, 
\begin{align}
\left|
A_{\mathrm{real}}(S,\ell)-\Tr(g_\ell\sigma_{e_\ell})
\right|
&=
\left|
\Tr\left(
g_\ell
\left[
\widetilde{\sigma}_{e_\ell}
-\Pi_{XZ}\sigma_{e_\ell}
\right]
\right)
\right|
\nonumber\\
&\leq
\|g_\ell\|_\infty
\left\|
\widetilde{\sigma}_{e_\ell}
-\Pi_{XZ}\sigma_{e_\ell}
\right\|_1
\nonumber\\
&\leq\epsilon.
\label{eq:csv-recovery-approx}
\end{align}
\textbf{Completeness}: Suppose we start with a YES instance. By \Cref{lem:xz-certified-cldm}, there exists a state $\rho^*$ such that 
\[
\|\Tr_{\overline{\{i,i+1\}}}(\rho^*)-\sigma_{{i,i+1}}\|_1\le\alpha_{\mathrm{CLDM}},\;\;\forall i\in[n-1]
\]
From that we get:
\[
\begin{aligned}
|\Tr(g_\ell\rho^*)-A_{\mathrm{real}}(S,\ell)|&=|\Tr(g_{\ell}\Tr_{\overline{e_\ell}}(\rho^*))-A_{\mathrm{real}}(S,\ell)|\\
&\le|\Tr(g_{\ell}\Tr_{\overline{e_\ell}}(\rho^*))-\Tr(g_\ell\sigma_{e_\ell})|+|\Tr(g_\ell\sigma_{e_\ell})-A_{\mathrm{real}}(S,\ell)|\\
&\le\|g_\ell\|_\infty\|\Tr_{\overline{e_\ell}}(\rho^*)-\sigma_{e_\ell}\|_1+\epsilon\le\alpha_{\mathrm{CLDM}}+\epsilon.
\end{aligned}
\]
Thus the constructed instance is a YES instance with $\alpha_{\mathrm{CSV}}:=\alpha_{\mathrm{CLDM}}+\epsilon$.\\
\\
\textbf{Soundness}: Suppose we start with a NO instance. By \Cref{lem:xz-certified-cldm} we have 
\[
\lambda_{\min}(H_x^{XZ})\ge\Delta_x.
\]
Now assume for contradiction that there exists a global
state $\rho$ such that
\[
\left|
\Tr\left(g_\ell\Tr_{\overline{e_\ell}}(\rho)\right)
-A_{\mathrm{real}}(S,\ell)
\right|
<\beta_{\mathrm{CSV}}
\quad\forall\ell\in[M].
\]
Therefore
\begin{align*}
\Tr(H_x^{XZ}\rho)
&=
\sum_{\ell=1}^{M}
\Tr\left(
g_\ell\,
\Tr_{\overline{e_\ell}}(\rho)
\right)<
\sum_{\ell=1}^{M}A_{\mathrm{real}}(S,\ell)+M\beta_{\mathrm{CSV}}\\
&=\sum_{\ell=1}^{M}
\left(A_{\mathrm{real}}(S,\ell)-\Tr(g_\ell\sigma_{e_\ell})\right)+\sum_{\ell=1}^{M}
\Tr(g_\ell\sigma_{e_\ell})+M\beta_{\mathrm{CSV}}\\
&\le M\epsilon+\underbrace{
\sum_{\ell=1}^{M}
\Tr(g_\ell\sigma_{e_\ell})}_{=\,0}+M\beta_{\mathrm{CSV}}=
M(\epsilon+\beta_{\mathrm{CSV}}).
\end{align*}
Choose
\[
\epsilon=\frac{\Delta_x}{16M},
\quad
\beta_{\mathrm{CSV}}=\frac{\Delta_x}{4M}.
\]
Since $\alpha_{\mathrm{CLDM}}$ is negligible, our choice
$\epsilon=\Delta_x/(16M)$ is sufficiently large compared
with $\alpha_{\mathrm{CLDM}}$ to guarantee feasibility
in the YES case. Then
\[
\Tr(H_x^{XZ}\rho)
<
M(\epsilon+\beta_{\mathrm{CSV}})
=
\frac{5\Delta_x}{16}
<
\Delta_x,
\]
contradicting
\[
\Tr(H_x^{XZ}\rho)
\geq
\lambda_{\min}(H_x^{XZ})
\geq
\Delta_x.
\]
Therefore, for every global state $\rho$, there exists some
$\ell\in[M]$ such that
\[
\left|
\Tr\left(
g_\ell\,
\Tr_{\overline{e_\ell}}(\rho)
\right)
-
A_{\mathrm{real}}(S,\ell)
\right|
\geq\beta_{\mathrm{CSV}}.
\]
Hence the constructed $\CSV_{\mathrm{real}}$ instance is a NO
instance.\\
\\
Finally, since $\alpha_{\mathrm{CSV}}=\alpha_{\mathrm{CLDM}}+\epsilon$ and $\alpha_{\mathrm{CLDM}}\leq\epsilon$,
\[
\alpha_{\mathrm{CSV}}
\leq
2\epsilon
=
\frac{\Delta_x}{8M}.
\]
Therefore
\[
\beta_{\mathrm{CSV}}-\alpha_{\mathrm{CSV}}
\geq
\frac{\Delta_x}{4M}
-
\frac{\Delta_x}{8M}
=
\frac{\Delta_x}{8M}
\geq
\frac{1}{\poly(|x|)}.
\]
Thus $\CSV_{\mathrm{real}}$ is $\QMA$-hard.

\end{proof}

\section{Complexity of Fermionic Classical-Shadow Validity}\label{scn:fermion}

\begin{definition}[Matchgate fermionic classical shadow]
\label{def:mfs-shadow}
We specialize the classical-shadow framework of
\cref{def:classical-shadow} to the matchgate protocol described
in \cref{par:MFS}. A matchgate fermionic classical shadow on
$n$ modes is a tuple $(S,O,A_{\mathsf{MFS}},\chi)$, where
\begin{itemize}
    \item The shadow $S=\{s_t\}_{t=1}^L$ consists of
    $L=\poly(n)$ strings of the form
    \[
    s_t=(Q_t,b_t),
    \;\text{where}\;\;
    Q_t\in B(2n),\; b_t\in\{0,1\}^n,
    \]
    encoding the sampled Clifford-matchgate unitary and the
    occupation-number measurement outcome.

    \item $O=\{O_i\}_{i=1}^m$ is a family of $m=\poly(n)$
    succinctly described Hermitian observables satisfying
    $\|O_i\|_\infty\le1$, drawn from the classes described in
    \cref{par:MFS}: local even fermionic observables of constant
    locality, fermionic Gaussian density operators, and
    projectors onto Slater determinants.

    \item The recovery algorithm $A_{\mathsf{MFS}}$ applies the inverse of the measurement channel restricted to its image $\Gamma_{\mathrm{even}}$ to obtain the snapshot operators $\widehat\rho_t$ from $S$. It computes the estimates $\Tr(O_i\widehat\rho_t)$ using the efficient post-processing rules described in \cref{par:MFS} and aggregates them via the median-of-means technique.
\end{itemize}
\end{definition}

For $0\le N\le n$, define the $N$-particle subspace by
\[
    \mathcal H_{n,N}
    :=
    \operatorname{span}
    \left\{
        \ket{x_1,\ldots,x_n}:
        x_j\in\{0,1\},\
        \sum_{j=1}^n x_j=N
    \right\}.
\]
Thus, $\mathcal H_{n,N}$ is the subspace spanned by occupation-number
basis states containing exactly $N$ occupied modes. Let $\mathcal D(\mathcal H_{n,N}):=\left\{\rho\succeq0:\Tr(\rho)=1,\;\rho\text{ is supported on }\mathcal H_{n,N}\right\}$.

\begin{definition}
    [$\CSV^{(N)}_{\mathsf{MFS}}$]\label{def:CSV_MFS} Given integers $n,N$ with $0\le N\le n$, a matchgate fermionic shadow $(S,O,A_{\mathsf{MFS}},\chi)$, and thresholds $0\le\alpha<\beta$, with $\Delta:=\beta-\alpha\ge1/\poly(n)$, decide between the following two cases:
    \begin{itemize}
        \item \textbf{Yes:} $\exists \rho\in\mathcal D(\mathcal H_{n,N})\;\text{such that}\;\forall i,\left|\Tr(O_i\rho)-A_{\mathsf{MFS}}(S,i)\right|\le\alpha$
        \item \textbf{No:} $\forall \rho\in \mathcal D(\mathcal H_{n,N}), \exists\;i\;\text{such that}\;\left|\Tr(O_i\rho)-A_{\mathsf{MFS}}(S,i)\right|\ge\beta. $
    \end{itemize}
\end{definition}
Our reduction starts from the following exact version of the fermionic $N$-representability problem.

\begin{definition}
    [Exact N-representability]\label{def:Exact_Representability} Given integers $N,d$ with $2\le N\le d\le \poly(N)$, a two-fermion density matrix $\rho^{(2)}\in \C^{\binom{d}{2}\times \binom{d}{2}} $, specified using $\poly(d)$ bits, and threshold $\beta\ge1/\poly(N)$, decide between the following:
    \begin{itemize}
        \item \textbf{Yes}: there exists an N-fermion state $\sigma$ on d modes such that $\;\Tr_{3,\dots,N}(\sigma)=\rho^{(2)}$.
        \item\textbf{No}: for every N-fermion state $\sigma$ on d modes, $\|\Tr_{3,\dots,N}(\sigma)-\rho^{(2)}\|_1\ge\beta$.
    \end{itemize}
\end{definition}
\begin{theorem}
    [{\cite[Theorem~5.7]{KR25}}] Exact $N$-representability is $\QMA$-complete under Karp reductions.
\end{theorem}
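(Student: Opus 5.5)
The statement has two parts: membership in $\QMA$, and $\QMA$-hardness under Karp reductions. The plan handles them separately. For hardness, the main device is to move an approximately consistent target into the interior of the set of $N$-representable two-particle density matrices, so that the YES case becomes exact.

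\emph{Membership.} I would adapt Liu's $\QMA$ verifier for CONSISTENCY \cite{liuConsistencyLocalDensity2006} to the fermionic setting, following the Liu--Christandl--Verstraete treatment of $N$-representability.
\begin{enumerate}
\item Encode $d$ modes into $d$ qubits via Jordan--Wigner. First project onto the $N$-particle sector; this is an efficient measurement of the Hamming weight.
\item The prover sends $\poly(N)$ registers, each claimed to hold a copy of $\sigma$. The verifier randomly permutes the registers.
\item On each register the verifier estimates a random entry of $\rho^{(2)}$. Each entry is a two-body operator $a_i^\dagger a_j^\dagger a_l a_k$, which becomes an efficiently measurable bounded observable after Jordan--Wigner. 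Compare the estimates with the target entries.
\end{enumerate}
Since $\rho^{(2)}$ has only $\poly(d)$ entries and the NO gap $\beta$ is inverse polynomial, a deviation of $\beta$ in trace norm forces some entry to deviate by $\beta/\poly(d)$. Soundness against entangled proofs then follows from the same symmetrization/Markov argument Liu uses: after the random permutation, each register's reduced state is a genuine $N$-fermion state $\sigma'$, and on average the verifier is testing that single $\sigma'$.

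\emph{Hardness, step 1: an approximately consistent target.} Start from a $\QMA$ verifier in the simulatable form of \cite{BG22}. Its Kitaev-style history-state construction has local marginals that are classically computable to inverse-exponential precision on YES instances. Encode each qubit of this construction into fermions, one particle in a pair of modes (dual rail). This gives a fixed particle number $N$ and turns each $k$-local qubit marginal into a $2$-RDM-determined quantity. The classical reduction outputs the $2$-RDM $\tau$ of the encoded history state.
\begin{itemize}
\item On YES instances, $\tau$ is within $2^{-\poly}$ of an $N$-representable $\rho^{(2)}$.
\item On NO instances, the circuit Hamiltonian's energy argument (as in the real-shadows section: targets assign zero energy, while $\lambda_{\min}\ge\Delta$) shows that every $N$-fermion state has a $2$-RDM at trace distance at least $\Delta/\poly$ from $\tau$.
\end{itemize}

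\emph{Hardness, step 2: exactness (the main obstacle).} The YES condition demands exact equality, but step 1 only gives $\tau$ close to a representable matrix. I would fix this with a convexity argument. The set of $N$-representable $2$-RDMs within the dual-rail code space (plus penalties) contains a trace-norm ball of radius $r\ge1/\poly$ around the $2$-RDM $\omega$ of the maximally mixed code state. Output
\[
\rho^{(2)}:=(1-t)\tau+t\,\omega.
\]
If $\|\tau-\tau^\star\|_1\le\delta$ for some representable $\tau^\star$, then for $t\approx\delta/r$ we have
\[
\rho^{(2)}=(1-t)\tau^\star+t\bigl(\omega+\tfrac{1-t}{t}(\tau-\tau^\star)\bigr),
\]
and the second term lies in the ball around $\omega$. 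So $\rho^{(2)}$ is exactly representable. Since $t$ is exponentially small, the NO gap is preserved.

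The hard part is establishing the interior ball. I would prove it by showing that, in the relevant symmetric sector, the affine map from states to $2$-RDMs is surjective onto its span with a polynomially bounded inverse. If that fails, I would fall back to restricting the instance to the affine hull actually reached and working with relative interiors there.
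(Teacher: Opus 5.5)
The paper does not prove this theorem; it imports it from \cite{KR25}. Judged on its own, your plan has a genuine gap in the soundness half of Step~1: the locality of your energy certificate does not match the locality of your data.

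On the dual-rail code space, the 2-RDM determines, and is determined by, the one- and two-qubit marginals, and nothing more. So your claim that dual rail ``turns each $k$-local qubit marginal into a 2-RDM-determined quantity'' is false for $k\ge3$. For instance, the uniform mixtures over even-parity and over odd-parity $3$-bit strings have identical (maximally mixed) two-qubit marginals, hence identical dual-rail 2-RDMs, yet $\langle Z^{\otimes3}\rangle=\pm1$. The Steane-code construction in \Cref{scn:faking} relies on exactly this blindness.

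Your NO-case argument needs $\Tr(H\sigma)$ to be a linear function of the 2-RDM of $\sigma$. That means $H$ must be a 2-body fermionic operator, i.e.\ 2-local in the qubit picture. A Kitaev-style circuit Hamiltonian is not: each propagation term couples a two-qubit gate to at least one clock qubit (five qubits with a unary clock). Hence $\Tr(H\tau)$ is not even determined by your target, and $\lambda_{\min}(H)\ge\Delta$ says nothing about $N$-fermion states whose 2-RDM is close to $\tau$. Nor can you simply swap in a perturbative 2-local gadget Hamiltonian: its low-energy states are not the simulatable history states, so your targets would no longer assign it zero energy.

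The missing ingredient is a circuit-to-Hamiltonian construction that is 2-local in the encoded sense, gives every history state exactly zero energy, and still forces inverse-polynomial energy on NO instances. This is what the cited source supplies. As invoked in the proof of \Cref{lem:recovery-verification}, \cite[Lemma~4.4]{KR25} gives, for a fixed circuit, a two-local Hamiltonian under which every history state has zero energy and every state of energy at most $1$ is $\varepsilon$-close to a history state. In addition, \cite[Lemmas~4.5--4.6]{KR25} let pair marginals pin down single-time work states. The restriction $d=2N$ quoted in \Cref{lem:fermobscon-hard} suggests the same dual-rail transfer you propose.

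Alternatively, start from a 2-local qudit construction such as HNN, which is why \Cref{lem:xz-certified-cldm} works. Encode each $8$-dimensional site as one fermion in eight modes, so that nearest-neighbour terms and multiple-occupancy penalties become 2-body. Then apply the projection lemma over the full $N$-particle sector, at the cost of $d=8N$ and of simulating targets for all site pairs, not just neighbours.

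With either fix the rest of your plan goes through. Your exactness step is also easier than you fear: within the affine hull of overlap-consistent two-qubit-marginal families, $(I+\sum_P c_P P)/2^N$ is a state whenever $\sum_P|c_P|\le1$. This gives the $1/\poly$ ball around $\omega$, provided you make $\tau$ lie exactly in that affine hull. Membership needs no new verifier, since the exact problem is the $\alpha=0$ special case of approximate $N$-representability, which is already in $\QMA$.
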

We next express the entries of a two-fermion reduced density matrix as expectation values of Hermitian fermionic observables. This gives
an intermediate observable-consistency problem for our reduction.
Let
\[
\mathcal P_d:=\{\{p,q\}:1\le p<q\le d\},\;|\mathcal P_d|=\binom{d}{2}
\]
and fix a total ordering on $\mathcal P_d$. For $P=\{p,q\}$ and $ Q=\{r,s\}$ with $p<q$ and $r<s$, define
\[
E_{P,Q}:=a_p^\dagger a_q^\dagger a_s a_r.
\]
Writing $\ket{P}:=a_p^\dagger a_q^\dagger\ket{\mathrm{vac}}$, the restriction of $E_{P,Q}$ to the two-particle sector is the matrix
unit $\ket{P}\bra{Q}$. More generally, for an $N$-fermion
state $\sigma$, let $\rho_\sigma^{(2)}:=\Tr_{3,\ldots,N}(\sigma)$ denote its
normalized two-particle reduced density matrix. Then
\[
\Tr(E_{P,Q}\sigma)
=
C_N(\rho_\sigma^{(2)})_{Q,P},
\quad
C_N:=\binom N2.
\]
The factor $C_N$ counts the unordered
particle pairs under the normalization $\Tr(\rho_\sigma^{(2)})=1$.

Since $E_{P,Q}^\dagger=E_{Q,P}$, we obtain Hermitian observables by defining
\[
D_P:=E_{P,P},\;\; R_{P,Q}:=\frac{E_{P,Q}+E_{Q,P}}{2},\;\;I_{P,Q}:=\frac{E_{Q,P}-E_{P,Q}}{2i}.
\]
Their expectation values recover, up to a factor $C_N$, the diagonal entries and the real and imaginary parts of the off-diagonal entries of $\rho_\sigma^{(2)}$, respectively. We collect these observables into the family $\;\;\mathcal O_{2\mathrm{RDM}}:=\{D_P:P\in\mathcal P_d\}\cup \{R_{P,Q}, I_{P,Q}:P,Q\in\mathcal P_d,P<Q\}$.\\
\\
Since creation and annihilation operators have operator
norm one, $\|E_{P,Q}\|_\infty\le1$. The triangle inequality therefore gives $\|O\|_\infty\le1$ for every 
$O\in\mathcal O_{2\mathrm{RDM}}$. Each of these observables preserves the particle number, is even, and is supported on at most four fermionic modes.
\begin{definition}[$\FermObsCon^{(N)}_{\mathcal O_{2\mathrm{RDM}}}$]\label{def:FermObsCon}
     The input consists of integers $N,d$, with $2\le N\le d\le \poly(N)$ and $d\ge4$, polynomially many pairs $\{(O_i,y_i)\}_{i=1}^{m}$ where $O_i\in \mathcal O_{2\mathrm{RDM}},\; y_i\in[-1,1]$, and threshold $ \frac{1} {\poly(N)}\le\beta_{\mathsf{Ferm}}\le2$. Decide between the following cases:
\begin{itemize}
    \item \textbf{Yes}: $\exists$  $N$-fermion state $\sigma$ on $d$ modes such that $\forall$ $i\in [m]$, 
    $ \Tr\left(O_i\sigma\right)=y_i$.
    \item \textbf{No}: $\forall$ $N$-fermion states $\sigma$ on $d$ modes $\exists i\in [m]$ such that $\left| \Tr \left(O_i\sigma\right)-y_i\right|\ge \beta_{\mathsf{Ferm}}$.
\end{itemize}
\end{definition}

\begin{lemma}\label{lem:fermobscon-hard}
   $\FermObsCon^{(N)}_{\mathcal O_{2\mathrm{RDM}}}$ is $\QMA$-hard. 
\end{lemma}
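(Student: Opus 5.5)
We reduce from Exact $N$-representability (\cite[Theorem~5.7]{KR25}) by a direct Karp reduction. Given an instance $(N,d,\rho^{(2)},\beta)$, we output the same $N$ and $d$ (if $d<4$, we pad with two empty modes; this changes nothing, since the given $\rho^{(2)}$ extends by zeros). We take the full family $\mathcal O_{2\mathrm{RDM}}$ as the observables, with targets
\[
y_{D_P}:=C_N\rho^{(2)}_{P,P},\qquad
y_{R_{P,Q}}:=C_N\operatorname{Re}\rho^{(2)}_{Q,P},\qquad
y_{I_{P,Q}}:=C_N\operatorname{Im}\rho^{(2)}_{Q,P},
\]
matching the identity $\Tr(E_{P,Q}\sigma)=C_N(\rho_\sigma^{(2)})_{Q,P}$. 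The signs in $I_{P,Q}$ are fixed by that identity. There are $O(d^4)=\poly(N)$ observables, and the targets are computable from the input in polynomial time.

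One technical point is that targets must lie in $[-1,1]$, while $C_N\rho^{(2)}$ may have entries up to $C_N$. We handle this first, by cheaply rejecting malformed inputs. If $\rho^{(2)}$ is not positive semidefinite with unit trace, or some $|y_i|>1$, we output a fixed NO instance. This is justified because any genuine $\rho_\sigma^{(2)}$ yields $|\Tr(O\sigma)|\le\|O\|_\infty\le1$, so such an input cannot be a YES instance; and since the reduction may map NO instances arbitrarily to NO instances, we only need a trivially infeasible output. One example is the target $y=1$ together with the constraint that the state be $\sigma$ with $\Tr(D_P\sigma)=1$ and $\Tr(D_{P'}\sigma)=1$ for disjoint $P,P'$. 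Since $\sum_P D_P=C_N I$ on the sector, that is impossible when $N=2$; more generally we pick any contradictory pair and set $\beta_{\mathsf{Ferm}}$ small. Otherwise every target automatically satisfies $|y_i|\le1$.

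Completeness is then immediate: if $\Tr_{3..N}\sigma=\rho^{(2)}$, all equalities hold exactly. For soundness, suppose $\sigma$ satisfies $|\Tr(O_i\sigma)-y_i|<\beta_{\mathsf{Ferm}}$ for all $i$. Then every entry of $C_N(\rho^{(2)}_\sigma-\rho^{(2)})$ has real and imaginary parts below $\beta_{\mathsf{Ferm}}$, so
\[
\|\rho^{(2)}_\sigma-\rho^{(2)}\|_1\le \tbinom d2\|\cdot\|_2\le \tbinom d2^2\cdot\frac{2\beta_{\mathsf{Ferm}}}{C_N}.
\]
Choosing $\beta_{\mathsf{Ferm}}:=\beta C_N/(4\binom d2^2)$, which is at least $1/\poly(N)$ (and we cap it at $2$), contradicts the NO promise $\|\cdot\|_1\ge\beta$. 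The only step requiring care is this norm conversion and the out-of-range handling; the rest is bookkeeping, so I expect no real obstacle beyond verifying the index/sign conventions of $E_{P,Q}$ against $\rho^{(2)}$.
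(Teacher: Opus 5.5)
Your reduction is the paper's: the full $\mathcal O_{2\mathrm{RDM}}$ family with targets $C_N$ times the entries of $\rho^{(2)}$, a fixed NO instance when some target leaves $[-1,1]$, exact completeness, and an entrywise-to-trace-norm conversion for soundness. Your constant $4\binom d2^2$ is looser than the paper's $\sqrt2 D^{3/2}$, which is immaterial.

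There is, however, a sign error in the imaginary-part targets. From $\Tr(E_{P,Q}\sigma)=C_N(\rho^{(2)}_\sigma)_{Q,P}$ and $I_{P,Q}=(E_{Q,P}-E_{P,Q})/(2i)$ you get $\Tr(I_{P,Q}\sigma)=\frac{C_N}{2i}\bigl[(\rho^{(2)}_\sigma)_{P,Q}-\overline{(\rho^{(2)}_\sigma)_{P,Q}}\bigr]=C_N\operatorname{Im}(\rho^{(2)}_\sigma)_{P,Q}$. Your target is instead $C_N\operatorname{Im}\rho^{(2)}_{Q,P}=-C_N\operatorname{Im}\rho^{(2)}_{P,Q}$.

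This has two consequences. Your completeness claim, that all equalities hold exactly for the $\sigma$ with $\Tr_{3,\dots,N}\sigma=\rho^{(2)}$, fails whenever $\rho^{(2)}$ has a non-real off-diagonal entry. Likewise, your soundness step actually controls $\rho^{(2)}_\sigma-\overline{\rho^{(2)}}$, not $\rho^{(2)}_\sigma-\rho^{(2)}$.

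The fix is to use the target $C_N\operatorname{Im}\rho^{(2)}_{P,Q}$, as the paper does. Alternatively, you can keep your targets, since they are the correct targets for $\overline{\rho^{(2)}}$. Each $E_{P,Q}$ is a real matrix in the occupation-number basis, so $\sigma\mapsto\bar\sigma$ maps $N$-fermion states to $N$-fermion states with $\rho^{(2)}_{\bar\sigma}=\overline{\rho^{(2)}_\sigma}$. Representability and trace distances are therefore conjugation-invariant, and both directions go through after replacing $\sigma$ by $\bar\sigma$.

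Two smaller points. First, padding with two empty modes does not ``change nothing'' in the NO case. An $N$-fermion state on $d+2$ modes may populate the new modes, so you need an extra argument and a polynomially smaller threshold. One route: the 2-RDM weight on pairs touching the new modes is at least $\frac{N-1}{C_N}$ times the probability of occupying them, and then you project onto the empty sector. The paper sidesteps this by using the $d=2N\ge4$ instances of \cite[Section~5.1]{KR25}.

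Second, your sample fixed NO instance is contradictory only for $N=2$; for $N\ge4$ a state can satisfy $\Tr(D_P\sigma)=\Tr(D_{P'}\sigma)=1$. This is harmless, because the output instance may use $N=2$, $d=4$. Even simpler, use the single constraint $(D_P,-1)$ with $\beta_{\mathsf{Ferm}}=1$, since $D_P\succeq0$.
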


\begin{proof}
We reduce from Exact $N$-representability restricted
to $d=2N\ge4$, which remains $\QMA$-hard by the
construction in \cite[Section~5.1]{KR25}. Let $(N,d,\rho^{(2)},\beta)$ be the input instance, and set $D:=\binom{d}{2}$. For an $N$-fermion state $\sigma$, the 2-RDM coordinate observables recover the entries of its normalized two-particle RDM:
\[
\Tr(D_P\sigma)=\binom{N}{2}(\rho_\sigma^{(2)})_{P,P},\;\; \Tr(R_{P,Q}\sigma)=\binom{N}{2}\operatorname{Re}(\rho_\sigma^{(2)})_{P,Q},\;\;\Tr(I_{P,Q}\sigma)=\binom{N}{2}\operatorname{Im}(\rho_\sigma^{(2)})_{P,Q}.
\]
Therefore, we introduce the following constraints $(O,y)$:
\[
\left(D_P,\binom{N}{2}(\rho^{(2)})_{P,P}\right),\;\; \left(R_{P,Q},\binom{N}{2}\operatorname{Re}(\rho^{(2)})_{P,Q}\right),\;\;\left(I_{P,Q},\binom{N}{2}\operatorname{Im}(\rho^{(2)})_{P,Q}\right).
\]
There are $D+2\binom{D}{2}=D^2$ constraints, all computable
in polynomial time. If any target lies outside $[-1,1]$,
output a fixed NO instance. This cannot occur on a YES
input, since $\|O_i\|_\infty\le1$ for every $i$. Otherwise, set the output threshold to
\[
\beta_{\mathsf{Ferm}}
:=
C_N\frac{\beta}{\sqrt{2}D^{3/2}}.
\]
\textbf{Completeness:} If the Exact $N$-representability instance is YES, there exists an $N$-fermion state $\sigma$ whose two-particle RDM equals $\rho^{(2)}$, i.e., $\rho_\sigma^{(2)}=\rho^{(2)}$. Hence every constructed constraint is satisfied exactly.\\
\\
\textbf{Soundness:} Suppose the Exact $N$-representability instance is NO, then $\forall$ $N$-fermion state $\sigma$,
\[
\|\rho_{\sigma}^{(2)}-\rho^{(2)}\|_1\ge\beta.
\]
Set $X:=\rho_{\sigma}^{(2)}-\rho^{(2)}$ and let  $\delta(X)$ be the maximum absolute value among the diagonal coordinates $X_{P,P}$ and the real and imaginary parts of the off-diagonal coordinates $X_{P,Q}$. Since $X$ is a $D\times D$ Hermitian matrix,
\[
\|X\|_1\le\sqrt D\|X\|_2\le \sqrt2D^{3/2}\delta(X).
\]
Because $\|X\|_1\ge\beta$, we obtain $\delta(X)\ge\frac{\beta}{\sqrt2D^{3/2}}$. By construction of $\mathcal O_{2\mathrm{RDM}}$, the corresponding observable constraint is therefore violated by at least 
\[
\binom{N}{2}\frac{\beta}{\sqrt2 D^{3/2}}=\beta_{\mathsf{Ferm}}.
\]
Since $d\le\poly(N)$, this remains inverse polynomial.
\end{proof}

\begin{theorem}\label{thm:fermobscon4-to-csvmfs}
$\FermObsCon^{(N)}_{\mathcal O_{2\mathrm{RDM}}}
\;\le_m\;
\CSV^{(N')}_{\mathsf{MFS}}$.
\end{theorem}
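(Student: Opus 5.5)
We follow the outline in the introduction: tag each source constraint, build matchgate samples that reproduce the tagged targets, and remove the auxiliary modes by forcing them to be occupied. Let $\{(O_i,y_i)\}_{i=1}^m$ with $O_i\in\mathcal O_{2\mathrm{RDM}}$ be the source instance on $d$ modes with $N$ particles. We add a block of auxiliary modes. For each constraint $i$ we pick a disjoint pair of auxiliary Majoranas and define the tag $T_i:=i\gamma_{a_i}\gamma_{b_i}$, which has eigenvalue $+1$ on the occupied state when $(a_i,b_i)$ is the Majorana pair of one auxiliary mode. We also add padding modes and set $N':=N+(\#\text{auxiliary modes})$.

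The output observables will be of two kinds. The first kind is the tagged source observables $O_i\otimes T_i$ (more precisely their Majorana expansion times $T_i$), rescaled by a constant. These are even and supported on at most $5$ modes, since $O_i$ touches at most $4$ source modes and the tag lives on one auxiliary mode. The second kind is the occupation constraints $i\gamma_{2j-1}\gamma_{2j}$ for every auxiliary mode $j$, with target close to $1$.

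We then construct the shadow one constraint at a time. For constraint $i$ we build a sample family whose averaged snapshot estimator gives the desired tagged target and gives zero on every other tagged observable. By \cref{rem:mfs-zero-outcome} we may take all outcomes to be $0^n$, so a sample is just a signed permutation $Q$. The estimator of a Majorana monomial $G_F$ is then nonzero only if $F$ is a union of edges of the perfect matching induced by $Q$ on the vacuum covariance. For family $i$ we therefore always pair $\gamma_{a_{i'}}$ (for $i'\neq i$) with a padding Majorana outside $\operatorname{supp}(O_{i'})$. This forces every monomial of $O_{i'}T_{i'}$ to have vanishing estimator term by term.

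Within the family, the estimator vector restricted to the constantly many Majorana monomials of $O_iT_i$ and the auxiliary occupations ranges over a finite set of local estimator vectors. These are the images of the matchings on $O(1)$ modes, scaled by the inverse channel factors $\binom{2n}{2r}/\binom{n}{r}$. A constant-size LP finds a convex combination hitting the target vector, and we round its weights to integer counts. The combined shadow is the concatenation of the $m$ families, duplicated in every median-of-means bucket as in \Cref{lem:csv-real-hard}, so that the recovery value equals the empirical mean. Since each family is only a $1/m$ fraction, it must realize $m$ times the target. This is where padding enters: adding padding modes enlarges the ball around the origin contained in the convex hull of the local estimator vectors, because the inverse channel factors grow. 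We choose enough padding, polynomially many modes, so that a ball of radius about $m$ fits.

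Completeness. Take the source witness $\sigma$ tensored with all auxiliary modes occupied. This state lies in $\mathcal H_{n,N'}$, satisfies every $T_i=1$, and meets every target up to the rounding error, which we set as $\alpha$.

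Soundness. Suppose $\rho\in\mathcal D(\mathcal H_{n,N'})$ meets all targets within $\beta$. The occupation constraints make the probability that all auxiliary modes are occupied at least $1-O(K\beta)$, where $K$ is the number of auxiliary modes. Gentle measurement then shows that projecting onto that event and tracing out the auxiliary modes changes each expectation by $O(\sqrt{K\beta})$. On the projected state every $T_i$ equals $1$, and particle-number conservation puts the remaining state in the $N$-particle sector on $d$ modes. It would therefore violate the source instance by at most $\beta_{\mathsf{Ferm}}/2$, which is a contradiction. We fix $\beta$ to be inverse-polynomially small relative to $\beta_{\mathsf{Ferm}}^2/K$, so $\beta-\alpha$ stays inverse polynomial.

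The step I expect to be the main obstacle is the LP feasibility with isolation. We must show that the constrained set of matchings still yields a convex hull containing a ball of radius $\Omega(m)$ around the origin in the relevant coordinates. The constraints are that unwanted tags are paired outside the support of their observables, and the hull must additionally achieve auxiliary occupations near $1$. We must also check that the rescaled targets stay within range. I would first try to prove this by explicitly exhibiting, for each Majorana monomial coordinate, a pair of sign-flipped matchings that realize $\pm$ its inverse-channel factor while zeroing all others.
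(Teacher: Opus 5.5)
Your architecture is the paper's: a fresh tag per constraint, isolating matchings that kill off-target tagged monomials term by term, padding to grow $c_{2r}(n')=\binom{2n'}{2r}/\binom{n'}{r}$, a constant-size LP per family, rounding, identical median-of-means buckets, and occupation constraints followed by projection and particle-number conservation. Your soundness argument (gentle measurement, with every threshold $\beta\ll\beta_{\mathsf{Ferm}}^2/K$) is a valid substitute for the paper's. The paper instead gets a linear error $2(1-q)$, using that each tagged observable commutes with the projector onto all-occupied auxiliaries, and scales the tagged observables by $\lambda=1/(2(m+p))$.

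\textbf{First gap: the inradius step is not proved.} You need the convex hull of the estimator vectors of isolating samples to contain a ball of radius $\Omega(n')$, so that the $m$-fold targets are reachable once $\kappa(2n'-1)\ge\sqrt2\,m$. You defer this, and it is the heart of the construction. The route you sketch, one sign-flipped pair per monomial, does not isolate a coordinate. Take a sample that matches the ten local labels among themselves by $\{e_1,\dots,e_5\}$ with edge signs $\omega$. Its estimator is $\pm c_{2|K|}(n')\,\omega_K$ on every local monomial $F_K=\bigcup_{k\in K}e_k$ at once. Flipping one edge sign only cancels those $F_K$ that contain that edge. The paper's inradius lemma closes this in four steps:
\begin{enumerate}
\item[(a)] Average over all $2^5$ local sign patterns with weight $\omega_J$. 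Using $2^{-5}\sum_\omega\omega_J\omega_K=\delta_{J,K}$, this extracts $\pm c_{2|J|}(n')\,v_F$ for $F=F_J$.
\item[(b)] Turn this signed average into a genuine convex combination via the sign-flip lemma. Flipping the edge containing $\eta_{i,1}$ negates every estimator on the local space, because every monomial there contains both tag Majoranas.
\item[(c)] The coefficient vectors $v_F$ span $\mathbb R^{37}$, so $\operatorname{conv}\{\pm v_F\}$ contains a ball of fixed radius $\kappa$, independent of $i$.
\item[(d)] $c_2,c_4,c_6\ge 2n'-1$.
\end{enumerate}

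\textbf{Second gap: the padding modes.} They must be pinned. Otherwise a cheating state can move particles between padding and source modes, and the decoded state need not lie in the $N$-particle sector. So the shadow must give recovery value $\approx1$ on every padding occupation $i\xi_{h,1}\xi_{h,2}$, and you never say how this is achieved. Your isolation rule actually interferes with it. Pairing an inactive tag Majorana with a padding Majorana makes that padding occupation's estimator vanish on the sample. If a padding mode is used this way in every family, its recovery value is $0$ and completeness fails. A dedicated assignment instead adds coordinates to the family LPs, which you have not accounted for.

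The paper avoids this by pairing the inactive tags cyclically among themselves: the second Majorana of one tag goes with the first Majorana of the next, and a constraint is duplicated if $m=2$. Every padding pair then stays internally matched, with estimator $\pm(2n'-1)$. Each sample is then replaced by $n'$ copies with all padding edges positive and $n'-1$ copies with all of them negative. This makes each padding average exactly $1$ without changing any estimator supported off the padding modes.
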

To prove the theorem, fix an instance $\mathcal I=\{(O_i,y_i)\}_{i=1}^m$ of
$\FermObsCon^{(N)}_{\mathcal O_{2\mathrm{RDM}}}$ on $d$ modes, with threshold $\beta_{\mathsf{Ferm}}$. We construct a matchgate shadow whose recovery values encode the target expectations $y_i$, using auxiliary modes and observables to treat each constraint separately.\\
\\
For each constraint $i\in[m]$, introduce a fresh tag mode with Majorana operators $\eta_{i,1},\eta_{i,2}$ and define
  \[
  A_i^{\mathsf{tag}}:=i\eta_{i,1}\eta_{i,2},\;\;\; T_i:=O_iA_i^{\mathsf{tag}}
  \]
The fresh tag distinguishes constraint $i$ even when
physical supports overlap. Since $[O_i,A_i^{\mathsf{tag}}]=0$, the
observable $T_i$ is Hermitian and has norm at most one.
On the $+1$ eigenspace of $A_i^{\mathsf{tag}}$, it agrees with $O_i$.\\
\\
Since every $O_i\in\mathcal O_{2\mathrm{RDM}}$ acts on at most four modes, choose a set of exactly four physical modes 
 \[
 \Omega_i\subseteq [d],\;\;|\Omega_i|=4,\;\; \operatorname{supp}_{\text{mode}}(O_i)\subseteq \Omega_i.
 \]
 Let $\Gamma(\Omega_i)$ denote the eight Majorana labels associated with the modes in $\Omega_i$, and define the local Majorana region
  \[
  R_i:=\Gamma(\Omega_i)\cup\{\eta_{i,1},\eta_{i,2}\}
  \]
For the discrete matchgate sample $s=(Q,b)$, let $M_s$ denote the perfect matching on the Majorana labels induced by the signed permutation $Q$. We use the following consequence of the matchgate estimator formula: for every even Majorana monomial $G_F$, where $F$ is its set of Majorana labels, the matchgate estimator satisfies
\[
\hat{o}_s(G_F)=0\;\;\text{unless $F$ is a union of complete edges of $M_s$}.
\]
We say that $s$ is $i$-isolating if, for every $j\neq i$, at least one of the two tag Majoranas $\eta_{j,1},\eta_{j,2}$ is matched by $M_s$ to a Majorana label outside $R_j$.

\begin{lemma}[Isolation lemma]\label{lem:isolation}
If $s$ is $i$-isolating, then for every $j\neq i$,
\[
\hat{o}_s(A_j^{\mathsf{tag}})=0,\;\;\hat{o}_s(T_j)=0.
\]
  
\end{lemma}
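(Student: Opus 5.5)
The plan is to expand each observable in Majorana monomials and apply the vanishing rule stated just before the lemma term by term; linearity of the single-shot estimator then gives the result.

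Fix $j\neq i$. Since $s$ is $i$-isolating, one of $\eta_{j,1},\eta_{j,2}$ is matched by $M_s$ to a label outside $R_j$. Call this Majorana $\eta_{j,c}$, and call its partner $\mu\notin R_j$.

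First consider the tag observable. We have $A_j^{\mathsf{tag}}=G_{\{\eta_{j,1},\eta_{j,2}\}}$, a single even monomial whose label set is $F=\{\eta_{j,1},\eta_{j,2}\}$. For $F$ to be a union of complete edges of $M_s$, the matching must pair $\eta_{j,1}$ with $\eta_{j,2}$. But $\eta_{j,c}$ is paired with $\mu$, and $\mu\notin R_j$, so $\mu$ is not the other tag label of mode $j$. Hence $F$ is not a union of edges, and $\hat o_s(A_j^{\mathsf{tag}})=0$.

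Next consider $T_j$. Since $O_j$ is even and supported on the modes $\Omega_j$, expand it as
\[
O_j=\sum_{F\subseteq\Gamma(\Omega_j),\,|F|\ \mathrm{even}}c_F\,G_F .
\]
This is a finite expansion in Majorana monomials with labels in $\Gamma(\Omega_j)$. Then
\[
T_j=O_jA_j^{\mathsf{tag}}=\sum_F c_F\,G_F\cdot i\eta_{j,1}\eta_{j,2}.
\]
The tag labels are disjoint from $\Gamma(\Omega_j)$. So each product $G_F\,i\eta_{j,1}\eta_{j,2}$ equals, up to a phase $\pm1$ coming from reordering into increasing label order, the monomial $G_{F\cup\{\eta_{j,1},\eta_{j,2}\}}$. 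By linearity, $\hat o_s(T_j)=\sum_F \pm c_F\,\hat o_s(G_{F'})$ with $F'=F\cup\{\eta_{j,1},\eta_{j,2}\}\subseteq R_j$.

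Each such $F'$ contains $\eta_{j,c}$ but not its partner $\mu$, because $\mu\notin R_j\supseteq F'$. So $F'$ is not a union of complete edges of $M_s$, every term vanishes, and $\hat o_s(T_j)=0$.

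The one point needing care is justifying linearity and the phase bookkeeping. The estimator $\hat o_s(O)=\Tr(O\widehat\rho_s)$ is linear in $O$, and a phase $\pm1$ (or a factor of $i$ absorbed into the Hermitian normalization) cannot turn a zero contribution into a nonzero one. So this step is routine, and the main content is just the observation that the partner of the isolated tag Majorana lies outside every relevant monomial's label set.
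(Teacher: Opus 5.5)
Your proposal is correct and follows the paper's own argument: expand $T_j$ and $A_j^{\mathsf{tag}}$ into Majorana monomials, each containing both tag labels and supported in $R_j$. The isolated tag Majorana's partner lies outside $R_j$, so no such support is a union of complete edges of $M_s$, and linearity gives zero. Your explicit treatment of the reordering sign and of the identity component of $O_j$ (the $F=\varnothing$ term) just spells out bookkeeping the paper leaves implicit.
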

\begin{proof}
    Fix $j\neq i$. Every Majorana monomial appearing in $T_j=O_jA_j^{\mathsf{tag}}$ contains both tag Majoranas $\eta_{j,1},\eta_{j,2}$ and its support is contained in $R_j$. Since $s$ is $i$-isolating, at least one of these tag Majoranas is matched by $M_s$ to a label outside $R_j$. Hence the support of the monomial cannot be a union of complete edges of $M_s$, and therefore its estimator vanishes. By linearity,
    \[
    \hat{o}_s(T_j)=0
    \]
    The same argument applied to $A_j^{\mathsf{tag}}=i\eta_{j,1}\eta_{j,2}$ gives $\hat{o}_s(A_j^{\mathsf{tag}})=0$.
\end{proof}

Isolation lets us handle each constraint separately.
We now construct the local sample families needed to
build a single shadow whose average estimates are
close to $y_i$ for $T_i$ and to $1$ for $A_i^{\mathsf{tag}}$,
for every $i$.\\
\\
Recall that $\Omega_i\subseteq [d]$ is the four-mode set containing the mode support of $O_i$. Let $\mathcal P_i:=\binom{\Omega_i}{2}$. Since $|\Omega_i|=4$, the set $\mathcal P_i$ contains six unordered pairs of modes. Define 
\[
\mathcal W_i:=\operatorname{span}_{\mathbb R}\left(\{D_P:P\in \mathcal P_i\}\cup\{R_{P,Q},I_{P,Q}:P,Q\in\mathcal P_i, P<Q\}\right).
\]
Thus $\mathcal W_i$ is the space of two-particle coordinate observables on the four modes $\Omega_i$. The six operators $D_P$, fifteen operators $R_{P,Q}$, and fifteen operators $I_{P,Q}$ form a basis of $\mathcal W_i$. Hence $\operatorname{dim}_{\mathbb R}\mathcal W_i=36$. Now introduce the tagged local space 
    \[
    \widetilde{V}_i:=\operatorname{span}_{\mathbb R}\left(\{A_i^{\mathsf{tag}}\}\cup\{OA_i^{\mathsf{tag}}:O\in\mathcal W_i\}\right).
    \]
Thus $\widetilde V_i$ consists of the tag observable together with all 36 tagged coordinate observables. Every operator in $\mathcal W_i$ annihilates the
four-mode vacuum, so $I\notin\mathcal W_i$.
Since $(A_i^{\mathsf{tag}})^2=I$, multiplication by $A_i^{\mathsf{tag}}$ is injective,
and $A_i^{\mathsf{tag}}$ is independent of
$\{OA_i^{\mathsf{tag}}:O\in\mathcal W_i\}$. Thus
$\dim_{\mathbb R}\widetilde V_i=37$. Set $B_1^{(i)}:=A_i^{\mathsf{tag}}$, and let
$B_2^{(i)},\ldots,B_{37}^{(i)}$ be the
36 tagged coordinate observables obtained by
multiplying each of the above basis operators
of $\mathcal W_i$ by $A_i^{\mathsf{tag}}$, using the same
ordering under relabeling for every $i$.\\
\\
For the remainder of the construction, we may assume that either $m=1$ or $m\ge 3$; if $m=2$, duplicate one of the two constraints.
Every perfect matching on the ten Majorana labels in $R_i$ can be extended to an $i$-isolating perfect matching on the full system. Indeed, after fixing the matching inside $R_i$, pair the unused physical Majoranas arbitrarily. Arrange the inactive tag modes in a cycle and pair
the second Majorana of each tag with the first
Majorana of the next tag. Then every inactive tag Majorana is paired with a label outside its corresponding region $R_j$, so the resulting global matching is $i$-isolating. Any padding Majoranas introduced later are paired internally.\\
\\
Let $n'$ denote the total number of modes in the
ambient system. By \Cref{rem:mfs-zero-outcome},
we may take $b=0^{n'}$ for all samples constructed below.\\
\\
For each perfect matching $M=\{e_1,\ldots,e_5\}$ on $R_i$,
fix an $i$-isolating global extension $\widetilde M$,
with arbitrary fixed signs on the edges outside $R_i$.
We vary the signs of the five edges inside $R_i$ over
all $\omega\in\{\pm1\}^5$. Let $s_{M,\omega}$ be a sample
realizing $\widetilde M$ with sign $\omega_k$ on $e_k$
and the fixed signs outside $R_i$. Define 
\[
\Sigma_i:=\{s_{M,\omega}: M\;\text{a perfect matching on $R_i$}, \omega\in\{\pm 1\}^5\}.
\]
There are ($9!!$) perfect matchings on the ten local
Majorana labels and $2^5$ sign choices, so
$|\Sigma_i|=(9!!)\cdot2^5=O(1)$.
Every sample in $\Sigma_i$ is $i$-isolating. Consequently, by \Cref{lem:isolation},
\[
\hat{o}_s(T_j)=\hat{o}_s(A_j^{\mathsf{tag}})=0,\;\; s\in\Sigma_i,\;j\neq i.
\]
For each $s\in\Sigma_i$, define the local estimator vector
\[
u_s^{(i)}:=\bigl(\hat{o}_s(B_1^{(i)}),\ldots,\hat{o}_s(B_{37}^{(i)})\bigr)
\in\mathbb R^{37}.
\]
We next show that these vectors are symmetric under sign
reversal and that their convex hull contains a ball centered
at the origin whose radius grows linearly with $n'$.

\begin{lemma}[Sign flip lemma]\label{lem:sign-flip}
For every $s\in\Sigma_i$, there exists $s^-\in\Sigma_i$ such that
\[
\hat{o}_{s^-}(B)=-\hat{o}_s(B)
\quad
\forall B\in\widetilde V_i.
\]
In particular, $\;u_{s^-}^{(i)}=-u_s^{(i)}$.
\end{lemma}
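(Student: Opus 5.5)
The plan is to take $s^-$ to be $s$ with one edge sign flipped, choosing an edge that lies inside $R_i$ and contains a tag Majorana of $A_i^{\mathsf{tag}}$. Concretely, write $s=s_{M,\omega}$ with $M=\{e_1,\ldots,e_5\}$ a perfect matching on $R_i$. Let $e_k$ be the edge of $M$ containing $\eta_{i,1}$. We set $s^-:=s_{M,\omega'}$, where $\omega'$ agrees with $\omega$ except that $\omega'_k=-\omega_k$. Since $\Sigma_i$ ranges over all sign vectors in $\{\pm1\}^5$ with the same fixed extension $\widetilde M$, we have $s^-\in\Sigma_i$.

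The key input is the explicit form of the discrete matchgate estimator on a Majorana monomial $G_F$. For a sample realizing a signed perfect matching with zero outcome, as allowed by \Cref{rem:mfs-zero-outcome}, the snapshot is $\mathcal M_{\mathsf{MFS}}^{+}$ applied to a Gaussian state. Its covariance matrix is $Q^{\mathsf T}C_{\ket{0}}Q$, which is block-diagonal along the edges of the matching, with entries $\pm1$ given by the edge signs. By Wick's theorem, $\Tr(G_F\,U_Q^\dagger\ket0\bra0U_Q)$ is the Pfaffian of the principal submatrix of this covariance on $F$. That Pfaffian vanishes unless $F$ is a union of edges, which is the property already used above. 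When $F$ is a union of edges, the Pfaffian equals $\pm\prod_{e\subseteq F}\omega_e$, where the extra sign depends only on the ordering of labels in $F$ and on the matching, not on $\omega$. Applying $\mathcal M_{\mathsf{MFS}}^{+}$ only rescales the degree-$|F|$ sector by $\binom{2n'}{|F|}/\binom{n'}{|F|/2}$, a factor independent of $\omega$. Hence $\hat o_s(G_F)=c_F(M)\prod_{e\subseteq F}\omega_e$, with $c_F(M)$ independent of the signs. I would carefully state this formula as the main technical step. Verifying the sign and normalization conventions against the paper's definition of $U_Q$ and $G_F$ is the part I expect to be most delicate. For the argument, though, only multiplicativity in the edge signs matters.

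With this formula, consider any basis element $B^{(i)}_\ell$ of $\widetilde V_i$. Every element of $\widetilde V_i$ is a real combination of Majorana monomials $G_F$ with $F\subseteq R_i$ and $\{\eta_{i,1},\eta_{i,2}\}\subseteq F$. This holds because $A_i^{\mathsf{tag}}$ and each $OA_i^{\mathsf{tag}}$ contain both tag Majoranas, and the physical part of $O$ is supported in $\Gamma(\Omega_i)$.

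For such an $F$, take any $G_F$ with nonzero estimator. Then $F$ is a union of edges of $\widetilde M$, and since $\eta_{i,1}\in F$, the edge $e_k$ must be contained in $F$. Also $F\subseteq R_i$, so no edges outside $R_i$ are involved. Flipping $\omega_k$ therefore flips the sign of $\prod_{e\subseteq F}\omega_e$, giving $\hat o_{s^-}(G_F)=-\hat o_s(G_F)$. For monomials with vanishing estimator, both sides are zero.

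By linearity of $B\mapsto\hat o_s(B)$, we get $\hat o_{s^-}(B)=-\hat o_s(B)$ for all $B\in\widetilde V_i$. Taking $B=B^{(i)}_1,\ldots,B^{(i)}_{37}$ then gives $u^{(i)}_{s^-}=-u^{(i)}_s$.
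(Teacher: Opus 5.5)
Your proof is correct, and your choice of $s^-$ (flip the sign of the edge of $M$ containing $\eta_{i,1}$) is the same as the paper's. You differ in how you establish the sign flip for each monomial.

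The paper never evaluates the Pfaffian. It notes that $s^-$ realizes the same signed matching, and hence the same snapshot, as $(QD_i,b)$. Here $D_i$ is the diagonal signed permutation that negates only $\eta_{i,1}$. The paper then applies the congruence identity $\operatorname{pf}(\Lambda A\Lambda^{\mathsf T})=\det(\Lambda)\operatorname{pf}(A)$ with $\det(D_i|_F)=-1$ for every $F\ni\eta_{i,1}$.

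You instead use the explicit edge-product form $\hat o_s(G_F)=c_F(M)\prod_{e\subseteq F}\omega_e$ together with the vanishing criterion. Any $F$ containing $\eta_{i,1}$ with nonzero estimator must then contain the edge $e_k$.

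The paper's route is more robust. It needs neither the vanishing criterion nor the combinatorial sign $c_F(M)$, so the convention issues you flag as delicate never arise. It also treats every monomial containing $\eta_{i,1}$ the same way.

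Your route rests on the formula $\hat o_{s_{M,\omega}}(G_{F_K})=\operatorname{sgn}(M,K)c_{2|K|}(n')\omega_K$, which the paper needs anyway in the proof of the Inradius lemma. So your argument unifies the two lemmas and shows directly that only multiplicativity in the edge signs matters.
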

\begin{proof}
Fix $s=s_{M,\omega}=(Q,b)\in\Sigma_i$.
Let $\omega^-$ be obtained from $\omega$ by switching
the sign of the edge containing $\eta_{i,1}$, and set
\[
s^-:=s_{M,\omega^-}\in\Sigma_i.
\]
Let $D_i\in B(2n')$ be the diagonal signed permutation
that changes the sign of $\eta_{i,1}$ and fixes every
other Majorana label. The sample $(QD_i,b)$ has the
same signed matching as $s^-$, so their snapshots,
and hence their estimator values, coincide.\\
\\
Now fix $B\in\widetilde{V}_i$. By the definition of the tagged local space, every Majorana monomial appearing in $B$ contains both tag labels $\eta_{i,1},\eta_{i,2}$. Thus we can write
\[
 B=\sum_Fb_FG_F,\;\;\{\eta_{i,1},\eta_{i,2}\}\subseteq F,
\]
where $G_F$ is the Hermitian Majorana monomial supported on $F$. Its single-shot estimator is 
\[
\hat{o}_{(Q,b)}(G_F)=i^{|F|/2}c_{|F|}(n')\operatorname{pf}(i(Q^{\mathsf T}C_{\ket{b}}Q)|_{F}),
\]
where $c_{2r}(n'):=\binom{2n'}{2r}/\binom{n'}{r}$. Replacing $Q$ by $QD_i$ gives

\[
    \hat{o}_{(QD_i,b)}(G_F)=i^{|F|/2}c_{|F|}(n')\operatorname{pf}(i(D_i^{\mathsf T}Q^{\mathsf T}C_{\ket{b}}QD_i)|_{F})
\]
Now applying the following Pfaffian identity:
\[
 \operatorname{pf}(\Lambda A\Lambda^{\mathsf T})=\operatorname{det}(\Lambda)\operatorname{pf}(A)
\]
and the fact that $\operatorname{det}(D_i|_{F})=-1$ we get that
\[
\hat{o}_{(QD_i,b)}(G_F)=-\hat{o}_{(Q,b)}(G_F)
\]
for every monomial in the expansion of $B$. By linearity of the single-shot estimator,
\[
\hat{o}_{s^-}(B)=\hat{o}_{(QD_i,b)}(B)=\sum_Fb_F\hat{o}_{(QD_i,b)}(G_F)=-\sum_Fb_F\hat{o}_{(Q,b)}(G_F)=-\hat{o}_s(B).
\]
Since this holds for every basis element $B_r^{(i)}$, it follows that $u_{s^-}^{(i)}=-u_s^{(i)}$.
\end{proof}

\begin{lemma}[Inradius lemma]\label{lem:inradius}
     There exists a constant $\kappa>0$, independent of the constraint $i$ and the mode number $n'$, such that
    \[
    B_2\left(0,\kappa(2n'-1)\right)\subseteq\operatorname{conv}\{u_s^{(i)}:s\in\Sigma_i\}.
    \]
    Here $B_2(0,r):=\{x\in\mathbb R^{37}:\|x\|_2\le r\}$ denotes the Euclidean ball centered at the origin. In particular, the convex hull of the local estimator vectors has inradius $\varepsilon_i(n')\ge\kappa(2n'-1)=\Omega(n').$
\end{lemma}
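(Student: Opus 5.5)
Since the convex hull is symmetric under $u\mapsto -u$ by \Cref{lem:sign-flip}, it suffices to show two things. First, the vectors $\{u_s^{(i)}:s\in\Sigma_i\}$ span $\mathbb R^{37}$. Second, after dividing by the scale factor $c(n')$ common to all estimators, the support function is bounded below uniformly in $n'$. Concretely, for a symmetric finite set $U$, $B_2(0,r)\subseteq\operatorname{conv}U$ holds if and only if $\max_{u\in U}\langle x,u\rangle\ge r$ for every unit vector $x$. The plan is therefore to (i) factor out the $n'$-dependence exactly, (ii) reduce to a fixed, $n'$-independent finite configuration, and (iii) prove a spanning statement for that configuration. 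Compactness of the unit sphere then gives a positive minimum $\kappa$ of the (continuous) support function.

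For step (i), every basis element $B_r^{(i)}$ is a real combination of Hermitian Majorana monomials $G_F$ with $F\subseteq R_i$ and $\{\eta_{i,1},\eta_{i,2}\}\subseteq F$. Since each $O\in\mathcal W_i$ is a particle-number-conserving two-body operator, $|F|\in\{2,4,6,10\}$. The estimator formula from the proof of \Cref{lem:sign-flip} gives $\hat o_s(G_F)=i^{|F|/2}c_{|F|}(n')\operatorname{pf}(\cdots)$. For a Clifford matchgate sample with outcome $0^{n'}$, the Pfaffian is either $0$ or $\pm1$: it is nonzero exactly when $F$ is a union of edges of $\widetilde M$, and then its sign is the product of the edge signs. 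Thus each monomial gets weight $c_{|F|}(n')$. The smallest nontrivial weight is $c_2(n')=\binom{2n'}{2}/n'=2n'-1$, and the larger ones satisfy $c_{2r}(n')\ge c_2(n')$. This is what produces the factor $2n'-1$ in the claimed radius.

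Step (ii) is where the difficulty lies, because the different degrees scale differently in $n'$. To handle this, I will rescale coordinates by the diagonal map $\Lambda_{n'}=\operatorname{diag}(c_{|F|}(n'))$ acting on monomial coordinates. In that basis, $u_s^{(i)}=\Lambda_{n'}v_s$, where $v_s$ is an $n'$-independent sign pattern: $v_s(F)=\pm1$ when $F$ is a union of edges of $M$, and $0$ otherwise. Since $\Lambda_{n'}\succeq(2n'-1)I$, the set $\Lambda_{n'}\operatorname{conv}\{v_s\}$ contains $(2n'-1)\cdot B(0,\kappa_0)$ whenever $\operatorname{conv}\{v_s\}\supseteq B(0,\kappa_0)$. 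I then pass back to the fixed basis $B_1,\ldots,B_{37}$ through a constant invertible change of coordinates. This change of coordinates is the same for every $i$ by the relabeling convention, so it costs only a constant factor.

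For step (iii), I must show that the $v_s$ span the 37-dimensional space of monomial coordinates restricted to $\widetilde V_i$. The plan is to use the sign variation over $\omega\in\{\pm1\}^5$: averaging $v_{M,\omega}$ against the character $\prod_{k\in K}\omega_k$ isolates the monomial $\bigcup_{k\in K}e_k$. Every $F\subseteq R_i$ of even size containing both tags is a union of edges of some perfect matching on $R_i$, so every such monomial coordinate indicator lies in the span. Projecting onto $\widetilde V_i$ (which I take as the coordinate space) then yields full rank. Finally, I set $\kappa$ to be the inradius $\kappa_0$ of the fixed polytope times the norm factor from the coordinate change. This constant is independent of $i$ and $n'$, which gives the stated bound $\kappa(2n'-1)$.
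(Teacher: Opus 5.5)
Your argument is correct and is essentially the paper's proof: character-averaging over $\omega\in\{\pm1\}^5$ isolates each monomial, the sign-flip lemma gives symmetry, linear independence of the $B_r^{(i)}$ gives full rank, and $c_{2r}(n')\ge 2n'-1$ supplies the linear growth. The only organizational difference is that you factor $u_s^{(i)}=C\Lambda_{n'}v_s$, with $C_{rF}=b_F(B_r^{(i)})$, through monomial coordinates, whereas the paper works directly in $\mathbb R^{37}$ with the columns $v_F^{(i)}$ of $C$ and shows $\pm(2n'-1)v_F^{(i)}\in\operatorname{conv}\{u_s^{(i)}:s\in\Sigma_i\}$.

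Two small slips. First, only $|F|\in\{2,4,6\}$ occurs; there are no degree-$10$ terms. Second, $C$ maps the monomial space, whose dimension exceeds $37$, onto $\mathbb R^{37}$, so it is a surjection rather than an invertible change of coordinates. This still suffices, because a full-rank linear map sends a ball onto a set containing a ball. It does require that the ball be established for $\operatorname{conv}\{v_s\}$ in the full monomial space and only then pushed forward through $\Lambda_{n'}$ and $C$, which your indicator argument does.
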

\begin{proof}
    Every $B\in\widetilde V_i$ is a sum of monomials
using two, four, or six Majoranas from $R_i$:
\[
B=\sum_{\substack{F\subseteq R_i\\|F|\in\{2,4,6\}}}
b_F(B)G_F.
\]
For each support $F$, define
\[
v_F^{(i)}
:=
\bigl(b_F(B_1^{(i)}),\ldots,b_F(B_{37}^{(i)})\bigr)
\in\mathbb R^{37}.
\]
The vector $v_F^{(i)}$ collects the coefficients of
one monomial $G_F$ across the 37 basis observables.
The vector $u_s^{(i)}$ collects the estimates of
those observables produced by one sample $s$.\\
    \\
    The coefficient vectors $v_F^{(i)}$ span $\mathbb R^{37}$. Indeed, if $x\in\mathbb R^{37}$ is orthogonal to every $v_F^{(i)}$, then $B:=\sum_{r=1}^{37}x_rB_r^{(i)}$ has Majorana coefficient
    \[
    b_F(B)=\sum_{r=1}^{37}x_rb_F(B_r^{(i)})=x\cdot v_F^{(i)}=0,\;\;\forall F.
    \]
    Hence $B=0$, and linear independence of the basis implies $x=0$.\\
    \\
    Therefore the symmetric convex hull of these vectors contains a ball around the origin. Since all local bases are obtained from the same  basis by relabeling, there exists a constant $\kappa>0$, independent of $i$, such that $B_2(0,\kappa)\subseteq\operatorname{conv}\{\pm v_F^{(i)}\}$.\\
    \\
    Now fix a support $F$ and choose a perfect matching $M=\{e_1,\dots,e_5\}$ on $R_i$ and $J\subseteq[5]$ such that $F=F_J:=\bigcup_{j\in J}e_j$. Define $\omega_J:=\prod_{j\in J}\omega_j$. For any $K\subseteq[5]$, write $F_K:=\bigcup_{k\in K}e_k$ and $\omega_K:=\prod_{k\in K}\omega_k$. The estimator of $G_{F_K}$ satisfies 
    \[
      \hat{o}_{s_{M,\omega}}(G_{F_K})= \operatorname{sgn}(M,K)c_{2|K|}(n')\,\omega_K,
    \]
   where $\operatorname{sgn}(M,K)\in\{\pm1\}$ is determined by the ordering and phase conventions, independently of $\omega$. Monomials whose supports are not unions of complete matching edges have zero estimator. The orthogonality identity
   \[
      2^{-5}\sum_{\omega\in\{\pm1\}^5} \omega_J\omega_K=\delta_{J,K}
   \]
   therefore cancels all contributions with $K\neq J$ and isolates the coefficient vector for $F=F_J$:
    \[
    \begin{aligned}2^{-5}\sum_{\omega\in\{\pm1\}^5}\omega_J\cdot u_{s_{M,\omega}}^{(i)}&=2^{-5}\sum_{\omega\in\{\pm1\}^5}\omega_J\left(\hat{o}_{s_{M,\omega}}(B_r^{(i)})\right)_{r=1}^{37}\\
    &=\operatorname{sgn}(M,J)c_{2|J|}(n')v_F^{(i)}.
    \end{aligned}
    \]
    By \Cref{lem:sign-flip}, each vector $\omega_J u_{s_{M,\omega}}^{(i)}$ equals $u_t^{(i)}$ for some $t\in\Sigma_i$. Hence the displayed average is a convex combination of estimator vectors. Since their convex hull is symmetric about the origin, we obtain
    \[
    \pm c_{2|J|}(n')v_F^{(i)}\in\operatorname{conv}\{u_s^{(i)}:s\in\Sigma_i\}.
    \]
    For $|F|=2|J|\in\{2,4,6\}$ and $n'\ge 5$ we have:
    \[
    c_{2|J|}(n')=\frac{\binom{2n'}{2|J|}}{\binom{n'}{|J|}}\ge2n'-1.
    \]
    Explicitly:
    \[
    c_2(n')=2n'-1,\;\;c_4(n')=\frac{(2n'-1)(2n'-3)}{3}\ge2n'-1,\;\;c_6(n')=\frac{(2n'-1)(2n'-3)(2n'-5)}{15}\ge 2n'-1
    \]
    Since the convex hull contains the origin and $\pm c_{2|J|}(n')v_F^{(i)}$, it also contains the line segments joining these points to the origin. As $c_{2|J|}(n')\ge 2n'-1$, we conclude that
    \[
       \pm(2n'-1)v_F^{(i)} \in \operatorname{conv}\{u_s^{(i)}:s\in\Sigma_i\}.
    \] 
    So we get
    \[
    \begin{aligned}
    B_2(0,\kappa(2n'-1))&=(2n'-1)B_2(0,\kappa)\subseteq (2n'-1)\operatorname{conv}\{\pm v_F^{(i)}\}\\&=\operatorname{conv}\{\pm(2n'-1)v_F^{(i)}\}\subseteq \operatorname{conv}\{u_s^{(i)}:s\in\Sigma_i\}
    \end{aligned}
    \]
    Hence, $\varepsilon_i(n')\ge\kappa(2n'-1)=\Omega(n')$.
\end{proof}
We now combine the isolation property and the inradius
bound to construct a single matchgate shadow with the
desired average estimator values.\\
\\
We will combine the $m$ local sample families with equal
weight, so each family must realize target averages
scaled by $m$. By \Cref{lem:inradius}, adding modes
increases the radius of the ball contained in each local
estimator convex hull. We therefore introduce $p$ padding
modes, with $p$ chosen large enough to cover these
rescaled targets. Denote their Majorana operators by
$\xi_{h,1},\xi_{h,2}$ for $h\in[p]$, and set
\[
n':=d+m+p,\quad A_h^{\mathsf{pad}}:=i\xi_{h,1}\xi_{h,2}.
\]
\begin{proposition}[Shadow synthesis]\label{prop:shadow-synthesis}
There exists $p=\poly(d+m)$ such that, for every
inverse-polynomial $\delta>0$, one can construct in
polynomial time a multiset $S_0$ of discrete $n'$-mode
matchgate samples satisfying, for every $i\in[m]$,
\begin{equation}
\left|\frac{1}{|S_0|}\sum_{s\in S_0}\hat{o}_s(T_i)- y_i\right|\le \delta,
\qquad
\left|\frac{1}{|S_0|}\sum_{s\in S_0}\hat{o}_s(A_i^{\mathsf{tag}})-1\right|\le \delta,
\end{equation}
and, for every $h\in[p]$
\[
\frac{1}{|S_0|}\sum_{s\in S_0}\hat{o}_s(A_h^{\mathsf{pad}})=1.
\]
\end{proposition}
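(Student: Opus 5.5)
The construction builds one sample family per constraint, each reproducing $m$ times the desired averages, and then concatenates the families. Fix $i\in[m]$ and consider the target vector
\[
t^{(i)}:=m\bigl(1,\ \text{coordinates of } y_iO_i \text{ in the basis } B^{(i)}_2,\ldots,B^{(i)}_{37}\bigr)\in\mathbb R^{37}.
\]
This is well defined because $T_i=O_iA_i^{\mathsf{tag}}\in\widetilde V_i$ and $O_i$ is itself one basis element of $\mathcal W_i$. The first coordinate $m$ is the target for $A_i^{\mathsf{tag}}$, scaled by $m$. Its norm satisfies $\|t^{(i)}\|_2\le m\sqrt{2}$.

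By \Cref{lem:inradius}, choosing $p$ so that $\kappa(2n'-1)\ge m\sqrt2$ gives $t^{(i)}\in\operatorname{conv}\{u^{(i)}_s:s\in\Sigma_i\}$. Since $\kappa$ is an absolute constant, $p=O(m)$ suffices, which is $\poly(d+m)$. Because $|\Sigma_i|=O(1)$, we can solve a constant-size linear program, in time polynomial in the bit lengths, for weights $\lambda^{(i)}_s\ge0$ with $\sum_s\lambda_s^{(i)}=1$ and $\sum_s\lambda^{(i)}_su^{(i)}_s=t^{(i)}$. These weights are then rounded to integer counts $k^{(i)}_s$ with $\sum_s k^{(i)}_s=K$ for a common $K=\poly$, so that $|k_s^{(i)}/K-\lambda^{(i)}_s|\le 1/K$. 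The rounding error in the family average is at most $|\Sigma_i|\max_s\|u_s^{(i)}\|_\infty/K$. Each entry of $u_s$ is bounded by $c_6(n')\cdot O(1)=O(n'^3)$, so $K=\poly(n',1/\delta)$ gives per-family error at most $\delta$ after dividing by $m$.

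Set $S_0:=\biguplus_i\{k^{(i)}_s \text{ copies of } s\}$, so $|S_0|=mK$. For $T_i$ and $A_i^{\mathsf{tag}}$, samples from families $j\ne i$ are $j$-isolating and contribute zero by \Cref{lem:isolation}. The average over $S_0$ is therefore $\frac1{mK}\sum_sk_s^{(i)}u^{(i)}_s$, restricted to the relevant coordinates. This lies within $\delta$ of $t^{(i)}/m$, whose relevant coordinates are $y_i$ for $T_i$ and $1$ for $A_i^{\mathsf{tag}}$. The coordinate expansion and the linearity of $\hat o_s$ ensure that the $T_i$ coordinate of the family average is exactly $\hat o$ applied to $y_iO_i A_i^{\mathsf{tag}}$ divided by $y_i$; to avoid this division, I would instead track $\hat o_s(T_i)$ as a linear functional $\ell_i$ of $u^{(i)}_s$ with coefficient vector $e_{O_i}$, and demand $\ell_i(t^{(i)})=my_i$.

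For the padding constraint, use the freedom in the extension: in every sample, pair each padding mode internally as the edge $\{\xi_{h,1},\xi_{h,2}\}$, with the sign chosen so that $\hat o_s(A_h^{\mathsf{pad}})$ is positive. The estimator of $A_h^{\mathsf{pad}}$ then equals $c_2(n')=2n'-1$ exactly, not $1$. This is the main obstacle. To fix it, I would mix in the opposite sign on a fraction of the samples: use sign $+$ on a fraction $\frac12(1+1/(2n'-1))$ and sign $-$ on the remainder. This is achievable exactly by taking $|S_0|$ to be a multiple of $2(2n'-1)$, duplicating $S_0$ as needed, and alternating the padding signs independently per $h$. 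Because padding edges lie outside every $R_i$, these sign choices do not change any $u^{(i)}_s$, since the estimators of monomials supported in $R_i$ depend only on the matching restricted to $R_i$. The $T_i$ and $A_i^{\mathsf{tag}}$ averages are therefore unaffected, and the padding averages equal $1$ exactly. The remaining bookkeeping is routine: check that $m\ge3$ makes the cyclic tag pairing isolating, and that all quantities have polynomial bit length.
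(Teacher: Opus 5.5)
Your proposal is correct and follows essentially the same route as the paper. Both use the per-constraint target $m(y_ie_{r(i)+1}+e_1)$, placed inside the inradius ball and realized by a constant-size LP. Both round to integer counts so that each family contributes a $1/m$ fraction, and both use isolation to decouple the families. The padding fix is also the same: your $+$ fraction $\frac12\bigl(1+\frac{1}{2n'-1}\bigr)=\frac{n'}{2n'-1}$ is exactly the paper's replacement of each sample by $n'$ copies of $s^{+}_{\text{pad}}$ and $n'-1$ copies of $s^{-}_{\text{pad}}$.

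The only cosmetic difference is that the paper hard-codes a rational $\kappa_0\le\kappa$ so that $p$ is explicitly computable. Your aside about dividing by $y_i$ is unnecessary, since $\hat o_s(T_i)$ is simply the $(r(i)+1)$-th coordinate of $u^{(i)}_s$.
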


\begin{proof} 
Let $\kappa>0$ be the constant from \Cref{lem:inradius}.
Since the finite set of vectors $\{\pm v_F^{(i)}\}$ is fixed up to relabeling
and explicitly computable, fix once and for all a rational constant
$0<\kappa_0\le\kappa$, which is hard-coded into the reduction. Add $p$ padding modes and define the total number of modes $n':=d+m+p$. Choose $p=\poly(d+m)$ sufficiently large that $\sqrt2m\le\kappa_0(2n'-1)$. For this choice of $n'$, use the $i$-isolating sample
families $\Sigma_i$ constructed above, with every
padding pair $\{\xi_{h,1},\xi_{h,2}\}$ internally matched. By \Cref{lem:inradius} and $\kappa_0\le\kappa$,
\[
B_2\left(0,\kappa_0(2n'-1)\right)\subseteq\operatorname{conv}\{u_s^{(i)}:s\in\Sigma_i\}.
\]
For each $i$, let $r(i)\in[36]$ satisfy $T_i=B_{r(i)+1}^{(i)}$. Let $e_1,\dots,e_{37}$ be the standard coordinate
vectors of $\mathbb R^{37}$.
Each local family will contribute a fraction $1/m$
of the samples, so we target the local mean
$m(y_i e_{r(i)+1}+e_1)$. Since $T_i=B_{r(i)+1}^{(i)},\; A_i^{\mathsf{tag}}=B_{1}^{(i)}$ and $|y_i|\le 1$, we have
\[
\bigl\|m(y_i e_{r(i)+1}+e_1)\bigr\|_2
=m\sqrt{y_i^2+1}
\le\sqrt{2}\,m
\le\kappa_0(2n'-1).
\]
Therefore $m\bigl(y_i e_{r(i)+1}+e_1\bigr)\in\operatorname{conv}\{u_s^{(i)}:s\in\Sigma_i\}$, and so there exists a probability distribution $p_i$ on $\Sigma_i$ such that
\[
\sum_{s\in\Sigma_i} p_i(s)\,u_s^{(i)}
=
m(y_i e_{r(i)+1}+e_{1}).
\]
To compute such a distribution, we solve the following
linear feasibility problem:

\begin{subequations}\label{eq:local-convex-program}
\begin{align}
\sum_{s\in\Sigma_i} p_i(s)\,u_s^{(i)}
&=
m\bigl(y_i e_{r(i)+1}+e_1\bigr),
\label{eq:local-convex-program-a}\\
\sum_{s\in\Sigma_i} p_i(s)
&=1,
\label{eq:local-convex-program-b}\\
p_i(s)
&\ge 0
\qquad
\text{for all }s\in\Sigma_i.
\label{eq:local-convex-program-c}
\end{align}
\end{subequations}
For each fixed gadget $i$, this is a constant-size linear system. Solving these systems for all $i\in[m]$ therefore takes polynomial time.\\
\\
The distribution $p_i$ specifies the sample weights
within family $i$. In the full shadow, each family will
account for $1/m$ of all samples. We therefore define
$q_i(s):=p_i(s)/m$, the intended fraction of the full
shadow assigned to sample $s$ from family $i$.
Consequently,
\[
\sum_{s\in\Sigma_i} q_i(s)=\frac1m,
\quad
\sum_{s\in\Sigma_i} q_i(s)u_s^{(i)}
= y_i e_{r(i)+1}+e_{1}.
\]
Each sample in $\Sigma_i$ is already a matchgate sample
on the full $n'$-mode system. Thus samples from different
families can be combined without overlap-count constraints. We therefore round the weights $q_i(s)$ to integer sample multiplicities separately within each family, preserving each family's fraction $1/m$ of the total samples. Choose a common denominator
$L=\poly(n',1/\delta)$, divisible by $m$, and nonnegative integers $N_{i,s}$ such that
\[ 
\left|\frac{N_{i,s}}{L}-q_i(s)\right|\le\frac{1}{L}
\quad
\forall s\in\Sigma_i,
\quad
\sum_{s\in\Sigma_i}N_{i,s}=\frac{L}{m}.
\]
Let $S_i$ contain exactly $N_{i,s}$ copies of each
sample $s\in\Sigma_i$, and define the intermediate
multiset
\[
S^*:=\bigsqcup_{i=1}^m S_i.
\]
By construction, $|S_i|=L/m$ and $|S^*|=L$.
The contribution of family $i$ to the average over
$S^*$ is therefore
\[
\frac1L\sum_{s\in S_i}u_s^{(i)}
=
\sum_{s\in\Sigma_i}\frac{N_{i,s}}L u_s^{(i)}.
\]
Rounding changes each weight by at most $1/L$.
Since there are only constantly many weights and the
estimator vectors have polynomially bounded norm,
choosing $L=\poly(n',1/\delta)$ sufficiently large
makes the resulting error at most $\delta$. Thus
\begin{equation}
\left\|
\frac1L\sum_{s\in S_i}u_s^{(i)}
-\bigl(y_i e_{r(i)+1}+e_1\bigr)
\right\|_2
\le\delta.
\end{equation}
Fix now $j\in[m]$. By \Cref{lem:isolation}, every sample in $\Sigma_i$ with $i\neq j $ has zero contribution to both $T_j$ and $A_j^{\mathsf{tag}}$. Hence
\begin{equation}
\frac1L\sum_{s\in S^*}\hat o_s(T_j)
=
\frac1L\sum_{s\in S_j}\hat o_s( T_j),
\quad
\frac1L\sum_{s\in S^*}\hat o_s(A_j^{\mathsf{tag}})
=
\frac1L\sum_{s\in S_j}\hat o_s(A_j^{\mathsf{tag}}).
\end{equation}
Since $T_j=B_{r(j)+1}^{(j)}$ and $A_j^{\mathsf{tag}}=B_1^{(j)}$,
the corresponding coordinate errors are each at most
$\delta$ by the vector bound above with $i=j$.
Combining this with the preceding isolation equalities
gives
\begin{equation}
\left|
\frac1L\sum_{s\in S^*}\hat o_s( T_j)- y_j
\right|
\le \delta,
\quad
\left|
\frac1L\sum_{s\in S^*}\hat o_s(A_j^{\mathsf{tag}})-1
\right|
\le \delta.
\end{equation}
It remains to make the average estimator of each
padding observable $A_h^{\mathsf{pad}}$ equal to $1$. Since each padding pair
$ \{\xi_{h,1},\xi_{h,2}\}$ is internally matched, the
degree-$2$ estimation formula gives
\[
\hat o_s(A_h^{\mathsf{pad}})=\pm c_2(n'),
\quad
c_2(n'):=\binom{2n'}2\binom{n'}1^{-1}=2n'-1.
\]

For each $s=(Q,b)\in S^*$, choose diagonal signed permutations $D_s^+,D_s^-\in B(2n')$, acting only on padding Majoranas, such that $s_{\text{pad}}^{\pm}:=(QD_s^{\pm},b)$ have the same underlying matching as
$s$ and satisfy
\[
\hat o_{s_{\text{pad}}^\pm}(A_h^{\mathsf{pad}})=\pm(2n'-1),
\;\;
\forall h\in[p].
\]
Such sign choices are possible because the sign of each internally matched padding edge can be changed independently by flipping one Majorana in that pair.\\
\\
Now let $B$ be an observable supported outside the
padding modes. Every Majorana monomial appearing in $B$ has disjoint support from the padding labels. Since $D_s^{\pm}$ acts only on padding Majoranas, it does not change the estimator of $B$. Hence
\[
\hat o_{s_{\text{pad}}^-}(B)=\hat o_{s_{\text{pad}}^+}(B)=\hat o_s(B)
\]
Construct $S_0$ by replacing every occurrence of $s\in S^*$ with $n'$ copies of $s_{\text{pad}}^+$ and $n'-1$ copies of $s_{\text{pad}}^-$. Then 
\[
|S_0|=(2n'-1)|S^*|.
\]
For every padding observable $A_h^{\mathsf{pad}}$, we have
\[
\frac{1}{|S_0|}\sum_{t\in S_0}\hat{o}_t(A_h^{\mathsf{pad}})=\frac{1}{(2n'-1)|S^*|}\sum_{s\in S^*}[n'(2n'-1)-(n'-1)(2n'-1)]=1
\]
For any observable supported outside the padding modes,
all $2n'-1$ samples replacing $s$ have the same estimator
value as $s$. Since every occurrence in $S^*$ is replaced
by the same number of samples, the average estimator
remains unchanged. Consequently, for every $i\in[m]$,
\[
\left|
\frac1{|S_0|}\sum_{s\in S_0}\hat{o}_s(T_i)-y_i
\right|\le\delta,
\quad
\left|
\frac1{|S_0|}\sum_{s\in S_0}\hat{o}_s(A_i^{\mathsf{tag}})-1
\right|\le\delta.
\]
\end{proof}
Using \Cref{prop:shadow-synthesis}, we now construct
the target shadow-validity instance and establish
completeness and soundness of the reduction.
\begin{proof}[Proof of \Cref{thm:fermobscon4-to-csvmfs}.]
Let $\mathcal I=\{(O_i,y_i)\}_{i=1}^m$ be an instance of
$\FermObsCon^{(N)}_{\mathcal O_{2\mathrm{RDM}}}$
on $d$ modes, with threshold $ \frac{1} {\poly(N)}\le\beta_{\mathsf{Ferm}}\le2$.\\
\\
Let $p$ be the number of padding modes supplied by
\Cref{prop:shadow-synthesis}, and set
\[
n':=d+m+p,\quad N':=N+m+p.
\]
Choose
\[
\lambda:=\frac{1}{2(m+p)},
\quad
\delta:=\frac{\lambda\beta_{\mathsf{Ferm}}}{16}.
\]
These parameters are inverse polynomial and satisfy
\[
(m+p)\frac{\lambda\beta_{\mathsf{Ferm}}}{2}
=\frac{\beta_{\mathsf{Ferm}}}{4}
\le\frac12.
\]

Apply \Cref{prop:shadow-synthesis} with error parameter
$\delta$ to obtain $S_0$. Form $S$ by repeating $S_0$
in $K=\poly(n')$ identical recovery blocks.
All block means coincide, so the raw median-of-means
estimates equal the corresponding averages over $S_0$.\\
\\
To ensure that the reported recovery values lie in $[-1,1]$, we include an overall factor of $1/2$ in the output
observable family:
\[
O':=
\left\{\frac{\lambda}{2}T_i,\frac12 A_i^{\mathsf{tag}}:i\in[m]\right\}
\cup
\left\{\frac12 A_h^{\mathsf{pad}}:h\in[p]\right\}.
\]
These observables are even, act on at most five modes,
and have operator norm at most one, hence belong to the supported matchgate-shadow observable class.\\
\\
Positive scaling commutes with the single-shot estimators,
block averages, and median-of-means aggregation:
\[
A_{\mathsf{MFS}}(S,cO)
=cA_{\mathsf{MFS}}(S,O)
\quad(c>0).
\]
The synthesis guarantees therefore imply, for every $i\in[m]$,
\[
\left|
A_{\mathsf{MFS}}\left(S,\frac{\lambda}{2}T_i\right)
-\frac{\lambda y_i}{2}
\right|
\le\frac{\lambda\delta}{2},
\qquad
\left|
A_{\mathsf{MFS}}\left(S,\frac12 A_i^{\mathsf{tag}}\right)-\frac12
\right|
\le\frac{\delta}{2},
\]
and, for every $h\in[p]$,
\[
A_{\mathsf{MFS}}\left(S,\frac12 A_h^{\mathsf{pad}}\right)=\frac12.
\]
Finally, set
\[
\alpha_{\CSV}:=\frac{\delta}{2},
\quad
\beta_{\CSV}:=
\frac{\lambda\beta_{\mathsf{Ferm}}}{4}
-\frac{\delta}{2}\quad \text{with}\quad
\beta_{\CSV}-\alpha_{\CSV}
=\frac{3\lambda\beta_{\mathsf{Ferm}}}{16}
\ge\frac{1}{\poly(N')}.
\]
The resulting instance consists of the shadow $S$, observable family $O'$, recovery algorithm $A_{\mathsf{MFS}}$, and thresholds
$\alpha_{\CSV}$ and $\beta_{\CSV}$.\\
\\
\textbf{Completeness:} Assume $\mathcal{I}$ is a YES instance. Then there exists an $N$-fermion state $\sigma$ on the original $d$ physical modes such that
\[
\Tr(O_i\sigma)=y_i\;\;\forall i\in [m].
\]
Place all $m+p$ auxiliary modes in their occupied states,
which are the $+1$ eigenstates of the corresponding tag
and padding observables and define $\tau=\sigma\otimes\tau_{\mathsf{tag,+}}\otimes\tau_{\mathsf{pad},+}$. Then $\tau$ has particle number $N'=N+m+p$ and
\[
\Tr(A_i^{\mathsf{tag}}\tau)=1,\;\;\;\Tr(A_h^{\mathsf{pad}}\tau)=1,\;\;\;\Tr(T_i\tau)=\Tr(O_i\sigma)=y_i.
\]
Therefore, for every $i\in[m]$,
\[
\begin{aligned}
\left|
\Tr\left(\frac{\lambda}{2}T_i\tau\right)
-A_{\mathsf{MFS}}\left(S,\frac{\lambda}{2}T_i\right)
\right|
&\le\frac{\lambda\delta}{2}
\le\frac{\delta}{2}
=\alpha_{\CSV},\\
\left|
\Tr\left(\frac12 A_i^{\mathsf{tag}}\tau\right)
-A_{\mathsf{MFS}}\left(S,\frac12 A_i^{\mathsf{tag}}\right)
\right|
&\le\frac{\delta}{2}
=\alpha_{\CSV}.
\end{aligned}
\]
For every $h\in[p]$,
\[
\left|
\Tr\left(\frac12 A_h^{\mathsf{pad}}\tau\right)
-A_{\mathsf{MFS}}\left(S,\frac12 A_h^{\mathsf{pad}}\right)
\right|
=0\le\alpha_{\CSV}.
\]
Hence the constructed $\CSV_{\mathsf{MFS}}^{(N')}$ instance is a YES instance.\\
\\
\textbf{Soundness:} Assume $\mathcal{I}$ is a NO instance. Then for every $N$-fermion state $\sigma$ there exists a constraint $i\in[m]$ such that:
\[
|\Tr(O_i\sigma)-y_i|\ge\beta_{\mathsf{Ferm}}.
\]
Suppose, toward contradiction, that there exists an
$N'$-fermion state $\tau$ satisfying, for every $i\in[m]$
and $h\in[p]$,
\[
\begin{aligned}
\left|
\Tr\left(\frac12 A_i^{\mathsf{tag}}\tau\right)
-A_{\mathsf{MFS}}\left(S,\frac12 A_i^{\mathsf{tag}}\right)
\right|
&<\beta_{\CSV},\\
\left|
\Tr\left(\frac12 A_h^{\mathsf{pad}}\tau\right)
-A_{\mathsf{MFS}}\left(S,\frac12 A_h^{\mathsf{pad}}\right)
\right|
&<\beta_{\CSV},\\
\left|
\Tr\left(\frac{\lambda}{2}T_i\tau\right)
-A_{\mathsf{MFS}}\left(S,\frac{\lambda}{2}T_i\right)
\right|
&<\beta_{\CSV}.
\end{aligned}
\]
Combining these inequalities with the scaled synthesis guarantees, we obtain
\[
|\Tr(A_i^{\mathsf{tag}}\tau)-1|<\varepsilon,\quad
|\Tr(A_h^{\mathsf{pad}}\tau)-1|<\varepsilon,\quad
|\Tr(\lambda T_i\tau)-\lambda y_i|<\varepsilon,
\]
where $\varepsilon:=2\beta_{\CSV}+\delta
=\frac{\lambda\beta_{\mathsf{Ferm}}}{2}$.\\
\\
We now use the auxiliary constraints to recover a
state in the original $N$-particle sector.
To fix all auxiliary modes in their occupied states,
define the projector onto the joint $+1$ eigenspace
of all tag and padding observables:
\[
 P:=\prod_{i=1}^m \frac{I+A_i^{\mathsf{tag}}}{2}\prod_{h=1}^p\frac{I+A_h^{\mathsf{pad}}}{2}  
\]
It is straightforward to see that for every $i\in[m]$ and $h\in[p]$,
\[
\Tr\left(\left(I-\frac{I+A_i^{\mathsf{tag}}}{2}\right)\tau\right)=\frac{1-\Tr(A_i^{\mathsf{tag}}\tau)}{2}<\frac{\varepsilon}{2},\;\;\Tr\left(\left(I-\frac{I+A_h^{\mathsf{pad}}}{2}\right)\tau\right)<\frac{\varepsilon}{2}.
\]
Since the projectors commute, the union bound gives:
\[
1-\Tr(P\tau)<\frac{m+p}{2}\varepsilon.
\]
With our choice of $\lambda$ we get
\[
(m+p)\varepsilon=(m+p)\frac{\lambda\beta_{\mathsf{Ferm}}}{2}\le\frac{1}{2}
\]
and therefore $q:=\Tr(P\tau)>0$. Let $\tau^+:=P\tau P/q$ be the normalized state obtained by
projecting onto the joint $+1$ eigenspace of all tag and padding observables. Since these $m+p$ auxiliary modes are now occupied and the total particle number is $N'=N+m+p$, the reduced state
\[
\sigma:=\Tr_{\mathsf{aux}}(\tau^+)
\]
is an $N$-fermion state on the original $d$ modes.
Since $A_i^{\mathsf{tag}}P=P$, we have $A_i^{\mathsf{tag}}\tau^+=\tau^+$, and hence
\[
\begin{aligned}
\Tr(T_i\tau^+)
&=\Tr\left[(O_i\otimes I_{\mathsf{aux}})A_i^{\mathsf{tag}}\tau^+\right]\\
&=\Tr\left[(O_i\otimes I_{\mathsf{aux}})\tau^+\right]\\
&=\Tr\left[O_i\,\Tr_{\mathsf{aux}}(\tau^+)\right]\\
&=\Tr(O_i\sigma).
\end{aligned}
\]
It remains to bound the change in the expectation
of $T_i$ under this projection. By the definition of $\tau^+$, we have
\[
\begin{aligned}
 \Tr(T_i\tau^+)=\frac{1}{q}\Tr(T_iP\tau P)
\end{aligned}
\]
Now because $I=P+(I-P)$ we can write:
\[
\begin{aligned}
\Tr(T_i\tau)&=\Tr(IT_iI\tau)=\Tr((P+
(I-P))T_i(P+(I-P))\tau)\\
&=\Tr(PT_iP\tau)+\Tr(((I-P)T_i(I-P))\tau)\\
&=x_1+x_2,
\end{aligned}
\]
where
\[
x_1:=\Tr(PT_iP\tau),
\quad
x_2:=\Tr\!\left[(I-P)T_i(I-P)\tau\right].
\]
Here the cross terms vanish because $[T_i,P]=0$. By cyclicity of the trace,
\[
\Tr(T_i\tau^+)
=\frac{1}{q}\Tr(T_iP\tau P)
=\frac{1}{q}\Tr(PT_iP\tau)
=\frac{x_1}{q}.
\]
We now bound $x_1$ and $x_2$:
\[
|x_1|=|\Tr(T_iP\tau P)|\le\|T_i\|_\infty\Tr(P\tau P)\le\Tr(P\tau P)=q
\]
\[
|x_2|=|\Tr(T_i(I-P)\tau(I-P))|\le \Tr((I-P)\tau (I-P))=1-q
\]
So finally we get:
\[
\begin{aligned}
|\Tr(T_i\tau^+)-\Tr(T_i\tau)|&=\left|x_1\left(\frac{1}{q}-1\right)-x_2\right|\\
&\le |x_1|\left(\frac{1}{q}-1\right)+|x_2|\\
&\le 2(1-q)<(m+p)\varepsilon.
\end{aligned}
\]
Combining all the above we get:
\[
\begin{aligned}
|\Tr(O_i\sigma)-y_i|&=|\Tr(T_i\tau^+)-y_i|\\
&\le|\Tr(T_i\tau^+)-\Tr(T_i\tau)|+|\Tr(T_i\tau)-y_i|\\
&<(m+p)\varepsilon+\varepsilon/\lambda\\
&=(m+p)\frac{\lambda\beta_{\mathsf{Ferm}}}{2}+\frac{\beta_{\mathsf{Ferm}}}{2}\\
&\le \frac{3\beta_{\mathsf{Ferm}}}{4}<\beta_{\mathsf{Ferm}}.
\end{aligned}
\]
This holds for every $i\in[m]$. Hence $\sigma$ is an $N$-fermion state satisfying all original constraints with error strictly smaller than $\beta_{\mathsf{Ferm}}$, contradicting the NO case.
\end{proof}
\begin{theorem}
    $\CSV^{(N')}_{\mathsf{MFS}}$ is $\QMA$-hard.
\end{theorem}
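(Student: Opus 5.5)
This is a composition of reductions already in place. \Cref{lem:fermobscon-hard} shows that $\FermObsCon^{(N)}_{\mathcal O_{2\mathrm{RDM}}}$ is $\QMA$-hard: Exact $N$-representability, which is $\QMA$-complete by \cite[Theorem~5.7]{KR25}, Karp-reduces to it. \Cref{thm:fermobscon4-to-csvmfs} gives a polynomial-time many-one reduction from $\FermObsCon^{(N)}_{\mathcal O_{2\mathrm{RDM}}}$ to $\CSV^{(N')}_{\mathsf{MFS}}$. Karp reductions compose, so every $\QMA$ promise problem reduces to $\CSV^{(N')}_{\mathsf{MFS}}$, and the theorem follows.

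The substantive part is confirming that the composed map is a legitimate polynomial-time Karp reduction with an inverse-polynomial promise gap. I would check the parameters stage by stage.

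For the first stage, start from an Exact $N$-representability instance with $d=2N$ and $\beta\ge 1/\poly(N)$. The reduction produces $m=D^2$ constraints, where $D=\binom d2=\poly(N)$. Its threshold is $\beta_{\mathsf{Ferm}}=C_N\beta/(\sqrt2 D^{3/2})$, which is inverse polynomial in $N$ and at most $2$; if larger, it can simply be capped at $2$, since that only weakens the NO promise.

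For the second stage, \Cref{prop:shadow-synthesis} supplies $p=\poly(d+m)$ padding modes. This gives $n'=d+m+p=\poly(N)$ and $N'=N+m+p$. The shadow has size $K\cdot|S_0|$ with $|S_0|=(2n'-1)L$ and $L=\poly(n',1/\delta)$, where $\delta=\lambda\beta_{\mathsf{Ferm}}/16$. The gap is
\[
\beta_{\CSV}-\alpha_{\CSV}=\tfrac{3\lambda\beta_{\mathsf{Ferm}}}{16}\ge 1/\poly(n').
\]
All of these are polynomial in the original input size, so the constructed $\CSV^{(N')}_{\mathsf{MFS}}$ instance is a valid instance in the sense of \Cref{def:CSV_MFS}.

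The one point needing care is that the final instance satisfies the well-formedness conventions of \Cref{def:mfs-shadow}:
\begin{itemize}
\item the observables are even, act on at most $5$ modes, and have norm at most $1$;
\item the recovery values lie in $[-1,1]$, which the factor $1/2$ in $O'$ ensures;
\item the particle number $N'$ is supplied as part of the input.
\end{itemize}
I expect no genuine obstacle here, since each of these was verified in the proof of \Cref{thm:fermobscon4-to-csvmfs}.
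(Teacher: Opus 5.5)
Your proposal is correct and matches the paper's proof, which likewise obtains the theorem by composing the Karp reduction of \Cref{lem:fermobscon-hard} with that of \Cref{thm:fermobscon4-to-csvmfs}. Your extra parameter bookkeeping is accurate and only makes explicit what the paper leaves implicit: polynomial $n'$ and $N'$, the shadow size, and the gap $3\lambda\beta_{\mathsf{Ferm}}/16$.
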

\begin{proof}
    This follows from \Cref{thm:fermobscon4-to-csvmfs,lem:fermobscon-hard}
\end{proof}

\section{Complexity of Continuous-Variable Classical-Shadow Validity}\label{scn:continuous}

\subsection{Heterodyne shadows}\label{sscn:heterodyne}

\begin{definition}[Heterodyne CV classical shadow]
\label{def:CShetero}
We adapt the classical-shadow framework of
\cref{def:classical-shadow} to the heterodyne protocol described
in \cref{par:HetCV}. For a fixed Fock cutoff $M$, a heterodyne
CV classical shadow on $n$ modes is a tuple
$(S,O,A_{\mathrm{het}}^{(M)},\chi)$, where
\begin{itemize}
    \item The shadow $S=\{z_t\}_{t=1}^L$ consists of
    $L=\poly(n)$ phase-space records
    \[
    z_t=(q_{t,1},p_{t,1},\ldots,q_{t,n},p_{t,n})
    \in\mathbb R^{2n},
    \]
    with each coordinate represented using $\poly(n)$ bits.

    \item $O=\{O_i\}_{i=1}^m$ is a family of $m=\poly(n)$
    succinctly described Hermitian observables. Each $O_i$
    is supported on at most $k=O(1)$ modes, is a constant-degree
    polynomial in the canonical observables, and satisfies
    the weighted photon-number bound specified in
    \cref{par:HetCV}, \Cref{eqn:norm}.

    \item $A_{\mathrm{het}}^{(M)}$ is the cutoff-$M$ recovery
    algorithm specified in \cref{par:HetCV}; given $S$ and
    $i\in[m]$, it returns the estimate
    $A_{\mathrm{het}}^{(M)}(S,i)$ for $O_i$ to $\chi$ bits of precision, in polynomial time.
\end{itemize}
\end{definition}

Now let $\mathcal H_n:=L^2(\mathbb R)^{\otimes n},
\;
\mathcal D(\mathcal H_n):=\{\sigma\in\mathcal T(\mathcal H_n):\sigma\ge 0, \Tr(\sigma)=1\}$. Here $\mathcal T(\mathcal H_n)$ denotes the space of trace-class operators acting on $\mathcal H_n$.
For a moment order $\ell\ge k$ and bound $B$, define 
\[
\mathcal D_{\ell,B}(\mathcal H_n):=\left\{\sigma\in\mathcal D(\mathcal H_n):\Tr((I+\widehat N_j)^\ell\sigma)\le B\ \text{for all }j\in[n]\right\}.
\]
\begin{definition}[$\CSV^{\ell,B,M}_{\mathrm{HetCV}}$]\label{def:CSV_het}
Given a heterodyne CV classical shadow $(S,O,A_{\mathrm{het}}^{(M)},\chi)$, as in \Cref{def:CShetero}, thresholds
$0\le\alpha_{\mathrm{CV}}<\beta_{\mathrm{CV}}$, with $\Delta_{\mathrm{CV}}:=\beta_{\mathrm{CV}}-\alpha_{\mathrm{CV}}\ge1/\poly(n)$, and moment parameters $\ell,B$ with $\ell\ge k$, decide between the following two cases:
\begin{itemize}

\item\textbf{YES}: $\exists \sigma\in\mathcal D_{\ell,B}(\mathcal H_n)$ such that $\left|\Tr(O_i\sigma)-A_{\mathrm{het}}^{(M)}(S,i)\right|\le \alpha_{\mathrm{CV}}
\;\;\forall i$.

\item\textbf{NO}: $\forall \sigma\in\mathcal D_{\ell,B}(\mathcal H_n), \exists i$
such that $\left|\Tr(O_i\sigma)-A_{\mathrm{het}}^{(M)}(S,i)\right|\ge \beta_{\mathrm{CV}}$.
\end{itemize}
\end{definition}

\begin{setup}[Reduction setup]\label{setup:reduction}
Fix a qubit $\ObsCon$ instance $\{(O_i,y_i)\}_{i=1}^m$, see \Cref{def:obscon}, where each $O_i$ is a Pauli observable supported on $\Lambda_i\subseteq [n],\;1\le|\Lambda_i|\le k=O(1)$. Encode one qubit into the first two Fock levels of one bosonic mode by
\[
V\ket{0}=\ket{0}_{\text{Fock}},\;\; V\ket{1}=\ket{1}_{\text{Fock}}
\]
Next define the one-mode cutoff projector, the projector to the high Fock levels and the shifted number operator by
\[
P_1:=\ketbra00+\ketbra11,\;\;H_1:=I-P_1,\;\; W_j:=I+\widehat N_j
\]
For a local set of modes $\Lambda$, define
\[
V_\Lambda:=\bigotimes_{j\in\Lambda}V_j,\;\; P_\Lambda:=\bigotimes_{j\in\Lambda}P_1^{(j)},\;\; H_\Lambda:=I-P_\Lambda,\;\; W_\Lambda:=\bigotimes_{j\in\Lambda}W_j
\]
For each Pauli factor, use the polynomial CV representatives
\[
X_j\rightarrow \widetilde{X}_j:=a_j+a_j^\dagger=\sqrt2 q_j,\;\;Y_j\rightarrow \widetilde{Y}_j:= -i(a_j-a_j^\dagger)=\sqrt2 p_j,\;\;Z_j\rightarrow \widetilde{Z}_j:= I-2\widehat N_j=2I-q_j^2-p_j^2.
\]
Write $O_i=\bigotimes_{j\in\Lambda_i}\tau_{i,j}$,
with $\tau_{i,j}\in\{X,Y,Z\}$. Define its polynomial
CV representative by replacing each Pauli factor
with the corresponding operator introduced above:
\[
Q_i:=\bigotimes_{j\in\Lambda_i}\widetilde{\tau}_{i,j},\;\; \widetilde{\tau}_{i,j}\in\{\widetilde X,\widetilde Y,\widetilde Z\}.
\]
The encoded qubit observable is $E_i:=V_{\Lambda_i}O_iV_{\Lambda_i}^\dagger$ and by construction $P_{\Lambda_i}Q_iP_{\Lambda_i}=E_i$.\\
\\
We shift each target to zero and rescale so that
the cutoff block has operator norm at most one.
Define the physical CV observable by
\[
G_i:=\frac12(Q_i-y_iI)
\]
and define the $M=1$ cutoff block by
\[
F_i:=P_{\Lambda_i}G_iP_{\Lambda_i}=\frac12(E_i-y_iP_{\Lambda_i}).
\]
Moreover, $ \|F_i\|_\infty
\le \frac12(\|E_i\|_\infty+|y_i|)
\le 1$. The finite cutoff heterodyne recovery only sees $F_i$, while the physical CV observable is $G_i$.
\end{setup}
We first show that sufficiently large local heterodyne outcomes produce arbitrarily small single-shot estimates. This will allow us to construct records whose recovery values are simultaneously close to zero.
\begin{lemma}\label{lem:snapshotbound}
    Let $F$ be a finite-cutoff observable supported on a set of modes $\Lambda$, with $|\Lambda|=r$, and $z_{\Lambda}\in \mathbb{R}^{2r}$ denote the restriction of a heterodyne outcome to the coordinates in $\Lambda$. The single-shot estimator before numerical approximation,
$\widehat o_{F,\mathrm{het}}^{(M)}(z_\Lambda):=\Tr(F\widehat\rho_{z_\Lambda}^{(M),\mathrm{het}})$,
satisfies, for every $z_\Lambda\neq0$,
    \[
    |\widehat o_{F,\mathrm{het}}^{(M)}(z_\Lambda)|\le\frac{C_F}{\|z_\Lambda\|_2}
    \]
    where $C_F<\infty$ depends on $F$, the cutoff $M$, and the locality $r$ but not on $z_\Lambda$.
\end{lemma}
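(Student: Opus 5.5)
The plan is to write the estimator as a finite sum of oscillatory integrals and apply one integration by parts in the Fourier variable. By linearity of the trace and the definition of $\widehat\rho_{z_\Lambda}^{(M),\mathrm{het}}$,
\[
\widehat o_{F,\mathrm{het}}^{(M)}(z_\Lambda)=\sum_{a,b\in\{0,\ldots,M\}^r}F_{ba}\,\frac{1}{(2\pi)^r}\int_{\mathbb R^{2r}}\phi_{ba}(\xi)\,e^{-iz_\Lambda^{\mathsf T}\Omega_\Lambda\xi}\,d\xi,
\qquad
\phi_{ba}(\xi):=\chi_{e_{ba}^{(r)}}(\xi)\,e^{\|\xi\|_2^2/4},
\]
where $F_{ba}=\bra b F\ket a$. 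There are finitely many terms, $(M+1)^{2r}$ of them. Since $\chi_{e_{ba}^{(r)}}\in C_c^\infty(\mathbb R^{2r})$ and the Gaussian factor is smooth, each $\phi_{ba}$ is again in $C_c^\infty$, with support in a fixed compact set $K$ that depends only on $r$, $M$ and the regularization. The potentially divergent factor $e^{\|\xi\|^2/4}$ is therefore harmless: it is bounded together with all its derivatives on $K$.

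Each term is, up to constants, the Fourier transform $\widehat{\phi_{ba}}$ evaluated at $w:=\Omega_\Lambda^{\mathsf T}z_\Lambda$. Because $\Omega_\Lambda$ is orthogonal, $\|w\|_2=\|z_\Lambda\|_2$. For $w\neq0$, set $u:=w/\|w\|_2$ and use the identity
\[
e^{-iw^{\mathsf T}\xi}=\frac{i}{\|w\|_2}\,(u\cdot\nabla_\xi)\,e^{-iw^{\mathsf T}\xi}.
\]
Integrating by parts once produces no boundary terms, since $\phi_{ba}$ has compact support. This gives
\[
\Bigl|\int\phi_{ba}e^{-iw^{\mathsf T}\xi}d\xi\Bigr|\le\frac{1}{\|z_\Lambda\|_2}\int_K\|\nabla\phi_{ba}\|_2\,d\xi.
\]
Summing over $(a,b)$ then yields the claimed bound with
\[
C_F:=(2\pi)^{-r}\sum_{a,b}|F_{ba}|\,\|\nabla\phi_{ba}\|_{L^1}.
\]
This constant is finite and depends only on $F$, $M$, $r$ and the fixed regularization. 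In particular it is independent of $z_\Lambda$.

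The step that needs the most care is the exact form of the snapshot entries. We must check that the integrand has the form (a smooth compactly supported function) $\times$ (a pure phase linear in $z_\Lambda$). We must also confirm that exchanging the finite sum with the integral and the trace is legitimate, which it is since the sums are finite. Repeating the integration by parts would even give decay $O(\|z_\Lambda\|^{-N})$ for any $N$, but one step suffices here.
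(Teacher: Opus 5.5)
Your proof is correct and follows essentially the same route as the paper: expand the estimator into finitely many oscillatory integrals of the smooth, compactly supported functions $\chi_{e_{ba}^{(r)}}(\xi)e^{\|\xi\|_2^2/4}$ and integrate by parts once with no boundary terms. The only difference is that you differentiate along the unit direction $w/\|w\|_2$, whereas the paper differentiates along the coordinate $j$ maximizing $|(\Omega_\Lambda^{\mathsf T}z_\Lambda)_j|$; your choice removes the paper's factor $\sqrt{2r}$ from the constant but is otherwise the same argument.
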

\begin{proof}
All sums below run over $a,b\in\{0,\ldots,M\}^r$. Define
\[
    g_{ab}^{\mathrm{het}}(\xi)
    :=
    \frac{1}{(2\pi)^r}
    \chi_{e_{ba}^{(r)}}(\xi)e^{\|\xi\|_2^2/4}.
\]
Each $g_{ab}^{\mathrm{het}}$ is smooth and compactly supported. By the definition of the snapshot,
\[
\widehat o_{F,\mathrm{het}}^{(M)}(z_\Lambda)
=
\sum_{a,b}F_{ba}
\int_{\mathbb R^{2r}}
g_{ab}^{\mathrm{het}}(\xi)e^{-iz_\Lambda^{\mathsf T}\Omega_\Lambda\xi}\,d\xi,
\]
where $F_{ba}:=\bra{b}F\ket{a}$. It therefore suffices to bound each integral in this sum.\\
\\
Fix $z_\Lambda\neq0$ and write $\zeta_\Lambda:=\Omega_\Lambda^{\mathsf T}z_\Lambda$. For the standard symplectic form, $\|\zeta_\Lambda\|_2=\|z_\Lambda\|_2$. Choose a coordinate $j$ for which $|(\zeta_\Lambda)_{j}|$ is largest. Since $\zeta_\Lambda$ has $2r$ coordinates,
    \[
    |(\zeta_\Lambda)_{j}|\ge\frac{\|z_\Lambda\|_2}{\sqrt {2r}}.
    \]
    Now we integrate by parts in this coordinate. Using
    \[
\frac{\partial}{\partial\xi_j}e^{-i\zeta_\Lambda^{\mathsf T}\xi}
=
-i(\zeta_\Lambda)_j e^{-i\zeta_\Lambda^{\mathsf T}\xi},
\]
we obtain
\[
\int_{\mathbb R^{2r}}
g_{ab}^{\mathrm{het}}(\xi)e^{-i\zeta_\Lambda^{\mathsf T}\xi}\,d\xi
=
\frac{1}{i(\zeta_\Lambda)_j}
\int_{\mathbb R^{2r}}
\frac{\partial g_{ab}^{\mathrm{het}}}{\partial\xi_j}(\xi)
e^{-i\zeta_\Lambda^{\mathsf T}\xi}\,d\xi.
\]
    The boundary term vanishes because $g_{ab}^{\mathrm{het}}$ has compact support. Taking absolute values gives
\[
\left|
\int_{\mathbb R^{2r}}
g_{ab}^{\mathrm{het}}(\xi)e^{-i\zeta_\Lambda^{\mathsf T}\xi}\,d\xi
\right|
\le
\frac{1}{|(\zeta_\Lambda)_j|}
\int_{\mathbb R^{2r}}
\left|\frac{\partial g_{ab}^{\mathrm{het}}}{\partial\xi_j}(\xi)\right|
\,d\xi
\le
\frac{C_{ab}^{(r)}}{\|z_\Lambda\|_2},
\]
where
\[
C_{ab}^{(r)}
:=
\sqrt{2r}\max_{1\le s\le2r}
\int_{\mathbb R^{2r}}
\left|\frac{\partial g_{ab}^{\mathrm{het}}}{\partial\xi_s}(\xi)\right|
\,d\xi.
\]
These constants are finite because the derivatives are
continuous and compactly supported. They are also independent
of $z_\Lambda$. Finally, applying these bounds to the finite sum defining
$\widehat o_{F,\mathrm{het}}^{(M)}$ yields
\[
|\widehat o_{F,\mathrm{het}}^{(M)}(z_\Lambda)|
\le
\frac{\sum_{a,b}|F_{ba}|C_{ab}^{(r)}}{\|z_\Lambda\|_2}
=
\frac{C_F}{\|z_\Lambda\|_2},
\qquad
C_F:=\sum_{a,b}|F_{ba}|C_{ab}^{(r)}<\infty.
\]
\end{proof}
We next make the decay bound uniform over all observables $F_i$.
Since $M=1$ and $r_i:=|\Lambda_i|\le k=O(1)$, the indices
$r,a,b$ range over a fixed finite set. The regularizers are
fixed by the recovery algorithm, so the corresponding constants
$C_{ab}^{(r)}$ have a finite maximum independent of the input size.
Moreover, each $F_i$ has $4^{r_i}\le4^k$ matrix entries,
all of magnitude at most one because $\|F_i\|_\infty\le1$.
Hence
\[
    C_{F_i}=\sum_{a,b\in\{0,1\}^{r_i}}
    |(F_i)_{ba}|C_{ab}^{(r_i)}\le 4^k
    \max_{\substack{1\le r\le k\\
                    a,b\in\{0,1\}^r}}
    C_{ab}^{(r)}
    =:C_k^{\mathrm{het}}.
\]
We now use this uniform decay bound to construct, in polynomial time, a heterodyne shadow whose recovery values are simultaneously close to zero.
\begin{corollary}\label{cor:zerorecovery}
Let $F_1,\ldots,F_m$, with $m=\poly(n)$, be finite-cutoff
observables with cutoff $M=1$, supported on nonempty sets
of at most $k=O(1)$ modes, and satisfying
$\|F_i\|_\infty\le1$.
For any $\eta\ge1/\poly(n)$ and prescribed shadow length
$L=\poly(n)$, one can construct in polynomial time an outcome
\[
    z_R:=((R,0),\ldots,(R,0))\in\mathbb R^{2n}
\]
such that the shadow $S_R$ consisting of $L$ copies of $z_R$
satisfies
\[
    \left|A_{\mathrm{het}}^{(1)}(S_R,i)\right|<\eta
    \quad \forall i\in[m].
\]
\end{corollary}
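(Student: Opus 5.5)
The plan is to apply \Cref{lem:snapshotbound} together with the uniform constant $C_k^{\mathrm{het}}$ derived just above, and to choose $R$ so large that every single-shot estimate is small. Then the median-of-means aggregation, or the plain average in the definition of $A_{\mathrm{het}}^{(1)}$, inherits the bound. Fix $i\in[m]$ and write $r_i=|\Lambda_i|\ge1$. The restriction of $z_R$ to $\Lambda_i$ is $z_{R,\Lambda_i}=((R,0),\ldots,(R,0))\in\mathbb R^{2r_i}$, which has Euclidean norm $R\sqrt{r_i}\ge R$; this is where the nonemptiness of $\Lambda_i$ is used. By \Cref{lem:snapshotbound} and the bound $C_{F_i}\le C_k^{\mathrm{het}}$,
\[
\bigl|\widehat o_{F_i,\mathrm{het}}^{(1)}(z_{R,\Lambda_i})\bigr|\le\frac{C_k^{\mathrm{het}}}{R}.
\]
Every record in $S_R$ equals $z_R$, so every single-shot estimate for $F_i$ has this same value. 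Any average or median of means of them therefore also equals this value. This is independent of $L$.

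Next I choose $R$. Because $C_k^{\mathrm{het}}$ depends only on $k$, $M=1$ and the fixed regularizers, it is a constant of the reduction. I would hard-code a rational upper bound $\bar C\ge C_k^{\mathrm{het}}$, which may be crude, and set $R:=\lceil 2\bar C/\eta\rceil$. Then $C_k^{\mathrm{het}}/R\le\eta/2$. Since $\eta\ge1/\poly(n)$, $R$ is polynomially bounded, so it has $O(\log n)$ bits and $z_R$ and $S_R$ can be written down in polynomial time.

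The remaining step, which I expect to be the main obstacle, is numerical. The recovery algorithm evaluates the defining integrals only to prescribed accuracy, using the discretisation scheme of \cite[Section~VII]{BDLR24}, and reports $\chi$ bits. The reported value can therefore differ from the exact estimator by a numerical error $\epsilon_{\mathrm{num}}$. I would handle this by making the recovery's prescribed accuracy (equivalently $\chi$ and the discretisation parameters) at most $\eta/4$; this is allowed, since these parameters are set polynomially as part of the instance. Then $|A_{\mathrm{het}}^{(1)}(S_R,i)|\le\eta/2+\eta/4<\eta$ for all $i$ simultaneously.

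I also need the evaluation at the point $z_R$ to stay efficient even though $R$ is large. The oscillatory factor $e^{-iz^{\mathsf T}\Omega\xi}$ has frequency $O(R)=\poly(n)$ over the fixed compact support of the characteristic functions $\chi_{e^{(r)}_{ba}}$, so a quadrature with $\poly(n)$ nodes should suffice. If needed, I would justify this by a derivative bound on the integrand of order $O(R^{s})$ for the quadrature order $s$. The promise that reported values lie in $[-1,1]$ holds automatically because the values are below $\eta\le 1$.
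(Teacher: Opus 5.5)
Your proof is correct. It uses the same ingredients as the paper: \Cref{lem:snapshotbound}, the uniform constant $C_k^{\mathrm{het}}$, the fact that $\|z_{R,\Lambda_i}\|_2\ge R$, identical records, and numerical accuracy $\eta/4$. The difference is how $R$ is selected.

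\textbf{Your route.} You hard-code a rational $\bar C\ge C_k^{\mathrm{het}}$ and set $R=\lceil 2\bar C/\eta\rceil$ in one step. The output guarantee then rests on the a priori decay bound together with the assumed $\eta/4$ accuracy.

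\textbf{The paper's route.} The paper runs a doubling search. Starting from $R=1$, it evaluates the recovery routine's numerical estimates $\widetilde o_{F_i,\mathrm{het}}^{(1)}(z_{R,\Lambda_i})$ for every $i$ and stops once all have magnitude at most $\eta/2$. The final bound $|A_{\mathrm{het}}^{(1)}(S_R,i)|\le\eta/2$ is therefore certified directly by the stopping test. The constant $C_k^{\mathrm{het}}$ and the $\eta/4$ accuracy enter only the analysis showing termination, with $R\le\max\{1,8C_k^{\mathrm{het}}/\eta\}$ after $O(\log n)$ rounds. The algorithm never needs to know the constant.

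\textbf{Trade-off.} Your version is simpler and never calls the recovery routine during the reduction. However, it needs an explicit numerical value for $\bar C$, which the paper never computes. Computing it would require bounding $\int|\partial_s g_{ab}^{\mathrm{het}}|$ for the specific regularizers of \cite{BDLR24}. Without that, your reduction is shown to exist only non-constructively. This still suffices for a hardness statement, and the paper uses the same hard-coding device for $\kappa_0$ in the fermionic section. So the difference concerns constructiveness, not correctness.
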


\begin{proof}
For $R\ge1$, define $z_R:=((R,0),\ldots,(R,0))\in\mathbb R^{2n}$. Since each support $\Lambda_i$ is nonempty, $\|z_{R,\Lambda_i}\|_2=R\sqrt{|\Lambda_i|}\ge R$. Let $\widetilde o_{F_i,\mathrm{het}}^{(1)}$ denote the numerical single-shot
estimator used by the recovery algorithm, with precision
chosen so that
\[
    \left|
    \widehat o_{F_i,\mathrm{het}}^{(1)}(z_{R,\Lambda_i})
    -\widetilde o_{F_i,\mathrm{het}}^{(1)}(z_{R,\Lambda_i})
    \right|
    \le\frac{\eta}{4}.
\]

Starting from $R=1$, evaluate all
$\widetilde o_{F_i,\mathrm{het}}^{(1)}(z_{R,\Lambda_i})$.
If every value has magnitude at most $\eta/2$, stop;
otherwise replace $R$ by $2R$ and repeat.
This procedure does not require knowing $C_k^{\mathrm{het}}$. By \Cref{lem:snapshotbound} and the uniform bound
$C_{F_i}\le C_k^{\mathrm{het}}$,
\[
    |\widehat o_{F_i,\mathrm{het}}^{(1)}(z_{R,\Lambda_i})|
    \le\frac{C_k^{\mathrm{het}}}{\|z_{R,\Lambda_i}\|_2}
    \le\frac{C_k^{\mathrm{het}}}{R}.
\]
Thus, once $R\ge4C_k^{\mathrm{het}}/\eta$, every exact value has magnitude at most $\eta/4$, and every numerical estimate has magnitude at most $\eta/2$. The procedure therefore terminates with $R\le\max\{1,8C_k^{\mathrm{het}}/\eta\}$.\\
\\
At termination, let $S_R$ consist of $L$ copies of $z_R$.
For each $i$, all single-shot estimates coincide, so
\[
    \left|A_{\mathrm{het}}^{(1)}(S_R,i)\right|
    =
    \left|\widetilde o_{F_i,\mathrm{het}}^{(1)}(z_{R,\Lambda_i})\right|
    \le\frac{\eta}{2}<\eta.
\]

Finally, $C_k^{\mathrm{het}}$ is independent of the input size and
$\eta^{-1}$ is polynomially bounded. Hence $R$ is
polynomially bounded and the search uses $O(\log n)$
iterations. Each iteration evaluates $m=\poly(n)$
estimators using the fixed polynomial-time recovery
routine. Since $L=\poly(n)$ as well, the entire
construction takes polynomial time.
\end{proof}
For soundness, we must also consider bosonic witnesses
that are not supported on the encoded qubit subspace.
The next lemma uses the moment promise to bound the
weight outside this subspace on each mode.
\begin{lemma}\label{lem:momentbound}
Let $\sigma\in\mathcal D(\mathcal H_n)$. Suppose that, with $W_j:=I+\widehat N_j$,
\[
    \Tr(W_j^\ell\sigma)\le B
    \quad\text{for every }j\in[n].
\]
Let $P_1:=\ketbra00+\ketbra11$ and $H_1:=I-P_1$.
Then the probability of finding at least two photons
in any given mode $j$ satisfies
\[
    \Tr\!\left(H_1^{(j)}\sigma\right)
    \le\frac{B}{3^\ell}.
\]
\end{lemma}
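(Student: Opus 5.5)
The plan is a Markov-type inequality applied to the single-mode photon-number operator, using the operator comparison $H_1^{(j)}\preceq 3^{-\ell}W_j^\ell$. Both operators are functions of $\widehat N_j$, so they are simultaneously diagonal in the Fock basis of mode $j$, tensored with the identity on the remaining modes. It therefore suffices to compare their eigenvalues on each Fock state $\ket{s}$ of mode $j$.

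For $s\in\{0,1\}$, the eigenvalue of $H_1$ is $0$, while $(1+s)^\ell/3^\ell\ge 0$. For $s\ge2$, the eigenvalue of $H_1$ is $1$, while $(1+s)^\ell/3^\ell\ge (3/3)^\ell=1$. Hence
\[
H_1^{(j)}\preceq \frac{W_j^\ell}{3^\ell}
\]
holds as an inequality of (possibly unbounded) positive operators. In detail, $H_1^{(j)}=\sum_{s\ge2}\ketbra{s}{s}\otimes I$ and $W_j^\ell=\sum_{s\ge0}(1+s)^\ell\ketbra{s}{s}\otimes I$.

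Next I take the expectation in $\sigma$. Writing $\sigma_j:=\Tr_{\bar j}(\sigma)$ for the single-mode marginal, we have
\[
\Tr\bigl(H_1^{(j)}\sigma\bigr)=\sum_{s\ge2}\bra{s}\sigma_j\ket{s}\le\sum_{s\ge0}\frac{(1+s)^\ell}{3^\ell}\bra{s}\sigma_j\ket{s}=\frac{\Tr(W_j^\ell\sigma)}{3^\ell}\le\frac{B}{3^\ell}.
\]
The middle inequality holds termwise because $\bra{s}\sigma_j\ket{s}\ge0$.

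The only delicate point is unboundedness: $W_j^\ell$ is not trace-class-compatible in general, so $\Tr(W_j^\ell\sigma)$ must be read as the photon-number moment of the diagonal distribution $\bra{s}\sigma_j\ket{s}$, a possibly infinite nonnegative series. I would state that this is the meaning of the hypothesis, so the comparison is between nonnegative series and no domain issues arise. With that interpretation the argument is a direct termwise comparison, and I do not expect any real obstacle.
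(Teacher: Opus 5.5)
Your proof is correct and follows the paper's argument: compare eigenvalues in the Fock basis of mode $j$ to get $W_j^\ell\succeq 3^\ell H_1^{(j)}$, then take the expectation in $\sigma$ and apply the moment bound. Your extra remark, reading $\Tr(W_j^\ell\sigma)$ as a nonnegative series over the diagonal of the one-mode marginal, is a careful addition for the unbounded case that the paper leaves implicit.
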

\begin{proof}
Fix $j\in[n]$. Both $W_j^\ell$ and $H_1^{(j)}$
are diagonal in the Fock basis of mode $j$.
On levels $m=0,1$, $H_1^{(j)}$ vanishes.
On every level $m\ge2$, it has eigenvalue one,
while $W_j^\ell$ has eigenvalue
$(1+m)^\ell\ge3^\ell$. Hence
\[
    W_j^\ell\succeq3^\ell H_1^{(j)}.
\]
Taking expectations in $\sigma$ and applying the
moment bound gives
\[
    3^\ell\Tr\!\left(H_1^{(j)}\sigma\right)
    \le\Tr(W_j^\ell\sigma)
    \le B.
\]
Dividing by $3^\ell$ proves the claim.
\end{proof}

\begin{corollary}\label{cor:localmomentbound}
Under the assumptions of \Cref{lem:momentbound},
let $\Lambda\subseteq[n]$ with $|\Lambda|\le k$.
Then
\[
    \Tr(H_\Lambda\sigma)
    \le \frac{|\Lambda|B}{3^\ell}
    \le \frac{kB}{3^\ell}.
\]
Here $H_\Lambda=I-P_\Lambda$ projects onto the subspace
in which at least one mode in $\Lambda$ has two or more
photons.
\end{corollary}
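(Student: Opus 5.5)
The plan is to combine the single-mode bound of \Cref{lem:momentbound} with a union bound over the modes in $\Lambda$.

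First, express $H_\Lambda$ in terms of single-mode projectors. Write $P_\Lambda=\bigotimes_{j\in\Lambda}P_1^{(j)}=\prod_{j\in\Lambda}P_1^{(j)}$, where all factors act on distinct modes, commute, and are diagonal in the joint Fock basis. A Fock basis vector lies in the range of $H_\Lambda=I-P_\Lambda$ exactly when at least one mode $j\in\Lambda$ has two or more photons. On such a basis vector, $\sum_{j\in\Lambda}H_1^{(j)}$ has eigenvalue at least one; on the other basis vectors, $H_\Lambda$ vanishes. This yields the operator inequality
\[
H_\Lambda\preceq\sum_{j\in\Lambda}H_1^{(j)},
\]
which can be checked eigenvalue by eigenvalue because all the operators involved are simultaneously diagonal.

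Next, take expectations against $\sigma$. Positivity of $\sigma$ preserves the operator inequality, and for each $j\in\Lambda$ \Cref{lem:momentbound} gives $\Tr(H_1^{(j)}\sigma)\le B/3^\ell$. Summing over the $|\Lambda|$ modes gives $\Tr(H_\Lambda\sigma)\le |\Lambda|B/3^\ell$, and $|\Lambda|\le k$ then gives the second inequality.

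The only subtle point is that the operators are unbounded or infinite-dimensional. However, $H_\Lambda$ and the $H_1^{(j)}$ are bounded projectors, and $\sigma$ is trace class, so the traces are well defined. The eigenvalue-wise comparison in the Fock basis is rigorous because these projectors are all functions of the commuting number operators. No step here should pose a genuine obstacle.
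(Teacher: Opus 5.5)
Your proposal is correct and follows essentially the same route as the paper. The paper likewise uses commutativity of the one-mode projectors to get $H_\Lambda\preceq\sum_{j\in\Lambda}H_1^{(j)}$, then takes expectations in $\sigma$ and applies \Cref{lem:momentbound} termwise; your eigenvalue-by-eigenvalue check in the Fock basis just spells out that operator inequality explicitly.
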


\begin{proof}
Since the one-mode projectors commute,
\[
H_\Lambda=I-\prod_{j\in\Lambda}P_1^{(j)}\preceq\sum_{j\in\Lambda}(I-P_1^{(j)})=\sum_{j\in\Lambda}H_1^{(j)}.
\]
Taking expectations in $\sigma$ and applying \Cref{lem:momentbound} gives
\[
\Tr(H_\Lambda\sigma)\le\sum_{j\in\Lambda}\Tr(H_1^{(j)}\sigma)\le\frac{|\Lambda|B}{3^\ell}\le\frac{kB}{3^\ell}.
\]
\end{proof}
Since $Q$ may be unbounded, a small weight outside the
encoded subspace does not by itself guarantee a small
error in its expectation value. The next lemma uses
the moment bound to control the difference between
the expectations of $Q$ and its encoded block $E=PQP$.
\begin{lemma}\label{lem:QE-close}
Let $\Lambda\subseteq[n]$ satisfy $|\Lambda|\le k$, and let
\[
Q=\bigotimes_{j\in\Lambda}A_j,
\quad
A_j\in
\left\{
I,\,
a_j+a_j^\dagger,\,
-i(a_j-a_j^\dagger),\,
I-2\widehat N_j
\right\}.
\]
Let $\sigma\in\mathcal D(\mathcal H_n)$ satisfy
\[
    \Tr(W_j^\ell\sigma)\le B
    \quad\forall j\in\Lambda,
\]
where $\ell\ge k$.
Set $P:=P_\Lambda$ and let $E:=PQP$ be the encoded block of $Q$. Then
\[
\left|\Tr\big((Q-E)\sigma\big)\right|
\le \varepsilon_\ell=
2\cdot 3^k\sqrt{\frac{kB}{3^\ell}}
+
2^k kB\,3^{k-\ell}.
\]
\end{lemma}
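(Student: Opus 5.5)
The plan is to split $Q$ relative to the projector $P=P_\Lambda$ and its complement $H:=H_\Lambda=I-P$:
\[
Q-E = PQH + HQP + HQH .
\]
This yields two cross terms and one fully-high block, which I will bound separately. The two kinds of term match the two summands of $\varepsilon_\ell$. The cross terms should give the square-root summand $2\cdot3^k\sqrt{kB/3^\ell}$. The $HQH$ term should give $2^kkB\,3^{k-\ell}$.

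\emph{Cross terms.} I write $\Tr(PQH\sigma)=\Tr\bigl((\sigma^{1/2}P)(QH\sigma^{1/2})\bigr)$ and apply Cauchy--Schwarz in the Hilbert--Schmidt inner product. This gives
\[
|\Tr(PQH\sigma)|\le \sqrt{\Tr(QPQ\,\sigma)}\;\sqrt{\Tr(H\sigma)} ,
\]
after moving $P$ to the other side; more precisely the first factor is $\|PQ\sigma^{1/2}\|_2$ and the second is $\|H\sigma^{1/2}\|_2$. The second factor is at most $\sqrt{kB/3^\ell}$ by \Cref{cor:localmomentbound}. For the first factor, each single-mode factor $A_j$ maps $\mathrm{span}\{\ket0,\ket1\}$ into $\mathrm{span}\{\ket0,\ket1,\ket2\}$. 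Its matrix entries from the low block are bounded by constants: $\|a^\dagger\ket1\|=\sqrt2$ and $|1-2\widehat N|\le 1$ there. Hence $\|QP\|_\infty\le\prod_j\|A_jP_1\|_\infty\le 3^k$, with a crude per-mode bound of $3$ (in fact $\sqrt{3}$ suffices for $a+a^\dagger$). This gives $|\Tr(PQH\sigma)|\le 3^k\sqrt{kB/3^\ell}$, and the same bound holds for the Hermitian-conjugate term, producing the first summand.

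\emph{The $HQH$ term, which I expect to be the main obstacle.} Here $Q$ is unbounded, so I will use the weighted operator $W_\Lambda$. Each $A_j$ satisfies a bound of the form $A_j^2\preceq c\,W_j^{2}$ with a small constant. More usefully, $\|W_j^{-1/2}A_jW_j^{-1/2}\|_\infty\le 2$: for $I-2\widehat N$ we have $|1-2m|/(1+m)\le 2$, and for the quadratures $\sqrt{m+1}/\sqrt{(m+1)(m+2)}$-type entries are $\le1$. Therefore $-2^kW_\Lambda\preceq Q\preceq 2^kW_\Lambda$ (taking tensor products carefully, since these are commuting-support factors), which gives
\[
|\Tr(HQH\sigma)|\le 2^k\,\Tr(HW_\Lambda H\sigma).
\]
Since $H$ commutes with $W_\Lambda$ (both are Fock-diagonal), this equals $2^k\Tr(W_\Lambda H\sigma)$. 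I then bound $W_\Lambda H\preceq \sum_{j\in\Lambda} W_\Lambda H_1^{(j)}$. On the range of $H_1^{(j)}$ we have $W_j\ge3$, so $W_j\le W_j^{\ell}/3^{\ell-1}$. The other factors $W_{j'}$ must also be controlled by $W^\ell$ moments. I will use the weighted AM--GM inequality $\prod_{j'\in\Lambda}W_{j'}\le\frac1{|\Lambda|}\sum_{j'}W_{j'}^{|\Lambda|}$, together with $W^{k}\le W^{\ell}/3^{\ell-k}$ on the high sector (which is where $\ell\ge k$ is needed), and bound the remaining low-sector factors by $2$ each. Collecting the constants should give $kB\,3^{k-\ell}$ per the stated form; if the constant bookkeeping is off, I would adjust by bounding the low-photon factors by $2^{k-1}$ and absorbing them into $3^k$. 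The delicate point is precisely the handling of the mixed products of $W_{j'}$ across modes, where only single-mode moments are available. Summing the three pieces then yields $\varepsilon_\ell$.
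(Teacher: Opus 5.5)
Your overall structure is the paper's. You split $Q-E=PQH+HQP+HQH$, bound the cross terms by Cauchy--Schwarz with $\|PQ\|_\infty\le 3^k$, and use $\|W_\Lambda^{-1/2}QW_\Lambda^{-1/2}\|_\infty\le 2^k$ to reduce the last term to $2^k\Tr(HW_\Lambda H\sigma)$.

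There is one slip in the cross term. The Cauchy--Schwarz factor is $\|\sigma^{1/2}PQ\|_2=\|QP\sigma^{1/2}\|_2=\sqrt{\Tr(PQ^2P\sigma)}$, not $\sqrt{\Tr(QPQ\sigma)}$. The inequality you display with $\Tr(QPQ\sigma)$ is false in general. Take one mode, $Q=a+a^\dagger$ and $\psi=-\tfrac{1}{\sqrt2}\ket{0}+\tfrac12\ket{1}+\tfrac12\ket{2}$: then $|\Tr(PQH\sigma)|=\sqrt2/4$, while $\sqrt{\Tr(QPQ\sigma)\Tr(H\sigma)}=1/4$. Since you go on to bound the factor by $\|QP\|_\infty$, your final cross-term estimate is still correct.

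The real gap is in the step you flag yourself. Suppose you carry it out as outlined: first $W_\Lambda H\preceq\sum_j W_\Lambda H_1^{(j)}$, then AM--GM with a high/low split of every mode. With $r=|\Lambda|$ and after the $\frac1r$ AM--GM weights, each $j$ contributes $3^{r-\ell}B+\frac{r-1}{r}2^rB\,3^{-\ell}$. The low-photon part comes from $2^r\Tr(H_1^{(j)}\sigma)$ for each $j'\neq j$. Summing over $j$ gives $\Tr(HW_\Lambda H\sigma)\le rB\,3^{r-\ell}+(r-1)2^rB\,3^{-\ell}$. For $|\Lambda|=k\ge2$ this is strictly weaker than the required $kB\,3^{k-\ell}$. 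The preliminary union bound already spends the whole budget $r\,3^{r-\ell}B$ on the high-sector terms, leaving no room for the low-photon contributions.

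Because $\varepsilon_\ell$ is an explicit expression, ``absorbing the low-photon factors into $3^k$'' changes the statement rather than proving it. The loss is a factor below $2$, which would be harmless in \Cref{thm:Bosonic_shadows_hard}, but the lemma as stated is not established.

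The missing idea is the paper's pointwise estimate on the range of $H$. Write $x_j=1+n_j$; on that range some $x_j\ge3$. Hence $\prod_{j\in\Lambda}x_j\le(\max_j x_j)^k\le 3^{k-\ell}(\max_j x_j)^\ell\le 3^{k-\ell}\sum_{j\in\Lambda}x_j^\ell$, i.e.\ $HW_\Lambda H\preceq 3^{k-\ell}\sum_{j\in\Lambda}W_j^\ell$. Taking the trace gives exactly $kB\,3^{k-\ell}$, with no union bound and no low-mode bookkeeping.

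Your AM--GM route can also be rescued by dropping the preliminary union bound. Apply $W_\Lambda\preceq\frac1r\sum_{j'}W_{j'}^r$ directly under $H$, and use $HW_{j'}^r=H_1^{(j')}W_{j'}^r+HP_1^{(j')}W_{j'}^r\preceq 3^{r-\ell}W_{j'}^\ell+2^r\sum_{j''\neq j'}H_1^{(j'')}$. This gives $3^{r-\ell}B+(r-1)2^rB\,3^{-\ell}\le r\,3^{r-\ell}B\le kB\,3^{k-\ell}$.
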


\begin{proof}[Proof sketch]
Set $H:=I-P$. We decompose the difference as
\[
    Q-E=PQH+HQP+HQH.
\]
The first two terms couple the encoded subspace to its
complement. Cauchy-Schwarz, together with
$\|PQ\|_\infty\le3^k$ and the leakage bound from
\Cref{cor:localmomentbound}, gives
\[
    |\Tr(PQH\sigma)|+|\Tr(HQP\sigma)|
    \le
    2\cdot3^k\sqrt{\frac{kB}{3^\ell}}.
\]

To control the remaining term, we use the weighted bound
$\|W_\Lambda^{-1/2}QW_\Lambda^{-1/2}\|_\infty\le2^k$.
Since the range of $H$ contains only configurations with
at least two photons in some mode, the moment assumption
and $\ell\ge k$ yield
\[
    |\Tr(HQH\sigma)|
    \le 2^k\Tr(HW_\Lambda H\sigma)
    \le 2^k kB\,3^{k-\ell}.
\]
Adding these bounds gives the claimed $\varepsilon_\ell$.
The full proof is given in \Cref{sapp:QE-close}.
\end{proof}

\begin{theorem}\label{thm:Bosonic_shadows_hard}
  $\CSV^{\ell,B, M}_{\mathrm{HetCV}}$ is $\QMA$-hard even when $M=1$, $\ell=O(\log n)$, and $B=2^\ell$.  
\end{theorem}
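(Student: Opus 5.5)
We reduce from the restriction of qubit $\ObsCon$ to polynomially many constant-local Pauli observables, which is $\QMA$-hard (e.g.\ via the local Clifford $\CSV$ hardness of \cite{KRMEG25}). Given an instance $\{(O_i,y_i)\}_{i=1}^m$ with thresholds $\alpha,\beta$, we work in \Cref{setup:reduction} and output the heterodyne shadow instance on $n$ modes with observables $G_i=\frac12(Q_i-y_iI)$, Fock cutoff $M=1$ (so recovery only sees $F_i$), and moment parameters $\ell=c\log n$ for a sufficiently large constant $c$, $B=2^\ell$. The shadow is $S_R$ from \Cref{cor:zerorecovery} with a suitable inverse-polynomial $\eta$, so $|A^{(1)}_{\mathrm{het}}(S_R,i)|<\eta$ for all $i$. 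We first check admissibility of $G_i$. It is a constant-degree polynomial in quadratures on at most $k$ modes. The weighted bound \Cref{eqn:norm} holds with a $k$-dependent constant, since $\|W^{-1/2}\widetilde X W^{-1/2}\|_\infty,\|W^{-1/2}\widetilde YW^{-1/2}\|_\infty\le 2$ and $\|W^{-1/2}\widetilde ZW^{-1/2}\|_\infty\le 2$, and the identity term is bounded because $W\succeq I$. The reported values lie in $[-1,1]$. Finally, $\ell\ge k$ holds for large $n$.

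\textbf{Completeness.} Let $\rho$ satisfy $|\Tr(O_i\rho)-y_i|\le\alpha$, and take $\sigma:=V\rho V^\dagger$, supported on Fock levels $\{0,1\}$ in each mode. Then $\Tr(W_j^\ell\sigma)\le 2^\ell=B$, so $\sigma\in\mathcal D_{\ell,B}$. Also $\Tr(G_i\sigma)=\Tr(F_i\sigma)=\frac12(\Tr(O_i\rho)-y_i)$, so $|\Tr(G_i\sigma)-A(S_R,i)|\le \alpha/2+\eta=:\alpha_{\mathrm{CV}}$.

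\textbf{Soundness.} Suppose $\sigma\in\mathcal D_{\ell,B}$ has $|\Tr(G_i\sigma)-A(S_R,i)|<\beta_{\mathrm{CV}}$ for all $i$. Then $|\Tr(Q_i\sigma)-y_i|<2(\beta_{\mathrm{CV}}+\eta)$. By \Cref{lem:QE-close}, $|\Tr(E_i\sigma)-\Tr(Q_i\sigma)|\le\varepsilon_\ell$. With $B=2^\ell$ we have $kB/3^\ell=k(2/3)^\ell$, so $\varepsilon_\ell=O_k((2/3)^{\ell/2}+(2/3)^\ell)$, which is smaller than any fixed inverse polynomial once $c$ is large.

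Next we decode. Define the channel $\Phi$ that measures each mode in $\{P_1,H_1\}$ and replaces the outcome $H_1$ by $\ket0$, then pulls back by $V^\dagger$ to give an $n$-qubit state $\rho'$. On $\Lambda_i$, the marginal of $\rho'$ differs from the compression of $\sigma$ only on the leakage event. More precisely, $|\Tr(O_i\rho')-\Tr(E_i\sigma)|\le 2\Tr(H_{\Lambda_i}\sigma)+O(\sqrt{\Tr(H_{\Lambda_i}\sigma)})$, which is at most $O(\sqrt{k(2/3)^\ell})$ by \Cref{cor:localmomentbound}. The cleanest way to see this is to apply the per-mode measure-and-prepare map only on $\Lambda_i$, since modes outside $\Lambda_i$ do not affect the marginal, and to bound the gentle-measurement disturbance. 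Hence $|\Tr(O_i\rho')-y_i|<2\beta_{\mathrm{CV}}+2\eta+O((2/3)^{\ell/2})$. Set $\beta_{\mathrm{CV}}:=\beta/2-2\eta-\gamma$, where $\gamma$ dominates the leakage terms. Then every $i$ has error below $\beta$, contradicting the NO case. Choosing $\eta$ and $\gamma$ as small constant fractions of $\beta-\alpha$, for example $(\beta-\alpha)/16$, gives $\beta_{\mathrm{CV}}-\alpha_{\mathrm{CV}}\ge(\beta-\alpha)/4$, which is inverse polynomial.

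The main obstacle is the decoding step. The measure-and-reprepare map must be shown to preserve each local expectation up to leakage, including the coherences between $P$ and $H$ that the measurement destroys. I would handle this as in \Cref{lem:QE-close}: split $E_i$'s expectation into the $P\sigma P$ part, which is exactly reproduced, and cross and leakage parts bounded by Cauchy–Schwarz. The remaining technical points are verifying the weighted-norm constant for $G_i$, including the $y_iI$ term, and that the numerical precision of recovery is absorbed into $\eta$.
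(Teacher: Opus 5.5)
Your proposal is correct and follows essentially the same route as the paper: the same source problem, the same observables $G_i$, the same large-outcome shadow $S_R$ from \Cref{cor:zerorecovery}, the same encoded witness for completeness, and for soundness the same chain through \Cref{lem:QE-close}, the moment bound, and the per-mode measure-and-reprepare decoder, with only cosmetic differences in how you parametrize $\beta_{\mathrm{CV}}$. The decoding step you flag as the main obstacle is easier than you expect: the adjoint of the decoder maps $O_i$ to an operator that is block diagonal with respect to $P_{\Lambda_i}$ and $I-P_{\Lambda_i}$, equal to $E_i$ on the first block and of norm at most one on the second, so there are no cross terms and $|\Tr(O_i\rho')-\Tr(E_i\sigma)|\le\Tr(H_{\Lambda_i}\sigma)$ follows directly (your looser Cauchy--Schwarz bound also suffices).
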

\begin{proof}
Start from an instance $\{(O_i,y_i)\}_{i=1}^m$
of the $\QMA$-hard restriction of qubit $\ObsCon$
to constant-local Pauli observables, with
$m=\poly(n)$ and thresholds $\alpha,\beta$,
as established in \cite[Appendix~A]{KRMEG25}.
Set $\Delta:=\beta-\alpha\ge1/\poly(n)$.
The output observable family is $\{G_i\}_{i=1}^m$,
defined in \Cref{setup:reduction}. These observables satisfy the required weighted norm bound. In particular, in \Cref{lem:QE-close} we got
\[
\left\|
W_{\Lambda_i}^{-1/2}
Q_i
W_{\Lambda_i}^{-1/2}
\right\|_\infty
\le 2^k.
\]
Now since
\[
G_i=\frac12(Q_i-y_iI),
\qquad
|y_i|\le1,
\]
and $W_{\Lambda_i}\succeq I$, we obtain
\[
\left\|
W_{\Lambda_i}^{-1/2}
G_i
W_{\Lambda_i}^{-1/2}
\right\|_\infty
\le
\frac12(2^k+1).
\]
Since $k=O(1)$, the observables $G_i$ belong to the promised weighted-norm observable class.
We also need to construct the heterodyne measurement data. Apply \Cref{cor:zerorecovery} to the finite-cutoff observables $F_i$. By the fixed-recovery convention and $\|F_i\|_\infty\le1$, we have $C_{F_i}\le C_k^{\mathrm{het}}$, where $C_k^{\mathrm{het}}$ is finite and independent of $n$ and of the source instance. Set $\eta:=\Delta/16\ge1/\poly(n)$. Then \Cref{cor:zerorecovery} constructs in polynomial time a heterodyne shadow $S_R:=\{z_R,\ldots,z_R\}$ such that $\left|A_{\mathrm{het}}^{(1)}(S_R,G_i)\right|<\eta,\;\forall i$.\\
\\
Set the remaining parameters as follows:
\[
\alpha_{\mathrm{CV}}:=\frac{\alpha}{2}+\eta,\quad\beta_{\mathrm{CV}}:=\frac{\alpha}{2}+\frac{\Delta}{4}
\]
Thus $\;\beta_{\mathrm{CV}}-\alpha_{\mathrm{CV}}
=\frac{3\Delta}{16}
\ge\frac{1}{\poly(n)}$.\\
\\
Finally choose an integer $\ell\ge k$ and set $B:=2^\ell$ so that
\[
\frac{kB}{3^\ell}=k\left(\frac{2}{3}\right)^\ell\le\frac\Delta8,\quad \varepsilon_\ell\le\frac\Delta8.
\]
Since $k=O(1)$ and $\Delta\ge1/\poly(n)$, such a choice has
$\ell=O(\log n)$, and consequently $B=\poly(n)$.\\
\\
\textbf{Completeness}: Assume that the qubit $\ObsCon$ instance is a YES instance. Then there exists an $n$-qubit state $\rho$ such that $|\Tr(O_i\rho)-y_i|\le\alpha\;\forall i$. Now the prover can encode this into the $CV$ Hilbert space: $\sigma:=(V^{\otimes n})\rho (V^{\otimes n})^ \dagger$. This state lies entirely in the first two Fock levels, so $\sigma\in \mathcal D_{\ell,B}(\mathcal H_n)$ is a legitimate proof state. Let us now compute the expectation of $G_i$:
\begin{align}
    \begin{aligned}
        \left|\Tr(G_i\sigma_{\Lambda_i})\right|&=\frac12\left(\left|\Tr(Q_i\sigma_{\Lambda_i})-y_i\right|\right)=\frac12\left(\left|\Tr(P_{\Lambda_i}Q_iP_{\Lambda_i}\sigma_{\Lambda_i})-y_i\right|\right)\\
        &=\frac12\left(\left|\Tr(E_iV_{\Lambda_i}\rho_{\Lambda_i} V_{\Lambda_i}^\dagger)-y_i\right|\right)=\frac12\left(\left|\Tr(O_i\rho)-y_i\right|\right)\le\frac\alpha2.
    \end{aligned}
\end{align}
By construction $\left|A_{\mathrm{het}}^{(1)}(S_R,i)\right|\le \eta$. Hence
\[
\left|\Tr(G_i\sigma_{\Lambda_i})-A_{\mathrm{het}}^{(1)}(S_R,i)\right|\le \frac\alpha2+\eta=\alpha_{\mathrm{CV}}\;\;\forall i.
\]
So the constructed $\CSV^{\ell,B,1}_{\mathrm{HetCV}}$ instance is a YES instance.\\
\\
\textbf{Soundness}: Assume the qubit $\ObsCon$ instance is a NO instance. Then by the soundness condition $\forall$ $n$-qubit states $\rho$ there exists a constraint $i$ such that $|\Tr(O_i\rho)-y_i|\ge\beta$. Let's assume now for the sake of contradiction that there exists a state $\sigma\in \mathcal D_{\ell,B}(\mathcal H_n)$ such that $\left|\Tr(G_i\sigma_{\Lambda_i})-A_{\mathrm{het}}^{(1)}(S_R,i)\right|< \beta_{\mathrm{CV}}\;\forall i$. Now because by construction we have that $\left|A_{\mathrm{het}}^{(1)}(S_R,i)\right|\le \eta$, we get $\left|\Tr(G_i\sigma_{\Lambda_i})\right|<\beta_{\mathrm{CV}}+\eta$ and so $\left|\Tr(Q_i\sigma_{\Lambda_i})-y_i\right|<2(\beta_{\mathrm{CV}}+\eta)$.\\
\\
From \Cref{lem:momentbound}, for each mode we have $\Tr((I-P_1^{(j)})\sigma)\le\frac{B}{3^\ell}$. Therefore, on the local support $|\Lambda_{i}|\le k$, by a union bound we get, $\Tr((I-P_{\Lambda_i})\sigma)\le\frac{kB}{3^\ell}$.\\
\\
For each mode $j$, define the decoding channel
\[
\mathcal D_j(\omega):=V_j^\dagger P_1^{(j)}\omega P_1^{(j)}V_j+\Tr\left((I-P_1^{(j)})\omega\right)\ketbra{0}{0}.
\]
This map is completely positive and trace preserving: it preserves the
$\ket0,\ket1$ block and maps every higher Fock level to the qubit state
$\ket0$. Define the global qubit state
\[
\rho:=(\mathcal D_1\otimes\cdots\otimes\mathcal D_n)(\sigma).
\]
For an observable $O_i$ supported on $\Lambda_i$, the
adjoint decoded observable is block diagonal with respect
to $P_{\Lambda_i}$ and $I-P_{\Lambda_i}$. It agrees with
$E_i$ on the first block and has operator norm at most
one on the second. Therefore
\[
\left|
\Tr(O_i\rho)-\Tr(E_i\sigma_{\Lambda_i})
\right|\le\Tr\left((I-P_{\Lambda_i})\sigma\right)
\le\sum_{j\in\Lambda_i}
\Tr\left((I-P_1^{(j)})\sigma\right)
\le\frac{kB}{3^\ell}.
\]
By \Cref{lem:QE-close} we have $|\Tr(E_i\sigma_{\Lambda_i})-\Tr(Q_i\sigma_{\Lambda_i})|\le\varepsilon_\ell$.\\
\\
Then adding everything together gives us:
\[
|\Tr(O_i\rho)-y_i|<2(\beta_{\mathrm{CV}}+\eta)+\varepsilon_\ell+\frac{kB}{3^\ell}
\]
With the choice of our parameters: $|\Tr(O_i\rho)-y_i|< \alpha+\frac{7\Delta}{8}<\alpha+\Delta=\beta$. Since this holds for every $i$, it contradicts the NO condition of
the source instance for the decoded state $\rho$. Hence the constructed instance is a NO instance.

\end{proof}
\subsection{Homodyne shadows}\label{sscn:homodyne}

\label{subsec:homodyne-hardness}

We now extend the hardness result to local homodyne shadows.
The encoding, observable construction, and moment estimates
from the preceding subsection also apply here.
The only protocol-dependent step to replace is the
construction of records whose recovery values are
simultaneously close to zero.

\begin{definition}[Homodyne CV classical shadow]
\label{def:CShomo}
We adapt the classical-shadow framework of
\cref{def:classical-shadow} to the homodyne protocol
described in \cref{par:HetCV}.
For a fixed Fock cutoff $M$, a homodyne CV classical
shadow on $n$ modes is a tuple
$(S,O,A_{\mathrm{hom}}^{(M)},\chi)$, where
\begin{itemize}
    \item The shadow $S=\{s_t\}_{t=1}^{L}$ consists of
    $L=\poly(n)$ records
    \[
      s_t=\bigl((\theta_{t,j},x_{t,j})\bigr)_{j=1}^{n},
        \quad
        \theta_{t,j}\in[-\pi,\pi],
        \quad
        x_{t,j}\in\mathbb R,
    \]
    with each phase and outcome represented using
    at most $\poly(n)$ bits.

    \item $O=\{O_i\}_{i=1}^{m}$ is a family of
    $m=\poly(n)$ Hermitian observables with
    polynomial-bit descriptions.
    Each $O_i$ is supported on a set
    $\Lambda_i\subseteq[n]$ of at most $k=O(1)$ modes,
    is a constant-degree polynomial in the canonical
    observables, and satisfies the weighted
    photon-number bound specified in
    \cref{par:HetCV}, \Cref{eqn:norm}.

    \item $A_{\mathrm{hom}}^{(M)}$ is the cutoff-$M$
    recovery algorithm specified in \cref{par:HetCV};
    given $S$ and $i\in[m]$, it returns the estimate
    $A_{\mathrm{hom}}^{(M)}(S,i)$ for $O_i$ to
    $\chi$ bits of precision, in polynomial time.
\end{itemize}
\end{definition}

\begin{definition}[$\CSV^{\ell,B,M}_{\mathrm{HomCV}}$]
Given a homodyne CV classical shadow $(S,O,A_{\mathrm{hom}}^{(M)},\chi)$, as in \Cref{def:CShomo}, thresholds
$0\le\alpha_{\mathrm{CV}}<\beta_{\mathrm{CV}}$, with $\Delta_{\mathrm{CV}}:=\beta_{\mathrm{CV}}-\alpha_{\mathrm{CV}}\ge1/\poly(n)$, and moment parameters $\ell,B$ with $\ell\ge k$, decide between the following two cases:
\begin{itemize}

\item\textbf{YES}: $\exists \sigma\in\mathcal D_{\ell,B}(\mathcal H_n)$ such that $\left|\Tr(O_i\sigma)-A_{\mathrm{hom}}^{(M)}(S,i)\right|\le \alpha_{\mathrm{CV}}
\;\;\forall i$.

\item\textbf{NO}: $\forall \sigma\in\mathcal D_{\ell,B}(\mathcal H_n), \exists i$
such that $\left|\Tr(O_i\sigma)-A_{\mathrm{hom}}^{(M)}(S,i)\right|\ge \beta_{\mathrm{CV}}$.
\end{itemize}
\end{definition}

\begin{corollary}
\label{cor:homodyne-hard}
$\CSV^{\ell,B,M}_{\mathrm{HomCV}}$ is $\QMA$-hard
even when $M=1$, $\ell=O(\log n)$, and $B=2^\ell$.
\end{corollary}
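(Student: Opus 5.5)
The plan is to reuse the heterodyne reduction of \Cref{thm:Bosonic_shadows_hard} verbatim. The only new ingredient is a homodyne analogue of \Cref{cor:zerorecovery}. We start from the same $\QMA$-hard constant-local Pauli $\ObsCon$ instance and use the observables $G_i$ and cutoff blocks $F_i$ of \Cref{setup:reduction}. We keep the same choices $\eta=\Delta/16$, $\alpha_{\mathrm{CV}},\beta_{\mathrm{CV}}$, $\ell=O(\log n)$ and $B=2^\ell$. Completeness and soundness go through unchanged. Indeed, they use only the bound $|A(S,i)|\le\eta$ together with \Cref{lem:momentbound}, \Cref{cor:localmomentbound}, \Cref{lem:QE-close} and the decoding channel $\mathcal D_j$. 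None of these refer to the measurement protocol, and the weighted norm bound on the $G_i$ is also protocol-independent.

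The key step is a decay lemma for pattern functions. For $a,b\in\{0,1\}$, the Fourier transform $\widetilde f_{ab}(u)$ has the form $|u|\,\mathrm{poly}(u)e^{-u^2/4}$. It is integrable, and its distributional derivative is also integrable: the kink of $|u|$ at $0$ contributes only a bounded jump, and there is no delta mass. Integrating by parts once on each half-line, or equivalently applying Riemann--Lebesgue with one derivative, gives
\[
|f_{ab}(x)|\le \frac{C_{ab}}{|x|}\quad (x\neq 0),
\]
with $C_{ab}=\frac{1}{2\pi}\bigl(\|\widetilde f_{ab}'\|_{L^1}+|\text{boundary jumps at }0|\bigr)$. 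Here the boundary term vanishes because $\widetilde f_{ab}(0)=0$ thanks to the factor $|u|$. Since every snapshot entry $e^{i(a-b)\theta}f_{ab}(x)$ has modulus $|f_{ab}(x)|$, we get for a record with $x_j=R$ on every mode of $\Lambda_i$
\[
|\widehat o^{(1),\mathrm{hom}}_{F_i}|\le \sum_{a,b}|(F_i)_{ba}|\prod_{j\in\Lambda_i}|f_{a_jb_j}(R)|\le 4^k\Bigl(\max_{a,b}C_{ab}\Bigr)\,c^{k-1}\,R^{-1}.
\]
Here $c:=\max_{a,b}\|f_{ab}\|_\infty$ is a finite constant bounding the other factors. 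Since $\Lambda_i\ne\varnothing$, at least one factor decays, so the whole product is $O(1/R)$ with a constant $C_k^{\mathrm{hom}}$ independent of $n$.

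We then run the doubling search of \Cref{cor:zerorecovery}. We fix the record $s_R=((0,R))_{j=1}^n$, with all phases $\theta=0$ and all outcomes $R$, and double $R$ until every numerically evaluated estimate has magnitude at most $\eta/2$. The numerical precision of the recovery (truncated Gaussian tails plus quadrature) is set to $\eta/4$. The search halts at $R=O(1/\eta)=\poly(n)$ after $O(\log n)$ rounds, and the shadow $S_R$ consists of $L$ copies of $s_R$. The median-of-means step, or the plain average $A^{(1)}_{\mathrm{hom}}$, then returns the common single-shot value, so $|A^{(1)}_{\mathrm{hom}}(S_R,i)|<\eta$ for all $i$.

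The main obstacle I expect is making the $1/|x|$ decay rigorous and uniform. This means treating the nonsmooth factor $|u|$ in $\widetilde f_{ab}$ correctly, by splitting the integral at $u=0$ and checking that the boundary terms cancel or vanish. It also means ensuring that the numerical quadrature error can be made $\le\eta/4$ in polynomial time at the large values $x=R$. For the latter, I expect the Gaussian decay of $\widetilde f_{ab}$ to make the tail truncation independent of $R$. The oscillatory factor $e^{iuR}$ then only requires a number of quadrature points polynomial in $R$, which is acceptable since $R=\poly(n)$.
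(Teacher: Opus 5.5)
Your proposal is correct and follows essentially the same route as the paper: you prove a $1/R$ decay bound for the $a,b\in\{0,1\}$ pattern functions by integrating by parts separately on each half-line, with the boundary terms vanishing because of the $|u|$ factor and the Gaussian tails. You then use the constant record with phases $0$ and outcomes $R$ together with a doubling search, and reuse the heterodyne completeness/soundness argument unchanged. The only cosmetic difference is that you bound the remaining tensor factors by $c=\max_{a,b}\|f_{ab}\|_\infty$, whereas the paper bounds every factor by $D/R$ with $D,R\ge 1$ and obtains $(4D)^k/R$. Your version needs $c\ge1$ so that $c^{r_i-1}\le c^{k-1}$ when $r_i<k$, which holds since $c\ge f_{00}(0)=2$.
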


\begin{proof}
We establish the homodyne counterpart of
\Cref{cor:zerorecovery} and then apply the reduction
from \Cref{thm:Bosonic_shadows_hard}.
Use the physical observables $G_i$ and their cutoff
blocks $F_i=P_{\Lambda_i}G_iP_{\Lambda_i}$ from
\Cref{setup:reduction}.
Recall that
$1\le r_i:=|\Lambda_i|\le k$ and
$\|F_i\|_\infty\le1$.\\
\\
For $a,b\in\{0,1\}$, define
\[
    g_{ab}^{\mathrm{hom}}(u):=\frac{\widetilde f_{ab}(u)}{\pi}.
\]
By the Fourier representation of the homodyne pattern
functions given in the preliminaries,
\[
    f_{ab}(x)
    =
    \frac12\int_{\mathbb R}g_{ab}^{\mathrm{hom}}(u)e^{iux}\,du.
\]
The Laguerre-polynomial formula in the preliminaries gives,
for $a\le b$,
\[
    g_{ab}^{\mathrm{hom}}(u)
    =
    |u|e^{-u^2/4}
    \underbrace{
        (-i)^{b-a}
        \sqrt{\frac{2^{a-b}a!}{b!}}\,
        u^{b-a}
        L_a^{(b-a)}\!\left(\frac{u^2}{2}\right)
    }_{=:p_{ab}(u)}.
\]
For $a>b$, set $p_{ab}:=p_{ba}$, using the symmetry
of the pattern functions.
Since $p_{ab}$ is a polynomial, it is continuous at zero.
Moreover, the Gaussian decays faster than any polynomial
grows. Hence
\[
    \lim_{u\to0^\pm}g_{ab}^{\mathrm{hom}}(u)
    =0,
    \quad
    \lim_{u\to\pm\infty}g_{ab}^{\mathrm{hom}}(u)=0.
\]
On each of the intervals $(-\infty,0)$
and $(0,\infty)$, $g_{ab}^{\mathrm{hom}}$ is smooth, and both
$g_{ab}^{\mathrm{hom}}$ and its derivative are absolutely integrable,
being polynomial multiples of a Gaussian. Since the factor $|u|$ may prevent differentiability
at zero, we apply integration by parts separately on
these two intervals, taking one-sided limits at zero.\\
\\
We now evaluate the pattern functions at a positive quadrature outcome $x=R>0$. Applying integration by parts on the two intervals and taking the endpoint limits gives
\[
\begin{aligned}
    \int_{-\infty}^{0}g_{ab}^{\mathrm{hom}}(u)e^{iuR}\,du
    &=
    \left[\frac{g_{ab}^{\mathrm{hom}}(u)e^{iuR}}{iR}\right]_{-\infty}^{0^-}
    -
    \frac{1}{iR}
    \int_{-\infty}^{0}(g_{ab}^{\mathrm{hom}})'(u)e^{iuR}\,du,\\
    \int_{0}^{\infty}g_{ab}^{\mathrm{hom}}(u)e^{iuR}\,du
    &=
    \left[\frac{g_{ab}^{\mathrm{hom}}(u)e^{iuR}}{iR}\right]_{0^+}^{\infty}
    -
    \frac{1}{iR}
    \int_{0}^{\infty}(g_{ab}^{\mathrm{hom}})'(u)e^{iuR}\,du.
\end{aligned}
\]
By the endpoint limits established above, all boundary terms vanish and so we get:
\[
f_{ab}(R)=-\frac{1}{2iR}\int_{\mathbb R}(g_{ab}^{\mathrm{hom}})'(u)e^{iuR}\,du,
\]
where $(g_{ab}^{\mathrm{hom}})'(u)$ denotes the derivative for $u\ne0$;
its value at zero does not affect the integral. Taking absolute values and using $|e^{iuR}|=1$, we obtain
\[
|f_{ab}(R)|\le\frac{1}{2R}\int_{\mathbb R}|(g_{ab}^{\mathrm{hom}})'(u)|\,du.
\]
Define
\[
    D:=\max\left\{1,\,
        \frac12\max_{a,b\in\{0,1\}}
        \int_{\mathbb R}|(g_{ab}^{\mathrm{hom}})'(u)|\,du
    \right\}.
\]
Each integral is finite, and the four functions $g_{ab}^{\mathrm{hom}}$ are fixed independently of the input instance. Hence $D$ is finite and independent of the instance. Thus
\[
    |f_{ab}(R)|\le\frac{D}{R}
    \qquad
    \forall a,b\in\{0,1\}.
\]
Choose the homodyne record
\[
    s_R:=((0,R),\ldots,(0,R)),
    \quad R\ge1.
\]
By the tensor-product formula for the homodyne snapshot,
each of its matrix entries on $\Lambda_i$ has magnitude
at most $(D/R)^{r_i}$.
Moreover, $F_i$ has $4^{r_i}$ matrix entries, each of
magnitude at most one.
Define the exact single-shot estimator for $G_i$ by
\[
    \widehat o_{i,\mathrm{hom}}^{(1)}(R)
    :=
    \Tr\left(
        F_i
        \widehat\rho_{s_{R,\Lambda_i}}^{(1),\mathrm{hom}}
    \right).
\]
The preceding entrywise bounds give
\[
    \left|\widehat o_{i,\mathrm{hom}}^{(1)}(R)\right|
    \le
    4^{r_i}\left(\frac{D}{R}\right)^{r_i}
    \le
    \frac{C_k^{\mathrm{hom}}}{R},
    \qquad
    C_k^{\mathrm{hom}}:=(4D)^k.
\]
Here we used $1\le r_i\le k$, $D\ge1$, and $R\ge1$.
In particular, $C_k^{\mathrm{hom}}$ is independent of
$n$ and of the source instance.\\
\\
For any $\eta\ge1/\poly(n)$, we can now use the same
doubling search as in \Cref{cor:zerorecovery}.
Let $\widetilde o_{i,\mathrm{hom}}^{(1)}(R)$ be the numerical estimator used
by the recovery algorithm, with precision chosen so that
\[
    |\widehat o_{i,\mathrm{hom}}^{(1)}(R)-\widetilde o_{i,\mathrm{hom}}^{(1)}(R)|\le\frac{\eta}{4}.
\]
Starting from $R=1$, double $R$ until
$|\widetilde o_{i,\mathrm{hom}}^{(1)}(R)|\le\eta/2$ for every $i$.
The decay bound guarantees termination once
$R\ge4C_k^{\mathrm{hom}}/\eta$, so the returned value satisfies
\[
    R\le
    \max\left\{1,\frac{8C_k^{\mathrm{hom}}}{\eta}\right\}.
\]
Since $C_k^{\mathrm{hom}}$ is constant and $1/\eta$ is polynomially bounded, the search uses
$R=\poly(n)$ and $O(\log n)$ iterations.
At these values of $R$, each iteration evaluates $m=\poly(n)$ estimators to accuracy $\eta/4$ in polynomial time. Thus the search runs in polynomial time.\\
\\
For any prescribed $L=\poly(n)$, let $S_R$ consist of $L$ copies of $s_R$. Since recovery averages identical numerical single-shot
estimates, the stopping criterion gives
\[
\left|A_{\mathrm{hom}}^{(1)}(S_R,i)\right|=\left|\widetilde o_{i,\mathrm{hom}}^{(1)}(R)\right|\le\frac{\eta}{2}<\eta.
\]
We now apply the reduction from
\Cref{thm:Bosonic_shadows_hard}.
Set $\eta=\Delta/16$, where $\Delta=\beta-\alpha$
is the source-instance promise gap, and choose
\[
    \alpha_{\mathrm{CV}}:=\frac{\alpha}{2}+\eta,
    \qquad
    \beta_{\mathrm{CV}}:=\frac{\alpha}{2}+\frac{\Delta}{4}.
\]
Choose $\ell=O(\log n)$ and $B=2^\ell$ as in that proof.
Its completeness and soundness arguments depend on
the records only through the bound
$|A_{\mathrm{hom}}^{(1)}(S_R,i)|<\eta$.
They therefore apply to $S_R$ with the same observables,
encoding, decoding channel, and moment estimates,
establishing the claimed hardness.
\end{proof}

\section{Complexity of All-Pauli Observable Consistency}\label{scn:allpauli}

The following version of $\ObsCon$ is for the classical shadow scheme of~\cite{kingTriplyEfficientShadow2025} for the observable set of all $n$-qubit Pauli strings.

\begin{definition}[$\pauliObsCon$]\label{def:allpauliobscon}
Let $\mathcal P_n:=\{I,X,Y,Z\}^{\otimes n}$ be the set of all $n$-qubit Pauli strings. The input consists of target expectation values $\{y_P\}_{P\in\mathcal P_n}$, for which we assume succinct access, and parameters $\alpha,\beta$ satisfying $\beta-\alpha\ge 1/\poly(n)$. We assume $y_P\in[-1,1]$ and $0\le\alpha<\beta\le2$. Decide between the following cases:
\begin{itemize}
    \item \textbf{Yes}: $\exists$ an $n$-qubit state $\rho$ such that $\forall P\in\mathcal P_n\;\;|\Tr(P\rho)-y_P|\le\alpha$.
    \item \textbf{No}: $\forall$ $n$-qubit states $\rho$, $\exists P\in\mathcal P_n$ such that $|\Tr(P\rho)-y_P|\ge\beta$.
\end{itemize}
\end{definition}

\noindent We begin by establishing two hardness results for $\pauliObsCon$: $\coNP$-hardness and $\QMA$-hardness.

\subsection{\texorpdfstring{$\pauliObsCon$ is $\mathrm{coNP}$-hard}{AllPauliObsConexp is coNP-hard}}\label{sscn:coNPhard}
\begin{definition}[$\mathrm{UNSAT}$]
Given a Boolean formula $\phi$ in conjunctive normal form on $n$ variables, decide between the following cases:
\begin{itemize}
    \item \textbf{YES:} $\phi$ is unsatisfiable, meaning
    $\phi(x)=0$ for every $x\in\{0,1\}^n$.
    \item \textbf{NO:} $\phi$ is satisfiable, meaning
    $\phi(x)=1$ for some $x\in\{0,1\}^n$.
\end{itemize}
\end{definition}
\begin{theorem}[\cite{AB09}]\label{thm:UNSAT_coNP}
    $\mathrm{UNSAT}$ is $\mathrm{coNP}$-complete. 
\end{theorem}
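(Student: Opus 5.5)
The statement to prove is the classical fact that UNSAT is $\coNP$-complete, cited from \cite{AB09}. I would derive it from the Cook--Levin theorem, which says SAT is $\NP$-complete, by complementation. The argument has two parts: membership in $\coNP$ and $\coNP$-hardness.

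\textbf{Membership.} I would use the definition of $\coNP$ given above. Take $p(n)$ to be the number of variables of $\phi$; this is polynomial in the encoding length. Let $M(\phi,u)$ output $1$ iff $\phi(u)=0$, which is a polynomial-time evaluation of a CNF. Then $\phi$ is unsatisfiable exactly when $M(\phi,u)=1$ for every $u$. That is the required $\forall$-characterization, so UNSAT $\in\coNP$.

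\textbf{Hardness.} Let $L\in\coNP$. By definition $x\in L\iff\forall u\,M(x,u)=1$, so the complement $\bar L$ satisfies $x\in\bar L\iff\exists u\,M(x,u)=0$. Hence $\bar L\in\NP$. By Cook--Levin there is a polynomial-time Karp reduction $f$ with $x\in\bar L\iff f(x)$ is a satisfiable CNF. The same map therefore gives $x\in L\iff f(x)\in$ UNSAT, which is a Karp reduction from $L$ to UNSAT. Here I would handle malformed strings by sending them to a fixed satisfiable formula, so that the complement is taken cleanly with respect to valid encodings.

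The only substantive step is Cook--Levin itself: the tableau encoding of a polynomial-time verifier's computation into a CNF with local consistency clauses. I would not reprove it; I would invoke it exactly as the paper does through \cite{AB09}. Everything else is a routine complementation argument, so I expect no real obstacle beyond citing that theorem correctly.
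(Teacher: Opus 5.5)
Your argument is correct and is the standard proof (complement Cook--Levin's reduction for the $\mathsf{NP}$ language $\bar L$, then check membership via the $\forall$-characterization); the paper gives no proof of its own and simply cites \cite{AB09}, where essentially this argument appears. One small fix in the membership step: $p$ must be a fixed polynomial in $|x|$, not the variable count of $\phi$. Take, e.g., $p(|x|)=|x|$ and let $M$ read only the first $n$ bits of $u$.
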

\begin{lemma}\label{lem:AllPauli_is_coNP}
    $\text{UNSAT}\le\pauliObsCon$.
\end{lemma}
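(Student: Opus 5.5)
Following the sketch in the techniques section, we map a CNF formula $\phi$ on $n$ variables to a succinctly specified target function $\{y_P\}$ on $N=\poly(n)$ qubits. Each satisfying assignment $x$ of $\phi$ should switch on a small contradictory pattern of targets; all other Paulis get target $0$. Concretely, we use $N=2n+2$ qubits (more if needed) and injectively encode $x\in\{0,1\}^n$ into two commuting, independent Pauli strings $A_x,B_x$ whose product $C_x:=A_xB_x$ is Hermitian, up to a known phase we absorb. One choice: on the first $n$ qubits place $Z^{x}$, and append a two-qubit tail so that $A_x=Z^{x}\otimes X\otimes I$, $B_x=Z^{x}\otimes I\otimes X$, $C_x=I^{\otimes n}\otimes X\otimes X$. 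Since $C_x$ would then be shared across $x$, we instead use a scheme in which all three strings determine $x$ uniquely and carry disjoint ``type'' markers. For example, use $3$ marker qubits and the first register carrying $x$ via $X^x$ or $Z^x$ patterns chosen so that $A_x,B_x$ commute and $A_xB_x$ still encodes $x$ and a type distinct from $A$ and $B$ types. Type markers: $A$ gets $X\otimes I\otimes X$ pattern on markers, $B$ gets $I\otimes X\otimes X$; the product gets $X\otimes X\otimes I$, which is distinct. For the data register, put $Z^x$ on $A_x$ and $I$ on $B_x$, so $C_x$ has $Z^x$ too. Set $y_{A_x}=y_{B_x}=1$, $y_{C_x}=-1$ whenever $\phi(x)=1$, and $y_P=0$ for every other $P$. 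Given $P$, one decodes whether it has the form $A_x,B_x,C_x$ (with $x$ read off), evaluates $\phi(x)$, and outputs the target, so succinct access holds in polynomial time.

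\textbf{Completeness (UNSAT yes).} If $\phi$ is unsatisfiable, every target other than $y_I$ is $0$; we set $y_{I}=1$ separately, noting $I$ is never of the three forms. The maximally mixed state then meets all targets exactly, so $\alpha=0$ works.

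\textbf{Soundness (UNSAT no).} Fix a satisfying $x$ and suppose $\rho$ has $\Tr(A_x\rho),\Tr(B_x\rho)\ge 1-\beta$ and $\Tr(C_x\rho)\le -1+\beta$. Since $A_x,B_x$ commute and square to $I$, projectors $\Pi_A=(I+A_x)/2$ and $\Pi_B=(I+B_x)/2$ satisfy $\Tr(\Pi_A\rho),\Tr(\Pi_B\rho)\ge 1-\beta/2$. The union bound gives $\Tr(\Pi_A\Pi_B\rho)\ge 1-\beta$. On the range of $\Pi_A\Pi_B$, $C_x=+1$, so $\Tr(C_x\rho)\ge (1-\beta)-\beta=1-2\beta$. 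This contradicts $\Tr(C_x\rho)\le -1+\beta$ whenever $\beta<2/3$. Taking, say, $\alpha=0$ and $\beta=1/2$ gives a constant gap.

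\textbf{Main obstacle.} The delicate step is the encoding. We must make $A_x,B_x$ commute, make $A_xB_x$ Hermitian with sign $+1$, and keep the three families pairwise disjoint and disjoint from $I$ across all $x$, so that no Pauli receives conflicting targets. We will verify this by checking that the marker pattern alone identifies the type and the data register identifies $x$. A phase in $A_xB_x$ (e.g.\ from $XZ$ factors) is avoided by making the factors on each qubit commute; the data register only uses $Z$ and $I$, and the markers only $X$ and $I$.
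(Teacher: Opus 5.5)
Your overall plan is right: two commuting Paulis with target $+1$ and their product with target $-1$, switched on by each satisfying assignment. Your soundness argument is also fine. The concrete encoding you settle on, however, breaks succinct access.

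With markers $X\otimes I\otimes X$ on $A_x$ and $I\otimes X\otimes X$ on $B_x$, and data $Z^x$ on $A_x$ but $I^{\otimes n}$ on $B_x$, the string $B_x=I^{\otimes n}\otimes I\otimes X\otimes X$ is one and the same Pauli for every $x$. So the step where you read $x$ off $P$ and evaluate $\phi(x)$ fails for this Pauli. The rule $y_{B_x}=1$ whenever $\phi(x)=1$ gives it target $1$ exactly when $\phi$ is satisfiable. That bit is the $\mathsf{NP}$-hard question itself, which the reduction cannot compute in polynomial time unless $\mathsf{P}=\mathsf{NP}$. Setting this target to $1$ unconditionally restores succinctness but breaks your completeness witness, since the maximally mixed state has expectation $0$ on it. This violates the very requirement you state yourself, that all three strings determine $x$. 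Note also that simply putting $Z^x$ on the same data register for $B_x$ only moves the problem to $C_x$, whose data part then becomes $I^{\otimes n}$.

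Either of two small repairs closes the gap.

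\emph{Second data register.} Give $B_x$ its own copy of $x$ on a second data register: $A_x=Z^x\otimes I^{\otimes n}\otimes\mu_A$ and $B_x=I^{\otimes n}\otimes Z^x\otimes\mu_B$, hence $C_x=Z^x\otimes Z^x\otimes\mu_C$, with your markers $\mu_A,\mu_B,\mu_C$. Now all three families are injective in $x$, pairwise disjoint, and never equal to $I$. They commute with product sign $+1$, and the maximally mixed state matches every target when $\phi$ is unsatisfiable.

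\emph{The paper's route.} Make one of the two Paulis a fixed string with unconditional target $1$, and let only the other depend on the assignment. The paper uses $N=n+2$, $A=Z\otimes I\otimes I^{\otimes n}$, $B_a=I\otimes Z\otimes Z_a$, $AB_a=Z\otimes Z\otimes Z_a$, with $y_I=y_A=1$. It replaces the maximally mixed witness by $(I+A)/2^N$, which has expectation $0$ on every Pauli other than $I$ and $A$. This costs only $n+2$ qubits.

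Your union bound on $\Pi_A\Pi_B$ is equivalent to the paper's inequality $(I-A)(I-B_a)\succeq 0$, i.e.\ $AB_a\succeq A+B_a-I$. Both yield $\Tr(C\rho)\ge 1-2\beta$, which contradicts the target $-1$ at $\beta=1/2$.
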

\begin{proof}
    Let $\phi(a)$ be a Boolean formula on $n$-bits $a=(a_1,\dots ,a_n)\in\{0,1\}^n$. We construct an all-Pauli $\ObsCon$ instance on $N=n+2$ qubits as follows:\\
    \\
    For $u=(u_1,\ldots,u_n)\in\{0,1\}^n$, define $Z_u:=Z^{u_1}\otimes\cdots\otimes Z^{u_n}$, where $Z^0=I$ and $Z^1=Z$. Now for each assignment $a\in\{0,1\}^n$ define
    \[
    A:=Z\otimes I\otimes I^{\otimes n},\;\; B_a:=I\otimes Z\otimes Z_a,\;\;AB_a=Z\otimes Z\otimes Z_a.
    \]
    For every Pauli string $P\in\mathcal{P}_N$ define $y_P$ as follows: $y_I=1,\;\;y_A=1$.\\
    \\
    For each satisfying assignment $a$, i.e., $\phi(a)=1$ we set: $y_{B_a}=1,\;\;y_{AB_a}=-1$\\
    \\
    Lastly, for every other Pauli string, set $y_P=0$.\\
    \\
    Given $P$, its target can be computed in polynomial time by checking its form and, if $P=B_a$ or $P=AB_a$, evaluating $\phi(a)$.\\
    \\
    Set the promise parameters to $\;\alpha:=0,\; \beta:=\frac12$.\\
    \\
    \textbf{Completeness}: Assume $\phi$ is unsatisfiable. Then there is no satisfying assignment $a$. So the only non-zero targets are $y_I=1,\;y_A=1$. Now consider the state $\rho:=\frac{I+A}{2^N}$. This is a valid density matrix since
    \begin{equation*}
        \Tr(\rho)=\frac{1}{2^N}\Tr(I)+\frac{1}{2^N}\Tr(A)=1,\quad\rho\ge0.
    \end{equation*}
    It is easy to see that this density matrix matches all target values exactly:
    \[
    \Tr(I\rho)=\Tr(\rho)=1,\;\;\Tr(A\rho)=\Tr\left(\frac{I+A}{2^N}\right)=\Tr(\rho)=1.
    \]
    And for every other Pauli string $P\neq I,A$ we get:
    \[
    \Tr(P\rho)=\Tr\left(P\frac{I+A}{2^N}\right)=\Tr\left(\frac{P+PA}{2^N}\right)=0.
    \]
    Therefore $\rho$ matches every target exactly.\\
    \\
    \textbf{Soundness}: Let $a$ be a satisfying assignment. Then by construction
    \[
    y_A=1,\;\; y_{B_a}=1,\;\; y_{AB_a}=-1.
    \]
    We show that no quantum state satisfies these even approximately within error $<1/2$. Because $A,B_a$ have eigenvalues $\pm1$ and they commute we get 
    \[ 
    (I-A)(I-B_a)\succeq0\iff AB_a\succeq A+B_a-I
    \]
    So for any $\rho$ we have
    \[
    \Tr(AB_a\rho)\ge\Tr(A\rho)+\Tr(B_a\rho)-1
    \]
    Now suppose for contradiction that
    \begin{align*}
        |\Tr(A\rho)-1|< \frac12&\implies \Tr(A\rho)>\frac12\\
        |\Tr(B_a\rho)-1|<\frac12 &\implies \Tr(B_a\rho)>\frac12
    \end{align*}
    And so $\Tr(AB_a\rho)>\frac12+\frac12-1=0$.\\
    \\
    But the target value for $AB_a$ is $-1$ and so
    \[
    |\Tr(AB_a\rho)-(-1)|=|\Tr(AB_a\rho)+1|>1.
    \]
    So $AB_a$ constraint is violated by more than $1/2$.   
\end{proof}
\begin{theorem}\label{thm:coNPhard}
 $\pauliObsCon$ is $\coNP$-hard.   
\end{theorem}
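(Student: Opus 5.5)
The plan is to combine \Cref{thm:UNSAT_coNP} with \Cref{lem:AllPauli_is_coNP}. Since UNSAT is $\coNP$-complete, every language $L\in\coNP$ admits a polynomial-time Karp reduction $x\mapsto\phi_x$ to UNSAT. Composing this with the reduction $\phi\mapsto\{y_P\}$ of \Cref{lem:AllPauli_is_coNP} yields a polynomial-time map from $L$ to $\pauliObsCon$. The composition preserves the YES/NO cases, so $\pauliObsCon$ is $\coNP$-hard.

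The one point I would verify carefully is that the instance produced in \Cref{lem:AllPauli_is_coNP} is a legal $\pauliObsCon$ instance in the sense of \Cref{def:allpauliobscon}. First, all targets $y_P\in\{-1,0,1\}\subseteq[-1,1]$. Second, $\alpha=0$ and $\beta=1/2$ satisfy $0\le\alpha<\beta\le2$ with $\beta-\alpha\ge1/\poly(N)$. Third, the succinct-access requirement holds. Given $P$, one can decide in $\poly(n)$ time whether $P\in\{I,A\}$, or whether $P$ has the form $I\otimes Z\otimes Z_a$ or $Z\otimes Z\otimes Z_a$, reading off $a$ in the latter case and evaluating $\phi(a)$. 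This procedure gives a polynomial-size circuit computing $P\mapsto y_P$, constructible from $\phi$ in polynomial time.

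Completeness and soundness were already established in the lemma: an unsatisfiable $\phi$ yields the exact witness $(I+A)/2^N$, while a satisfying assignment yields the triple $A,B_a,AB_a$ that no state can approximate within $1/2$. There is no real obstacle here; the only subtlety is that the reduction outputs an exponential-size table $\{y_P\}$, which is acceptable precisely because the problem is defined under succinct access, where the output of the reduction is the polynomial-size circuit computing $y_P$.
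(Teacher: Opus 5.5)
Your proposal is correct and matches the paper's proof, which likewise obtains the theorem by composing the $\coNP$-completeness of UNSAT (\Cref{thm:UNSAT_coNP}) with the reduction of \Cref{lem:AllPauli_is_coNP}. Your additional checks (targets in $[-1,1]$, $\alpha=0$, $\beta=1/2$, polynomial-time succinct evaluation of $P\mapsto y_P$) are consistent with that lemma's construction.
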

\begin{proof}
    This follows from \Cref{thm:UNSAT_coNP,lem:AllPauli_is_coNP}.
\end{proof}

\subsection{\texorpdfstring{$\pauliObsCon$ is $\QMA$-hard}{AllPauliObsConexp is $\QMA$-hard}}\label{sscn:QMAhard}
We first define the source problem for our reduction and establish its $\QMA$-hardness.
\begin{definition}[$\text{FullPauli}\ObsCon_{k}$]\label{def:FullPauliObsCon} The input is the set $\mathcal P_n^{[1,k]}:=\{P\in\mathcal P_n:1\le\operatorname{wt}(P)\le k\}$ of all the up to $k$-local Pauli observables along with their target expectation values $y_P\in[-1,1]$ and parameters $0\le\alpha_{\mathsf{FP}}<\beta_{\mathsf{FP}}\le2$, with $\beta_{\mathsf{FP}}-\alpha_{\mathsf{FP}}\ge1/\poly(n)$. The output is to decide between the following two cases:
\begin{itemize}
    \item \textbf{Yes}: $\exists \rho\;$such that $\;\forall P\in\mathcal P_n^{[1,k]},\;|\Tr(P\rho)-y_P|\le\alpha_{\mathsf{FP}}$
    \item \textbf{No}: $\forall\rho\;\;\exists P\in\mathcal P_n^{[1,k]}$ such that $|\Tr(P\rho)-y_P|\ge\beta_{\mathsf{FP}}.$
\end{itemize}
    
\end{definition}
\begin{proposition}\label{prop:fullpauli-hardness}
 $\text{FullPauli}\ObsCon_{2}$ is $\QMA$-hard even for $\alpha_{\mathsf{FP}}=0$ and $\beta_{\mathsf{FP}}\ge\frac{1}{\poly(n)}.
 $   
\end{proposition}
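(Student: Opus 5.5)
We would reduce from a $\QMA$-hard 2-local consistency problem in which every target is exactly realizable on YES instances. The natural source is the 2-local CLDM problem (consistency of 2-qubit marginals) with perfect completeness. Such instances can be obtained from the simulatable-verifier constructions of \cite{BG22} together with \cite{KRMEG25}, or from Liu's reduction \cite{liuConsistencyLocalDensity2006}. In this form, the targets in the YES case are the exact two-qubit marginals of some global state. Given the marginals $\{\rho_{ij}\}$, we would expand each one in the Pauli basis and read off targets $y_P$ for every $P\in\mathcal P_n^{[1,2]}$. Single-qubit targets are defined whenever qubit $i$ appears in some pair; we would need the input marginals to agree on shared single sites, and otherwise output a fixed NO instance. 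Pairs not covered by the CLDM instance pose a problem, because the full problem must specify targets for \emph{all} weight-$\le2$ Paulis. We would avoid this by starting from a CLDM instance on the complete graph, i.e.\ with all $\binom n2$ marginals specified. This is where simulatable codes help: the simulatable construction yields all $k$-local marginals of the history state computably.

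The key steps, in order, are as follows. (i) Invoke the $\QMA$-hardness of ``Full'' CLDM. Every $2$-qubit marginal $\sigma_{ij}$ is given, and YES means there is $\rho$ with $\rho_{ij}=\sigma_{ij}$ exactly. NO means that for every $\rho$ some pair has $\|\rho_{ij}-\sigma_{ij}\|_1\ge\beta'$ with $\beta'\ge1/\poly(n)$. We would take this from the simulatable-verification literature, mirroring how \Cref{lem:xz-certified-cldm} uses \cite{BG22}. Perfect completeness comes from simulatable codes, where the targets are exactly the marginals of a consistent history state on YES instances. (ii) Set $y_P:=\Tr(P\sigma_{e})$ for each $P$ supported in $e$, and take $\alpha_{\mathsf{FP}}=0$. (iii) Completeness is immediate. (iv) For soundness, take any $\rho$ and a violating pair $e$. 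Expanding $\rho_e-\sigma_e=\frac14\sum_{P\neq I}c_P P$ gives
\[
\|\rho_e-\sigma_e\|_1\le 2\|\rho_e-\sigma_e\|_2 = \tfrac{2}{2}\Big(\sum_P c_P^2\Big)^{1/2}\le \sqrt{15}\max_P|c_P|.
\]
Hence some weight-$\le2$ Pauli has $|\Tr(P\rho)-y_P|\ge\beta'/\sqrt{15}$, and we set $\beta_{\mathsf{FP}}:=\beta'/4$.

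The main obstacle is step (i): guaranteeing an instance where the targets are exact on YES instances and are specified on \emph{every} pair. Liu's original hardness only specifies marginals on the Hamiltonian's interaction edges, and its completeness is only approximate. If that route fails, we would proceed in three stages. First, start from approximate CLDM with $\alpha_{\mathsf{FP}}$ small. Second, fill unspecified pairs with targets computed from the simulatable history state. This works because simulatability makes all $2$-local marginals of the encoded history state efficiently computable up to $1/\exp$ error, and those marginals are exactly realized by the encoded history state in the YES case. Third, handle the residual small error. We would absorb it by a mixing argument: replace $\rho$ by $(1-\epsilon)\rho+\epsilon I/2^n$ and targets likewise, or simply note that exact arithmetic on rational codes gives exact marginals. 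Reducing further from $\alpha$ small to $\alpha_{\mathsf{FP}}=0$ may require arguing that the simulated marginals are exactly those of the ideal history state. This holds when the codes are perfectly simulatable, as in \cite{BG22,KR25}.
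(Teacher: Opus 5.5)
Your primary route is essentially the paper's. The paper reduces from the $\QMA$-hard family of exact $2$-CLDM instances of \cite[Sections~4.4--4.5]{KR25}, which already specify a marginal for every pair of qubits with exact completeness, checks single-site agreement (outputting a fixed NO instance otherwise), reads off $y_P$ from the marginals, and gets soundness by Pauli-expanding the violating two-qubit difference, exactly as you propose. Your fallback is therefore unnecessary; in any case its mixing trick would not give $\alpha_{\mathsf{FP}}=0$.

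The only difference is in the soundness constant. You bound $\|\Delta_{ij}\|_1\le(\sum_P c_P^2)^{1/2}\le\sqrt{15}\max_P|c_P|$ via the Hilbert--Schmidt norm. The paper instead uses the triangle inequality to get $\|\Delta_{ij}\|_1\le16\max_Q|\Tr(Q\Delta_{ij})|$ and sets $\beta_{\mathsf{FP}}=\beta_{\mathsf{CLDM}}/16$. Both work.
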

\begin{proof}
    We use the $\QMA$-hard family of \emph{exact} $2$-$\CLDM$ instances from \cite[Sections~4.4-4.5]{KR25}, in which a marginal $\rho_{ij}$ is supplied for every pair of qubits. First we check that the marginals agree on their one-qubit overlap. If not we output a fixed NO instance; otherwise let $\rho_i$ denote the common one-qubit marginal and for every Pauli string $P$ of weight one or two define
    \[
    y_P=\begin{cases}
        \Tr(P\rho_i),\;\;&\operatorname{supp}(P)=\{i\}\\
        \Tr(P\rho_{ij}), &\operatorname{supp}(P)=\{i,j\}.
    \end{cases}
    \]
    If the original instance is a YES instance, there exists a global state whose two-qubit marginals equal every $\rho_{ij}$ exactly. Therefore every Pauli target is satisfied exactly, and we set $\alpha_{\mathsf{FP}}=0$.\\
    \\
    For soundness, let $\sigma$ be any global state. In a NO instance, there is a pair $i,j$ such that 
    \[
    \|\sigma_{ij}-\rho_{ij}\|_1\ge\beta_{\mathsf{CLDM}}.
    \]
    Set $\Delta_{ij}:=\sigma_{ij}-\rho_{ij}$. Its two-qubit Pauli expansion is
    \[
    \Delta_{ij}=\frac{1}{4}\sum_{Q\in\mathcal P_2}\Tr(Q\Delta_{ij})Q.
    \]
    By the Pauli expansion and the triangle inequality,
    \[
    \begin{aligned}
    \|\Delta_{ij}\|_1&\le\frac{1}{4}\sum_{Q\in\mathcal P_2}|\Tr(Q\Delta_{ij})|\|Q\|_1
    =\sum_{Q\in\mathcal P_2}|\Tr(Q\Delta_{ij})|\le16\max_{Q\in\mathcal P_2}|\Tr(Q\Delta_{ij})|
    \end{aligned}
    \]
    Consequently, there exists $Q\in\mathcal P_2$ such that
\[
|\Tr(Q\Delta_{ij})|
\ge \frac{\beta_{\mathsf{CLDM}}}{16}>0.
\]
Since $\Tr(\Delta_{ij})=0$, this $Q$ cannot be the identity. Finally we get,
\[
\begin{aligned}
|\Tr(P\sigma)-y_P|
&=|\Tr(Q\sigma_{ij})-\Tr(Q\rho_{ij})|=|\Tr(Q\Delta_{ij})|\\&\ge \frac{\beta_{\mathsf{CLDM}}}{16}:=\beta_{\mathsf{FP}}.
\end{aligned}
\]
\end{proof}
We now introduce a gadget that suppresses the physical Pauli expectations associated with unknown high-weight data moments, allowing us to assign them zero targets, while keeping the prescribed low-weight moments recoverable.
\paragraph{The Pauli-spreading gadget.}
Fix an arbitrary positive \emph{even} integer $m$ and set $c:=2^{-m}$. The reduction will later choose $m$ as a function of the source soundness parameter $\beta_{\mathsf{FP}}$. For each data qubit $j\in[n]$, introduce a $2m$-qubit key register $K_j$. A value in this register is a pair $\kappa_j=(u_j,v_j)\in\mathbb F_2^m\times \mathbb F_2^m$. A complete key is therefore $\boldsymbol\kappa=(\kappa_1,\dots,\kappa_n)\in (\mathbb F_2^{2m})^n$.
Now define the matrix
\[
J:=\bigoplus_{r=1}^{m/2}
\begin{pmatrix}
   0&1\\1&1 
\end{pmatrix}
\in\mathbb F_2^{m\times m}.
\]
With this choice, $I_m$, $J$, and $I_m+J$ are invertible
over $\mathbb F_2$. As shown below, this property
allows us to suppress the encoded Pauli expectations
uniformly for all three data Paulis $X$, $Y$, and $Z$.\\
\\
We now use each key $(u,v)$ to choose a Pauli operator
that acts as a mask on the associated data qubit. Define the two bits
\[
a(u,v):=u^{\mathsf T}Jv,\quad d(u,v):=u^{\mathsf T}v,
\]
where arithmetic is modulo $2$. These bits determine
whether to apply $X$ and $Z$, respectively:
\[
W_{u,v}:=i^{a(u,v)d(u,v)}X^{a(u,v)}Z^{d(u,v)}.
\]
And so for a complete key we get $W_{\boldsymbol\kappa}:=\bigotimes_{j=1}^nW_{\kappa_j}$. The encoding chooses a complete key uniformly at random,
stores it in the key register, and applies the corresponding Pauli operator to each data qubit. Thus,
\[
\mathcal E(\rho):=2^{-2mn}\sum_{\boldsymbol\kappa\in(\mathbb F_2^{2m})^n}\ket{\boldsymbol\kappa}\bra{\boldsymbol\kappa}_K\otimes W_{\boldsymbol\kappa}\rho W^\dagger_{\boldsymbol\kappa}.
\]
The encoded state acts on $n$ data qubits and $2mn$ key qubits, for a total of $N=n(2m+1)$ qubits. Notice that because the encoding is a classical mixture over computational-basis key states, any Pauli observable containing $X$ or $Y$ on a key qubit has zero expectation in the encoded state.\\
\\
We therefore focus on $Z$-type Pauli observables on the key register. Each such observable assigns a sign $\pm1$ to every key value. When combined with a data Pauli, the resulting observable has an expectation equal to a signed average of the data Pauli's expectations in the states obtained for the different keys. We now show how these averages spread the original data Pauli expectation across many physical Pauli observables.\\
\\
To make this sign assignment explicit, consider a single
key register $K_j$. We label its $Z$-type Pauli observables by $s_j=(\xi_j,\zeta_j)\in\mathbb F_2^{2m}$. Define $Z^{s_j}:=Z^{\xi_j}\otimes Z^{\zeta_j}$, where $Z^{\xi_j}:=Z^{(\xi_j)_1}\otimes\cdots\otimes Z^{(\xi_j)_m}$.
On a computational-basis key state, this observable acts as
\[
Z^{s_j}\ket{u_j,v_j}=(-1)^{\xi_j^{\mathsf T}u_j+\zeta_j^{\mathsf T}v_j}\ket{u_j,v_j}.
\]
For $A\in\{I,X,Y,Z\}$, let $f_A(u_j,v_j)\in\{\pm1\}$
denote the sign induced by conjugation with $W_{u_j,v_j}$:
\[
W_{u_j,v_j}^\dagger A W_{u_j,v_j}=f_A(u_j,v_j)A.
\]
Define the coefficient $\lambda_A(\xi_j,\zeta_j)$ by
averaging the product of this sign and the eigenvalue
of $Z^{s_j}$ over all key values:
\[
\lambda_A(\xi_j,\zeta_j):=
2^{-2m}\sum_{u_j,v_j\in\mathbb F_2^m}
(-1)^{\xi_j^{\mathsf T}u_j+\zeta_j^{\mathsf T}v_j}
f_A(u_j,v_j).
\]
The following lemma shows that
$|\lambda_A(\xi_j,\zeta_j)|=c$
for every $A\in\{X,Y,Z\}$ and every key-observable
label $(\xi_j,\zeta_j)$.
\begin{lemma}\label{lem:coefficients}
Define $M_X:=I_m$, $M_Z:=J$, and $M_Y:=I_m+J$.
For every $A\in\{X,Y,Z\}$, the matrix $M_A$ is invertible
over $\mathbb F_2$, and
$f_A(u_j,v_j)=(-1)^{u_j^{\mathsf T}M_Av_j}$
for all $u_j,v_j\in\mathbb F_2^m$.
Furthermore, for every $A\in\{X,Y,Z\}$ and every
$\xi_j,\zeta_j\in\mathbb F_2^m$,
\[
|\lambda_A(\xi_j,\zeta_j)|=c,
\quad
\lambda_I(\xi_j,\zeta_j)=
\begin{cases}
1,&\xi_j=\zeta_j=0,\\
0,&\text{otherwise}.
\end{cases}
\]
\end{lemma}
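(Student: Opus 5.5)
The argument has three steps: invertibility of the $M_A$, the sign formula for $f_A$, and a character-sum evaluation giving $|\lambda_A|=c$.

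\textbf{Invertibility.} Since $J$ is block diagonal, invertibility of $M_A$ reduces to the $2\times2$ block $\left(\begin{smallmatrix}0&1\\1&1\end{smallmatrix}\right)$.
\begin{itemize}
\item $I_m$ is trivially invertible.
\item The block of $J$ has determinant $0\cdot1-1\cdot1=1$ over $\mathbb F_2$.
\item The block of $I_m+J$ is $\left(\begin{smallmatrix}1&1\\1&0\end{smallmatrix}\right)$, which also has determinant $1$.
\end{itemize}
This is where the evenness of $m$ enters.

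\textbf{The sign formula.} Write $a=u^{\mathsf T}Jv$ and $d=u^{\mathsf T}v$. The phase $i^{ad}$ cancels under conjugation, so only the Pauli part $X^aZ^d$ matters. For single-qubit Paulis:
\begin{itemize}
\item $X$ anticommutes with $Z^d$ and commutes with $X^a$, so $f_X=(-1)^d=(-1)^{u^{\mathsf T}v}$.
\item $Z$ anticommutes with $X^a$ only, so $f_Z=(-1)^a=(-1)^{u^{\mathsf T}Jv}$.
\item $Y$ anticommutes with both $X$ and $Z$, so $f_Y=(-1)^{a+d}=(-1)^{u^{\mathsf T}(I_m+J)v}$.
\end{itemize}
Finally $f_I\equiv1$. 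This gives $f_A=(-1)^{u^{\mathsf T}M_Av}$ with the stated $M_A$.

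\textbf{Evaluating $\lambda_I$.} Since $f_I\equiv1$, $\lambda_I$ is the average of the character $(-1)^{\xi^{\mathsf T}u+\zeta^{\mathsf T}v}$ over $\mathbb F_2^{2m}$. By character orthogonality it equals $1$ if $\xi=\zeta=0$ and $0$ otherwise.

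\textbf{Evaluating $\lambda_A$ for $A\in\{X,Y,Z\}$.} Fix $u$ and sum over $v$ first:
\[
2^{-m}\sum_{v}(-1)^{(\zeta+M_A^{\mathsf T}u)^{\mathsf T}v}=\mathbf 1[M_A^{\mathsf T}u=\zeta].
\]
Hence
\[
\lambda_A=2^{-m}\sum_u(-1)^{\xi^{\mathsf T}u}\,\mathbf 1[M_A^{\mathsf T}u=\zeta].
\]
Because $M_A^{\mathsf T}$ is invertible over $\mathbb F_2$, exactly one $u$ survives, namely $u=(M_A^{\mathsf T})^{-1}\zeta$. Therefore
\[
\lambda_A=2^{-m}(-1)^{\xi^{\mathsf T}(M_A^{\mathsf T})^{-1}\zeta},
\]
so $|\lambda_A|=2^{-m}=c$.

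The only step needing care is the conjugation-sign bookkeeping. In particular, one should confirm that the phase $i^{ad}$ keeps $W_{u,v}$ unitary and is irrelevant under conjugation, and that $X^aZ^d$ is conjugated correctly for each $A$. Everything else is a standard Gauss-type character sum that collapses because $M_A$ is invertible.
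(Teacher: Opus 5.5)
Your proposal is correct and matches the paper's proof essentially step for step. The paper likewise cancels the phase under conjugation and uses the Pauli commutation relations to get $f_A=(-1)^{u^{\mathsf T}M_Av}$. It checks that the $2\times2$ blocks of $J$ and $I_m+J$ have determinant $1$ over $\mathbb F_2$. It then sums over $v$ first, so that only $u=(M_A^{\mathsf T})^{-1}\zeta$ survives, giving $\lambda_A=2^{-m}(-1)^{\xi^{\mathsf T}(M_A^{\mathsf T})^{-1}\zeta}$, and it handles $\lambda_I$ as the same product of two character sums.
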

\begin{proof}
The phase in $W_{u_j,v_j}$ cancels under conjugation.
Using the Pauli commutation relations, we obtain
\[
\begin{aligned}
W_{u_j,v_j}^\dagger XW_{u_j,v_j}
&=(-1)^{d(u_j,v_j)}X
 =(-1)^{u_j^{\mathsf T}v_j}X,\\
W_{u_j,v_j}^\dagger ZW_{u_j,v_j}
&=(-1)^{a(u_j,v_j)}Z
 =(-1)^{u_j^{\mathsf T}Jv_j}Z,\\
W_{u_j,v_j}^\dagger YW_{u_j,v_j}
&=(-1)^{a(u_j,v_j)+d(u_j,v_j)}Y
 =(-1)^{u_j^{\mathsf T}(I_m+J)v_j}Y.
\end{aligned}
\]
Hence $f_A(u_j,v_j)=(-1)^{u_j^{\mathsf T}M_Av_j}$
for every $A\in\{X,Y,Z\}$. Moreover, each $2\times2$
block of $J$ and $I_m+J$ has determinant $1$ over
$\mathbb F_2$, so these matrices, as well as $I_m$,
are invertible.\\
\\
Substituting the expression for $f_A$ into the
definition of $\lambda_A$ gives
\[
\begin{aligned}
\lambda_A(\xi_j,\zeta_j)
&=2^{-2m}\sum_{u_j,v_j\in\mathbb F_2^m}
(-1)^{\xi_j^{\mathsf T}u_j+\zeta_j^{\mathsf T}v_j
+u_j^{\mathsf T}M_Av_j}\\
&=2^{-2m}\sum_{u_j\in\mathbb F_2^m}
(-1)^{\xi_j^{\mathsf T}u_j}
\sum_{v_j\in\mathbb F_2^m}
(-1)^{(M_A^{\mathsf T}u_j+\zeta_j)^{\mathsf T}v_j}.
\end{aligned}
\]
For any $t\in\mathbb F_2^m$,
\[
\sum_{v_j\in\mathbb F_2^m}(-1)^{t^{\mathsf T}v_j}
=
\begin{cases}
2^m,&t=0,\\
0,&t\neq0,
\end{cases}
\]
since all terms equal $1$ when $t=0$, while they
cancel in pairs when $t\neq0$.\\
\\
Thus the inner sum is nonzero only when
$M_A^{\mathsf T}u_j+\zeta_j=0$. Since $M_A$ is
invertible, this has the unique solution
$u_j=M_A^{-\mathsf T}\zeta_j$. Consequently,
\[
\lambda_A(\xi_j,\zeta_j)
=2^{-2m}\,2^m
(-1)^{\xi_j^{\mathsf T}M_A^{-\mathsf T}\zeta_j}
=2^{-m}(-1)^{\xi_j^{\mathsf T}M_A^{-\mathsf T}\zeta_j},
\]
and therefore $|\lambda_A(\xi_j,\zeta_j)|=2^{-m}=c$.\\
\\
Finally, for $A=I$, we have $f_I(u_j,v_j)=1$, so
\[
\lambda_I(\xi_j,\zeta_j)
=2^{-2m}
\left(\sum_{u_j\in\mathbb F_2^m}
(-1)^{\xi_j^{\mathsf T}u_j}\right)
\left(\sum_{v_j\in\mathbb F_2^m}
(-1)^{\zeta_j^{\mathsf T}v_j}\right).
\]
Applying the same cancellation identity to each factor
gives $1$ when $\xi_j=\zeta_j=0$ and $0$ otherwise.
\end{proof}
We extend the notation to the full system. For
$\mathbf s=(s_1,\dots,s_n)$, where
$s_j=(\xi_j,\zeta_j)\in\mathbb F_2^{2m}$, define
$Z_K(\mathbf s):=\bigotimes_{j=1}^n Z^{s_j}$.
For a data Pauli string $P=P_1\otimes\cdots\otimes P_n$,
define
\[
\lambda_P(\mathbf s):=
\prod_{j=1}^n\lambda_{P_j}(\xi_j,\zeta_j).
\]
\begin{corollary}\label{cor:coefficients}
Let $P\in\mathcal P_n$ and $w:=\operatorname{wt}(P)$. Then
\[
\max_{\mathbf s}|\lambda_P(\mathbf s)|=c^w,
\quad
\sum_{\mathbf s}\lambda_P(\mathbf s)^2=1,
\quad
\sum_{\mathbf s}|\lambda_P(\mathbf s)|=c^{-w}.
\]
\end{corollary}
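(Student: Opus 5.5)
The plan is to exploit the product structure of $\lambda_P(\mathbf s)$ over the $n$ key registers. Since $\lambda_P(\mathbf s)=\prod_j\lambda_{P_j}(\xi_j,\zeta_j)$ and the sum over $\mathbf s$ ranges over the product set $(\mathbb F_2^{2m})^n$, each of the three quantities factorizes:
\[
\max_{\mathbf s}|\lambda_P(\mathbf s)|=\prod_{j}\max_{s_j}|\lambda_{P_j}(s_j)|,\quad
\sum_{\mathbf s}\lambda_P(\mathbf s)^2=\prod_j\sum_{s_j}\lambda_{P_j}(s_j)^2,\quad
\sum_{\mathbf s}|\lambda_P(\mathbf s)|=\prod_j\sum_{s_j}|\lambda_{P_j}(s_j)|.
\]
(The maximum factorizes because the factors are nonnegative and the $s_j$ vary independently.) The statement therefore reduces to a single-register computation for $A\in\{I,X,Y,Z\}$, which I will read off from \Cref{lem:coefficients}.

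For $A=I$, \Cref{lem:coefficients} gives $\lambda_I(s_j)=\delta_{s_j,0}$. Hence the maximum, the sum of squares, and the sum of absolute values all equal $1=c^0$.

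For $A\in\{X,Y,Z\}$, \Cref{lem:coefficients} gives $|\lambda_A(s_j)|=c=2^{-m}$ for every one of the $2^{2m}$ labels $s_j$. So the maximum is $c$, the sum of squares is $2^{2m}\cdot2^{-2m}=1$, and the sum of absolute values is $2^{2m}\cdot2^{-m}=2^m=c^{-1}$.

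Multiplying over $j$, each non-identity factor contributes $c$, $1$, and $c^{-1}$ respectively to the three quantities, and each identity factor contributes $1$ to all three. Since there are exactly $w=\operatorname{wt}(P)$ non-identity factors, this yields $c^w$, $1$, and $c^{-w}$, as claimed.

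There is no real obstacle here. The only points needing care are justifying that the maximum factorizes over independent coordinates, and noting that the sum-of-squares identity is equivalently Parseval's identity for the $\pm1$-valued function $f_A$.
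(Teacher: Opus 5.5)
Your proof is correct and essentially matches the paper's: both use the product form $\lambda_P(\mathbf s)=\prod_j\lambda_{P_j}(\xi_j,\zeta_j)$ together with the single-register values from \Cref{lem:coefficients}. The only cosmetic difference is that the paper counts globally, noting that $\lambda_P$ is nonzero on exactly $c^{-2w}$ labels, each with magnitude $c^w$, whereas you factorize each of the three quantities register by register.
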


\begin{proof}
 At each nonidentity data position, every one of the $2^{2m}$ labels $s_j$ gives a nonzero coefficient of magnitude $c$. At each identity position only $s_j=0$ survives and that contributes a factor $1$. Consequently, $\lambda_P(\mathbf s)$ is nonzero for exactly $2^{2mw}=c^{-2w}$ values of $\mathbf s$ and every nonzero coefficient has magnitude $\left|\lambda_P(\mathbf s)\right|=c^w$. The maximum is therefore $c^w$, and
\[
\sum_{\mathbf s}\lambda_P(\mathbf s)^2
=c^{-2w}c^{2w}=1,
\qquad
\sum_{\mathbf s}|\lambda_P(\mathbf s)|
=c^{-2w}c^w=c^{-w}.
\]
\end{proof}
The next lemma gives an exact expression for the
Pauli expectations of the encoded state.
\begin{lemma}\label{lem:encoded-expectations}
    For every data state $\rho$, every data Pauli $P$, and every $\mathbf s$ we have
    \[
    \Tr[(Z_K(\mathbf s)\otimes P)\mathcal E(\rho)]=\lambda_P(\mathbf s)\Tr(P\rho).
    \]
\end{lemma}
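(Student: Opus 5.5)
The plan is to compute the expectation directly from the definition of $\mathcal E$, exploiting that the key register is classical and that $Z_K(\mathbf s)$ is diagonal in the key basis. First substitute the mixture:
\[
\Tr[(Z_K(\mathbf s)\otimes P)\mathcal E(\rho)]
=2^{-2mn}\sum_{\boldsymbol\kappa}\bra{\boldsymbol\kappa}Z_K(\mathbf s)\ket{\boldsymbol\kappa}\,\Tr\bigl(P\,W_{\boldsymbol\kappa}\rho W_{\boldsymbol\kappa}^\dagger\bigr).
\]
By the eigenvalue formula for $Z^{s_j}$ on computational-basis key states, the first factor is $\prod_{j}(-1)^{\xi_j^{\mathsf T}u_j+\zeta_j^{\mathsf T}v_j}$.

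Next, use cyclicity of the trace to move the mask onto the observable. This gives $\Tr(W_{\boldsymbol\kappa}^\dagger P W_{\boldsymbol\kappa}\rho)$. Since $W_{\boldsymbol\kappa}=\bigotimes_j W_{\kappa_j}$ and $P=\bigotimes_j P_j$, the conjugation factorizes site by site. The definition of $f_A$ then gives
\[
W_{\boldsymbol\kappa}^\dagger PW_{\boldsymbol\kappa}=\Bigl(\prod_j f_{P_j}(u_j,v_j)\Bigr)P,
\]
with $f_I\equiv1$. The phases $i^{ad}$ cancel between $W$ and $W^\dagger$, as already noted in \Cref{lem:coefficients}. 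This step is the only place where any care is needed: one must check that conjugation really yields $\pm P$ with the sign $f_{P_j}$ as defined, which is exactly how $f_A$ was introduced.

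Finally, pull $\Tr(P\rho)$ out of the sum. The remaining summand factorizes over $j$ into terms depending only on $(u_j,v_j)$, and the normalization $2^{-2mn}=\prod_j 2^{-2m}$ splits accordingly. The sum over the complete key is therefore a product of the single-register averages, each of which equals $\lambda_{P_j}(\xi_j,\zeta_j)$ by definition. The product is $\lambda_P(\mathbf s)$, which completes the identity. I expect no real obstacle. The main bookkeeping point is the factorization of the uniform sum over $(\mathbb F_2^{2m})^n$ into $n$ independent sums, and this holds because both the key-observable sign and the mask sign are products over sites.
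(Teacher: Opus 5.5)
Your proposal is correct and follows essentially the same route as the paper: substitute the key-diagonal mixture, read off the sign $(-1)^{\sum_j(\xi_j^{\mathsf T}u_j+\zeta_j^{\mathsf T}v_j)}$, use cyclicity and the conjugation identities to extract $\prod_j f_{P_j}(u_j,v_j)\Tr(P\rho)$, and factor the uniform key sum into the single-register averages defining $\lambda_{P_j}$. Your added remarks on the cancelling phases $i^{ad}$ and on the product structure that justifies the factorization are exactly the bookkeeping the paper leaves implicit.
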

\begin{proof}
By the definition of $\mathcal E$,
\[
\begin{aligned}
\Tr[(Z_K(\mathbf s)\otimes P)\mathcal E(\rho)]
&=2^{-2mn}\sum_{\boldsymbol\kappa}
(-1)^{\sum_j(\xi_j^{\mathsf T}u_j+\zeta_j^{\mathsf T}v_j)}
\Tr[PW_{\boldsymbol\kappa}\rho W_{\boldsymbol\kappa}^\dagger]\\
&=2^{-2mn}\sum_{\boldsymbol\kappa}
(-1)^{\sum_j(\xi_j^{\mathsf T}u_j+\zeta_j^{\mathsf T}v_j)}
\left(\prod_{j=1}^n f_{P_j}(u_j,v_j)\right)\Tr(P\rho)\\
&=\lambda_P(\mathbf s)\Tr(P\rho).
\end{aligned}
\]
The second equality follows from cyclicity of the trace
and the conjugation identities, while the last follows
by factoring the sum over the individual key blocks
and using the definition of $\lambda_P(\mathbf s)$.
\end{proof}
The stored key allows the Pauli mask to be undone.
Define the decoding map by
\[
\mathcal D(\omega):=\sum_{\boldsymbol\kappa}W^\dagger_{\boldsymbol\kappa}\bra{\boldsymbol\kappa}\omega\ket{\boldsymbol\kappa}_KW_{\boldsymbol\kappa}.
\]
Operationally, $\mathcal D$ measures the key register in the computational basis, applies the inverse Pauli mask associated with the observed key, and discards the key register.\\
\\
It is straightforward to see that $\mathcal D(\mathcal E(\rho))=\rho$.

\begin{theorem}\label{thm:QMAhard}
$\pauliObsCon$  is $\QMA$-hard.
\end{theorem}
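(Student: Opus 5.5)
The plan is to reduce from $\text{FullPauli}\ObsCon_{2}$ with $\alpha_{\mathsf{FP}}=0$ and $\beta_{\mathsf{FP}}\ge 1/\poly(n)$, using \Cref{prop:fullpauli-hardness}. Given targets $\{y_P\}_{P\in\mathcal P_n^{[1,2]}}$, we output an all-Pauli instance on the $N=n(2m+1)$ qubits carrying the spreading gadget. The targets are defined as follows.
\begin{itemize}
\item For every Pauli with an $X$ or $Y$ factor on some key qubit, the target is $0$.
\item For $Z_K(\mathbf s)\otimes P$ with $\operatorname{wt}(P)\le 2$, the target is $\lambda_P(\mathbf s)y_P$, with $y_I:=1$.
\item For $Z_K(\mathbf s)\otimes P$ with $\operatorname{wt}(P)\ge 3$, the target is $0$.
\end{itemize}
By \Cref{lem:coefficients}, each $\lambda_P(\mathbf s)$ is an explicit product of signs times $c^{w}$, so the targets are succinctly computable. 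The parameter $m$ is even and of order $O(\log n)$, so that $c$ is inverse polynomial and $N$ is polynomial.

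\textbf{Completeness.} Take the honest state $\mathcal E(\rho)$, where $\rho$ matches all $1$- and $2$-local targets exactly. By \Cref{lem:encoded-expectations}, every weight-$\le 2$ target is matched exactly, and every key-$X/Y$ target is matched exactly. For weight $w\ge3$, the error is $|\lambda_P(\mathbf s)\Tr(P\rho)|\le c^{w}\le c^3$ by \Cref{cor:coefficients}. We therefore set $\alpha:=c^3$.

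\textbf{Soundness.} Suppose $\omega$ is an $N$-qubit state within $\beta$ of every target. Decode it as $\rho':=\mathcal D(\omega)$. For a data Pauli $P$ of weight $\le2$, we need $\Tr(P\rho')$ in terms of Pauli expectations of $\omega$. The operator $\mathcal D^\dagger(P)=\sum_{\boldsymbol\kappa}\ket{\boldsymbol\kappa}\bra{\boldsymbol\kappa}\otimes f_P(\boldsymbol\kappa)P$ is diagonal on the key. Expanding the sign function $f_P$ in the Fourier ($Z$-string) basis gives exactly $\mathcal D^\dagger(P)=\sum_{\mathbf s}\lambda_P(\mathbf s)\,Z_K(\mathbf s)\otimes P$, using the same character computation as in \Cref{lem:coefficients} and the identity $\lambda_P(\mathbf s)=2^{-2mn}\sum_{\boldsymbol\kappa}(-1)^{\mathbf s\cdot\boldsymbol\kappa}f_P(\boldsymbol\kappa)$. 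Hence
\[
\Tr(P\rho')-y_P=\sum_{\mathbf s}\lambda_P(\mathbf s)\bigl(\Tr[(Z_K(\mathbf s)\otimes P)\omega]-\lambda_P(\mathbf s)y_P\bigr)+y_P\Bigl(\sum_{\mathbf s}\lambda_P(\mathbf s)^2-1\Bigr).
\]
The last term vanishes by \Cref{cor:coefficients}, and the first is bounded by $\beta\sum_{\mathbf s}|\lambda_P(\mathbf s)|=\beta c^{-w}\le\beta c^{-2}$. Only polynomially many physical constraints appear in the derivation, but the sum ranges over $c^{-2w}$ labels. The bound holds regardless, since each label contributes at most $\beta|\lambda_P(\mathbf s)|$. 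So if $\beta c^{-2}<\beta_{\mathsf{FP}}$, then $\rho'$ satisfies all source constraints to error below $\beta_{\mathsf{FP}}$, which contradicts the NO case.

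\textbf{Parameters, the main obstacle.} We need $\alpha=c^3<\beta<c^2\beta_{\mathsf{FP}}$ with an inverse-polynomial gap. Choosing $\beta:=c^2\beta_{\mathsf{FP}}/2$ requires $c<\beta_{\mathsf{FP}}/4$, say $c\le\beta_{\mathsf{FP}}/8$. This holds for even $m=\lceil\log_2(8/\beta_{\mathsf{FP}})\rceil+O(1)=O(\log n)$. The gap is then $\beta-\alpha\ge c^2\beta_{\mathsf{FP}}/4\ge1/\poly(n)$. The delicate points to verify carefully are the following.
\begin{itemize}
\item The exact Fourier expansion of $\mathcal D^\dagger(P)$ must be checked, including the $i^{ad}$ phase cancelling under conjugation.
\item The weight-$3$ leakage $c^3$ must be small relative to the amplification factor $c^{-2}$. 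The one-power margin in $c$ is exactly what the gadget buys, so this step drives the choice of $m$.
\item The key-$X/Y$ targets and the $w\ge3$ targets are never used in soundness, so they pose no issue there.
\end{itemize}
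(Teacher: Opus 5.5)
Your proposal is correct and is essentially the paper's proof. It uses the same key-masked encoding $\mathcal E$, the same target function, and the same thresholds $\alpha=c^3$ and $\beta=c^2\beta_{\mathsf{FP}}/2$ with $c\le\beta_{\mathsf{FP}}/8$. Your Fourier expansion $\mathcal D^\dagger(P)=\sum_{\mathbf s}\lambda_P(\mathbf s)\,Z_K(\mathbf s)\otimes P$ is exactly the paper's soundness identity $\sum_{\mathbf s}\lambda_P(\mathbf s)\bra{\boldsymbol\kappa}Z_K(\mathbf s)\ket{\boldsymbol\kappa}=f_P(\boldsymbol\kappa)$.

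The only difference is cosmetic. You specialize to $\alpha_{\mathsf{FP}}=0$, whereas the paper allows $\alpha_{\mathsf{FP}}\le(\beta_{\mathsf{FP}}/32)^2$ and absorbs the resulting $c^w\alpha_{\mathsf{FP}}\le c^3$ error into the same completeness slack.
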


\begin{proof}
By \Cref{prop:fullpauli-hardness}, it suffices to reduce
from $n$-qubit $\text{FullPauli}\ObsCon_2$ instances
whose parameters satisfy
\[
0\le\alpha_{\mathsf{FP}}<\beta_{\mathsf{FP}}\le2,\quad
\alpha_{\mathsf{FP}}\le
\left(\frac{\beta_{\mathsf{FP}}}{32}\right)^2,\quad
\beta_{\mathsf{FP}}-\alpha_{\mathsf{FP}}
\ge\frac{1}{\poly(n)}.
\]
Choose $m$ to be the smallest positive even integer
such that $2^{-m}\le\beta_{\mathsf{FP}}/8$, and set
$c:=2^{-m}$. The output instance acts on
$N=n(2m+1)$ qubits and has thresholds
\[
\alpha_{\mathrm{AP}}:=c^3,\qquad
\beta_{\mathrm{AP}}:=\frac{\beta_{\mathsf{FP}}c^2}{2}
\]
with a succinct target function
$x:\mathcal P_N\rightarrow[-1,1]$ defined below.
Recall the encoding $\mathcal E$, the coefficients
$\lambda_P(\mathbf s)$, and the decoder $\mathcal D$
defined earlier. Every output Pauli can be uniquely
written as $Q=A_K\otimes P$, where $A_K$ acts on the
key register and $P$ acts on the data register. Define
\[
x(A_K\otimes P)=
\begin{cases}
0,
&A_K\;\text{contains}\;X\;\text{or}\;Y,\\
\delta_{\mathbf s,0},
&A_K=Z_K(\mathbf s),\ P=I,\\
\lambda_P(\mathbf s)y_P,
&A_K=Z_K(\mathbf s),\ 1\le\operatorname{wt}(P)\le2,\\
0,
&A_K=Z_K(\mathbf s),\ \operatorname{wt}(P)\ge3.
\end{cases}
\]
Since $m$ is the smallest positive even integer such that
$2^{-m}\le\beta_{\mathsf{FP}}/8$, we have
$\beta_{\mathsf{FP}}/32<c\le\beta_{\mathsf{FP}}/8$ and
$m=O(\log(8/\beta_{\mathsf{FP}}))$.
As $\beta_{\mathsf{FP}}\ge1/\poly(n)$, it follows that
$m=O(\log n)$, and hence $N=n(2m+1)=\poly(n)$.
Moreover, given an output Pauli $Q=A_K\otimes P$,
the value $x(Q)$ can be computed in polynomial time
by inspecting the key part $A_K$, computing
$\operatorname{wt}(P)$, and evaluating
$\lambda_P(\mathbf s)
=\prod_{j=1}^n\lambda_{P_j}(\xi_j,\zeta_j)$
using the closed form from \Cref{lem:coefficients}.
Finally,
\[
\beta_{\mathrm{AP}}-\alpha_{\mathrm{AP}}
=c^2\left(\frac{\beta_{\mathsf{FP}}}{2}-c\right)
\ge3c^3.
\]
Since $c>\beta_{\mathsf{FP}}/32$ and
$\beta_{\mathsf{FP}}\ge1/\poly(n)$, the promise gap
is inverse polynomial in $N$. Moreover,
\[
\alpha_{\mathsf{FP}}
\le\left(\frac{\beta_{\mathsf{FP}}}{32}\right)^2
<c^2.
\]
 \textbf{Completeness:} Assume that the source instance is
a YES instance. Then there exists an $n$-qubit state $\rho$
such that
\[
|\Tr(P\rho)-y_P|\le\alpha_{\mathsf{FP}},
\quad \forall P\in\mathcal P_n^{[1,2]}.
\]
We claim that the encoded state $\omega:=\mathcal E(\rho)$
is a valid witness for our output instance.\\
\\
\textbf{Case 1}: The key part contains $X$ or $Y$.
This gives us
\[
\Tr[(A_K\otimes P)\mathcal E(\rho)]
=
2^{-2mn}\sum_{\boldsymbol\kappa}
\bra{\boldsymbol\kappa}A_K\ket{\boldsymbol\kappa}
\Tr(PW_{\boldsymbol\kappa}\rho
W_{\boldsymbol\kappa}^{\dagger})
=0,
\]
because the key register is diagonal in the computational
basis. By definition, for this type of $Q$ we get $x(Q)=0$,
so the target is matched exactly:
$|\Tr(Q\omega)-x(Q)|=0$.\\
\\
\textbf{Case 2}: $A_K=Z_K(\mathbf s)$.
Let $w:=\operatorname{wt}(P)$. By \Cref{lem:encoded-expectations},
\[
\Tr[(Z_K(\mathbf s)\otimes P)\omega]
=\lambda_P(\mathbf s)\Tr(P\rho).
\]
Moreover, \Cref{cor:coefficients} gives
$|\lambda_P(\mathbf s)|\le c^w$.
We distinguish the following cases according to $w$.\\
\\
\textbf{Case 2a}: $P=I$. Then
\[
\Tr[(Z_K(\mathbf s)\otimes I)\omega]
=\lambda_I(\mathbf s)\Tr(I\rho)
=\delta_{\mathbf s,0}
=x(Z_K(\mathbf s)\otimes I).
\]
\textbf{Case 2b}: $1\le w\le2$. The target is
$x(Z_K(\mathbf s)\otimes P)=\lambda_P(\mathbf s)y_P$.
Thus,
\[
\begin{aligned}
|\Tr[(Z_K(\mathbf s)\otimes P)\omega]
-x(Z_K(\mathbf s)\otimes P)|&=
|\lambda_P(\mathbf s)|\,|\Tr(P\rho)-y_P|\\
&\le c^w\alpha_{\mathsf{FP}}
\le c^3=\alpha_{\mathrm{AP}}.
\end{aligned}
\]
\textbf{Case 2c}: $w\ge3$. Here the target is defined
to be zero, and
\[
|\Tr[(Z_K(\mathbf s)\otimes P)\omega]|
=|\lambda_P(\mathbf s)|\,|\Tr(P\rho)|
\le c^w\le c^3=\alpha_{\mathrm{AP}}.
\]
Hence
$|\Tr[(Z_K(\mathbf s)\otimes P)\omega]
-x(Z_K(\mathbf s)\otimes P)|\le\alpha_{\mathrm{AP}}$
for all $P$, and so a YES instance is mapped to
a YES instance.\\
\\
\textbf{Soundness:} Assume that the source instance is a NO instance and so
\[
\forall\rho\;\exists P\in\mathcal P_n^{[1,2]}
\;\text{s.t.}\; |\Tr(P\rho)-y_P|\ge\beta_{\mathsf{FP}}.
\]
Suppose, toward contradiction, that there exists an $N$-qubit state $\omega$ satisfying
\[
\left|\Tr(Q\omega)-x(Q)\right|<\beta_{\mathrm{AP}},
\quad \forall Q\in\mathcal P_N.
\]
Define the decoded state $\tau:=\mathcal D(\omega)$.
For a data Pauli string $P=P_1\otimes\cdots\otimes P_n$,
define $f_P(\boldsymbol\kappa):=
\prod_{j=1}^n f_{P_j}(u_j,v_j)$.
By the definition of $\lambda_P(\mathbf s)$ and the
cancellation identity used in the proof of
\Cref{lem:coefficients},
\[
\begin{aligned}
&\sum_{\mathbf s}\lambda_P(\mathbf s)
\bra{\boldsymbol\kappa}Z_K(\mathbf s)
\ket{\boldsymbol\kappa}\\
&\quad=2^{-2mn}\sum_{\boldsymbol\kappa'}
f_P(\boldsymbol\kappa')
\sum_{\mathbf s}
\bra{\boldsymbol\kappa'}Z_K(\mathbf s)
\ket{\boldsymbol\kappa'}
\bra{\boldsymbol\kappa}Z_K(\mathbf s)
\ket{\boldsymbol\kappa}\\
&\quad=2^{-2mn}\sum_{\boldsymbol\kappa'}
f_P(\boldsymbol\kappa')\,2^{2mn}
\delta_{\boldsymbol\kappa',\boldsymbol\kappa}
=f_P(\boldsymbol\kappa).
\end{aligned}
\]
Write $\omega_{\boldsymbol\kappa}
:=\bra{\boldsymbol\kappa}\omega
\ket{\boldsymbol\kappa}_K$.
For any data Pauli $P$, the definition of the decoder gives
\[
\Tr(P\tau)
=\sum_{\boldsymbol\kappa}
\Tr[PW_{\boldsymbol\kappa}^\dagger
\omega_{\boldsymbol\kappa}W_{\boldsymbol\kappa}]
=\sum_{\boldsymbol\kappa}
f_P(\boldsymbol\kappa)\Tr[P\omega_{\boldsymbol\kappa}].
\]
Substituting the identity above gives
\[
\begin{aligned}
\Tr(P\tau)
&=\sum_{\mathbf s}\lambda_P(\mathbf s)
\sum_{\boldsymbol\kappa}
\bra{\boldsymbol\kappa}Z_K(\mathbf s)
\ket{\boldsymbol\kappa}
\Tr[P\omega_{\boldsymbol\kappa}]\\
&=\sum_{\mathbf s}\lambda_P(\mathbf s)
\Tr[(Z_K(\mathbf s)\otimes P)\omega].
\end{aligned}
\]
Now fix any source Pauli $P\in\mathcal P_n^{[1,2]}$,
and let $w:=\operatorname{wt}(P)$.
For every $\mathbf s$, the corresponding output target is
$x(Z_K(\mathbf s)\otimes P)=\lambda_P(\mathbf s)y_P$.
By \Cref{cor:coefficients},
$\sum_{\mathbf s}\lambda_P(\mathbf s)^2=1$
and $\sum_{\mathbf s}|\lambda_P(\mathbf s)|=c^{-w}$.
Therefore,
\[
\begin{aligned}
\left|\Tr(P\tau)-y_P\right|
&=\left|
\sum_{\mathbf s}\lambda_P(\mathbf s)
\Tr[(Z_K(\mathbf s)\otimes P)\omega]
-y_P\sum_{\mathbf s}\lambda_P(\mathbf s)^2
\right|\\
&=\left|
\sum_{\mathbf s}\lambda_P(\mathbf s)
\left(
\Tr[(Z_K(\mathbf s)\otimes P)\omega]
-x(Z_K(\mathbf s)\otimes P)
\right)
\right|\\
&\le\sum_{\mathbf s}
\left|
\lambda_P(\mathbf s)
\left(
\Tr[(Z_K(\mathbf s)\otimes P)\omega]
-x(Z_K(\mathbf s)\otimes P)
\right)
\right|\\
&<\beta_{\mathrm{AP}}\sum_{\mathbf s}|\lambda_P(\mathbf s)|
=c^{-w}\beta_{\mathrm{AP}}
\le c^{-2}\beta_{\mathrm{AP}}
=\frac{\beta_{\mathsf{FP}}}{2}
<\beta_{\mathsf{FP}}.
\end{aligned}
\]
This is a contradiction and so a NO instance is mapped
to a NO instance.
\end{proof}

\subsection{\texorpdfstring{$\pauliObsCon$ is contained in $\p^{\PP}$}{AllPauliObsConexp is contained in PPP}}\label{sscn:inPPP}

We now turn to upper bounds for
$\pauliObsCon$. The general problem $\ObsCon_{\exp}$ is
$\qc$-complete \cite{KRMEG25}, so its all-Pauli
restriction is trivially contained in $\qc$. The same work gives the upper bound
$\qc\subseteq\mathsf{PSPACE}$. In this section we improve this bound by showing $\qc\subseteq \p^{\PP}$.
\begin{theorem}\label{thm:qcinP^PP}
    $\qc\subseteq \p^{\PP}.$
\end{theorem}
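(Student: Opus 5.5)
The plan follows the outline in the introduction: adapt Aaronson's postselected-learning method. Fix a $\qc$ verifier $V_x$ with quantum proof on $q$ qubits and classical proof $y\in\{0,1\}^{r}$. First amplify by parallel repetition, using $k=\poly(n)$ independent copies of the quantum proof and majority vote, so that completeness is $1-2^{-\Omega(k)}$ and soundness is $2^{-\Omega(k)}$. Since the classical $y$ is universally quantified and the rejecting challenge may depend on $\rho$, soundness against entangled proofs across copies needs the standard $\QMA$-style argument with the same $y$ fed to every copy. This costs $q'=kq$ proof qubits. Let $\Pi_y$ denote the accepting POVM element, acting on the proof register, for challenge $y$, and take the maximally mixed state $I/2^{q'}$ as the candidate proof.

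The $\p^{\PP}$ machine proceeds in rounds. It maintains a list $y_1,\dots,y_t$ of challenges and the unnormalized conditional state
\[
\sigma_t:=\Pi_{y_t}^{1/2}\cdots\Pi_{y_1}^{1/2}\,\frac{I}{2^{q'}}\,\Pi_{y_1}^{1/2}\cdots\Pi_{y_t}^{1/2}.
\]
In each round it searches for a $y$ with $\Tr(\Pi_y\sigma_t)<\tfrac12\Tr(\sigma_t)$. If one is found, it is appended and $\Tr(\sigma_{t+1})<\tfrac12\Tr(\sigma_t)$. If $\Tr(\sigma_t)=0$, the machine rejects. If no such $y$ exists, or after $T=O(q')$ rounds, it accepts. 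Quantities such as $\Tr(\Pi_y\sigma_t)$ and comparisons between two such quantities are $\mathsf{GapP}$ differences, because the circuits have polynomial size and run on the sequentially postselected computation. Hence a $\PP$ oracle decides them.

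\textbf{Correctness.} On YES inputs let $\rho^*$ be the honest proof, which we may take to be pure, so $I/2^{q'}\succeq 2^{-q'}\rho^*$. Each $\Pi_{y_i}$ accepts $\rho^*$ with probability at least $1-\epsilon$, where $\epsilon=2^{-\Omega(k)}$. By the quantum union bound \cite{Gao15}, $\Tr(\sigma_t)\ge 2^{-q'}(1-4t\epsilon)$. Since $\Tr(\sigma_t)\le 2^{-t}$, we get $t\le q'+1$ provided $4T\epsilon<1/2$, which holds because $k\gg\log q'$. So at most $O(q')$ updates occur on YES inputs. On NO inputs, every normalized conditional state $\sigma_t/\Tr\sigma_t$ is a valid proof, so some $y$ accepts it with probability at most $2^{-\Omega(k)}<1/2$. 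A challenge is therefore always available, and after $T>q'+1$ rounds the trace would have to fall below the YES-case lower bound. Set $T:=q'+2$: the machine rejects if it is still finding challenges at round $T$, or if the trace reaches $0$.

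\textbf{Main obstacle.} The hard step is finding a rejecting $y$ with polynomially many $\PP$ queries. The existence question is an $\exists y$ over a $\PP$ predicate, which is not obviously in $\p^{\PP}$. I would use bit-by-bit prefix search with a statistic that is an \emph{average} over $y$, since averages are $\mathsf{GapP}$-computable. Define $g(y)$ as the probability that, running the verifier with challenge $y$ on $R$ independent copies of the normalized $\sigma_t$, at least $3/4$ of the runs reject; this is a postselected quantity with common denominator $\Tr(\sigma_t)^R$. If some $y^*$ rejects with probability at least $1-2^{-\Omega(k)}$, then $g(y^*)\ge 1/2$. Any $y$ accepted with probability at least $1/2$ has $g(y)\le e^{-\Omega(R)}$. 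Choosing $R\gg r$, one good $y^*$ makes the average of $g$ over any prefix class containing it at least $2^{-r}/2$, exceeding $e^{-\Omega(R)}$. The search fixes bits one at a time and keeps the half whose sum of $g$ is larger; this comparison is one $\PP$ query, and the common factor $\Tr(\sigma_t)^{R}$ cancels. The invariant is that the retained class has average above the threshold, so it contains some $y$ with acceptance below $1/2$. A final $\PP$ query confirms $\Tr(\Pi_y\sigma_t)<\tfrac12\Tr\sigma_t$. On YES inputs, failure of the search is treated as ``no challenge'' and the machine accepts.

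I expect the delicate points to be making this invariant precise, since the maximum average half must still contain a bad $y$ once averages are compared against the threshold, and expressing the $R$-fold postselected quantities as $\mathsf{GapP}$ functions of polynomial size.
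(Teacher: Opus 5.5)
Your architecture is the paper's. You start from $I/2^{q}$ and repeatedly append a challenge that the current conditional state accepts with probability below $1/2$, stopping after $T=q+2$ rounds. On YES inputs you lower-bound the survival probability using $I\succeq\rho^*$ and Gao's union bound. You find each challenge by prefix search on an averaged repeated-run statistic, whose common factor $\Tr(\sigma_t)^R$ cancels in the $\PP$ comparisons. The invariant you flag is immediate, exactly as in the paper: the average over the retained half is at least the average over its parent, hence at least $2^{-r}g(y^*)$, so the final singleton is strongly rejecting.

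The genuine gap is your amplification step; the paper instead cites the $\qc$ amplification lemma of \cite{KRMEG25}. Feeding the \emph{same} $y$ to all $k$ copies and taking a majority does not amplify soundness against correlated proofs. The reason is that, unlike in $\QMA$, the measurement is not fixed before the proof is chosen. Here is a counterexample. Measure a single copy to obtain a $3$-subset $t\subseteq[9]$, and accept challenge $y\in[9]$ iff $y\in t$. Since $\sum_y\Pr[y\in t]\le 3$, every proof has a challenge accepting it with probability at most $1/3$. Now take $k=10$ and the proof that picks a uniformly random $5$-set $F=\{f_0,\dots,f_4\}\subseteq[9]$ and places $\{f_j,f_{j+1},f_{j+2}\}$ (indices mod $5$) on copies $j$ and $j+5$. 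Every $y\in F$ then lies in $6$ of the $10$ triples, so every challenge is accepted with probability $5/9>1/2$. Your conditional states $\sigma_t/\Tr\sigma_t$ are correlated across copies. So your NO-case claim, that some $y$ accepts them with probability $2^{-\Omega(k)}<1/2$, is unjustified. If it fails, the search finds no halving challenge and your machine accepts a NO instance.

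The repair is to give each copy its own challenge, $y=(y_1,\dots,y_k)$. By the minimax theorem, single-copy soundness yields a distribution $\mu$ with $\|\mathbb E_{y\sim\mu}\Pi_y\|_\infty\le 1/3$. For i.i.d.\ $y_i\sim\mu$, the averaged majority-acceptance operator is a binomial tail in the commuting tensor factors $\mathbb E_\mu\Pi_y$, so it has norm $2^{-\Omega(k)}$. Hence every (entangled) proof has a fixed tuple achieving acceptance $2^{-\Omega(k)}$, while completeness is unchanged. Alternatively, same-$y$ majority still gives soundness below $2/3$ by Markov's inequality, and you could move the halving threshold to $5/6$.

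Three smaller points:
\begin{enumerate}
\item Define the conditional state operationally, as the paper does: run the verifier on the proof and a fresh workspace, measure, uncompute, and leave old workspaces untouched. Avoid the Lüders square roots $\Pi_y^{1/2}$, which are not obviously realized by polynomial-size circuits. The operational version is what the $\mathsf{PQP}=\PP$ comparisons compute and what Gao's projector bound covers.
\item Your first description says the machine accepts after $T$ rounds; it must reject, as you state later.
\item You cannot assume $\rho^*$ is pure in mixed-proof $\qc$, but you do not need to.
\end{enumerate}
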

\noindent The proof adapts the postselection-based halving argument
used by Aaronson to simulate quantum advice
\cite[Theorems~3.4 and~3.5]{Aar05}. We implement the search for a halving challenge using
a $\mathsf{PP}$ oracle.\\
\\
Fix $L\in\qc$ (see \Cref{def:qc-cs}). By the weak amplification of $\qc$
\cite[Lemma~3.13]{KRMEG25}, for any polynomial $r=r(n)$ we obtain a verifier $V_x$
with error $\varepsilon=2^{-r(n)}$,
acting on a $q$-qubit proof and an $m$-bit challenge,
where $q,m=\poly(n,r)$. The verifier satisfies
\[
\begin{aligned}
    x\in L_{\text{yes}}
    &\implies
    \exists\rho\;\;\text{such that}\;\;\forall y:
    \Pr[V_x(\rho,y)=1]\ge 1-\varepsilon,\\
    x\in L_{\text{no}}
    &\implies
    \forall \rho\;\;\exists y\;\;\text{such that}\;\;
    \Pr[V_x(\rho,y)=1]\le\varepsilon.
\end{aligned}
\]
Choose $r$ sufficiently large that
$\varepsilon\le 1/[8(q+2)]$, and set $T:=q+2$.\\
\\
For each challenge $y$, we perform a two-outcome projective
measurement on the witness register and a fresh workspace
initialized to $\ket{0}$: we run the verifier circuit,
measure the output qubit, and apply the inverse circuit.
Previously used workspaces are left untouched.\\
\\
Starting from $I/2^q$ on the proof register, let $p_b$
be the probability that all projective measurements in a sequence
$b=(y_1,\ldots,y_t)$ accept. Whenever $p_b>0$, let $\rho_b$
be the reduced state of the witness register conditioned on this event.
Since the next test uses a fresh workspace,
\[
p_{b\circ y}
=
p_b\,\Pr[V_x(\rho_b,y)=1].
\]
We call $y$ a halving challenge for $b$ if
 $\Pr[V_x(\rho_b,y)=1]<1/2$. Thus a halving challenge satisfies
\[
p_{b\circ y}<\frac12p_b.
\]
We now show that, whenever $p_b>0$, a halving challenge
can be found on NO instances using polynomially many
queries to a $\mathsf{PP}$ oracle.
\begin{lemma}\label{lem:PPoracle_proced}
There is a deterministic polynomial-time procedure with
access to a $\mathsf{PP}$ oracle that, given $x$ and a
sequence $b$ of at most $T$ challenges with $p_b>0$,
returns either a halving challenge for $b$ or $\perp$.
If $x\in L_{\mathrm{no}}$, the procedure always returns
a halving challenge.
\end{lemma}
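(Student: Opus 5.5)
The plan follows the sketch in the techniques section: define a statistic over challenges that can be compared by $\mathsf{PP}$ queries, and fix the challenge one bit at a time. Fix $b=(y_1,\ldots,y_t)$ with $t\le T$ and $p_b>0$, and let $k=\poly(n)$ be a repetition count. For each challenge $y$ consider the following experiment. Take $k$ independent copies of the procedure producing $\rho_b$: start from $I/2^q$, apply the $t$ projective tests, and postselect on acceptance. Run $V_x(\cdot,y)$ on each copy. Let $g(y)$ be the probability that at least $3k/4$ of these runs reject. Its unnormalized version is $\tilde g(y):=p_b^k\,g(y)$, which is the probability that all $k$ copies pass $b$ and at least $3k/4$ of them reject $y$. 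This quantity is a $\mathsf{GapP}$-type acceptance probability of a polynomial-size quantum circuit. The maximally mixed start is simulated by random classical proof bits, and no postselection is required. By the standard $\mathsf{PQP}=\mathsf{PP}$ / $\mathsf{GapP}$ closure arguments, comparisons of the form $\sum_{y\in Y_0}\tilde g(y)\ \text{vs}\ \sum_{y\in Y_1}\tilde g(y)$ can be decided with a $\mathsf{PP}$ query, where $Y_0,Y_1$ are the challenges extending a fixed prefix by $0$ or $1$. Because the common factor $p_b^k$ cancels, the sign of the comparison is the same as for $g$.

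\textbf{Step 1 (separation).} If $\Pr[V_x(\rho_b,y)=1]\ge 1/2$, a Chernoff bound gives $g(y)\le e^{-ck}$. If $y^\ast$ is strongly rejecting, meaning $\Pr[V_x(\rho_b,y^\ast)=1]\le\varepsilon\le 1/8$, then $g(y^\ast)\ge 1-e^{-ck}$. On a NO instance the soundness condition, applied to the state $\rho=\rho_b$, supplies such a $y^\ast$. Choosing $k\gg m$ gives $g(y^\ast)>2^m e^{-ck}$, so $y^\ast$ alone outweighs the combined contribution of all non-halving challenges.

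\textbf{Step 2 (bit-by-bit search).} Keep a prefix $z$ and use the invariant that the average (equivalently the sum) of $g$ over challenges extending $z$ is greater than $e^{-ck}$. At each step, extend $z$ by the bit whose subtree has the larger sum of $\tilde g$, using one $\mathsf{PP}$ query per bit. The larger half carries at least half the total mass and has half as many leaves, so its average does not decrease and the invariant is preserved. The invariant holds initially on NO instances by Step 1: the average over all $2^m$ challenges is at least $2^{-m}g(y^\ast)>e^{-ck}$. At the leaf, the resulting $y$ satisfies $g(y)>e^{-ck}$, so $y$ is not among the challenges accepted with probability $\ge 1/2$, and therefore it is halving.

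\textbf{Step 3 (verification).} The procedure never outputs a non-halving challenge. On YES instances the invariant may fail, so make one last $\mathsf{PP}$ query asking whether $p_{b\circ y}<\tfrac12 p_b$. This compares two $\mathsf{GapP}$ quantities, namely $2p_{b\circ y}$ against $p_b$. If the answer is yes, output $y$; otherwise output $\perp$.

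The main obstacle is showing that $\tilde g$, and $p_{b\circ y}$ versus $p_b/2$, are exactly expressible as differences of $\#\mathsf{P}$ functions of polynomial size. The measure-then-uncompute tests with fresh workspaces and the threshold counting over $k$ copies have to fit into a single quantum circuit whose acceptance amplitudes are handled by the standard Feynman path-sum $\mathsf{GapP}$ representation. I would handle this by writing the whole experiment as one circuit with intermediate measurements deferred, and invoking closure of $\mathsf{GapP}$ under polynomial sums and products, as in Aaronson's proof of $\mathsf{PostBQP}=\mathsf{PP}$. This gives strict comparisons decidable in $\mathsf{PP}$, since $\mathsf{PP}$ is closed under complement and can decide $f>h$ for $\mathsf{GapP}$ functions $f,h$.
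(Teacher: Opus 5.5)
Your proposal is correct and follows essentially the same route as the paper. Your $g(y)$ and $\tilde g(y)=p_b^k g(y)$ are the paper's $r_b(y)$ and $S_b(\cdot)$: threshold $K/4$ acceptances over $K$ copies of $\rho_b$, with $p_b^K$ cancelling in each comparison. The bit-by-bit $\mathsf{PP}$-guided descent keeping the subtree average non-decreasing, the $2^{-m}(1-\delta)>\delta$ separation on NO instances, and the final $\mathsf{PP}$ check $p_{b\circ y}<\tfrac12 p_b$ all match as well. The only cosmetic difference is how comparisons are made: you go through $\mathsf{GapP}$ path sums, while the paper invokes $\mathsf{PQP}=\mathsf{PP}$ with a mixing circuit of acceptance probability $\tfrac12+\tfrac{p_1-p_2}{2}$. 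That makes your anticipated ``main obstacle'' a one-line citation.
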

\begin{proof}
Fix a sequence $b$ with $p_b>0$. We use $\mathsf{PQP}=\mathsf{PP}$ \cite[Theorem~7]{Watrous08}, where $\mathsf{PQP}$ is defined as $\mathsf{PP}$ with polynomial-time quantum computations in place of classical probabilistic ones. Hence a $PP$ oracle can compare the acceptance probabilities of two polynomial-size quantum circuits. Indeed, if $C_1$ and $C_2$ have acceptance probabilities $p_1$ and $p_2$, respectively, consider the circuit that with probability $1/2$ runs $C_1$ and accepts iff $C_1$ accepts, and with probability $1/2$ runs $C_2$ and accepts iff $C_2$ rejects. Its acceptance probability is 
\[
\frac{1}{2}+\frac{p_1-p_2}{2},
\]
which is greater than $1/2$ exactly when $p_1>p_2$.\\
\\
On NO instances, amplified soundness guarantees a challenge
$y$ with $\Pr[V_x(\rho_b,y)=1]\le\varepsilon\le1/8$,
whereas every non-halving challenge has acceptance
probability at least $1/2$. We separate these two ranges
by checking whether at most a quarter of the runs accept
in repeated independent executions of the verifier.\\
\\
Set $\delta:=2^{-(m+3)}$ and $K:=32(m+3)$.
Let $X_y$ count the acceptances in $K$ independent runs
of $V_x(\rho_b,y)$, each on a separate copy of $\rho_b$,
and define
\[
r_b(y):=\Pr[X_y\le K/4].
\]
Since
$\mathbb{E}[X_y]=K\Pr[V_x(\rho_b,y)=1]$,
Hoeffding's inequality gives
\[
\begin{aligned}
\Pr[V_x(\rho_b,y)=1]\le\frac18
&\implies
1-r_b(y)
\le e^{-2K(1/4-1/8)^2}
=e^{-K/32}\le\delta,\\
\Pr[V_x(\rho_b,y)=1]\ge\frac12
&\implies
r_b(y)
\le e^{-2K(1/2-1/4)^2}
=e^{-K/8}\le\delta.
\end{aligned}
\]
Consequently, $r_b(y)>\delta$ guarantees that $y$ is
a halving challenge.\\
\\
We now construct the circuits used to choose the bits
of $y$. For a prefix $u\in\{0,1\}^{\ell}$ with
$0\le\ell\le m$, define $C_{b,u}$ as follows.
It prepares $K$ independent copies of $I/2^q$ and
applies the projective measurements corresponding to
the sequence $b$ to each copy, rejecting if any
measurement rejects. It then chooses a uniformly random completion $z\in\{0,1\}^{m-\ell}$ and runs the verifier with the same challenge $uz$ on each of the $K$ witness registers. The circuit accepts if and only if at most $K/4$ of these verifier runs accept.\\
\\
All $K$ copies pass the previous measurements with
probability $p_b^K$. Conditioned on this event, the
witness registers have joint reduced state
$\rho_b^{\otimes K}$. Thus
\[
S_b(u):=\Pr[C_{b,u}\text{ accepts}]
=
p_b^K\cdot 2^{-(m-\ell)}
\sum_{z\in\{0,1\}^{m-\ell}}r_b(uz).
\]
Starting with $u=\varnothing$, use the $\mathsf{PP}$
oracle to compare $S_b(u0)$ and $S_b(u1)$, and append
the bit giving the larger value, breaking ties
arbitrarily. Since $C_{b,u}$ chooses the next bit
of the challenge uniformly,
\[
S_b(u)=\frac{S_b(u0)+S_b(u1)}{2}\le\max\{S_b(u0),S_b(u1)\}.
\]
We choose the bit attaining this maximum, so $S_b$
does not decrease. After $m$ steps, the resulting
challenge $y$ satisfies
\[
S_b(y)\ge S_b(\varnothing).
\]
Finally, use the $\mathsf{PP}$ oracle to check whether
\[
p_{b\circ y}<\frac12p_b.
\]
Return $y$ if this holds, and $\perp$ otherwise.
Since $p_b>0$ and
$p_{b\circ y}=p_b\,\Pr[V_x(\rho_b,y)=1]$,
the check succeeds exactly when
$\Pr[V_x(\rho_b,y)=1]<1/2$.
Thus every returned challenge is halving.\\
\\
It remains to show that the final check always succeeds
on NO instances. By amplified soundness, there exists
$y^*$ such that
$\Pr[V_x(\rho_b,y^*)=1]\le\varepsilon\le1/8$.
The first Hoeffding bound gives $r_b(y^*)\ge1-\delta$.
Using $S_b(y)=p_b^K r_b(y)$ and $p_b>0$, we obtain
\[
\begin{aligned}
r_b(y)=\frac{S_b(y)}{p_b^K}
\ge\frac{S_b(\varnothing)}{p_b^K}\ge 2^{-m}r_b(y^*)
\ge 2^{-m}(1-\delta)>\delta.
\end{aligned}
\]
The second Hoeffding bound therefore implies
$\Pr[V_x(\rho_b,y)=1]<1/2$.
Hence the final check succeeds, and the procedure
returns a halving challenge on every NO instance.\\
\\
The procedure makes $m+1$ queries to the $\mathsf{PP}$
oracle, and all queried circuits have polynomial size.
\end{proof}
\paragraph{Proof of \Cref{thm:qcinP^PP}} Initialize $b=\varnothing$, and repeat the following for at most $T=q+2$ rounds. First use the $\mathsf{PP}$ oracle to check whether $p_b>0$; if $p_b=0$, reject. Otherwise invoke \Cref{lem:PPoracle_proced}. If it returns $\perp$, accept. If it returns a challenge $y$, append it to the current sequence, and continue. If $T$ challenges are appended, reject.

\paragraph{Completeness:} Suppose $x\in L_{\text{yes}}$, and let $\rho^*$ be the amplified honest witness. Consider any sequence $b=(y_1,\dots,y_t)$ of at most $T$ challenges. Now apply the sequential projection bound of \cite{Gao15} on the proof register together with
all the fresh workspaces, initially in the zero
state. Each projector accepts this initial honest
witness with probability at least $1-\varepsilon$,
so the entire sequence is accepted with probability
at least $1-4t\varepsilon$. Recall now that we start with the maximally mixed state for which we have
\[
\frac{I}{2^q}\succeq\frac{\rho^*}{2^q}.
\]
Thus, for every fixed sequence of challenges, the
probability that all measurements accept satisfies
\[
p_b\ge 2^{-q}(1-4t\varepsilon).
\]
Now since $t\le T$ and the amplification is chosen so $4T\varepsilon\le1/2$, we obtain
\begin{equation}\label{eq:lowerboundYes}
p_b\ge2^{-q-1}.
\end{equation}
In particular, the simulation never rejects because of $p_b=0$.\\
\\
Suppose now, toward a contradiction, that the procedure from \Cref{lem:PPoracle_proced} returns a halving challenge in every one of the $T$ rounds. Since each returned challenge satisfies
\[
p_{b\circ y}<\frac12p_b,
\]
and $p_{\varnothing}=1$, after $T=q+2$ rounds we would have
\[
p_b<2^{-T}=2^{-q-2}
\]
contradicting \Cref{eq:lowerboundYes}. Therefore the procedure must return $\perp$ before $T$ challenges are appended, at which point the simulation accepts.

\paragraph{Soundness:} Now suppose $x\in L_{\text{no}}$. If at any stage $p_b=0$, the algorithm rejects. Otherwise, $p_b>0$, and \Cref{lem:PPoracle_proced} guarantees that the procedure never returns $\perp$: it always returns another challenge $y$ satisfying
\[
\Pr[V_x(\rho_b,y)=1]<\frac12.
\]
Hence the algorithm can never accept. If $p_b$ remains positive, it appends a challenge in every round and therefore rejects after $T$ rounds.\\
\\
Finally, $T=q+2$ is polynomial, and each round makes polynomially many $\mathsf{PP}$ queries by \Cref{lem:PPoracle_proced}. Hence the entire procedure runs in deterministic polynomial time with access to a $\mathsf{PP}$ oracle. Therefore $L\in \mathsf{P}^{\mathsf{PP}}$, proving $\qc\subseteq \mathsf{P}^{\mathsf{PP}}$.

\section{Faking classical shadows is hard}\label{scn:faking}
In this section, we show that even when a complete collection of
two-qubit marginals determines a unique global state, extending these
marginals to a requested three-qubit marginal can remain computationally
hard. More precisely, an efficient classical algorithm for producing a desired 3-qubit marginal from the set of 2-qubit marginals would give a $\mathsf{BPP}$ simulation of
$\mathsf{QCMA}$. The proof proceeds in two stages: we first use such an
extension algorithm to recover a classical witness, and then
use it again to simulate the quantum verification of that witness.

\begin{theorem}\label{thm:recover-verify}
    For an $n$-qubit density operator $\rho$, write $\rho_S=\operatorname{Tr}_{\overline{S}}\rho$ and
$D\coloneqq(\rho_{\{i,j\}})_{1\leq i<j\leq n}$.
Suppose a randomized classical algorithm $A(n,D,B)$ runs in polynomial
 time on every input and has the following guarantee: whenever the pair
list $D$ has a unique global completion $\rho$ \emph{among all density
operators}, and any $B\subseteq\{1,\ldots,n\}$ has size three, its output is an
$8\times8$ matrix $M$ satisfying
\begin{equation}
 \Pr[\|M-\rho_B\|_1\leq1/20]\geq2/3.
 \label{eq:A}
\end{equation}
Then $\mathsf{QCMA} \subseteq \mathsf{BPP}$.
\end{theorem}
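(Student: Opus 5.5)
We follow the two-stage plan from the techniques overview: first recover a witness, then verify it classically. Let $L\in\mathsf{QCMA}$. Since $\mathsf{QCMA}=\mathsf{QCMA}_1$ \cite{JKNN12}, we take a verifier $V_x$ with perfect completeness and soundness $1/3$ (amplified as needed), acting on a classical witness $w\in\{0,1\}^p$. We apply Valiant--Vazirani isolation \cite{VV86} to the set of perfectly accepting witnesses. Concretely, we pick a random pairwise-independent hash $h$ and a random threshold $k\le p+1$, and we modify the verifier to also check $h(w)=0^k$. On a YES instance, with probability $\Omega(1/p)$ this leaves exactly one perfectly accepting witness $w^\star$, and repetition boosts this to constant probability. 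Nothing is assumed on NO instances, because soundness will come entirely from the second stage.

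\textbf{Stage 1 (recovery).} We build a family of two-qubit marginals $D^{\mathrm{rec}}_{x,h}$, computable without knowing $w^\star$. To do so, we take a Kitaev/$\mathsf{QMA}_1$-style circuit-to-Hamiltonian construction with history state over the isolated verifier. The witness is written into the circuit by gates copying it into a register of logical qubits, each encoded in a simulatable Steane code \cite{BG22,KR25}. The simulatable-verification framework lets us compute the $2$-local marginals of the encoded history state efficiently, since every two-qubit reduced state of a codeword is independent of the logical content. When $w^\star$ is unique and perfectly accepted, the ground space is one-dimensional and the history state is the unique global completion. We would argue uniqueness via a zero-energy frustration-free Hamiltonian whose terms are determined by the $2$-local data, together with a spectral gap. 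For each witness bit $i$, the $3$-qubit marginal on a suitable triple $B_i$ of a code block at the final time step has $\langle Z^{\otimes3}\rangle$ of sign $(-1)^{w_i^\star}$ and magnitude bounded below by a constant. We would force this by idling many steps after copying, so that a constant fraction of the history weight sits at the end. Calling $A(n,D^{\mathrm{rec}},B_i)$ and reading the sign, with majority vote over $O(\log p)$ repetitions, then recovers all of $w^\star$ w.h.p., provided $1/20$ trace error is below the signal.

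\textbf{Stage 2 (verification).} For an arbitrary candidate $w$, we build marginals $D^{\mathrm{ver}}_{x,w}$ for the history state of $V_x$ run with $w$ hardwired as a classical input. There is no witness register, so the history state of a fixed circuit from a fixed input is always the unique ground state, hence always the unique completion. At the end we add a gadget that copies the output qubit into a Steane-encoded logical qubit. The designated $3$-qubit correlator then equals $c(1-2p_x(w))$ for a known constant $c$, again computable via simulatable codes. Calling $A$ distinguishes $p_x(w)=1$ from $p_x(w)\le1/3$. The $\mathsf{BPP}$ algorithm accepts iff some recovered candidate passes. Completeness holds w.h.p. by isolation, recovery, and perfect acceptance. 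For soundness on a NO instance, Stage 1 may output garbage, but every candidate has $p_x(w)\le1/3$, and the Stage 2 instance is always within the uniqueness promise, so it rejects w.h.p.

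\textbf{Main obstacle.} The hardest step is establishing that the $2$-local data have a \emph{unique} completion among \emph{all} density operators. Membership in a frustration-free ground space is not enough, because we must exclude non-ground states whose two-body marginals coincide. We would first try to show that any state with these marginals has zero energy on each local projector; since each term is $2$-local on the encoded qubits, its energy is fixed by $D$. Positivity then forces support in the kernel, which is one-dimensional, as in \cite{KR25}'s exact-$\mathsf{CLDM}$ hardness. A secondary difficulty is keeping the locality at two qubits while still encoding $3$-body signals with constant magnitude despite the clock dilution.
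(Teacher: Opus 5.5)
Your architecture matches the paper's: perfect completeness and Valiant--Vazirani isolation, a recovery history state read out through $Z^{\otimes3}$ on Steane-encoded witness bits, a fixed-input verification history that is always uniquely completable and alone carries soundness, and identity padding against clock dilution. Your uniqueness mechanism also matches how the paper uses KR25's two-local Hamiltonian, via compactness: a Hamiltonian 2-local on \emph{physical} qubits, vanishing on the intended history, so every completion of $D$ has zero energy.

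\textbf{First gap.} The gap is your claim that, for the isolated verifier, ``the ground space is one-dimensional'' while the data stay computable without $w^\star$. As written, the witness is copied by gates into encoded logical qubits. Then the raw witness qubits have time-$0$ marginal $\lvert w_j^\star\rangle\langle w_j^\star\rvert$, which appears in the $\lvert0\rangle\langle0\rvert_{c_1}$ block of the pair $\{j,c_1\}$. So $D^{\mathrm{rec}}$ exhibits $w^\star$ and cannot be computed from $x$ alone.

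Suppose instead the physical input register already holds the encoded witness, as in the paper. Then the zero-energy space of the history and perfect-acceptance terms contains the history of every physical input in the eigenvalue-one space of the acceptance operator. That space is far larger than the encoded $\lvert w^\star\rangle$. For example, a physical $Z$ error on a retained witness codeword commutes with a transversal CNOT copy, never reaches the verifier, and is still accepted with certainty. So the zero-energy argument no longer yields uniqueness. Penalizing non-codewords at $t=0$ directly would need stabilizer-weight terms times a clock qubit, which are not 2-local.

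The missing idea is the paper's encoding certificate. Decode the input blocks inside the circuit. For each decoder ancilla $j$ at an inserted identity step $t$, prescribe the $\lvert0\rangle\langle1\rvert$ block of clock qubit $c_t$ in the pair $\{j,c_t\}$; by KR25, Lemmas~4.5--4.6, this block equals $\gamma_j(t)/(T+1)$. Setting $\gamma_j(t)=\lvert0\rangle\langle0\rvert$ certifies correct encoding. Only then does perfect acceptance force $\eta=\lvert w^\star\rangle\langle w^\star\rvert$, because the CNOT copy makes the acceptance probability $\sum_w\eta_{ww}p_x(w)$.

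\textbf{Second gap.} Computing the history pair table needs a code simulatable on cross-time regions of up to four work qubits. Such a code has three-qubit marginals independent of the logical state. So no ``suitable triple of a code block'' of that code can carry $(-1)^{w_i^\star}$.

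The paper uses two layers. The outer layer is a single-level $[[7,1,3]]$ Steane code, whose one- and two-qubit marginals are maximally mixed while $Z_5Z_6Z_7$ represents logical $Z$. The inner layer is a concatenated simulatable code on each of its seven qubits. Only at the end are the inner blocks of coordinates $5,6,7$ decoded, one block at a time, so every pair-marginal computation involves at most two outer coordinates.

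Stage~2 needs the same device. If the output qubit sits unencoded at the final time, as ``copying the output qubit at the end'' suggests, then $D^{\mathrm{ver}}$ contains $\Tr[(\lvert1\rangle\langle1\rvert_o\otimes\lvert1\rangle\langle1\rvert_{c_T})\rho]=p_x(w)/(T+1)$. Producing that entry is as hard as simulating $V_x$.

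So the step you single out as the main obstacle is the easy one. The real work is choosing 2-local data that are computable without $w^\star$ or $p_x(w)$, yet pin the completion to a single correctly encoded history.
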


\begin{proof}
    For $L=(L_{\mathrm{yes}},L_{\mathrm{no}})$, let $x \in L$ have a uniform polynomial (in $|x|$) sized verifier $V_x$, taking an $m$-bit classical witness and $a$ many $\lvert 0 \rangle$ auxiliary qubits, where $m,a = \operatorname{poly}(|x|)$. Write
the acceptance probability of a witness $w$ as $p_x(w)=\Pr[V_x\lvert w,0^a\rangle\text{ outputs }1]$.
Suppose each YES instance has exactly one $w^\star$ with $p_x(w^\star)=1$,
whereas on each NO instance $p_x(w)\leq1/3$ for every $w$.

\smallskip\noindent\textbf{1. Recover.}
From input $x$ alone, use \Cref{lem:recovery-verification}(i) to compute
$2$-qubit reduced density matrices (2-RDMs, which we also call `pair marginals') denoted by $D_{\mathrm R}(x)$ and 3-qubit indices $B_1,\ldots,B_m$. On YES instances the
pair list has a unique completion (a unique consistent global state) $\rho_{\mathrm R}$, where the subscript $R$ denotes the recovery phase. Let $(\rho_{\mathrm R})_{B_i}$ be the $3$-qubit RDM of $\rho_{\mathrm R}$ on 3 qubits indexed by the triple $B_i$.
For each $B_i$, define
$z_i=\operatorname{Tr}(Z^{\otimes3}(\rho_{\mathrm R})_{B_i})$ to be the expectation value of $Z^{\otimes 3}$ on $(\rho_{\mathrm R})_{B_i}$, where $Z=\operatorname{diag}(1,-1)$ is the usual Pauli $Z$ operator. Then, \Cref{lem:recovery-verification}(i) ensures that the estimate $z_i$ is s.t.
\begin{equation}
|z_i-(-1)^{w_i^\star}|\leq\frac{1}{6},
\label{eq:recover}
\end{equation}
where $w_i^{\star}$ is the $i$-th bit of the m-bit witness $w^{\star}$.
Since each $w_i^{\star} \in \{0,1\}$, it follows from \Cref{eq:recover} that
\begin{align}
w_i^\star=0&\Longrightarrow z_i\geq\frac56,\label{eq:recover-zero}\\
w_i^\star=1&\Longrightarrow z_i\leq-\frac56.\label{eq:recover-one}
\end{align}

Call $A(n,D_{\mathrm{R}}(x),B_i)$, with the construction's qubit count $n$, and set
$\widehat z_i=\operatorname{Re}\operatorname{Tr}(Z^{\otimes3}M)$. A successful call gives
\begin{equation}
 |\widehat z_i-z_i|\leq
 \|Z^{\otimes3}\|_\infty\|M-(\rho_{\mathrm R})_{B_i}\|_1\leq1/20.
\end{equation}
Thus $w_i^\star=0$ implies $\widehat z_i\geq5/6-1/20=47/60>0$,
while $w_i^\star=1$ implies $\widehat z_i\leq-47/60<0$.
Therefore, the $i$-th bit of the recovered candidate witness $\widehat w$ is $0$ for a nonnegative estimate and $1$ otherwise,
and is correct with probability at least $2/3$, by \Cref{eq:A}. For each bit, take the majority of $O(\log(m))$ independent trials.
An application of the Chernoff bound makes the failure probability at most $1/(10m)$ for each bit. 
So a union bound gives that the entire $m$-bit recovered witness $\widehat w$ is indeed $w^{\star}$ w.h.p. More concretely,  $\Pr[\widehat w=w^\star]\geq9/10$.

\smallskip\noindent\textbf{2. Verify.}
Fix $w=\widehat w$. By \Cref{lem:recovery-verification}(ii), compute
$D_{\mathrm V}(x,w)$ and the triple $B_{\mathrm{out}}$. For every $x,w$,
this pair list has a unique completion $\rho_{\mathrm V}(x,w)$,
whether or not the verifier accepts. Define
$z_{\mathrm{out}}=\operatorname{Tr}(Z^{\otimes3}(\rho_{\mathrm V}(x,w))_{B_{\mathrm{out}}})$.
If $\eta_{\mathrm{out}}$ is the verifier's final logical output-qubit
state, then outcome $1$ means acceptance and
\begin{equation}
\operatorname{Tr}(Z\eta_{\mathrm{out}})
=\Pr[0]-\Pr[1]=1-2p_x(w).
\label{eq:output-expectation}
\end{equation}
The estimate supplied by \Cref{lem:recovery-verification}(ii) is
\begin{equation}
|z_{\mathrm{out}}-(1-2p_x(w))|\leq\frac16.
\label{eq:verify}
\end{equation}
Substituting the two acceptance cases into this estimate gives
\begin{align}
p_x(w)=1&\Longrightarrow z_{\mathrm{out}}\leq-\frac56,\label{eq:verify-perfect}\\
p_x(w)\leq\frac13&\Longrightarrow z_{\mathrm{out}}\geq\frac16.\label{eq:verify-sound}
\end{align}

A successful call to $A(n,D_{\mathrm{V}}(x,w),B_{\mathrm{out}})$
therefore yields an empirical estimate $\widehat z_{\mathrm{out}}$ with the following guarantee
\begin{align}
p_x(w)=1&\Longrightarrow\widehat z_{\mathrm{out}}\leq-47/60<0,\notag\\
p_x(w)\leq1/3&\Longrightarrow\widehat z_{\mathrm{out}}\geq7/60>0.
\label{eq:estimated-output-signs}
\end{align}

The rule is to Accept when the estimate is negative. Unlike the recovery phase, since we simulate the logical $Z$ measurement on just a single qubit, it suffices to have a constant number of independent trials with fresh randomness which reduces majority vote error
to at most $1/10$. Note that this is a classical calculation on $A$'s output matrices,
not a quantum execution of $V_x$.

On YES instances the probability of recovering $w^\star$ and accepting
is at least $(9/10)^2=81/100>2/3$. On NO instances the verifier promise
gives $p_x(w)\leq1/3$ for \emph{every} candidate, so Verify rejects with
probability at least $9/10$.

Thus the two phases decide $L$ with bounded error under the stated
promise. We now use this procedure to decide an arbitrary
$\mathsf{QCMA}$ problem.

\smallskip\noindent\textbf{3. Reduction to the unique and perfectly accepting witness.}
By \cite[Theorem 1]{JKNN12}, every $\mathsf{QCMA}$ problem admits a
perfectly complete verifier, with soundness amplified to at most $1/3$.
For a YES instance, let
\begin{equation}
S=\{w\in\{0,1\}^m:p_x(w)=1\}
\end{equation}
be its nonempty set of perfectly accepting witnesses.

Apply the Valiant--Vazirani construction \cite[Theorem 2.4]{VV86} to $S$. Their construction samples
random vectors $r_1,\ldots,r_m\in\{0,1\}^m$ and considers
\begin{equation}
S_i=
\{w\in S:w\cdot r_1=\cdots=w\cdot r_i=0\},
\qquad i=1,\ldots,m.
\end{equation}
With probability at least $1/4$, there exists an $i$ for which
$|S_i|=1$.

For each $i$, modify the verifier so that it first checks
$w\cdot r_1=\cdots=w\cdot r_i=0$ and rejects if any check fails;
otherwise it runs the original verifier unchanged. The perfectly
accepting witnesses of this modified verifier are exactly the elements
of $S_i$. Hence, whenever $|S_i|=1$, the modified verifier has a unique
perfectly accepting witness. On a NO instance, the modification can
only decrease acceptance probabilities, so every witness is still
accepted with probability at most $1/3$.

Repeating the Valiant-Vazirani construction a constant number of times
makes the probability that no modified verifier satisfies the
unique-perfect-witness promise arbitrarily small. Run the preceding
Recover and Verify procedures on every modified verifier and accept if any
run accepts. Since all calls terminate in polynomial time and NO
instances preserve the soundness promise for every restriction, standard
amplification gives a bounded-error polynomial-time algorithm.
Therefore $\mathsf{QCMA}\subseteq\mathsf{BPP}$.

\end{proof}

\begin{lemma}[Witness recovery and verification]\label[lemma]{lem:recovery-verification}
Let $L=(L_{\mathrm{yes}},L_{\mathrm{no}})$ be a promise problem and let
$V_x$ be a uniformly generated polynomial-size quantum verifier for $x \in L$ that takes an $m$-bit classical witness and $a$ many
auxiliary qubits initialized to zero, where $m, a = \mathrm{poly}(|x|)$. Write
the acceptance probability of input $x$ on witness $w$ as $p_x(w)=\Pr[V_x|w,0^a\rangle\text{ outputs }1]$ and $Z$ be a Pauli Z operator,
$Z=\operatorname{diag}(1,-1)$. Further, define a global completion of a pair list to be a density operator whose two-qubit reductions are exactly that list.
There are deterministic classical polynomial-time procedures with the
following properties.

\begin{enumerate}
\item[(i)] \textbf{Witness recovery.}
Suppose every $x\in L_{\mathrm{yes}}$ has exactly one
$w^\star\in\{0,1\}^m$ with $p_x(w^\star)=1$. Given $x$ alone, the
procedure produces candidate pair data $D_{\mathrm R}(x)$ and triples
$B_1,\ldots,B_m$ of physical qubit indices. On YES instances these data
have a unique global completion $\rho_{\mathrm R}$ among all density
operators. With
$z_i=\operatorname{Tr}(Z^{\otimes3}(\rho_{\mathrm R})_{B_i})$,
\Cref{eq:recover} holds for every $i\in\{1,\ldots,m\}$.
On other inputs, no consistency or uniqueness guarantee is required.

\item[(ii)] \textbf{Verification.}
Given any $x$ and $w\in\{0,1\}^m$, the procedure produces pair data
$D_{\mathrm V}(x,w)$ and a triple $B_{\mathrm{out}}$ of physical qubit
indices. These data have a unique global completion
$\rho_{\mathrm V}(x,w)$ among all density operators. With
$z_{\mathrm{out}}=\operatorname{Tr}(Z^{\otimes3}(\rho_{\mathrm V}(x,w))_{B_{\mathrm{out}}})$,
\Cref{eq:verify} holds. No uniqueness or acceptance promise on $w$ is
needed for this part.
\end{enumerate}
\end{lemma}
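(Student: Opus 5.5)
Both parts will come from one construction: a simulatable, uniquely determined history state, built from the exact $2$-$\CLDM$ machinery of \cite{BG22,KR25} that \Cref{prop:fullpauli-hardness} already used. The plan has four stages.

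\textbf{Encoding.} Concatenate every qubit of the verifier computation with a Steane-type code whose two-qubit reduced states are maximally mixed, so pair marginals reveal nothing logical.

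\textbf{History state and targets.} Run a Feynman--Kitaev style history state over a $2$-local clock, with each logical gate implemented transversally. Its pair marginals can be computed classically in polynomial time without knowing the logical state. The reason is that every $2$-qubit reduction either touches only clock and mixed code qubits, or is fixed by the simulatable gate structure of \cite{BG22}.

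\textbf{Uniqueness.} I would invoke the rigidity results of \cite{KR25}: the encoded history state is the unique state consistent with its own $2$-RDMs among all density operators. I will take this uniqueness-from-local-data theorem as the central imported ingredient.

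\textbf{Reading out logical information.} A logical $Z$ of the code has a weight-three representative. So the expectation of $Z^{\otimes 3}$ on a suitable triple $B$ of physical qubits at the final time equals the logical $Z$ expectation of the corresponding qubit. This holds up to a clock-dilution factor, which I will remove by padding the computation with idle steps so that the final time slice carries weight at least $11/12$. That leaves room for the $1/6$ slack.

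For (ii), hardwire $w$ into the input register as classical preparation, run $V_x$ with ancillas in $|0\rangle$, and take $B_{\mathrm{out}}$ to be a weight-three logical-$Z$ support of the output qubit. The final logical output state $\eta_{\mathrm{out}}$ then gives $z_{\mathrm{out}}\approx \operatorname{Tr}(Z\eta_{\mathrm{out}})=1-2p_x(w)$, yielding \Cref{eq:verify}. Uniqueness and classical computability hold for every $(x,w)$, since the circuit is fully specified and the initial state is fixed.

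For (i), the witness is unknown, so the witness qubits must instead start in a state whose pair marginals are witness-independent. I would do this by using the encoded $|+\rangle$ state, or equivalently its simulatable local data, followed by the verifier. I would then postselect on acceptance using the uniqueness and perfect-completeness promise: append a gadget that penalizes, through the $2$-local consistency data, any branch with output $0$. The final encoded witness register is then forced to be $|w^\star\rangle$. The triples $B_i$ are logical-$Z$ supports of the witness qubits, giving $z_i\approx(-1)^{w_i^\star}$.

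The main obstacle is (i): making the completion unique and equal to the $w^\star$-history state, while still computing $D_{\mathrm R}(x)$ from $x$ alone. I would first try to reuse the \cite{KR25} reduction from unique-witness $\QMA_1$/$\QCMA$ to exact $\CLDM$ with a unique YES completion. The approach is to treat the marginals generated by the simulator as the instance, and to argue that any consistent state must be a history state of an accepting witness, which is unique by the promise. The remaining error bookkeeping (clock dilution and code weights) should be routine.
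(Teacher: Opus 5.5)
Your plan for (ii) is essentially the paper's. The paper builds a history state of a circuit whose input is fully fixed, with $w$ written in by $X$ gates. It computes the pair data from \cite{BG22}-simulatable marginals, together with the fact that any one or two coordinates of a Steane codeword are maximally mixed. Uniqueness comes from \cite[Lemma~4.4]{KR25} plus a closure argument: every completion is some $J_C\omega'J_C^\dagger$, and $\omega'$ lives on a one-dimensional input space. Identity padding to $T=12T_0$ finishes the readout.

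One correction for (ii): in the concatenated simulatable code, logical $Z$ has weight $3^k$, not $3$. The paper therefore puts a single outer Steane layer $E_{\mathrm{2\text{-}priv}}$ on the output and decodes only the inner $E_{\mathrm{Sim}}$ blocks of coordinates $5,6,7$, on which $Z_5Z_6Z_7$ is a logical $Z$.

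Part (i), however, has a genuine gap. The rigidity you import only says that a completion is a history state of the same circuit for \emph{some} input density operator $\omega'$. It gives no uniqueness when the input register is free. The 2-privacy that lets you compute $D_{\mathrm R}(x)$ without knowing $w^\star$ also means the input-region pair data cannot tell logical witnesses apart, so uniqueness must come from added constraints.

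Your proposed fix cannot supply them. If the witness register is fixed to an encoded $\lvert+\rangle^{\otimes m}$, the history state is completely determined by $x$ and in general does not accept with probability one. There is also no way to ``postselect'' inside it: pair data are not a penalty Hamiltonian, so a datum demanding output $1$ at the last clock time is simply inconsistent with that history.

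The paper does three things instead. It leaves the physical input free. It certifies through pair entries that the input is a codeword of both layers: decoders are applied, and their ancilla outputs are pinned to $\lvert0\rangle\langle0\rvert$. The pinning uses the off-diagonal $\lvert0\rangle\langle1\rvert$ block of a (work qubit, $c_t$) pair at an inserted identity gate, which isolates a single time slice. Finally, it enforces acceptance with the single entry $\Tr[(\lvert0\rangle\langle0\rvert_o\otimes\lvert1\rangle\langle1\rvert_{c_T})(D_{\mathrm R}(x))_{\{o,c_T\}}]=0$.

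There is a second missing idea. Even with perfect acceptance forced, a logical witness state $\eta$ accepted with probability one need not equal $\lvert w^\star\rangle\langle w^\star\rvert$. The promise controls $p_x$ only on classical strings, and the accepting subspace of $V_x$ can contain superpositions of rejected strings.

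The paper handles this by keeping the witness register $W$, CNOT-copying it into a fresh $W'$, and running $V_x$ on $W'$. The acceptance probability then becomes $\sum_w\langle w\rvert\eta\lvert w\rangle p_x(w)$. The unique-perfect-witness promise forces the diagonal onto $w^\star$, and positivity ($\lvert\eta_{uv}\rvert^2\le\eta_{uu}\eta_{vv}$) removes the off-diagonals. The copy also leaves $W$ untouched, so the triples $B_i$ read out the witness itself rather than whatever $V_x$ left in that register.
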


\begin{proof}[{Proof of (i)}]

First, for the recovery phase, we use two encodings, both based on the Steane code, for different
purposes. Let $E_{\mathrm{2\text{-}priv}}$ denote one level of the
$[[7,1,3]]$ Steane encoding. This encoding will be used only to hide
the logical state from all two-qubit marginals while retaining its
logical $Z$-expectation in a suitable three-qubit marginal. We also use a fixed $k$-fold concatenated Steane encoding
$E_{\mathrm{Sim}}$, which maps one logical qubit to
$q=7^k$
physical qubits. The purpose of $E_{\mathrm{Sim}}$ is different:
it allows small physical marginals to be computed during an encoded
computation independently of the unknown logical input.

More precisely, an $s$-simulatable code has a deterministic classical
algorithm which, given a logical gate, a time within its fixed encoded
implementation, and a set of at most $s$ physical qubits, outputs the
reduced density matrix on those qubits independently of the encoded
logical state. See \cite[Definition 4.1]{BG22}. Sufficiently many
concatenations of the Steane code are $s$-simulatable for every fixed
$s$ \cite[Lemma 4.4]{BG22}. Below we will only require simulation on a
constant number of qubits, so we fix $k$ sufficiently large once and
for all. In particular, $q$ is a constant independent of $|x|$.

The encoder and decoder for $E_{\mathrm{Sim}}$ are explicit circuits of
size polynomial in $q$ \cite[Section 4]{BG22}. We only use their classical
gate descriptions: given the description of $V_x$, we can therefore
construct the complete encoded circuit, including all encoding and
decoding steps, and keep track of every physical qubit and every circuit
time in polynomial time.

We now record the properties of $E_{\mathrm{2\text{-}priv}}$ that will
later allow us to recover witness bits. For an arbitrary one-qubit
density operator $\sigma$, let $\tau=E_{\mathrm{2\text{-}priv}}\sigma E_{\mathrm{2\text{-}priv}}^\dagger$, choose Steane coordinates so that
$Z_1Z_2Z_3Z_4$ is a stabilizer \cite[Sec 5.5.1]{GSY19} and $Z^{\otimes7}$ is logical $Z$, giving the last two equalities in \Cref{eq:code}.

\begin{align}
  \tau_S&=I_S/2^{|S|}\text{ for }|S|\leq2, \quad E_{\mathrm{2\text{-}priv}}^\dagger Z_5Z_6Z_7E_{\mathrm{2\text{-}priv}}=Z, \quad
\operatorname{Tr}(Z^{\otimes3}\tau_{\{5,6,7\}})=\operatorname{Tr}(Z\sigma).
\label{eq:code}
\end{align}
The first equality in \Cref{eq:code} holds because the Steane code is a nondegenerate $[[7,1,3]]$ stabilizer code,
every reduction to fewer than three physical qubits is maximally mixed
\cite[Lemmas~21--22 and the discussion following Lemma~22]{GSY19}.
Now, apply $E_{\mathrm{2\text{-}priv}}$ to the one-qubit logical state $\sigma$,
and then encoding each of its seven output qubits separately with
$E_{\mathrm{Sim}}$, gives the concatenated state
\begin{equation}\label{eq:witness-enc}
\Sigma \coloneqq
  E_{\mathrm{Sim}}^{\otimes 7}
\left(
E_{\mathrm{2\text{-}priv}}
\sigma
E_{\mathrm{2\text{-}priv}}^\dagger
\right)
E_{\mathrm{Sim}}^{\dagger\otimes 7}.
\end{equation}
Decoding the $E_{\mathrm{Sim}}$ blocks encoding qubits $5$, $6$, and $7$
of the $E_{\mathrm{2\text{-}priv}}$ codeword gives $\left(
E_{\mathrm{2\text{-}priv}}
\sigma
E_{\mathrm{2\text{-}priv}}^\dagger
\right)_{\{5,6,7\}}$. So by \Cref{eq:code},
\begin{equation}
\operatorname{Tr}\left(
Z^{\otimes 3}
\left(
E_{\mathrm{2\text{-}priv}}
\sigma
E_{\mathrm{2\text{-}priv}}^\dagger
\right)_{\{5,6,7\}}
\right)
=
\operatorname{Tr}(Z\sigma).
\end{equation}

 We now attach a clock so that pair marginals can also constrain
the computation producing that readout. The following
argument is common to recovery and verification phases.

For a compiled circuit $C=U_T\cdots U_1$ on $N=r+b$ work qubits,
let the first $r$ qubits form the physical input register and initialize
the remaining $b$ qubits to zero. Write $P_t=U_t\cdots U_1$, $P_0=I$,
and $\lvert\widehat t\rangle=\lvert1^t0^{T-t}\rangle$. Define
\begin{align}
J_C = \frac{1}{\sqrt{T+1}}\sum_{t=0}^T
\left[P_t\left(I_{2^r}\otimes|0^b\rangle\right)\right]
\otimes|\widehat t\rangle,\qquad
\rho_C(\omega)\coloneqq J_C\omega J_C^\dagger,
\label{eq:history}
\end{align}
where $ n=N+T$, $I_{2^r}$ is the identity on the input register and $\omega$ is its
density operator. The recovery circuit has $r=7qm$. Note that the verification
circuit has $r=0$ and starts entirely from zeros. Notably, the clock is unencoded
in both constructions.

For the fixed circuit $C$, \cite[Lemma~4.4]{KR25} supplies, for every
inverse-polynomial accuracy $\varepsilon>0$, a two-local Hamiltonian $H$
on the same $N+T$ qubits. Every history state $\rho_C(\omega)$ has zero
expectation, and any density operator $\xi$ with
$\operatorname{Tr}(H\xi)\leq1$ is within trace norm $\varepsilon$ of a
history state of $C$. 
Suppose $\xi$ has the same pair marginals as $\rho_C(\omega)$.
For each positive integer $d$, apply that lemma with
$\varepsilon=n^{-d}$. Since the Hamiltonian is 2-local, we have that 
\begin{equation}
\operatorname{Tr}(H\xi)=\operatorname{Tr}(H\rho_C(\omega))=0.
\end{equation}
Consequently there is an input density operator $\omega_d$ such that
\begin{equation}
\|\xi-\rho_C(\omega_d)\|_1\leq n^{-d}.
\end{equation}
For this fixed circuit, $n\geq2$, so these histories converge to $\xi$.
The input density operators form a compact set, and
$\omega\mapsto\rho_C(\omega)$ is continuous. Its image is therefore
compact and closed, proving that $\xi$ is a history of the same circuit,
including when $\xi$ is mixed.

Thus, a competing completion has the form $\rho_C(\omega')$ for
some input density operator $\omega'$. For verification, where $r=0$,
this already fixes the history. For recovery, however, $\omega'$ is
still an arbitrary physical input. We must first constrain it to encode
an $m$-qubit logical witness and then require that witness to accept
perfectly. We start with the encoding condition, which will be enforced
through decoder outputs at specified circuit times.

To express these conditions using pairs, we need the following
single-time extraction identity.

Expand \Cref{eq:history} using
$\Omega_{u,v}=P_u(\omega\otimes|0^b\rangle\langle0^b|)P_v^\dagger$,
and put $\gamma_j(t)=(\Omega_{t,t})_{\{j\}}$, the reduced work-qubit
state at time $t$. The $t$-th clock qubit has global label $c_t=N+t$.
In the $|0\rangle\langle1|$ clock block of the pair $\{j,c_t\}$, only
$(u,v)=(t-1,t)$ survives: these are the only unary strings with the
specified values at $c_t$ and equal values at every discarded clock
qubit. Thus, at an inserted identity $U_t=I$,
\cite[Lemmas~4.5--4.6]{KR25} gives

\begin{equation}
 (T+1)(I_j\otimes\langle 0\rvert_{c_t})
       (\rho_C(\omega))_{\{j,c_t\}}(I_j\otimes\lvert 1\rangle_{c_t})
   =\operatorname{Tr}_{\mathrm{work}\setminus\{j\}}\Omega_{t-1,t}
   =\gamma_j(t),
 \label{eq:identity}
\end{equation}

Here $P_t=P_{t-1}$, so $\Omega_{t-1,t}=\Omega_{t,t}$.
To require $\gamma_j(t)=|0\rangle\langle0|$, the pair entry must obey
\begin{equation}
(I_j\otimes\langle0|_{c_t})
(D_{\mathrm R}(x))_{\{j,c_t\}}
(I_j\otimes|1\rangle_{c_t})
=\frac{|0\rangle\langle0|}{T+1}.
\label{eq:zero-pair}
\end{equation}
This is a constraint on an off-diagonal matrix block, not a measurement.
Below we calculate the entire pair table.

We now use \Cref{eq:zero-pair} on decoder's auxiliary outputs to verify that the input witness was encoded correctly.
The point is that a pure zero auxiliary state certifies an encoded
input without fixing its logical data.
Let $D_{\mathrm{Sim}}$ be the fixed unitary decoder for
$E_{\mathrm{Sim}}$. For the intended $\Sigma$ in
\Cref{eq:witness-enc}, and with the data outputs listed before the
decoder auxiliary outputs,
\begin{equation}\label{eq:decoder-constr-Sim}
D_{\mathrm{Sim}}^{\otimes 7}
\Sigma
D_{\mathrm{Sim}}^{\dagger\otimes 7}
=
\tau\otimes
|0^{7(q-1)}\rangle\langle0^{7(q-1)}|,
\end{equation}
and hence
\begin{equation}
\Sigma
=
E_{\mathrm{Sim}}^{\otimes 7}
\tau
E_{\mathrm{Sim}}^{\dagger\otimes 7}.
\end{equation}
Thus the condition $\gamma_j(t)=|0\rangle\langle0|$ for
auxiliary qubits of the decoder $D_{\mathrm{Sim}}$ certifies correct encoding under $E_{\mathrm{Sim}}$. Similarly, if $D_{\mathrm{2\text{-}priv}}$ denotes the decoder for
$E_{\mathrm{2\text{-}priv}}$, then
\begin{equation}\label{eq:decoder-constr-priv}
D_{\mathrm{2\text{-}priv}}
\tau
D_{\mathrm{2\text{-}priv}}^\dagger
=
\sigma\otimes|0^6\rangle\langle0^6|
\end{equation}
is equivalent to
\begin{equation}
\tau
=
E_{\mathrm{2\text{-}priv}}
\sigma
E_{\mathrm{2\text{-}priv}}^\dagger.
\end{equation}
Thus choosing the same $\gamma_j(t)=|0\rangle\langle0|$, now for the auxiliary qubits of the decoder $D_{\mathrm{2\text{-}priv}}$ certifies correct encoding under
$E_{\mathrm{2\text{-}priv}}$.

The encoding constraints leave an arbitrary logical witness state. Replace $V_x$ by
$\widetilde V_x$, which
retains a witness register $W$, copies its computational-basis value to a fresh register $W'$, and applies $V_x$ to $W'$ and its auxiliary qubits. Note that this is a fixed product of $m$ CNOT gates and does not require knowing the value of the $m$-bit witness.

Write an arbitrary $m$-qubit density matrix $\eta$ on register $W$ as
$\eta=\sum_{u,v \in \{0,1\}^m}\eta_{uv}\lvert u\rangle\langle v\rvert$  and let

$|\psi_w\rangle=V_x|w,0^a\rangle$ denote the final state on the
working witness register $W'$ and the verifier auxiliaries. The joint
state, with this auxiliary register included in the second factor, is
\begin{equation}
\sum_{u,v\in\{0,1\}^m}\eta_{uv}|u\rangle\langle v|_W
\otimes|\psi_u\rangle\langle\psi_v|.
\end{equation}
The acceptance projector acts on the second factor. Tracing out $W$
uses $\operatorname{Tr}(|u\rangle\langle v|)=\delta_{uv}$ and gives
\begin{equation}
\widetilde{p}_x(\eta)
\coloneqq
\sum_{w\in\{0,1\}^m}
\eta_{ww}
\langle \psi_w|\Pi_{\mathrm{acc}}|\psi_w\rangle
=
\sum_{w\in\{0,1\}^m}
\langle w|\eta|w\rangle p_x(w) =\sum_w\eta_{ww}p_x(w).
\label{eq:acceptance-diagonal}
\end{equation}

On a YES instance, if the pair data force $\widetilde p_x(\eta)=1$, then
\begin{equation}
 1-\widetilde p_x(\eta)= 0 = \sum_w\eta_{ww}(1-p_x(w)).
 \label{eq:diagonal}
\end{equation}

Since all summands are non-negative and due to uniqueness of $w^{\star}$, $(1-p_x(w)) > 0$ for $w \neq w^{\star}$ implies $\eta_{ww} = 0$. Furthermore, normalization of $\eta$ ensures that $\eta_{ w^{\star} w^{\star}} = 1$. Moreover, we show that the off-diagonals of $\eta$ also vanish. This is because positive semidefiniteness (PSD) of $\eta$ implies every $2\times2$ principal submatrix indexed by distinct $u,v\in\{0,1\}^m$ is PSD and hence has nonnegative determinant, which gives $|\eta_{uv}|^2\leq\eta_{uu}\eta_{vv}$. Thus, $\eta_{uv} = 0$ for all $u \neq v \in \{0,1\}^m$. Hence,
\begin{equation}\label{eq:eta-unique}
\eta=\lvert w^\star\rangle \langle w^\star\rvert
\end{equation}
is the only logical input accepted perfectly by $\widetilde V_x$.
It remains to implement this test on the certified physical register
and express perfect acceptance as a pair condition.

By the decoder constraints in \Cref{eq:decoder-constr-Sim,eq:decoder-constr-priv}, the physical witness register is
already forced to be correctly encoded under both layers.

At the circuit input, its logical state is some $m$-qubit density operator $\eta$, encoded as
\begin{equation}
E_{\mathrm{Sim}}^{\otimes 7m}
E_{\mathrm{2\text{-}priv}}^{\otimes m}
\eta
E_{\mathrm{2\text{-}priv}}^{\dagger\otimes m}
E_{\mathrm{Sim}}^{\dagger\otimes 7m}.
\end{equation}

Let $\eta_i$ denote the $i$th one-qubit marginal of $\eta$. After the
encoded computation is complete, decode, one block at a time, the
$E_{\mathrm{Sim}}$ blocks encoding coordinates $5$, $6$, and $7$ of
each retained witness codeword. Let $B_i$ be the three distinguished
data-output qubit indices, and leave those qubits untouched thereafter.
Their $Z^{\otimes3}$ expectation is $\operatorname{Tr}(Z\eta_i)$.
On the perfectly accepting history, where \Cref{eq:eta-unique} fixes
$\eta=|w^\star\rangle\langle w^\star|$, their final reduced state is
\begin{equation}
\left(
E_{\mathrm{2\text{-}priv}}
\eta_i
E_{\mathrm{2\text{-}priv}}^\dagger
\right)_{\{5,6,7\}}.
\end{equation}
Hence, by \Cref{eq:code},
\begin{equation}
\operatorname{Tr}\left(
Z^{\otimes 3}
\left(
E_{\mathrm{2\text{-}priv}}
\eta_i
E_{\mathrm{2\text{-}priv}}^\dagger
\right)_{\{5,6,7\}}
\right)
=
\operatorname{Tr}(Z\eta_i).
\end{equation}

It remains to force $\eta$ to be perfectly accepting. Decode the verifier output $E_{\mathrm{Sim}}$ block last and leave its data qubit $o$ unchanged afterwards. Let $c_T$ be the last clock qubit.
Since $c_T$ is in state $|1\rangle$ only at the final circuit time, the
two-qubit marginal on $o$ and $c_T$ satisfies
\begin{equation}
\operatorname{Tr}\left[
\left(
|0\rangle\langle0|_o
\otimes
|1\rangle\langle1|_{c_T}
\right)
(\rho_C(\omega))_{\{o,c_T\}}
\right]
=
\frac{1-\widetilde p_x(\eta)}{T+1}.
\end{equation}

The accepting-history pair table would therefore have
\begin{equation}
\operatorname{Tr}\left[
\left(
|0\rangle\langle0|_o
\otimes
|1\rangle\langle1|_{c_T}
\right)
(D_{\mathrm R}(x))_{\{o,c_T\}}
\right]
=
0.
\end{equation}
Any history state consistent with $D_{\mathrm R}(x)$ must then satisfy
$\widetilde p_x(\eta)=1$ and hence we get $\eta
=
|w^\star\rangle\langle w^\star|$ in \Cref{eq:eta-unique} to be the unique witness state.

It follows that $\eta_i=|w_i^\star\rangle\langle w_i^\star|$ and
$\operatorname{Tr}(Z\eta_i)=(-1)^{w_i^\star}$. By \Cref{eq:code},
\begin{equation}
\operatorname{Tr}\left(
Z^{\otimes 3}
\left(
E_{\mathrm{2\text{-}priv}}
\eta_i
E_{\mathrm{2\text{-}priv}}^\dagger
\right)_{\{5,6,7\}}
\right)
=
(-1)^{w_i^\star}.
\end{equation}

We have now identified the only possible completion of the recovery
pair data: a history with the correctly encoded input
$|w^\star\rangle\langle w^\star|$. This is a uniqueness argument,
but not a procedure for producing all the pair matrices. To finish
recovery, we must compute the whole table without knowing $w^\star$
and relate the final-time readout to the history state marginal. We do both
next, in a form that can also be used for the Verification phase.

\smallskip\noindent\textbf{Computation of the pair data for Recovery and Verification phases.}
We now compute the complete pair table, rather than only the entries
used to constrain its completion. The calculation below applies to construct the 2-RDM data for both Recovery and Verification phases.

Write $D$ for the pair table being constructed. For a requested two-qubit pair
$S$, let $S_{\mathrm{work}}$ and $S_{\mathrm{clock}}$ be its work and
clock qubits, and let $s=|S_{\mathrm{clock}}|$. Since the history state has separate work and clock registers, a requested 2-qubit marginal can be either on the work register, clock register or between the two registers. The cases when the marginal is fully on the work or fully on the clock
registers are straightforward. The cross-register marginal will be a bit
more involved. We therefore treat all three cases uniformly by splitting
the requested pair into its work and clock components.

Thus
$|S_{\mathrm{work}}|=2-s$ and $s\in\{0,1,2\}$ since a 2-qubit pair between work and clock registers can contain $0, 1,$ or $2$ clock qubits. Taking the partial trace
of the history expansion in \Cref{eq:history} gives
\begin{equation}
D_S=\frac{1}{T+1}\sum_{u,v=0}^{T}
(\Omega_{u,v})_{S_{\mathrm{work}}}\otimes
\operatorname{Tr}_{\mathrm{clock}\setminus S_{\mathrm{clock}}}
\left(|\widehat u\rangle\langle\widehat v|\right).
\label{eq:pair-production}
\end{equation}
The clock factor is known explicitly. Hence, we
calculate the work factor of each surviving term as follows.

First, for $u<v$, the unary strings in the clock register differ exactly at clock positions
$u+1,\ldots,v$. Tracing out any of these positions gives
$\operatorname{Tr}(|0\rangle\langle1|)=0$. Hence a term survives only
if $\{c_{u+1},\ldots,c_v\}\subseteq S_{\mathrm{clock}}$, which implies
$v-u\leq s$ since $|S_{\mathrm{clock}}|=s$. Define
\begin{align}
F&=S_{\mathrm{work}}\cup\bigcup_{k=u+1}^{v}\operatorname{supp}(U_k)
\end{align}
Since each $U_k$ is a 2-qubit gate and recalling that $|S_{\mathrm{work}}|=2-s$, we get that
\begin{equation}\label{eq:work-support}
  |F|\leq(2-s)+2(v-u)\leq2+s\leq4,
\end{equation}
where the penultimate and final inequalities follow due to $v-u \leq s$ and $|s|\leq 2$ respectively.

Here $S_{\mathrm{work}}$ is the set of work qubits retained in the
requested pair, not a density matrix. The larger set $F$ also contains
every work qubit on which one of $U_{u+1},\ldots,U_v$ acts.

For $u<v$, the identity $P_v=(U_v\cdots U_{u+1})P_u$ gives
$\Omega_{u,v}=\Omega_{u,u}(U_v\cdots U_{u+1})^\dagger$. By the definition
of $F$, the extra intervening gates $(U_v\cdots U_{u+1})^\dagger$ act as the identity on all work qubits
outside $F$. We may
therefore trace out the work qubits outside $F$ before multiplying by
this product, and then trace out $F\setminus S_{\mathrm{work}}$:
\begin{align}
(\Omega_{u,v})_{S_{\mathrm{work}}}
&=\operatorname{Tr}_{\mathrm{work}\setminus S_{\mathrm{work}}}
\left[\Omega_{u,u}(U_v\cdots U_{u+1})^\dagger\right]\notag\\
&=\operatorname{Tr}_{F\setminus S_{\mathrm{work}}}
\left[
\operatorname{Tr}_{\mathrm{work}\setminus F}(\Omega_{u,u})
(U_v\cdots U_{u+1})^\dagger
\right]\notag\\
&=\operatorname{Tr}_{F\setminus S_{\mathrm{work}}}
\left[(\Omega_{u,u})_F(U_v\cdots U_{u+1})^\dagger\right].
\label{eq:cross-time-calculation}
\end{align}
In the last two lines, the gate product is represented on $F$, acting
as the identity on any qubits of $F$ that it does not touch. 
Thus, once $(\Omega_{u,u})_F$ has been computed 
as we explain in the following paragraph, we form the known matrix
$(U_v\cdots U_{u+1})^\dagger$ on $F$, multiply on the right, and trace out
$F\setminus S_{\mathrm{work}}$. These operations use matrices of size
at most $16\times16$, by \Cref{eq:work-support}. 

If $u>v$, use
$(\Omega_{u,v})_{S_{\mathrm{work}}}
=((\Omega_{v,u})_{S_{\mathrm{work}}})^\dagger$.
For $u\neq v$ the resulting operator need not be a density matrix.
It is a contribution to the pair matrix, not a separate element of
$D$. 

Thus the only quantity that remains to be computed is
$(\Omega_{u,u})_F$, with $|F|\leq4$. During the encoded computation,
this is exactly the reduction onto the work qubits at time $t$ supplied by the
$E_{\mathrm{Sim}}$ simulator. For the temporary decoder checks described
above, the relevant logical input is already known: it is $I/2$ for an
initial $E_{\mathrm{Sim}}$ block check, by \Cref{eq:code}, and
$|0\rangle\langle0|$ for the auxiliary blocks of the
$E_{\mathrm{2\text{-}priv}}$ decoder. Their reductions are therefore
computed directly by applying the known decoder circuit to the
corresponding encoded state and calculating the reduced state on $F$.

For $u=v$, this directly gives the required work factor in
\Cref{eq:pair-production}. For $u\neq v$,
\Cref{eq:cross-time-calculation} gives the corresponding cross-time
operator by multiplying $(\Omega_{u,u})_F$ by the known extra gates $(U_v\cdots U_{u+1})^\dagger$ (or their adjoint) and calculating the reduced state on $S_{\mathrm{work}}$. This also covers clock-clock
pairs: in that case $S_{\mathrm{work}}=\varnothing$, so the resulting
work factor is a scalar, which is multiplied by the known two-clock
factor in \Cref{eq:pair-production}. Hence all two-qubit marginals are
computable up to the final decodings considered next.

It remains only to account for the final output decodings. In
Recovery, perfect acceptance fixes the logical verifier output to
$|1\rangle\langle1|$, so its $E_{\mathrm{Sim}}$-encoded block has the
known state
$E_{\mathrm{Sim}}|1\rangle\langle1|E_{\mathrm{Sim}}^\dagger$.
In Verification, the output is not known, but it is first encoded under
$E_{\mathrm{2\text{-}priv}}$. Every surviving contribution to a
two-qubit history marginal involves at most two coordinates of this
codeword, whose joint state is $I/2$ or $I_4/4$ by \Cref{eq:code}.
Thus the required $(\Omega_{u,u})_F$ is again obtained from a known
state supported on constant number of qubits and the known decoding circuit.

We have therefore computed $(\Omega_{u,u})_F$ for every part of the
circuit. For $u=v$, this is the work factor appearing directly in
\Cref{eq:pair-production}. For $u\neq v$,
\Cref{eq:cross-time-calculation} combines it with the known intervening
gates to obtain $(\Omega_{u,v})_{S_{\mathrm{work}}}$. Multiplying this
by the explicitly known clock factor on $S_{\mathrm{clock}}$ and summing
over $u,v$ in \Cref{eq:pair-production} gives the complete two-qubit
marginal $D_S$.

There are $\binom n2$ choices of $S$ and at most $(T+1)^2$ terms for
each pair, so the complete pair table is computed in polynomial time
using only constant-sized matrices. In Recovery this defines
$D_{\mathrm R}(x)$ without knowing $w^\star$, which on  YES instances 
are exactly the pair marginals of the accepting history whose uniqueness
was proved above in this Lemma in \Cref{eq:eta-unique}.

\smallskip\noindent\textbf{Padding the history state to recover the final readout.}
Let $T_0$ be the length of the circuit before the final padding. Append
$11T_0$ identity gates and set $T=12T_0$ before assigning the clock
register. The readout identities established above concern the work
state at the final computational time $T_0$. However, after tracing out
the clock register, a work marginal of the history state is the uniform
average of the corresponding work marginals over all circuit times.
Hence the earlier states of the computation can shift the
$Z^{\otimes3}$ expectation away from its desired final value. The
identity padding ensures that the work register remains in its final
state for most of the history, so this time average is close to the
final work state.

Since all gates after time $T_0$ are identities,
$\Omega_{t,t}=\Omega_{T_0,T_0}$ for every $t\geq T_0$. Thus, for either
construction and any designated triple $B$,
\begin{equation}
(\rho_C(\omega))_B
=
\frac{1}{T+1}
\sum_{t=0}^{T}
(\Omega_{t,t})_B.
\end{equation}
Subtracting the final-time marginal and using
$\Omega_{t,t}=\Omega_{T_0,T_0}$ for $t\geq T_0$ gives
\begin{equation}
\left\|
(\rho_C(\omega))_B-(\Omega_{T_0,T_0})_B
\right\|_1
\leq
\frac{1}{T+1}
\sum_{t=0}^{T_0-1}
\left\|
(\Omega_{t,t})_B-(\Omega_{T_0,T_0})_B
\right\|_1.
\end{equation}
Each term in the sum is at most $2$, since both
$(\Omega_{t,t})_B$ and $(\Omega_{T_0,T_0})_B$ are density operators and
therefore have trace norm one. Using $T=12T_0$,
\begin{equation}
\left\|
(\rho_C(\omega))_B-(\Omega_{T_0,T_0})_B
\right\|_1
\leq
\frac{2T_0}{12T_0+1}
<
\frac{1}{6}.
\label{eq:averaging}
\end{equation}

To translate this trace-distance bound into a bound on the
$Z^{\otimes3}$ expectation, we use H\"older's inequality and have that
$|\operatorname{Tr}(AX)|\leq\|A\|_\infty\|X\|_1$. Since
$\|Z^{\otimes3}\|_\infty=1$,
\begin{equation}
\left|
\operatorname{Tr}\left(
Z^{\otimes3}(\rho_C(\omega))_B
\right)
-
\operatorname{Tr}\left(
Z^{\otimes3}(\Omega_{T_0,T_0})_B
\right)
\right|
<
\frac{1}{6}.
\end{equation}
Taking $B=B_i$ and using the previously established 
$\operatorname{Tr}\left(
Z^{\otimes3}(\Omega_{T_0,T_0})_{B_i}
\right)=(-1)^{w_i^\star}$
gives \Cref{eq:recover}, completing part (i).
\end{proof}

\begin{proof}[Proof of (ii)]
Fix $x$ and $w\in\{0,1\}^m$. Construct a separate circuit starting
entirely from physical zero  because now since we have recovered the witness, all the inputs are fixed. So we prepare $w$ using the specified
$X$ gates, encode the verifier inputs under $E_{\mathrm{Sim}}$, prepare
the fixed encoded resources, and run $V_x$ logically. Thus there is no
unspecified input to constrain, unlike the Recovery phase.

Let $\eta_{\mathrm{out}}$ be the final logical output qubit state.
Apply $E_{\mathrm{2\text{-}priv}}$ logically to the output and six
$E_{\mathrm{Sim}}$ encoded zeros. Decode, one at a time, the three
$E_{\mathrm{Sim}}$ blocks encoding coordinates $5$, $6$, and $7$ of
this codeword. Let $B_{\mathrm{out}}$ be their three data-output
qubits. At the end of the computation,
\begin{equation}
(\Omega_{T_0,T_0})_{B_{\mathrm{out}}}
=
\left(
E_{\mathrm{2\text{-}priv}}
\eta_{\mathrm{out}}
E_{\mathrm{2\text{-}priv}}^\dagger
\right)_{\{5,6,7\}}.
\end{equation}
Hence, by \Cref{eq:code,eq:output-expectation},
\begin{equation}
\operatorname{Tr}\left(
Z^{\otimes3}
(\Omega_{T_0,T_0})_{B_{\mathrm{out}}}
\right)
=
\operatorname{Tr}(Z\eta_{\mathrm{out}})
=
1-2p_x(w).
\label{eq:verification-final}
\end{equation}

Compute $D_{\mathrm V}(x,w)$ using the pair-data construction from
part (i). Unlike Recovery, there are no input-code constraints to
enforce because the entire input of this circuit is fixed. During the
final decoding of $B_{\mathrm{out}}$, every two-qubit marginal
calculation involves at most two coordinates of the
$E_{\mathrm{2\text{-}priv}}$ codeword. By \Cref{eq:code}, their
one- and two-coordinate states are $I/2$ and $I_4/4$, respectively,
independently of $\eta_{\mathrm{out}}$. Hence these marginals are
computable without knowing $p_x(w)$. Therefore $D_{\mathrm V}(x,w)$
is the actual pair table of this fixed-input history for every $x,w$.

By the history-space argument following \Cref{eq:history}, any global
completion $\xi$ of $D_{\mathrm V}(x,w)$ has the form
\begin{equation}
\xi=J_C\omega'J_C^\dagger.
\end{equation}
Here $r=0$, since the verification circuit starts entirely from fixed
zero qubits. Thus $\omega'$ acts on a one-dimensional input space, so
due to trace normalization $\omega'=1$. Consequently,
\begin{equation}
\xi=J_CJ_C^\dagger=\rho_{\mathrm V}(x,w).
\end{equation}
Hence $D_{\mathrm V}(x,w)$ has a unique global completion for every
$x,w$, independently of whether $V_x$ accepts $w$.

Finally, apply the same identity padding as in part (i), with $T_0$
now denoting the length of this verification circuit before padding.
By \Cref{eq:averaging,eq:verification-final},
\begin{equation}
\left|
\operatorname{Tr}\left(
Z^{\otimes3}
(\rho_{\mathrm V}(x,w))_{B_{\mathrm{out}}}
\right)
-
\left(1-2p_x(w)\right)
\right|
<
\frac16.
\end{equation}
This is \Cref{eq:verify}, completing part (ii).
\end{proof}

\printbibliography
\appendix
\section{Supporting proofs for real classical-shadow validity}\label{app:real}

\subsection{\texorpdfstring{Proof of \Cref{lem:1d-xz-lh}}{Proof of Lemma}}\label{app:1d-xz-lh}\label{sapp:onedim}
\begin{proof}
Let $L\in\QMA$, and let $x$ be an instance of $L$. By $\QMA=\QMA_{\mathbb R}$ \cite{McKAGUE13}, we may assume that the verification circuit for $x$ consists of real gates. Applying the construction of \cite{HNN13} to this verifier gives a nearest-neighbor Hamiltonian
    \[
    H_x^{\mathsf{HNN}}=J_{\mathsf{in}}H_{\mathsf{in}}+J_{\mathsf{prop}}H_{\mathsf{prop}}+J_{\mathsf{pen}}H_{\mathsf{pen}}+H_{\mathsf{out}}
    \]
    on a line of $N=\poly(|x|)$ eight-dimensional particles, with an inverse-polynomial gap. Let $a$ and $b$ be its YES and NO instance thresholds, respectively, and set $\delta:=b-a\ge 1/\poly(|x|)$.

    After padding every one-site contribution by the identity on a neighboring particle and collecting all terms assigned to the same edge, write
    \[
    H_x^{\mathsf{HNN}}=\sum_{i=1}^{N-1}h_i.
    \]
    Each $h_i$ acts on two eight-dimensional particles. Fix the natural HNN product basis
    \[
    \Sigma=\{M_1,M_2,M_3,M_4,Q^{(0)},Q^{(1)},Q^{\prime(0)},Q^{\prime(1)}\}.
    \]
    Thus $d:=|\Sigma|=8$.
    The initialization, output, and penalty terms are diagonal in this basis, while the propagation terms involve the real verifier gates and their transposes; hence every local term of $H_x^{\mathsf{HNN}}$ is represented by a real symmetric matrix.  We will encode the entire HNN Hilbert space into a larger qubit Hilbert space so that every encoded local term belongs to $\operatorname{span}_{\mathbb R}\{I,X,Z\}^{\otimes O(1)}$.\\
    \\
    For an HNN computational basis state
    \[
    \ket{x_1,x_2,\dots,x_N},\quad x_i\in\Sigma,
    \]
    introduce a boundary symbol $\#\notin\Sigma$, set $\overline\Sigma:=\Sigma\cup\{\#\}$, and define
    \begin{equation}\label{eq:overlap-encoding}
         V\ket{x_1,x_2,\dots ,x_N}=\ket{(\#,x_1)}_{E_0}\ket{(x_1,x_2)}_{E_1}\cdots \ket{(x_{N-1},x_N)}_{E_{N-1}}\ket{(x_N,\#)}_{E_N}.
    \end{equation}
    Here each $E_j$ is a new register whose basis labels are ordered pairs of HNN symbols. If $\lambda$ is a pair label, then $\ket{\lambda}$ denotes the computational-basis state having a single $1$ in the position corresponding to $\lambda$. Define the code space $\mathcal C:=\operatorname{im}V$. For a label $\lambda$ in register $E_j$, set
\[
    n_{j,\lambda}:=\ketbra{1}{1}_{j,\lambda}
    =\frac{I-Z_{j,\lambda}}2,
    \quad
    N_j:=\sum_{\lambda\in\overline\Sigma^2}n_{j,\lambda}.
\]
The allowed label sets are
\[
\Lambda_0=\{(\#,a):a\in\Sigma\},\qquad
\Lambda_j=\Sigma\times\Sigma\ (1\le j\le N-1),\qquad
\Lambda_N=\{(a,\#):a\in\Sigma\}.
\]
Now define
\begin{equation}
H_{\mathrm{code}} := \sum_{j=0}^{N}(N_j-I)^2 + \sum_{j=0}^{N}\sum_{\lambda\notin\Lambda_j}n_{j,\lambda}+ \sum_{j=0}^{N-1}
\sum_{\substack{\lambda=(a,b)\in\Lambda_j,\\ \mu=(c,e)\in\Lambda_{j+1}\\
b\neq c}}n_{j,\lambda}n_{j+1,\mu}.
\label{eq:Hcode}
\end{equation}
The first term enforces encoding weight one, the second enforces the
correct boundary/interior alphabets, and the third enforces agreement
of neighboring overlaps. Every term $\in\operatorname{span}_{\mathbb R}\{I,Z\}^{\otimes O(1)}$.

Since $H_{\mathrm{code}}$ is diagonal and positive semidefinite, a computational-basis state has zero energy exactly when every register contains one allowed pair label and all adjacent pair labels agree on their overlap. Such strings are precisely the images under $V$. Hence
\begin{equation}
\label{eq:code-kernel}
    \ker H_{\mathrm{code}}=\operatorname{im}V,\quad H_{\mathrm{code}}\big|_{(\operatorname{im}V)^\perp}\succeq I.
\end{equation}
The single-excitation representation above allows us to change labels using only $X$ operators. If $\lambda\neq\mu$, then
\[
X_{j,\lambda}X_{j,\mu}\ket{\lambda}=\ket{\mu},
\qquad
X_{j,\lambda}X_{j,\mu}\ket{\mu}=\ket{\lambda}.
\]
Thus, after compression to the single-excitation subspace, $X_{j,\lambda}X_{j,\mu}$ acts exactly as the transition $\ket{\lambda}\bra{\mu}+\ket{\mu}\bra{\lambda}$.
Define
\begin{equation}
\label{eq:S-def}
S_j(\lambda,\mu):=
\begin{cases}
 n_{j,\lambda},&\lambda=\mu,\\[1mm]
 X_{j,\lambda}X_{j,\mu},&\lambda\neq\mu.
\end{cases}
\end{equation}
Fix an original HNN edge $i,i+1$ and let
\[
    \alpha=(a,b),\quad \beta=(a',b')\in\Sigma^2,
\]
be two local basis configurations on that  edge. If the neighboring values are $\ell$ and $r$, then the three relevant encoded labels for local value $\alpha$ are
\[
(\ell,a),\quad (a,b),\quad(b,r),
\]
whereas after changing $\alpha$ to $\beta$, they should become
\[
(\ell,a'),\quad (a',b'),\quad (b',r).
\]
Define
\begin{align}
D_i^{\alpha,\beta}
:=\sum_{\ell,r}
S_{i-1}\bigl((\ell,a),(\ell,a')\bigr)\cdot
S_i\bigl((a,b),(a',b')\bigr)
\nonumber\cdot
S_{i+1}\bigl((b,r),(b',r)\bigr),
\label{eq:D-alpha-beta}
\end{align}
where $\ell=\#$ at the left boundary and $r=\#$ at the right boundary; otherwise, the contexts range over $\Sigma$. Then
\begin{equation}
\label{eq:compression-matrix-unit}
V^\dagger D_i^{\alpha,\beta}V
=
\begin{cases}
 \ketbra{\alpha}{\alpha}_{i,i+1},&\alpha=\beta,\\[1mm]
 \ketbra{\alpha}{\beta}_{i,i+1}
 +\ketbra{\beta}{\alpha}_{i,i+1},&\alpha\neq\beta.
\end{cases}
\end{equation}
Since every $h_i$ is real symmetric, write
\[
h_i=
\sum_{\alpha}(h_i)_{\alpha,\alpha}\ketbra{\alpha}{\alpha}
+
\sum_{\alpha<\beta}(h_i)_{\alpha,\beta}
\bigl(\ketbra{\alpha}{\beta}+\ketbra{\beta}{\alpha}\bigr)
\]
and define
\begin{equation}
\label{eq:hhat-def}
\widehat h_i
:=
\sum_{\alpha}(h_i)_{\alpha,\alpha}D_i^{\alpha,\alpha}
+
\sum_{\alpha<\beta}(h_i)_{\alpha,\beta}D_i^{\alpha,\beta}.
\end{equation}
Then
\begin{equation}
\label{eq:local-compression}
    V^\dagger\widehat h_iV=h_i.
\end{equation}
Every factor in \eqref{eq:hhat-def} is either
$X\otimes X$ or $(I-Z)/2$, so $\widehat h_i\in\operatorname{span}_{\mathbb R}\{I,X,Z\}^{\otimes O(1)}$. Moreover, $\widehat h_i$ acts only on the three consecutive registers
$E_{i-1},E_i,E_{i+1}$. Let
\[
    \widehat H:=\sum_{i=1}^{N-1}\widehat h_i.
\]
By linearity, $V^\dagger\widehat HV=H_x^{\mathsf{HNN}}$, so the compression of $\widehat{H}$ to the valid code space reproduces the original HNN Hamiltonian.\\
\\
Now choose an efficiently computable polynomial bound
$B\ge\norm{\widehat H}_\infty$, and define
\[
    H':=\widehat H+JH_{\mathrm{code}},
    \qquad
    J>2B+\frac{4B^2}{\delta}.
\]
To compare the ground energies of $H'$ and $H_x^{\mathsf{HNN}}$, we use the following lemma.
\begin{lemma}[Projection lemma \cite{KKR05}]\label{lem:projection}
Let $H = H_1 + H_2$ be the sum of two Hamiltonians operating on some Hilbert space $\mathcal{H} = S + S^\perp$. The Hamiltonian \(H_2\) is such that $S$ is a zero eigenspace and the eigenvectors in $S^\perp$ have eigenvalue at least $J > 2\|H_1\|_\infty$. Then,
\[
\lambda\left(H_1\big|_S\right)-
\frac{\|H_1\|_\infty^2}
     {J - 2\|H_1\|_\infty}
\leq
\lambda(H)
\leq
\lambda\left(H_1\big|_S\right).
\]
\end{lemma}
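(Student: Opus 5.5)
The statement splits into an upper bound and a lower bound on $\lambda(H)$. I would prove the upper bound by a variational argument and the lower bound by decomposing an arbitrary unit vector along $S\oplus S^\perp$ and minimizing a scalar quadratic. Throughout, write $K:=\|H_1\|_\infty$ and $\lambda_S:=\lambda(H_1|_S)$.

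\emph{Upper bound.} I would take a unit vector $\ket{\phi}\in S$ attaining $\bra{\phi}H_1\ket{\phi}=\lambda_S$. Since $S$ is a zero eigenspace of $H_2$, we have $\bra{\phi}H_2\ket{\phi}=0$. Hence $\lambda(H)\le\bra{\phi}H\ket{\phi}=\lambda_S$ by the variational principle.

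\emph{Lower bound.} Let $\ket{v}$ be an arbitrary unit vector, decomposed as $\ket{v}=a\ket{v_1}+b\ket{v_2}$ with $\ket{v_1}\in S$, $\ket{v_2}\in S^\perp$ unit vectors and $|a|^2+|b|^2=1$. I would bound the two pieces as follows.
\begin{itemize}
\item For $H_2$: it preserves $S$ (zero eigenspace) and hence also $S^\perp$, so its cross terms vanish. The spectral assumption on $S^\perp$ then gives $\bra{v}H_2\ket{v}\ge J|b|^2$.
\item For $H_1$: expanding gives the diagonal terms $|a|^2\bra{v_1}H_1\ket{v_1}\ge\lambda_S|a|^2$ and $|b|^2\bra{v_2}H_1\ket{v_2}\ge -K|b|^2$. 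The two cross terms have total modulus at most $2K|a||b|\le 2K|b|$.
\end{itemize}
Combining, and using $|a|^2=1-|b|^2$ together with $\lambda_S\le K$ (so that $-\lambda_S|b|^2\ge -K|b|^2$), I obtain
\[
\bra{v}H\ket{v}\ \ge\ \lambda_S+(J-2K)|b|^2-2K|b|.
\]

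\emph{Minimization.} Since $J>2K$, the right-hand side is a convex quadratic in $t:=|b|\ge0$. It is minimized at $t=K/(J-2K)$, where it equals $\lambda_S-K^2/(J-2K)$. This lower bound holds for every unit $\ket{v}$, which yields the claimed lower bound on $\lambda(H)$.

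The only delicate step is the bookkeeping of signs: one must use $\lambda_S\le K$ to absorb the $-\lambda_S|b|^2$ term, and use that $H_2$ leaves $S^\perp$ invariant so that no $H_2$ cross terms appear. Everything else is routine.
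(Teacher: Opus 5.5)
Your proof is correct and follows essentially the same route as the original argument of \cite{KKR05}, which the paper cites without reproving: a variational upper bound from a minimizer in $S$, and a lower bound from splitting an arbitrary unit vector along $S\oplus S^\perp$, bounding the $H_2$ part by $J|b|^2$ and the $H_1$ cross terms by $2K|b|$, then minimizing the convex quadratic in $|b|$. Your sign bookkeeping is also right: you absorb $-\lambda_S|b|^2$ using $\lambda_S\le K$, and the $H_2$ cross terms vanish because $H_2$ is Hermitian and annihilates $S$.
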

By \eqref{eq:code-kernel}, $JH_{\mathrm{code}}$ has
ground space $\mathcal C=\operatorname{im}V$ and gap
at least $J$. Applying \Cref{lem:projection} with
\[
H_1=\widehat H,\quad H_2=JH_{\mathrm{code}},\quad S=\mathcal C=\operatorname{im}V,
\]
we obtain
\begin{equation}
\label{eq:projection-bound}
\lambda_{\min}(\widehat H|_{\mathcal C})-\frac{B^2}{J-2B}
\le \lambda_{\min}(H')
\le \lambda_{\min}(\widehat H|_{\mathcal C})
=\lambda_{\min}(H_x^{\mathsf{HNN}}).
\end{equation}
Our choice of $J$ makes the error in
\eqref{eq:projection-bound} at most $\delta/4$. Therefore
\[
\lambda_{\min}(H_x^{\mathsf{HNN}})\le a
\implies
\lambda_{\min}(H')\le a,
\]
and
\[
\lambda_{\min}(H_x^{\mathsf{HNN}})\ge b
\implies
\lambda_{\min}(H')\ge b-\frac{\delta}{4}.
\]
Thus the new promise gap is at least $3\delta/4$.

The code terms act on at most two consecutive $E$-registers and the
terms $\widehat h_i$ act on three consecutive $E$-registers.  Block the
registers in pairs,
\[
    B_0=(E_0,E_1),\quad B_1=(E_2,E_3),\quad\ldots.
\]
Every interval of three consecutive $E$-registers intersects at most two
neighboring blocks. Hence, after blocking, every term is supported on
one block or two neighboring blocks.  Each block contains
\[
    p=2(d+1)^2
\]
qubits; for $d=8$, one may take $p=162$. Blocking changes neither the
spectrum nor the $XZ$-only Pauli expansion. Finally, one-block terms
may be padded by the identity on an adjacent block, so all terms may be
regarded as edge-supported.  This proves the lemma.
\end{proof}

\subsection{\texorpdfstring{Proof of \Cref{lem:xz-certified-cldm}}{Proof of Lemma}}\label{sapp:xz-certified-cldm}
\begin{proof}

Start with an amplified arbitrary $\QMA$ verifier $V_x$ so that its completeness and soundness errors are negligible. Fix a constant \cite{BG22}- simulator parameter $s$ large enough for all constant-size regions used below. Apply the BG construction and get a simulatable verifier $V_x^{(s)}$. This, roughly, means that for every sufficiently small set $Y$ of work qubits, we can efficiently compute a density matrix $\widetilde{\rho}_Y$ such that, for a good witness, $\widetilde{\rho}_Y\approx \Tr_{\overline{Y}}(\rho_{\text{true}})$.\\
\\
We realify $V_x^{(s)}$ as in \cite{McKAGUE13}, obtaining
$V_x^{s,\mathbb R}$ with one additional rebit $r$ included
in the witness register. For a matrix $B$, define
\[
\mathcal R(B):=
\begin{pmatrix}
\operatorname{Re}B & -\operatorname{Im}B\\
\operatorname{Im}B & \operatorname{Re}B
\end{pmatrix},
\qquad
\Gamma(\rho):=\frac12\mathcal R(\rho),
\]
where $r$ indexes the two block rows and columns.
Each gate $U$ is replaced by the real orthogonal gate
$\mathcal R(U)$, and an honest BG witness $\omega$ is
replaced by the real density operator $\Gamma(\omega)$.
Indeed,
\[
\Gamma(U\rho U^\dagger)
=
\mathcal R(U)\Gamma(\rho)\mathcal R(U)^{\mathsf T}.
\]
Acceptance probabilities are preserved, and soundness
is unchanged because the realified acceptance operator
has the same largest eigenvalue as the original one.

This density-matrix convention preserves local
simulatability. For any subset $Y$ of the original work
register $D$, writing $\rho_Y:=\Tr_{D\setminus Y}(\rho)$,
we have
\[
\Tr_{D\setminus Y}\Gamma(\rho)=\Gamma(\rho_Y),
\qquad
\Tr_r\Gamma(\rho_Y)=\operatorname{Re}\rho_Y.
\]
Thus a BG simulator output $\widetilde\rho_Y$ gives
$\Gamma(\widetilde\rho_Y)$ for the marginal on
$\{r\}\cup Y$, and $\operatorname{Re}\widetilde\rho_Y$
for the marginal on $Y$ alone. These operations do not
increase trace-norm error, since
\[
\|\Gamma(\rho)-\Gamma(\sigma)\|_1=\|\rho-\sigma\|_1
\]
and partial trace is contractive on Hermitian operators.
For an intermediate step in a constant-size real-gate
decomposition, we include the gate's support in the
simulated region, apply the known partial circuit, and
trace out the unused qubits. The fixed parameter $s$
is chosen large enough for these enlarged regions.
The history-state density operator used below is
the one induced by the possibly mixed honest witness
$\Gamma(\omega)$; the history-state construction is
extended to mixed witnesses by linearity.\\
\\
Next we insert $\text{SWAP}$ and identity gates as needed to make the computation nearest-neighbor, and apply the HNN construction. Because the circuit is real, this produces a nearest-neighbor Hamiltonian
\[
H_x^{\mathsf{HNN}}=\sum_i h_i
\]
whose local terms are real symmetric. Appendix B of \cite{KRMEG25} says that if the computation going into HNN is BG-simulatable, then the relevant HNN history-state marginals are also efficiently simulatable. Although Appendix B states this for one- and two-site HNN regions, inspection of the proof shows that, for any fixed constant $q$, the same construction simulates intervals of at most $q$ consecutive HNN sites after increasing the constant simulator parameter $s$. Indeed, the corresponding marker intervals and work-qubit light cones remain of constant size. We apply this observation with $q=5$.\\
\\
Thus, for every interval $R$ of at most five consecutive HNN sites,
we can compute in polynomial time a density matrix $\tau_R$ such that,
if $x\in L_{\mathrm{yes}}$,
\begin{equation}
\label{eq:hww-fixed-window}
\left\|
\tau_R-
\Tr_{\overline R}\left(\Phi_x^{\mathsf{HNN}}\right)
\right\|_1
\leq \eta_x,
\end{equation}
where $\eta_x=\operatorname{negl}(|x|)$ and
$\Phi_x^{\mathsf{HNN}}$ is the HNN history-state density operator
corresponding to a good witness. Equally importantly, these simulated targets satisfy the exact HNN
local-energy identities of Appendix~B: all initialization, output,
penalty, and boundary contributions vanish, while every grouped
propagation contribution vanishes as well.\\
\\
Now apply the overlap encoding $V$ of
Lemma~\ref{lem:1d-xz-lh}. After blocking, we obtain
\[
H_x^{XZ}
=
\widehat H_x^{\mathsf{HNN}}
+
JH_{\mathrm{code}}
=
\sum_{\ell=1}^{M}g_\ell,
\]
where every $g_\ell$ is $XZ$-only and is supported on one block or two
neighboring blocks. Padding one-block terms by the identity on a
neighboring block, we regard every $g_\ell$ as edge-supported.

Let $e$ be an edge of the blocked chain. It contains at most four
consecutive overlap registers and therefore depends on an interval
$R_e$ of at most five consecutive HNN sites. The explicit overlap
encoding induces a fixed CPTP map
\[
\mathcal E_e:\mathcal D(R_e)\longrightarrow\mathcal D(e).
\]
Explicitly, $\mathcal E_e$ dephases the boundary HNN symbols that also
occur in overlap registers outside $e$, and then writes the
overlapping pair labels contained in $e$. For every global HNN state
$\omega$,
\begin{equation}
\label{eq:hww-local-encoding}
\mathcal E_e
\left(
\Tr_{\overline{R_e}}(\omega)
\right)
=
\Tr_{\overline e}
\left(
V\omega V^\dagger
\right).
\end{equation}
Define the edge target
\[
\sigma_e:=\mathcal E_e(\tau_{R_e}).
\]
Since $R_e$ has constant size and $\mathcal E_e$ is fixed, every
$\sigma_e$ is computable in polynomial time.\\
\\
We next show that the HNN energy expectations transfer to the encoded terms. For every original HNN
term $h_i$, choose a blocked edge $e(i)$ containing the support of its
encoded term $\widehat h_i$. By the definition of $\mathcal E_{e(i)}$
and Equation~\eqref{eq:local-compression},
\begin{equation}
\label{eq:hww-energy-transfer}
\Tr\left(
\widehat h_i\sigma_{e(i)}
\right)
=
\Tr\left(
h_i\,
\Tr_{R_{e(i)}\setminus\{i,i+1\}}
\left(
\tau_{R_{e(i)}}
\right)
\right).
\end{equation}
Moreover, every $\sigma_e$ is supported on locally
single-excitation, allowed, overlap-consistent labels, as  in \Cref{lem:1d-xz-lh}, so every code-penalty
term has also zero expectation. Consequently,
\begin{equation}
\label{eq:hww-zero-certificate}
\sum_{\ell=1}^{M}
\Tr\left(
g_\ell\sigma_{e_\ell}
\right)
=
0
\qquad
\text{for every input }x,
\end{equation}
where $e_\ell$ denotes the edge supporting $g_\ell$.\\
\\
By the proof of Lemma~\ref{lem:1d-xz-lh}, if $x\in L_{\mathrm{no}}$, then
\[
    \lambda_{\min}(H_x^{XZ})
    \ge b-\frac{\delta}{4}
    \ge \frac{3\delta}{4}
    =:\Delta_x,
\]
where $\delta=b-a$ and the HNN thresholds are chosen with $0\le a<b$.
Thus $\Delta_x$ is efficiently computable and satisfies
$\Delta_x\ge 1/\poly(|x|)$.\\
\\
Choose an efficiently computable $W_x$ satisfying
\[
\sum_{\ell=1}^{M}\|g_\ell\|_\infty
\leq W_x
\leq\poly(|x|),
\]
and choose the BG simulation accuracy so that $\eta_x\leq\frac{\Delta_x}{16W_x}$.
Set
\[
\alpha_{\mathrm{CLDM}}:=\eta_x,
\qquad
\beta_{\mathrm{CLDM}}
:=
\frac{\Delta_x}{4W_x}.
\]
\textbf{Completeness:} Suppose $x\in L_{\mathrm{yes}}$, and set
\[
\rho^*
:=
V\Phi_x^{\mathsf{HNN}}V^\dagger.
\]
By \eqref{eq:hww-fixed-window},
\eqref{eq:hww-local-encoding}, and contractivity of trace distance
under CPTP maps,
\begin{align*}
\left\|
\sigma_e-\Tr_{\overline e}(\rho^*)
\right\|_1
&=
\left\|
\mathcal E_e(\tau_{R_e})
-
\mathcal E_e
\left(
\Tr_{\overline{R_e}}
\left(
\Phi_x^{\mathsf{HNN}}
\right)
\right)
\right\|_1\\
&\leq
\left\|
\tau_{R_e}
-
\Tr_{\overline{R_e}}
\left(
\Phi_x^{\mathsf{HNN}}
\right)
\right\|_1\\
&\leq\eta_x
\quad
\forall e.
\end{align*}
Hence the constructed CLDM instance is YES with $\alpha_{\text{CLDM}}=\eta_x$.\\
\\
\textbf{Soundness:}
Suppose $x\in L_{\mathrm{no}}$ and, toward a contradiction, suppose
that there exists a global state $\rho$ satisfying
\[
\left\|
\Tr_{\overline e}(\rho)-\sigma_e
\right\|_1
<
\beta_{\mathrm{CLDM}}
\qquad
\forall e.
\]
Using \eqref{eq:hww-zero-certificate}, we obtain
\begin{align*}
\Tr\left(H_x^{XZ}\rho\right)
&=
\sum_{\ell=1}^{M}
\Tr\left(
g_\ell
\Tr_{\overline{e_\ell}}(\rho)
\right)\\
&=
\sum_{\ell=1}^{M}
\Tr\left(
g_\ell
\left[
\Tr_{\overline{e_\ell}}(\rho)
-\sigma_{e_\ell}
\right]
\right)
+
\underbrace{
\sum_{\ell=1}^{M}
\Tr(g_\ell\sigma_{e_\ell})
}_{=\,0}\\
&\leq
\sum_{\ell=1}^{M}
\|g_\ell\|_\infty
\left\|
\Tr_{\overline{e_\ell}}(\rho)
-\sigma_{e_\ell}
\right\|_1\\
&<
W_x\beta_{\mathrm{CLDM}}
=
\frac{\Delta_x}{4}
<
\Delta_x,
\end{align*}
contradicting
\[
\Tr(H_x^{XZ}\rho)
\geq
\lambda_{\min}(H_x^{XZ})
\geq
\Delta_x.
\]
Thus every global state violates at least one edge target by trace
distance at least $\beta_{\mathrm{CLDM}}$.
Finally,
\[
\beta_{\mathrm{CLDM}}-\alpha_{\mathrm{CLDM}}
\geq
\frac{\Delta_x}{4W_x}
-
\frac{\Delta_x}{16W_x}
=
\frac{3\Delta_x}{16W_x}
\geq
\frac{1}{\poly(|x|)}.
\]
Therefore the construction is a polynomial-time reduction to
$XZ$-certified $1D$-CLDM with an inverse-polynomial promise gap.
\end{proof}

\section{Supporting proofs for CV classical-shadow validity}\label{app:CV}
\subsection{\texorpdfstring{Proof of \Cref{lem:QE-close}}{Proof of Lemma}}\label{sapp:QE-close}
\begin{proof}
Set $H:=I-P$. Since $P+H=I$, we can expand
\[
Q=(P+H)Q(P+H)=PQP+PQH+HQP+HQH.
\]
Since $E=PQP$, this gives
\[
Q-E=PQH+HQP+HQH.
\]
Therefore, by the triangle inequality
\[
\left|\Tr((Q-E)\sigma)\right|
\le
|\Tr(PQH\sigma)|
+
|\Tr(HQP\sigma)|
+
|\Tr(HQH\sigma)|.
\]
We now bound these three terms separately.\\
\\
Start with the first off-diagonal term. By the cyclicity of trace,
\[
|\Tr(PQH\sigma)|=|\Tr(\sigma^{1/2}PQH\sigma^{1/2})|.
\]
Using the Cauchy-Schwarz inequality,
\[
|\Tr(\sigma^{1/2}PQH\sigma^{1/2})|
\le \|\sigma^{1/2}PQ\|_2\,\| H\sigma^{1/2}\|_2.
\]
For the first factor, using the standard Schatten norm inequality $\|AB\|_2\le \|A\|_2\|B\|_\infty$, we get
\[
\|\sigma^{1/2}PQ\|_2
\le
\|\sigma^{1/2}\|_2\|PQ\|_\infty
=
\sqrt{\Tr(\sigma)}\,\|PQ\|_\infty
=
\|PQ\|_\infty.
\]
For the second factor, the Hilbert-Schmidt norm gives
\[
\|H\sigma^{1/2}\|_2^2
=\Tr\left((H\sigma^{1/2})^\dagger(H\sigma^{1/2})\right)
=\Tr(\sigma^{1/2}H\sigma^{1/2})
=\Tr(H\sigma).
\]
Therefore,
\[
|\Tr(PQH\sigma)|
\le
\|PQ\|_\infty\sqrt{\Tr(H\sigma)}.
\]
It remains to bound $\|PQ\|_\infty$. Since 
\[ 
P=\bigotimes_{j\in\Lambda}P_1^{(j)},\quad
Q=\bigotimes_{j\in\Lambda}A_j
\]
we have
\[
PQ=\bigotimes_{j\in\Lambda}P_1^{(j)}A_j.
\]
Hence
\[
\|PQ\|_\infty
=
\prod_{j\in\Lambda}\|P_1^{(j)}A_j\|_\infty.
\]
For the one-mode factors,
\[
\|P_1I\|_\infty=1,\quad
\|P_1a\|_\infty=\sqrt 2,
\quad
\|P_1a^\dagger\|_\infty=1,
\]
and so
\[
\|P_1(a+a^\dagger)\|_\infty
<3,\quad
\|P_1(-i(a-a^\dagger))\|_\infty<3.
\]
Finally,
\[
\|P_1(I-2\widehat N)\|_\infty=1.
\]
Thus $\|P_1A_j\|_\infty\le3
\;\;\forall j\in\Lambda$.
Since $|\Lambda|\le k$, $\|PQ\|_\infty\le 3^k$.
Therefore,
\[
|\Tr(PQH\sigma)|
\le 3^k\sqrt{\Tr(H\sigma)}
\le 3^k\sqrt{\frac{kB}{3^\ell}}.
\]
Where the last inequality comes from \Cref{cor:localmomentbound}.\\
\\
The second off-diagonal term is bounded in the same way. Namely,
\[
|\Tr(HQP\sigma)|=|\Tr(\sigma^{1/2}HQP\sigma^{1/2})|\le\|\sigma^{1/2}H\|_2\,\|QP\sigma^{1/2}\|_2.
\]
We have
\[
\|\sigma^{1/2}H\|_2=\sqrt{\Tr(H\sigma)}
\]
and
\[
\|QP\sigma^{1/2}\|_2\le\|QP\|_\infty\|\sigma^{1/2}\|_2=\|QP\|_\infty.
\]
Since $Q$ is Hermitian,
\[
\|QP\|_\infty
=\|(PQ)^\dagger\|_\infty=\|PQ\|_\infty\le3^k.
\]
Thus
\[
|\Tr(HQP\sigma)|\le3^k\sqrt{\Tr(H\sigma)}\le3^k\sqrt{\frac{kB}{3^\ell}}.
\]
It remains to bound the diagonal term $|\Tr(HQH\sigma)|$. We first prove a weighted one-mode bound.\\
\\
First notice that for every $j\in\Lambda$, $\|W_j^{-1/2}A_j W_j^{-1/2}\|_\infty\le 2$.
Indeed,
\[
 W_j^{-1/2}a_jW_j^{-1/2}\ket{n}
=
\frac{1}{\sqrt{n+1}}\ket{n-1},\quad
W_j^{-1/2}a_j^\dagger W_j^{-1/2}\ket{n}
=\frac{1}{\sqrt{n+2}}\ket{n+1}.
\]
Therefore,
\[
\|W_j^{-1/2}IW_j^{-1/2}\|_\infty= 1,\quad\|W_j^{-1/2}a_jW_j^{-1/2}\|_\infty=\frac {1}{\sqrt{2}},
\quad
\|W_j^{-1/2}a_j^\dagger W_j^{-1/2}\|_\infty=\frac {1}{\sqrt{2}}.
\]
By the triangle inequality,
\[
\|W_j^{-1/2}(a_j+a_j^\dagger)W_j^{-1/2}\|_\infty\le \sqrt2,\quad
\|W_j^{-1/2}(-i(a_j-a_j^\dagger))W_j^{-1/2}\|_\infty\le \sqrt2.
\]
Finally,
\[
W_j^{-1/2}(I-2\widehat N_j)W_j^{-1/2}\ket{n}
=
\frac{1-2n}{1+n}\ket{n},
\]
Hence
\[
\|W_j^{-1/2}(I-2\widehat N_j)W_j^{-1/2}\|_\infty\le 2.
\]
This proves the weighted one-mode bound.\\
\\
Since the different factors act on different modes,
\[
W_\Lambda^{-1/2}QW_\Lambda^{-1/2}
=
\bigotimes_{j\in\Lambda}
\left(W_j^{-1/2}A_jW_j^{-1/2}\right).
\]
Therefore, $\|W_\Lambda^{-1/2}QW_\Lambda^{-1/2}\|_\infty\le 2^k$.\\
\\
Now since, $W_\Lambda^{-\frac12}QW_\Lambda^{-\frac12}$ is hermitian we get
\begin{align*}
    \begin{aligned}
      \|W_\Lambda^{-\frac12}QW_\Lambda^{-\frac12}\|_\infty\le 2^k\iff-2^kI\preceq W_\Lambda^{-\frac12}QW_\Lambda^{-\frac12}\preceq 2^kI  
    \end{aligned}
\end{align*}
Conjugating by $\sigma^{1/2}HW_\Lambda^{\frac12}$ gives 
\[
-2^k\sigma^{\frac12}HW_{\Lambda}H\sigma^{\frac12}\preceq \sigma^{\frac12}HQH\sigma^{\frac12}\preceq 2^k\sigma^{\frac12}HW_{\Lambda}H\sigma^{\frac12}
\]
Finally by taking traces we get
\[
|\Tr(HQH\sigma)|\le 2^k\Tr(HW_{\Lambda}H\sigma)
\]
Now we bound $\Tr(HW_\Lambda H\sigma)$. Take a Fock basis vector $\ket{n}=\ket{n_j}_{j\in\Lambda}$ in the support of $H$. This means that at least one mode satisfies $n_j\ge2$, and so for at least one $j$, $x_j:=1+n_j\ge3$. Now since $|\Lambda|\le k$ and for $\ell\ge k$ we get
\[
\prod_{j\in\Lambda}x_j\le (\max_{j\in\Lambda}x_j)^k\le 3^{k-\ell}(\max_{j\in\Lambda}x_j)^\ell\le 3^{k-\ell}\sum_{j\in\Lambda}x_j^\ell.
\]
Now because $W_\Lambda$ and $W_j^\ell$ are diagonal in the same basis and their eigenvalues obey the equation above we have:
\[
HW_\Lambda H\preceq 3^{k-\ell}\sum_{j\in\Lambda}HW_j^{\ell}H\preceq 3^{k-\ell}\sum_{j\in\Lambda}W_j^{\ell}.
\]
The second inequality follows because $0\preceq H\preceq I$ and $H$ commutes with each $W_j^\ell$.\\
\\
Taking the trace against $\sigma\succeq 0$ gives
\[
\Tr(HW_\Lambda H\sigma)
\le
3^{k-\ell}\sum_{j\in\Lambda}\Tr(W_j^\ell\sigma).
\]
By the moment bound $\Tr(W_j^\ell\sigma)\le B$, we obtain
\[
\Tr(HW_\Lambda H\sigma)
\le
kB\,3^{k-\ell}.
\]
Therefore, using the weighted estimate
\[
|\Tr(HQH\sigma)|
\le
2^k\Tr(HW_\Lambda H\sigma),
\]
we get
\[
|\Tr(HQH\sigma)|
\le
2^k kB\,3^{k-\ell}.
\]
Putting the three estimates together,
\begin{align*}
\left|\Tr((Q-E)\sigma)\right|
&\le |\Tr(PQH\sigma)|+|\Tr(HQP\sigma)|+|\Tr(HQH\sigma)|\\
&\le 3^k\sqrt{\frac{kB}{3^\ell}}+3^k\sqrt{\frac{kB}{3^\ell}}+2^k kB\,3^{k-\ell}\\
&=2\cdot 3^k\sqrt{\frac{kB}{3^\ell}}+2^k kB\,3^{k-\ell}.
\end{align*}  
\end{proof}

\end{document}

%% file: commands.tex
\newcommand*{\Tr}{\operatorname{Tr}}

\DeclarePairedDelimiterX\ketbra[2]{\lvert}{\rvert}{#1 \delimsize\rangle\delimsize\langle #2}

\newcommand{\polylog}{\operatorname{polylog}}

\newcommand{\poly}{\textup{poly}}
\renewcommand{\exp}{\textup{exp}}

\mathchardef\mhyphen="2D % Define a "math hyphen"

\DeclarePairedDelimiter\norm{\lVert}{\rVert}

\newcommand{\trace}{\Tr}